\documentclass[pra,onecolumn,superscriptaddress,nofootinbib]{revtex4}
\usepackage[a4paper, left=0.85in, right=0.85in, top=0.9in, bottom=0.9in]{geometry}


\usepackage{cmap} 
\usepackage[utf8]{inputenc}
\usepackage[english]{babel}
\usepackage[T1]{fontenc}
\usepackage{amsmath}
\usepackage{appendix}
\usepackage{amssymb}
\usepackage{mathtools}
\usepackage{amsfonts}
\usepackage{bm}
\usepackage{xcolor}
\definecolor{blueviolet}{rgb}{0.2, 0.2, 0.6}
\definecolor{webgreen}{rgb}{0,.5,0}
\definecolor{webbrown}{rgb}{.6,0,0}
\usepackage[pdftex,
	bookmarks=false,
	colorlinks=true, 
	urlcolor=webbrown,
	linkcolor=blueviolet,
	citecolor=webgreen,
	pdfstartpage=1,
	pdfstartview={FitH},  
	bookmarksopen=false
	]{hyperref}
\allowdisplaybreaks
\usepackage{tikz}
\usepackage{dsfont}
\usepackage{braket}
\usepackage{natbib}
\usepackage{amsthm}
\usepackage{listings}

\DeclareFixedFont{\ttb}{T1}{txtt}{bx}{n}{9} 
\DeclareFixedFont{\ttm}{T1}{txtt}{m}{n}{9}  

\usepackage{color}
\definecolor{deepblue}{rgb}{0,0,0.5}
\definecolor{deepred}{rgb}{0.6,0,0}
\definecolor{deepgreen}{rgb}{0,0.5,0}

\newcommand\pythonstyle{\lstset{
language=Python,
basicstyle=\ttm,
morekeywords={self},              
keywordstyle=\ttb\color{deepblue},
emph={MyClass,__init__},          
emphstyle=\ttb\color{deepred},    
stringstyle=\color{deepgreen},
frame=tb,                         
showstringspaces=false
}}

\lstnewenvironment{python}[1][]
{
\pythonstyle
\lstset{#1}
}
{}


\newcommand\pythoninline[1]{{\pythonstyle\lstinline!#1!}}

\newtheorem{theorem}{Theorem}
\newtheorem{task}{Task}
\newtheorem{corollary}{Corollary}

\newtheorem{definition}{Definition}
\newtheorem{lemma}{Lemma}
\newtheorem{fact}{Fact}



\newcommand{\indicator}{\mathds{1}}

\DeclareMathOperator*{\argmax}{arg\,max}
\DeclareMathOperator*{\argmin}{arg\,min}

\newtheorem{proposition}{Proposition}
\DeclareMathOperator{\Tr}{tr}
\DeclareMathOperator*{\E}{{\mathbb{E}}}

\newcommand{\ketbra}[2]{\lvert #1 \rangle \! \langle #2 \rvert}
\newcommand{\norm}[1]{\left\lVert#1\right\rVert}

\usepackage[noend]{algpseudocode}
\usepackage{algorithm,algorithmicx}

\algrenewcommand\alglinenumber[1]{\sf\scriptsize\color{blue}{#1}}
\algrenewcommand\algorithmicrequire{\textbf{Input:}}
\algrenewcommand\algorithmicensure{\textbf{Output:}}

\newcommand{\pol}{\mathsf{pol}}
\newcommand{\calS}{\mathcal{S}}

\newcommand{\wt}{\widetilde}

\newcommand{\calA}{\mathcal{A}}

\begin{document}

\title{Learning to predict arbitrary quantum processes}
\author{Hsin-Yuan Huang}
\affiliation{Institute for Quantum Information and Matter and \\ Department of Computing and Mathematical Sciences, Caltech, Pasadena, CA, USA}
\author{Sitan Chen}
\affiliation{Department of Electrical Engineering and Computer Sciences, UC Berkeley, Berkeley, CA, USA}
\author{John Preskill}
\affiliation{Institute for Quantum Information and Matter and \\ Department of Computing and Mathematical Sciences, Caltech, Pasadena, CA, USA}
\affiliation{AWS Center for Quantum Computing, Pasadena, CA, USA}
\date{\today}

\begin{abstract}
We present an efficient machine learning (ML) algorithm for predicting any unknown quantum process $\mathcal{E}$ over $n$ qubits.
For a wide range of distributions $\mathcal{D}$ on arbitrary $n$-qubit states, we show that this ML algorithm can learn to predict any local property of the output from the unknown process~$\mathcal{E}$, with a small average error over input states drawn from $\mathcal{D}$.
The ML algorithm is computationally efficient even when the unknown process is a quantum circuit with exponentially many gates. Our algorithm combines efficient procedures for learning properties of an unknown state and for learning a low-degree approximation to an unknown observable. The analysis hinges on proving new norm inequalities, including a quantum analogue of the classical Bohnenblust-Hille inequality, which we derive by giving an improved algorithm for optimizing local Hamiltonians.
Numerical experiments on predicting quantum dynamics with evolution time up to $10^6$ and system size up to $50$ qubits corroborate our proof.
Overall, our results highlight the potential for ML models to predict the output of complex quantum dynamics much faster than the time needed to run the process itself.
\end{abstract}

\maketitle

\vspace{-2em}
{\renewcommand\addcontentsline[3]{} \section{Introduction}}

Learning complex quantum dynamics is a fundamental problem at the intersection of machine learning (ML) and quantum physics.
Given an unknown $n$-qubit completely positive trace preserving (CPTP) map~$\mathcal{E}$ that represents a physical process happening in nature or in a laboratory,
we consider the task of learning to predict functions of the form
\begin{equation} \label{eq:frhoO}
    f(\rho, O) = \Tr(O \mathcal{E} (\rho)),
\end{equation}
where $\rho$ is an $n$-qubit state and $O$ is an $n$-qubit observable.
Related problems arise in many fields of research, including quantum machine learning \cite{biamonte2017quantum, schuld2019quantum, havlivcek2019supervised, caro2022generalization, schreiber2022classical, mcclean2018barren, caro2022out, huang2022quantum, farhi2018classification, arunachalam2017guest}, variational quantum algorithms \cite{gibbs2022dynamical, cirstoiu2020variational, peruzzo2014variational, kandala2017hardware, kokail2019self, cerezo2021variational, grimsley2019adaptive}, machine learning for quantum physics \cite{carleo2017solving, sharir2020deep, van2017learning, zhou2017optimizing, carrasquilla2017machine, parr1980density, car1985unified, becke1993new, white1993density, gilmer2017neural, huang2021provably, huang2020power}, and quantum benchmarking \cite{mohseni2008quantum, Scott08, o2004quantum, levy2021classical, huang2022foundations, merkel2013self, blume2017demonstration}.
As an example, for predicting outcomes of quantum experiments \cite{huang2021information, melnikov2018active, huang2022quantum}, we consider $\rho$ to be parameterized by a classical input $x$, $\mathcal{E}$ is an unknown process happening in the lab, and $O$ is an observable measured at the end of the experiment.
Another example is when we want to use a quantum ML algorithm to learn a model of a complex quantum evolution with the hope that the learned model can be faster \cite{cirstoiu2020variational, gibbs2022dynamical, caro2022out}.

As an $n$-qubit CPTP map $\mathcal{E}$ consists of exponentially many parameters, prior works, including those based on covering number bounds \cite{caro2022generalization, caro2022out, huang2021information, huang2022quantum}, classical shadow tomography \cite{levy2021classical, kunjummen2021shadow}, or quantum process tomography \cite{mohseni2008quantum, Scott08, o2004quantum}, require an exponential number of data samples to guarantee a small constant error for predicting outcomes of an arbitrary evolution~$\mathcal{E}$ under a general input state $\rho$.
To improve upon this, recent works \cite{chung2019sample, caro2022generalization, caro2022out, huang2021information, huang2022quantum} have considered quantum processes~$\mathcal{E}$ that can be generated in polynomial-time and shown that a polynomial amount of data samples suffices to learn $\Tr(O \mathcal{E} (\rho))$ in this restricted class.
However, these results still require exponential computation time.

In this work, we present a computationally-efficient ML algorithm that can learn a model of an arbitrary unknown $n$-qubit process $\mathcal{E}$, such that given $\rho$ sampled from a wide range of distributions over arbitrary $n$-qubit states and any $O$ in a large physically-relevant class of observables, the ML algorithm can accurately predict $f(\rho, O) = \Tr(O \mathcal{E}(\rho))$.
The ML model can predict outcomes for highly entangled states $\rho$ after learning from a training set that only contains data for random product input states and randomized Pauli measurements on the corresponding output states.
The training and prediction of the proposed ML model are both efficient even if the unknown process $\mathcal{E}$ is a Hamiltonian evolution over an exponentially long time, a quantum circuit with exponentially many gates, or a quantum process arising from contact with an infinitely large environment for an arbitrarily long time.
Furthermore, given few-body reduced density matrices (RDMs) of the input state $\rho$, the ML algorithm uses only classical computation to predict output properties $\Tr(O \mathcal{E}(\rho))$.

The proposed ML model is a combination of efficient ML algorithms for two learning problems:
(1) predicting $\Tr(O \rho)$ given a known observable $O$ and an unknown state $\rho$, and (2) predicting $\Tr(O \rho)$ given an unknown observable $O$ and a known state $\rho$.
We give sample- and computationally-efficient learning algorithms for both problems.
Then we show how to combine the two learning algorithms to address the problem of learning to predict $\Tr(O \mathcal{E}(\rho))$ for an arbitrary unknown $n$-qubit quantum process $\mathcal{E}$.
Together, the sample and computational efficiency of the two learning algorithms implies the efficiency of the combined ML algorithm.

In order to establish the rigorous guarantee for the proposed ML algorithms,
we consider a different task: optimizing a $k$-local Hamiltonian $H = \sum_{P \in \{I, X, Y, Z\}^{\otimes n}} \alpha_P P$.
We present an improved approximate optimization algorithm that finds either a maximizing/minimizing state $\ket{\psi}$ with a rigorous lower/upper bound guarantee on the energy $\bra{\psi} H \ket{\psi}$ in terms of the Pauli coefficients $\alpha_P$ of $H$.
The rigorous bounds improve upon existing results on optimizing $k$-local Hamiltonians \cite{Dinur2006OnTF, barak_et_al:LIPIcs:2015:5298, harrow2017extremal, anshu2021improved}.
We then use the improved optimization algorithm to give a constructive proof of several useful norm inequalities relating the spectral norm $\norm{O}$ of an observable $O$ and the $\ell_p$-norm of the Pauli coefficients $\alpha_P$ associated to the observable~$O$.
The proof resolves a recent conjecture in \cite{RWZ2022KKL} about the existence of quantum Bohnenblust-Hille inequalities.
These norm inequalities are then used to establish the efficiency of the proposed ML algorithms.

\begin{figure*}[t]
\centering
\includegraphics[width=0.88\textwidth]{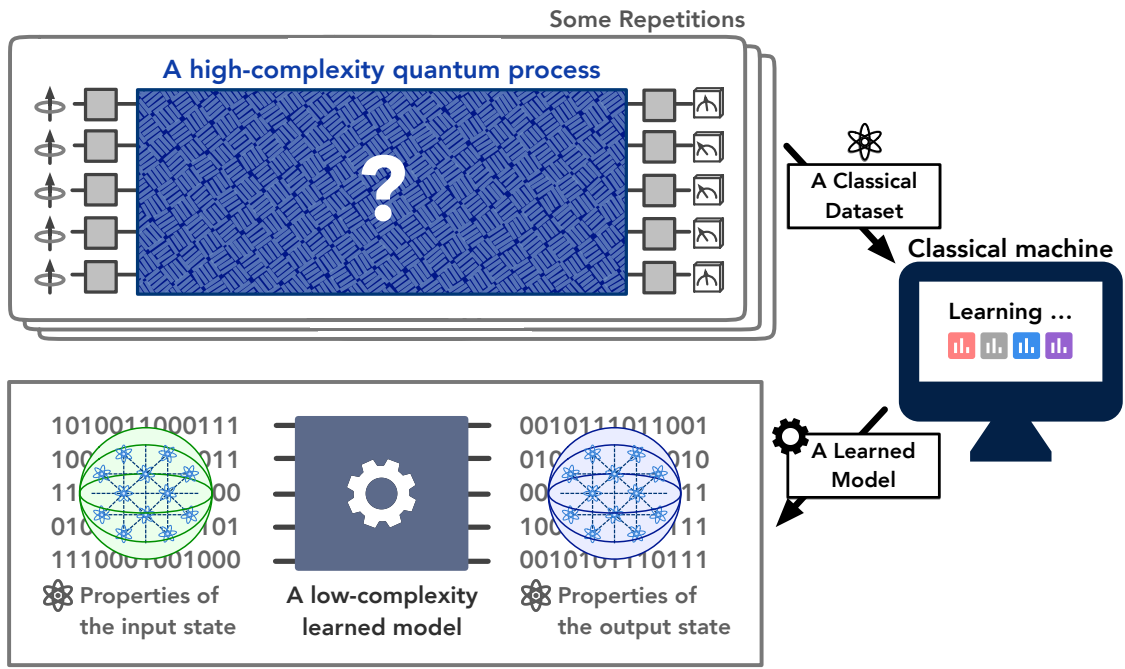}
    \caption{
    \emph{Learning to predict an arbitrary unknown quantum process $\mathcal{E}$.}
    Given an unknown quantum process $\mathcal{E}$ with arbitrarily high complexity, and a classical dataset obtained from evolving random product states under $\mathcal{E}$ and performing randomized Pauli measurements on the output states.
    We give an algorithm that can learn a low-complexity model for predicting the local properties of the output states given the local properties of the input states.
    \label{fig:LearnProcess}}
\end{figure*}

\vspace{2em}
{\renewcommand\addcontentsline[3]{} \section{Learning quantum states, observables, and processes}\label{sec:learn-intro}}

Before proceeding to state our main results in greater detail, we describe informally the learning tasks discussed in this paper: what do we mean 
by learning a quantum state, observable, and process?

\vspace{2em}
{\renewcommand\addcontentsline[3]{} \subsection{Learning an unknown state} \label{subsec:state-learn}}
It is possible in principle to provide a complete classical description of an $n$-qubit quantum state~$\rho$.
However this would require an exponential number of experiments, which is not at all practical.
Therefore, we set a more modest goal: to learn enough about $\rho$ to predict many of its physically relevant properties.
We specify a family of target observables $\{O_i\}$ and a small target accuracy $\epsilon$. The learning procedure is judged to be successful if we can predict the expectation value $\Tr (O_i \rho)$ of every observable in the family with error no larger than $\epsilon$.

Suppose that $\rho$ is an arbitrary and unknown $n$-qubit quantum state, and suppose that we have access to $N$ identical copies of $\rho$. We acquire information about $\rho$ by measuring these copies. In principle, we could consider performing collective measurements across many copies at once. Or we might perform single-copy measurements sequentially and \textit{adaptively}; that is, the choice of measurement performed on copy $j$ could depend on the outcomes obtained in measurements on copies $1,2, 3, \dots j{-}1.$
The target observables we consider are \textit{bounded-degree observables}. A bounded-degree $n$-qubit observable $O$ is a sum of local observables (each with support on a constant number of qubits independent of $n$) such that only a constant number (independent of $n$) of terms in the sum act on each qubit.
Most thermodynamic quantities that arise in quantum many-body physics can be written as a bounded-degree observable $O$, such as a  geometrically-local Hamiltonian or the average magnetization.

In the learning protocols discussed in this paper, the measurements are neither collective nor adaptive.
Instead, we fix an ensemble of possible single-copy measurements, and for each copy of $\rho$ we independently sample from this ensemble and perform the selected measurement on that copy.
Thus there are two sources of randomness in the protocol --- the randomly chosen measurement on each copy, and the intrinsic randomness of the quantum measurement outcomes. If we are unlucky, the chosen measurements and/or the measurement outcomes might not be sufficiently informative to allow accurate predictions. We will settle for a protocol that achieves the desired prediction task with a high success probability.

For the protocol to be practical, it is highly advantageous for the sampled measurements to be easy to perform in the laboratory, and easy to describe in classical language.
The measurements we consider, \textit{random Pauli measurements}, meet both of these criteria. For each copy of $\rho$ and for each of the $n$ qubits, we choose uniformly at random to measure one of the three single-qubit Pauli observables $X$, $Y$, or $Z$.
This learning method, called \textit{classical shadow tomography}, was analyzed in \cite{huang2020predicting}, where an upper bound on the sample complexity (the number $N$ of copies of $\rho$ needed to achieve the task) was expressed in terms of a quantity called the \textit{shadow norm} of the target observables.

In this work, using a new norm inequality derived here, we improve on the result in \cite{huang2020predicting} by obtaining a tighter upper bound on the shadow norm for bounded degree observables.
The upshot is that, for a fixed target accuracy $\epsilon$, we can predict all bounded-degree observables with spectral norm less than $B$ by performing random Pauli measurement on
\begin{equation}
N = \mathcal{O}\left( \log(n) B^2 / \epsilon^2 \right)
\end{equation}
copies of $\rho$.
This result improves upon the previously known bound of
$\mathcal{O}(n \log(n) B^2 / \epsilon^2)$.
Furthermore, we derive a matching lower bound on the number of copies required for this task, which applies even if collective measurements across many copies are allowed.

\vspace{2em}
{\renewcommand\addcontentsline[3]{} \subsection{Learning an unknown observable} \label{subsec:observable-learn}}

Now suppose that $O$ is an arbitrary and unknown $n$-qubit observable. We also consider a distribution $\mathcal{D}$ on $n$-qubit quantum states. This distribution, too, need not be known, and it may include highly entangled states. Our goal is to find a function $h(\rho)$ which predicts the expectation value $\Tr (O\rho)$ of the observable $O$ on the state $\rho$ with a small mean squared error:
\begin{equation*}
    \E_{\rho \sim \mathcal{D}} \left| h(\rho) - \Tr(O \rho) \right|^2 \leq \epsilon.
\end{equation*}
To define this learning task, it is convenient to assume that we can access training data of the form
\begin{equation}\label{eq:observable-training}
    \left\{ \rho_\ell, \Tr\left(O \rho_\ell\right) \right\}_{\ell = 1}^N,
\end{equation}
where $\rho_\ell$ is sampled from the distribution $\mathcal{D}$. In practice, though, we cannot directly access the exact value of the expectation value $\Tr\left(O \rho_\ell\right)$; instead, we might measure $O$ multiple times in the state $\rho_\ell$ to obtain an accurate estimate of the expectation value. Furthermore, we don't necessarily need to sample states from $\mathcal{D}$ to achieve the task. We might prefer to learn about $O$ by accessing its expectation value in states drawn from a different ensemble.

A crucial idea of this work is that we can learn $O$ efficiently if the distribution $\mathcal{D}$ has suitably nice features. Specifically, we consider distributions that are invariant under single-qubit Clifford gates applied to any one of the $n$ qubits. We say that such distributions are \textit{locally flat}, meaning that the probability weight assigned to an $n$-qubit state is unmodified (i.e., the distribution appears flat) when we locally rotate any one of the qubits.

An arbitrary observable $O$ can be expanded in terms of the Pauli operator basis:
\begin{equation}
    O = \sum_{P \in \{I, X, Y, Z\}^{\otimes n}} \alpha_P P.
    \end{equation}
Though there are $4^n$ Pauli operators, if the distribution $\mathcal{D}$ is locally flat and $O$ has a constant spectral norm, we can approximate the sum over $P$ by a truncated sum
\begin{equation}
    O^{(k)} = \sum_{P \in \{I, X, Y, Z\}^{\otimes n}: |P| \leq k} \alpha_P P.
\end{equation}
including only the Pauli operators $P$ with weight $|P|$ up to $k$, those acting nontrivially on no more than $k$ qubits.
The mean squared error incurred by this truncation decays exponentially with $k$.
Therefore, to learn $O$ with mean squared error $\epsilon$ it suffices to learn this truncated approximation to $O$, where $k=\mathcal{O}(\log(1/\epsilon))$.
Furthermore, using norm inequalities derived in this paper, we show that for the purpose of predicting the expectation value of this truncated operator it suffices to learn only a few relatively large coefficients $\alpha_P$, while setting the rest to zero. The upshot is that, for a fixed target error $\epsilon$, an observable with constant spectral norm can be learned from training data with size $\mathcal{O}(\log n)$, where the classical computational cost of training and predicting is $n^{\mathcal{O}(k)}$.

Usually, in machine learning, after learning from a training set sampled from a distribution $\mathcal{D}$, we can only predict new instances sampled from the same distribution $\mathcal{D}$. We find, though, that for the purpose of learning an unknown observable, there is a particular locally flat distribution $\mathcal{D}'$ such that learning to predict under $\mathcal{D}'$ suffices for predicting under any other locally flat distribution. Namely, we samples from the $n$-qubit state distribution $\mathcal{D}'$ by preparing each one of the $n$ qubits in one of the six Pauli operator eigenstates $\{\ket{0}, \ket{1}, \ket{+}, \ket{-}, \ket{y+}, \ket{y-}\}$, chosen uniformly at random. Pleasingly, preparing samples from $\mathcal{D}'$ is not only sufficient for our task, but also easy to do with existing quantum devices.

After training is completed, to predict $\Tr (O\rho)$ for a new state $\rho$ drawn from the distribution $\mathcal{D}$, we need to know some information about $\rho$.
The state $\rho$, like the operator $O$, can be expanded in terms of Pauli operators, and when we replace $O$ by its weight-$k$ truncation, only the truncated part of $\rho$ contributes to its expectation value.
Thus if the $k$-body reduced density matrices (RDMs) for states drawn from $\mathcal{D}$ are known classically, then the predictions can be computed classically.
If the states drawn from $\mathcal{D}$ are presented as unknown quantum states, then we can learn these $k$-body RDMs efficiently (for small $k$) using classical shadow tomography, and then proceed with the classical computation to obtain a predicted value of $\Tr (O\rho)$.

\vspace{2em}
{\renewcommand\addcontentsline[3]{} \subsection{Learning an unknown process} \label{subsec:process-learn}}

Now suppose that $\mathcal{E}$ is an arbitrary and unknown quantum process mapping $n$ qubits to $n$ qubits.
Let $\{O_i\}$ be a family of target observables, and $\mathcal{D}$ be a distribution on quantum states.
We assume the ability to repeatedly access $\mathcal{E}$ for a total of $N$ times. Each time we can apply $\mathcal{E}$ to an input state of our choice, and perform the measurement of our choice on the resulting output.
In principle we could allow input states that are entangled across the $N$ channel uses, and allow collective measurements across the $N$ channel outputs.
But here we confine our attention to the case where the $N$ inputs are unentangled, and the channel outputs are measured individually.
Our goal is to find a function $h(\rho,O)$ which predicts, with a small mean squared error, the expectation value of $O_i$ in the output state $\mathcal{E}(\rho)$ for every observable $O_i$ in the family $\{O_i\}$:
\begin{equation}
        \E_{\rho \sim \mathcal{D}} \left| h(\rho, O_i) - \Tr(O_i \mathcal{E}(\rho)) \right|^2 \leq \epsilon.
    \end{equation}
Our main result is that this task can be achieved efficiently if $O_i$ is a bounded-degree observable and $\mathcal{D}$ is locally flat.  That is, $N$, the number of times we access $\mathcal{E}$, and the computational complexity of training and prediction, scale reasonably with the system size $n$ and the target accuracy $\epsilon$.

To prove this result, we observe that the task of learning an unknown quantum process can be reduced to learning unknown states and learning unknown observables. If $\rho_\ell$ is sampled from the distribution $\mathcal{D}$, then, since $\mathcal{E}$ is unknown, $\mathcal{E}(\rho_\ell)$ should be regarded as an unknown quantum state. Suppose we learn this state; that is, after preparing and measuring $\mathcal{E}(\rho_\ell)$ sufficiently many times we can accurately predict the expectation value $\Tr(O_i \mathcal{E}(\rho_\ell))$ for each target observable $O_i$.

Now notice that $\Tr(O_i \mathcal{E}(\rho_\ell))= \Tr(\mathcal{E}^\dagger(O_i)\rho_\ell)$, where $\mathcal{E}^\dagger$ is the (Heisenberg-picture) map dual to $\mathcal{E}$. Since $\mathcal{E}^\dagger$ is unknown, $\mathcal{E}^\dagger(O_i)$ should be regarded as an unknown observable. Suppose we learn this observable; that is, using the dataset $\{ \rho_\ell,\Tr(\mathcal{E}^\dagger(O_i)\rho_\ell) \}$ as training data, we can predict $\Tr(\mathcal{E}^\dagger(O_i)\rho)$ for $\rho$ drawn from $\mathcal{D}$ with a small mean squared error.
This achieves the task of learning the process $\mathcal{E}$ for state distribution $\mathcal{D}$ and target observable $O_i$.

Having already shown that arbitrary quantum states can be learned efficiently for the purpose of predicting expectation values of bounded-degree observables, and that arbitrary observables can be learned efficiently for locally flat input state distributions, we obtain our main result. Since the distribution $\mathcal{D}$ is locally flat, it suffices to learn the low-degree truncated approximation to the unknown operator $\mathcal{E}^\dagger(O_i)$, incurring only a small mean squared error. To predict $\Tr(\mathcal{E}^\dagger(O_i)\rho)$, then, it suffices to know only the few-body RDMs of the input state $\rho$. For any input state $\rho$, these few-body density matrices can be learned efficiently using classical shadow tomography.

As noted above in the discussion of learning observables, the states $\rho_\ell$ in the training data need not be sampled from $\mathcal{D}$. To learn a low-degree approximation to $\mathcal{E}^\dagger(O_i)$, it suffices to sample from a locally flat distribution on product states. Even if we sample only product states during training, we can make accurate predictions for highly entangled input states.
We also emphasize again that the unknown process $\mathcal{E}$ is arbitrary. Even if $\mathcal{E}$ has quantum computational complexity exponential in $n$, we can learn to predict $\Tr(O \mathcal{E}(\rho) )$ accurately and efficiently, for bounded-degree observables $O$ and for any locally flat distribution on the input state $\rho$.



\vspace{2em}
{\renewcommand\addcontentsline[3]{} \section{Algorithm for learning an unknown quantum process}}

Consider an unknown $n$-qubit quantum process $\mathcal{E}$ (a CPTP map).
Suppose we have obtained a classical dataset by performing $N$ randomized experiments on $\mathcal{E}$.
Each experiment prepares a random product state $\ket{\psi^{\mathrm{(in)}}} = \bigotimes_{i=1}^n \ket{s^{\mathrm{(in)}}_i}$, passes through $\mathcal{E}$, and performs a randomized Pauli measurement \cite{huang2020predicting, elben2022randomized} on the output state.
Recall that a randomized Pauli measurement measures each qubit of a state in a random Pauli basis ($X$, $Y$ or $Z$) and produces a measurement outcome of $\ket{\psi^{\mathrm{(out)}}} = \bigotimes_{i=1}^n \ket{s^{\mathrm{(out)}}_i}$, where $\ket{s^{\mathrm{(out)}}_i} \in \mathrm{stab}_1 \triangleq \{\ket{0}, \ket{1}, \ket{+}, \ket{-}, \ket{y+}, \ket{y-}\}$.
We denote the classical dataset of size $N$ to be
\begin{equation} \label{eq:classicalSn-E}
    S_N(\mathcal{E}) \triangleq \left\{ \ket{\psi^{\mathrm{(in)}}_\ell} = \bigotimes_{i=1}^n \ket{s^{\mathrm{(in)}}_{\ell, i}}, \,\, \ket{\psi^{\mathrm{(out)}}_\ell} = \bigotimes_{i=1}^n \ket{s^{\mathrm{(out)}}_{\ell, i}} \right\}_{\ell = 1}^N,
\end{equation}
where $\ket{s^{\mathrm{(in)}}_{\ell, i}}, \ket{s^{\mathrm{(out)}}_{\ell, i}} \in \mathrm{stab}_1$.
Each product state is represented classically with $\mathcal{O}(n)$ bits.
Hence, the classical dataset $S_N(\mathcal{E})$ is of size $\mathcal{O}(n N)$ bits.
The classical dataset can be seen as one way to generalize the notion of classical shadows of quantum states~\cite{huang2020predicting} to quantum processes.
Our goal is to design an ML algorithm that can learn an approximate model of $\mathcal{E}$ from the classical dataset $S_N(\mathcal{E})$, such that for a wide range of states~$\rho$ and observables~$O$, the ML model can predict a real value $h(\rho, O)$ that is approximately equal to $\Tr(O \mathcal{E}(\rho))$.

\vspace{2em}
{\renewcommand\addcontentsline[3]{} \subsection{ML algorithm} \label{sec:ML-algo}}

We are now ready to state the proposed ML algorithm.
At a high level, the ML algorithm learns a low-degree approximation to the unknown $n$-qubit CPTP map $\mathcal{E}$.
Despite the simplicity of the ML algorithm, several ideas go into the design of the ML algorithm and the proof of the rigorous performance guarantee. These ideas are presented in Section~\ref{sec:idea-proof}.

Let $O$ be an observable with $\norm{O} \leq 1$ that is written as a sum of few-body observables, where each qubit is acted by $\mathcal{O}(1)$ of the few-body observables.
We denote the Pauli representation of $O$ as $\sum_{Q \in \{I, X, Y, Z\}^{\otimes n}} a_Q Q$.
By definition of $O$, there are $\mathcal{O}(n)$ nonzero Pauli coefficients $a_Q$.
We consider a hyperparameter $\tilde{\epsilon} > 0$; roughly speaking $\tilde{\epsilon}$ will scale inverse polynomially in the dataset size $N$ from Eq.~\eqref{eq:N-samp-final}.
For every Pauli observable $P \in \{I, X, Y, Z\}^{\otimes n}$ with $|P| \leq k = \Theta( \log(1 / \epsilon) )$, the algorithm computes an empirical estimate for the corresponding Pauli coefficient $\alpha_P$ via
\begin{align}
    \hat{x}_P(O) &= \frac{1}{N} \sum_{\ell=1}^{N} \Tr\left(P \bigotimes_{i=1}^n \ketbra{s^{\mathrm{(in)}}_{\ell, i}}{s^{\mathrm{(in)}}_{\ell, i}} \right) \Tr\left( O \bigotimes_{i=1}^n\left(3 \ketbra{s^{\mathrm{(out)}}_{\ell, i}}{s^{\mathrm{(out)}}_{\ell, i}} - I \right) \right), \label{eq:hatxP}\\
    \hat{\alpha}_P(O) &=
    \begin{cases}
        3^{|P|} \hat{x}_P(O), & \left(\frac{1}{3}\right)^{|P|} > 2 \tilde{\epsilon} \,\,\,\, \mbox{and} \,\,\, |\hat{x}_P(O)| > 2 \cdot 3^{|P| / 2} \sqrt{\tilde{\epsilon}} \sum_{Q: a_Q \neq 0} |a_Q|,\\
        0, & \mbox{otherwise}.
    \end{cases}
     \label{eq:hatalphaP}
\end{align}
The computation of $\hat{x}_P(O)$ and $\hat{\alpha}_P(O)$ can both be done classically.
The basic idea of $\hat{\alpha}_P(O)$ is to set the coefficient $3^{|P|} \hat{x}_P(O)$ to zero when the influence of Pauli observable $P$ is negligible.
Given an $n$-qubit state $\rho$, the algorithm outputs
\begin{equation} \label{eq:h-rhoO}
    h(\rho, O) = \sum_{P: |P| \leq k} \hat{\alpha}_P(O) \Tr(P\rho).
\end{equation}
With a proper implementation, the computational time is $\mathcal{O}(k n^k N)$.
Note that, to make predictions, the ML algorithm only needs the $k$-body reduced density matrices ($k$-RDMs) of $\rho$.
The $k$-RDMs of $\rho$ can be efficiently obtained by performing randomized Pauli measurement on $\rho$ and using the classical shadow formalism \cite{huang2020predicting, elben2022randomized}.
Except for this step, which may require quantum computation, all other steps of the ML algorithm only requires classical computation.
Hence, if the $k$-RDMs of $\rho$ can be computed classically, then we have a classical ML algorithm that can predict an arbitrary quantum process~$\mathcal{E}$ after learning from data.

\vspace{2em}
{\renewcommand\addcontentsline[3]{} \subsection{Rigorous guarantee}}

To measure the prediction error of the ML model, we consider the average-case prediction performance under an arbitrary $n$-qubit state distribution $\mathcal{D}$ invariant under single-qubit Clifford gates, which means that the probability distribution $f_\mathcal{D}(\rho)$ of sampling a state $\rho$ is equal to $f_\mathcal{D}(U \rho U^\dagger)$ of sampling $U \rho U^\dagger$ for any single-qubit Clifford gate $U$.
We call such a distribution locally flat.

\vspace{0.5em}
\begin{theorem}[Learning an unknown quantum process] \label{thm:main-res}
    Given $\epsilon, \epsilon' = \Theta(1)$ and a training set $S_N(\mathcal{E})$ of size $N = \mathcal{O}(\log n)$ as specified in Eq.~\eqref{eq:classicalSn-E}.
    With high probability, the ML model can learn a function $h(\rho, O)$ from $S_N(\mathcal{E})$ such that for any distribution $\mathcal{D}$ over $n$-qubit states invariant under single-qubit Clifford gates, and for any bounded-degree observable $O$ with $\norm{O} \leq 1$,
    \begin{equation}
        \E_{\rho \sim \mathcal{D}} \left| h(\rho, O) - \Tr(O \mathcal{E}(\rho)) \right|^2 \leq \epsilon + \max\left(\norm{O'}^2, 1\right) \epsilon',
    \end{equation}
    where $O'$ is the low-degree truncation (of degree $k = \lceil \log_{1.5}(1 / \epsilon) \rceil$) of the observable $O$ after the Heisenberg evolution under~$\mathcal{E}$.
    The training and prediction time of $h(\rho, O)$ are both polynomial in $n$.
    When $\epsilon$ is small and $\epsilon' = 0$, the data size $N$ and computational time scale as $2^{\mathcal{O}(\log(\frac{1}{\epsilon}) \log(n))}$.
\end{theorem}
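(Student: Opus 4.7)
The plan is to apply the reduction outlined in Section~\ref{subsec:process-learn}: by the Heisenberg identity $\Tr(O\mathcal{E}(\rho))=\Tr(\mathcal{E}^\dagger(O)\rho)=\Tr(\wt{O}\rho)$, learning the process reduces to learning the (generally non-local) observable $\wt{O} := \mathcal{E}^\dagger(O)$, which still satisfies $\|\wt{O}\|\le 1$. Expanding $\wt{O}=\sum_P \alpha_P P$ in the Pauli basis and denoting its degree-$k$ truncation by $O'$, the ML output is $h(\rho,O)=\sum_{|P|\le k}\hat{\alpha}_P\Tr(P\rho)$ with $k=\lceil\log_{1.5}(1/\epsilon)\rceil$. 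A triangle inequality then splits the mean-squared error into a truncation part $\E_\rho|\Tr((\wt{O}-O')\rho)|^2$ and an estimation part $\E_\rho|\sum_{|P|\le k}(\hat{\alpha}_P-\alpha_P)\Tr(P\rho)|^2$.

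For the truncation part, I would exploit local Clifford invariance of $\mathcal{D}$ to compute $\E_\rho[\Tr(P\rho)\Tr(Q\rho)]$: averaging each qubit's factor $P_i\otimes Q_i$ under single-qubit Clifford conjugation kills the expectation whenever $P\neq Q$, and for $P=Q$ bounds it by $3^{-|P|}\sum_{\mathrm{supp}(R)=\mathrm{supp}(P)}\Tr(R\rho)^2\le(2/3)^{|P|}$ (using $\sum_{\mathrm{supp}(R)\subseteq S}\Tr(R\rho)^2\le 2^{|S|}$, a consequence of Pauli completeness). Hence $\E_\rho|\Tr((\wt{O}-O')\rho)|^2\le(2/3)^k\sum_P\alpha_P^2\le(2/3)^k\le\epsilon$. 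The same orthogonality also diagonalises the estimation-error expectation into $\sum_{|P|\le k}(\hat{\alpha}_P-\alpha_P)^2\E_\rho[\Tr(P\rho)^2]\le\sum_{|P|\le k}(\hat{\alpha}_P-\alpha_P)^2(2/3)^{|P|}$.

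Next I would verify that $\hat{x}_P(O)$ in Eq.~\eqref{eq:hatxP} is an unbiased estimator of $3^{-|P|}\alpha_P$ with controlled variance: the product structure of the random input $|\psi^{\mathrm{(in)}}\rangle$ drawn from $\mathcal{D}'$ gives $\E[\Tr(P|\psi\rangle\langle\psi|)\Tr(Q|\psi\rangle\langle\psi|)]=3^{-|P|}\delta_{PQ}$, while the classical-shadow identity $\E[\bigotimes_i(3\ketbra{s^{\mathrm{(out)}}_i}{s^{\mathrm{(out)}}_i}-I)]=\mathcal{E}(|\psi^{\mathrm{(in)}}\rangle\langle\psi^{\mathrm{(in)}}|)$ supplies the $\Tr(\wt{O}\,\cdot)$ factor. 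A Hoeffding/Bernstein bound applied per Pauli, followed by a union bound over the $n^{\mathcal{O}(k)}$ strings with $|P|\le k$, gives $|\hat{x}_P(O)-3^{-|P|}\alpha_P|\le\tilde{\epsilon}$ uniformly from only $N=\mathcal{O}(\log n)$ samples. The soft-thresholding rule~\eqref{eq:hatalphaP} then zeros out coefficients drowned in noise, while the quantum Bohnenblust--Hille inequality proven earlier in the paper bounds the number of surviving coefficients by $\mathrm{poly}(\|O'\|)$, converting the weighted $\ell_2$ sum $\sum_P(\hat{\alpha}_P-\alpha_P)^2(2/3)^{|P|}$ into $\mathcal{O}(\max(\|O'\|^2,1)\epsilon')$ upon choosing $\tilde{\epsilon}\propto\epsilon'$.

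The main obstacle is achieving $N=\mathcal{O}(\log n)$ while controlling an exponential-in-$k$ number of Pauli coefficients: a naive worst-case union bound would force $N$ polynomial in $n$, and only the quantum Bohnenblust--Hille inequality, by limiting the $\ell_1$ mass of the large coefficients of the low-degree observable $O'$, preserves the logarithmic dependence. A secondary subtlety is that although the estimator is trained on the product distribution $\mathcal{D}'$, the final error must be bounded against an arbitrary locally flat $\mathcal{D}$; this is what forces the $(2/3)^{|P|}$-decay argument above, since the simpler $3^{-|P|}$ bound available under $\mathcal{D}'$ would not apply directly to the target distribution.
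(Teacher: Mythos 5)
Your proposal follows essentially the same route as the paper's proof: reduce to learning $\mathcal{E}^\dagger(O)$, bound the degree-$k$ truncation error by $(2/3)^k$ via the local-Clifford-invariance computation (the paper's Lemma~\ref{lem:mse-D} and Corollary~\ref{cor:low-deg-approx}), estimate $x_P = 3^{-|P|}\alpha_P$ from the shadow data using the stabilizer-product orthogonality, threshold small coefficients, invoke the quantum Bohnenblust--Hille ($\ell_r$, $r=2k/(k+1)$) bound to control the accumulated $\ell_2$ error, and pay a $2^k$ factor to pass from the product training distribution to a general locally flat $\mathcal{D}$ --- all matching Appendices D--E. Two minor caveats: the threshold must be taken as $\tilde{\epsilon} \propto (\epsilon')^{k+1}$ rather than $\propto \epsilon'$ (since the filtering error scales as $\tilde{\epsilon}^{1-r/2} = \tilde{\epsilon}^{1/(k+1)}$), which is harmless for $\epsilon,\epsilon'=\Theta(1)$; and the final clause of the theorem ($\epsilon'=0$ with $N = 2^{\mathcal{O}(\log(1/\epsilon)\log n)}$) is obtained by the ``naive'' estimator branch with $\tilde{\epsilon}\sim \epsilon/n^{\mathcal{O}(k)}$ that you dismiss rather than develop --- it is exactly the second argument of the paper's $\min$ in Eq.~\eqref{eq:N-samp-final}.
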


The detailed theorem statement and the proof of the theorem are given in Appendix~\ref{sec:learning-qevo}.
An interesting aspect of the above theorem is that the states sampled from the distribution $\mathcal{D}$ can be highly entangled, even though the training data $S_N(\mathcal{E})$ only contains information about random product states.
From the theorem, we can see that if $\norm{O'} = \mathcal{O}(1)$, then we only need $\mathcal{O}(\log(n))$ samples to obtain a constant prediction error.
Otherwise, $\mathcal{O}(\log(n))$ samples is still enough to guarantee a constant prediction error relative to $\norm{O'}^2$.
The precise scaling is given as follows. Consider data size
\begin{equation} \label{eq:N-samp-final}
    N = \log(n) \, \min\left( 2^{\mathcal{O}\left(\log(\frac{1}{\epsilon}) \left( \log\log(\frac{1}{\epsilon}) +  \log(\frac{1}{\epsilon'})\right)\right)}, 2^{\mathcal{O}(\log(\frac{1}{\epsilon}) \log(n))} \right).
\end{equation}
The computational time to learn and predict $h(\rho, O)$ is bounded above by $\mathcal{O}(k n^k N)$ and the prediction error is bounded as
\begin{equation} \label{eq:pred-error-final}
    \E_{\rho \sim \mathcal{D}} \left| h(\rho, O) - \Tr(O \mathcal{E}(\rho) ) \right|^2 \leq \epsilon + \max\left(\norm{O'}^2, 1\right) \epsilon'.
\end{equation}
As we take $\epsilon'$ to be zero, we can remove the dependence on the low-degree truncation $O'$. In this setting, $N$ and computation time both become $2^{\mathcal{O}(\log(\frac{1}{\epsilon}) \log(n))}$, which is polynomial in $n$ if $\epsilon = \Theta(1)$.

{\renewcommand\addcontentsline[3]{} \section{Proof ideas} \label{sec:idea-proof}}

The proof of the rigorous performance guarantee for the proposed ML algorithm consists of five parts.
The first two parts presented in Appendix~\ref{sec:optimize-klocal} and Appendix~\ref{sec:norm-ineq-Pauli} are a detour to establish a few fundamental and useful norm inequalities about Hamiltonians/observables.
The latter three parts given in Appendix~\ref{sec:samp-opt-bd-local-H}, Appendix~\ref{sec:learn-unk-obs}, and Appendix~\ref{sec:learning-qevo} 
apply the newly-established norm inequalities to 
three learning tasks.
In the following, we present the basic ideas in each part.

\vspace{2em}
{\renewcommand\addcontentsline[3]{} \subsection{Improved approximation algorithms for optimizing local Hamiltonians}}

We begin with a different task, namely optimizing local Hamiltonians.
We are given an $n$-qubit $k$-local Hamiltonian
\begin{equation}
H = \sum_{P \in \{I, X, Y, Z\}^{\otimes n}: |P|\leq k} \alpha_P P,
\end{equation}
where $|P|$ is the weight of the Pauli operator $P$, the number of qubits upon which $P$ acts nontrivially.
Our goal is to find a state $\ket{\psi}$ that maximizes/minimizes $\bra{\psi} H \ket{\psi}$.
This task is related to solving ground states \cite{kempe2006complexity, sakurai_napolitano_2017} when we consider minimizing $\bra{\psi} H \ket{\psi}$ and quantum optimization \cite{farhi2014quantum,Farhi2014AQA,harrow2017extremal, parekh2020beating, Hallgren2020AnAA, anshu2021improved, hastings2022optimizing} when we consider maximizing $\bra{\psi} H \ket{\psi}$.

We give a general randomized approximation algorithm in Appendix~\ref{sec:optimize-klocal} for producing a random product state $\ket{\psi}$ that either approximately minimizes or approximately maximizes a $k$-local Hamiltonian $H$ with a rigorous upper/lower bound based on the Pauli coefficients $\alpha_P$ of $H$.
The proposed optimization algorithm applies to various classes of Hamiltonians and is inspired by the proofs of Littlewood's 4/3 inequality \cite{littlewood1930bounded} and the Bohnenblust-Hille inequality \cite{BHineq1931}.
For classes that have been studied previously \cite{Dinur2006OnTF, barak_et_al:LIPIcs:2015:5298, harrow2017extremal, anshu2021improved}, the proposed algorithm obtains an improved bound.
Our improvement crucially stems from our construction for the random state $\ket{\psi}$.
\cite{Dinur2006OnTF, barak_et_al:LIPIcs:2015:5298, harrow2017extremal} utilize a random restriction approach, where some random subset of qubits are fixed with some random values and the rest of the qubits are optimized.
On the other hand, we utilize a polarization approach, where we replicate each qubit many times, randomly fix all except the last replica, optimize the last replica, and combine using a random-signed averaging.
A detailed comparison is given in Appendix~\ref{sec:alt-main-theorem}~and~\ref{sec:cor-main-opt-thm}.

Two classes of Hamiltonians used in our learning applications are general $k$-local Hamiltonians and bounded-degree $k$-local Hamiltonians.
A $k$-local Hamiltonian with degree at most $d$ is a Hermitian operator that can be written as a sum of $k$-qubit observables, where each qubit is acted on by at most $d$ of the $k$-qubit observables.

\vspace{0.8em}
\begin{corollary}[Optimizing general $k$-local Hamiltonian]
    Given an $n$-qubit $k$-local Hamiltonian
    \begin{equation}
        H = \sum_{P: |P| \leq k} \alpha_P P.
    \end{equation}
    There is a randomized algorithm that runs in time $\mathcal{O}(n^k)$ and produces either a random maximizing state $\ket{\psi} = \ket{\psi_1} \otimes \ldots \otimes \ket{\psi_n}$ satisfying
    \begin{equation}
        \E_{\ket{\psi}} \big[ \bra{\psi} H \ket{\psi} \big] \geq \E_{\ket{\phi}: \mathrm{Haar}} \big[ \bra{\phi} H \ket{\phi} \big] + C(k) \left(\sum_{P \neq I} |\alpha_P|^{2k / (k+1)}\right)^{(k+1) / (2k)},
    \end{equation}
    or a random minimizing state $\ket{\psi} = \ket{\psi_1} \otimes \ldots \otimes \ket{\psi_n}$ satisfying
    \begin{equation}
        \E_{\ket{\psi}} \big[ \bra{\psi} H \ket{\psi} \big] \leq \E_{\ket{\phi}: \mathrm{Haar}} \big[ \bra{\phi} H \ket{\phi} \big] - C(k) \left(\sum_{P \neq I} |\alpha_P|^{2k / (k+1)}\right)^{(k+1) / (2k)},
    \end{equation}
    where $C(k) = 1 / \exp(\Theta(k \log k))$.
\end{corollary}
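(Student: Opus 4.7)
The plan is to derive this statement as a corollary of a more general randomized optimization theorem proved in Appendix~\ref{sec:optimize-klocal}. The core idea, as foreshadowed in the text, is a \emph{polarization} strategy: lift $H$ to a multilinear operator on replicated qubits, randomly fix all but one replica per qubit, greedily optimize the remaining replica, and pull back to a product state on the original $n$ qubits via random-signed averaging.

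First I would polarize. For each qubit $i$, introduce $k$ replicas, yielding an enlarged system of $nk$ qubits; for each term $\alpha_P P$ with $|P|\leq k$, place the non-identity single-qubit factors onto distinct replicas, uniformly averaged over which replicas they occupy. Call the resulting symmetric multilinear operator $\widetilde H$. By the polarization identity, $\widetilde H$ restricted to a state of the form $\bigotimes_i \ket{\psi_i}^{\otimes k}$ recovers $\bra{\psi}H\ket{\psi}$ up to a combinatorial factor of order $k!/k^k$. Next, independently for each qubit $i$ and each replica $r\in\{1,\ldots,k-1\}$, sample a uniformly random stabilizer state from $\mathrm{stab}_1$. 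Conditioned on these choices, $\widetilde H$ becomes an effective single-qubit Hamiltonian $h_i$ on the last replica of each qubit~$i$, with coefficients that are multilinear in the fixed stabilizer data. On the last replica, greedily pick the eigenstate of $h_i$ with maximal (or minimal) expectation; random Rademacher signs on the replica labels then cancel the cross terms between distinct replicas of the same qubit and leave a genuine product state on the original $n$ qubits. Since the resulting random deviation from the Haar baseline $\alpha_I$ is symmetric in sign on average but has a controlled absolute value, one commits to whichever of the max or min side realizes the larger deviation.

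The key quantitative step is to lower bound, in terms of the $\ell_{2k/(k+1)}$-norm of $(\alpha_P)_{P\neq I}$, the expected absolute value of a sum over $P$ of $|\alpha_P|$ times products of independent random signs arising from the stabilizer and Rademacher randomness. A Hölder/Khintchine induction on the degree, modeled on the classical proofs of Littlewood's $4/3$ inequality and the Bohnenblust--Hille inequality, delivers exactly such a bound with constant $C(k)=1/\exp(\Theta(k\log k))$, which accumulates the $k^k/k!$ polarization overhead together with the $k$ levels of the induction. The exponent $2k/(k+1)$ arises as the fixed point of this iterative application of Hölder's inequality.

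The main obstacle will be this Bohnenblust--Hille-style induction. The classical proof is already delicate, and here the variables are Pauli operators that do not commute across replicas of the same qubit, so one must verify that polarization and sign averaging correctly disentangle the contributions of different $P$, and that each inductive Hölder step remains compatible with replacing scalar monomials by products of stabilizer-state expectations of Paulis. Keeping the cumulative constant at $\exp(\Theta(k\log k))$ rather than exponentially worse requires arranging the induction so that only one factor of $k$ per level is lost, which is where the choice of the stabilizer ensemble (rather than e.g.\ Haar-random single-qubit states) for the random restrictions becomes essential.
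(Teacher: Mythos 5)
Your overall architecture --- polarize onto $k$ replicas per qubit, randomly fix all but the last replica, greedily optimize the last one, recombine with random signs, and run a Bohnenblust--Hille-style Khintchine/H\"older induction to extract the $\ell_{2k/(k+1)}$ norm --- is indeed the paper's approach. But there is a genuine gap in how you handle the \emph{inhomogeneous} terms. The signed average $\E_\sigma[\sigma_1\cdots\sigma_k(\cdots)]$ in the polarization identity isolates only the top-degree homogeneous slice $H_k=\sum_{|P|=k}\alpha_P P$; terms of weight $<k$ are annihilated by the sign averaging, so your argument as written lower-bounds the deviation only by $\bigl(\sum_{|P|=k}|\alpha_P|^r\bigr)^{1/r}$ and gives nothing when, say, $H$ is actually $1$-local. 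The statement requires the full sum over all $P\neq I$. The paper closes this by (i) first selecting the slice $\kappa^*$ maximizing $\sum_{|P|=\kappa}|\alpha_P|^r$ (costing a factor $k^{1/r}$), and (ii) introducing a one-parameter family $\rho(t)$ so that $f(t)=\Tr(H\rho(t))$ is a degree-$k$ polynomial whose $t^{\kappa^*}$ coefficient is the controlled homogeneous quantity, then invoking Markov brothers' inequality to show that $\max_{|t|\le 1}|f(t)-\alpha_I|$ is at least $(1+\sqrt2)^{-k}$ times that coefficient. Your sketch contains no analogue of this polynomial-extraction step, and your remark about ``committing to whichever side realizes the larger deviation'' addresses only the max-vs-min sign ambiguity, not the fact that the quantity you control is not the deviation of the full $H$ from $\alpha_I$.

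Two smaller points. First, you assert that sampling the frozen replicas from the single-qubit stabilizer ensemble (rather than Haar) is essential to the constant; the paper does the opposite. It uses Haar-random single-qubit states precisely because a Haar state can be realized as a random rotation followed by a uniform choice among the eight ``cube'' states $\tfrac12\bigl(I+\tfrac{1}{\sqrt3}(s^XUXU^\dagger+s^YUYU^\dagger+s^ZUZU^\dagger)\bigr)$, which yields three independent Rademacher signs per qubit and hence a clean Khintchine lower bound $\tfrac{1}{\sqrt6}\sqrt{\Tr(O^2)/2^n}$ for homogeneous $1$-local $O$; uniform stabilizer states activate only one Pauli direction per qubit and make this base case strictly more delicate. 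Second, the passage from $\sum_i\bigl(\sum_{P:P_i\neq I}\alpha_P^2\bigr)^{1/2}$ to the $\ell_{2k/(k+1)}$ norm is not merely ``the fixed point of iterated H\"older''; it requires the combinatorial counting bound that at most $4^k$ nonzero Pauli terms are supported inside any fixed $k$-subset of qubits (the ``expansion property''), combined with a sorting argument over qubits. You should make that counting step explicit, since it is where the locality hypothesis $|P|\le k$ actually enters.
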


\begin{corollary}[Optimizing bounded-degree $k$-local Hamiltonian]
    Given an $n$-qubit $k$-local Hamiltonian $H = \sum_{P: |P| \leq k} \alpha_P P$ with bounded degree $d$, $|\alpha_P| \leq 1$ for all $P$, and $k = \mathcal{O}(1)$.
    There is a randomized algorithm that runs in time $\mathcal{O}(nd)$ and produces either a random maximizing state $\ket{\psi} = \ket{\psi_1} \otimes \ldots \otimes \ket{\psi_n}$ satisfying
    \begin{equation}
        \E_{\ket{\psi}} \big[ \bra{\psi} H \ket{\psi} \big] \geq \E_{\ket{\phi}: \mathrm{Haar}} \big[ \bra{\phi} H \ket{\phi} \big] + \frac{C}{\sqrt{d}} \sum_{P \neq I} |\alpha_P|,
    \end{equation}
    or a random minimizing state $\ket{\psi} = \ket{\psi_1} \otimes \ldots \otimes \ket{\psi_n}$ satisfying
    \begin{equation}
        \E_{\ket{\psi}} \big[ \bra{\psi} H \ket{\psi} \big] \leq \E_{\ket{\phi}: \mathrm{Haar}} \big[ \bra{\phi} H \ket{\phi} \big] - \frac{C}{\sqrt{d}} \sum_{P \neq I} |\alpha_P|
    \end{equation}
    for some constant $C$.
\end{corollary}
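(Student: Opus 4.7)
The plan is to specialize the polarization-based optimization algorithm developed in Appendix~\ref{sec:optimize-klocal} to the bounded-degree regime, and to replace the general-purpose $\ell_{2k/(k+1)}$ (Bohnenblust--Hille) control by an $\ell_1$ control that exploits the degree constraint $d$. The algorithm itself is the same ``replicate each qubit $m = \Theta(k)$ times, draw the first $m-1$ replicas independently from the six single-qubit stabilizer states, locally optimize the last replica, and fuse the replicas into a single product state by random-signed averaging'' procedure. Because $k = \mathcal{O}(1)$ and the per-qubit local step needs only the $\le d$ Pauli terms incident to that qubit, the overall runtime is $\mathcal{O}(nd)$.

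The analysis factors into three steps. First, I would invoke the master polarization identity from Appendix~\ref{sec:optimize-klocal} to express $\E_{\ket{\psi}}[\bra{\psi}H\ket{\psi}] - \E_{\mathrm{Haar}}[\bra{\phi}H\ket{\phi}]$, up to a $k$-dependent constant factor, as $\sum_i \E_{\phi_{-i}}[\lVert H_i^{\mathrm{eff}}(\phi_{-i})\rVert]$, where
\[
H_i^{\mathrm{eff}}(\phi_{-i}) \;=\; \sum_{P\,:\,i\,\in\,\mathrm{supp}(P)} \alpha_P \,\Big(\prod_{j\ne i,\,j\,\in\,\mathrm{supp}(P)} \langle\phi_j|P_j|\phi_j\rangle\Big)\,P_i
\]
is the single-qubit Hermitian operator seen at qubit $i$ when the other qubits are frozen to stabilizer states $\{\ket{\phi_j}\}_{j\ne i}$. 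Second, I would lower bound this expectation. Since $P_i$ is a nontrivial Pauli in every term of the sum, $H_i^{\mathrm{eff}}$ is a traceless $2\times 2$ Hermitian matrix whose spectral norm equals the Euclidean norm of its vector of Pauli coefficients, and its squared norm $\lVert H_i^{\mathrm{eff}}\rVert^2$ is a polynomial of degree $\le 2(k-1) = \mathcal{O}(1)$ in the independent stabilizer variables. Bonami--Beckner hypercontractivity combined with a Paley--Zygmund argument at the fourth moment therefore yields $\E[\lVert H_i^{\mathrm{eff}}\rVert] \ge C_k \sqrt{\sum_{P\,\ni\, i}\alpha_P^2}$ for some constant $C_k > 0$. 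Third, at most $d$ Paulis act nontrivially on qubit $i$, so Cauchy--Schwarz gives $\sqrt{\sum_{P\,\ni\, i}\alpha_P^2} \ge (1/\sqrt{d})\sum_{P\,\ni\, i}|\alpha_P|$; summing over $i$ and noting that each Pauli is counted exactly $|P|\le k$ times produces the claimed additive lower bound $(C/\sqrt{d})\sum_{P\ne I}|\alpha_P|$, absorbing the $\mathcal{O}(1)$ factors into $C$. The minimization variant is identical with the bottom eigenstate of $H_i^{\mathrm{eff}}$ replacing the top.

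The main obstacle is the first step: establishing that the polarization-plus-random-sign construction genuinely realizes the \emph{additive} sum $\sum_i \E[\lVert H_i^{\mathrm{eff}}\rVert]$ inside $\E_{\ket{\psi}}[\bra{\psi}H\ket{\psi}]$, as opposed to merely a single summand (which is what a naive one-qubit optimization would give). This is exactly the place where the Littlewood/Bohnenblust--Hille replica bookkeeping of Appendix~\ref{sec:optimize-klocal} is used, and where $k=\mathcal{O}(1)$ is essential: the number of replicas $m$ must be large enough that the random signs cancel all cross-qubit interference terms (each such term couples at least two independent signs from distinct qubits), yet small enough that the unavoidable $1/m^{\mathcal{O}(k)}$ dilution from the averaging does not wipe out the $1/\sqrt{d}$ gain. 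Taking $m=\Theta(k)$ balances these two constraints because $k$ is a constant, and delivers both the claimed bound and the $\mathcal{O}(nd)$ runtime.
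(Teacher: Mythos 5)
Your architecture is sound and is, at bottom, the same polarization-plus-local-optimization algorithm the paper uses; but you have re-derived the general machinery where the paper's proof of this corollary is a two-line specialization. The paper proves a single master theorem (Theorem~\ref{thm:main}) whose guarantee is $C(c_e,d_e,k)\left(\sum_{P\ne I}|\alpha_P|^r\right)^{1/r}$ with $r=2d_e/(d_e+1)$ and $C=\Theta_k\left(c_e^{-1/(2d_e)}\right)$, and then observes (Fact~\ref{fact:exp-bounded}) that a degree-$d$, $k$-local Hamiltonian has expansion dimension $d_e=1$ and expansion coefficient $c_e=4^k d$, so $r=1$ and $C=\Theta_k(d^{-1/2})$; the runtime $\mathcal{O}(nk+\mathsf{nnz}(H)2^k)=\mathcal{O}(nd)$ follows since $\mathsf{nnz}(H)=\mathcal{O}(4^k nd)$. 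Within your re-derivation, the per-qubit Cauchy--Schwarz step is exactly the paper's Lemma~\ref{lem:key-char-expansion} in the case $d_e=1$ (note that up to $4^k d$, not $d$, Pauli terms touch a given qubit, which only moves the constant), and your anticoncentration step --- hypercontractivity plus Paley--Zygmund at the fourth moment to get $\E\norm{H_i^{\mathrm{eff}}}\gtrsim_k\left(\E\norm{H_i^{\mathrm{eff}}}^2\right)^{1/2}$ --- is a legitimate alternative to the paper's iterated Khintchine inequality for polarized observables (Lemma~\ref{lem:khintchine-obs}); both yield a first-moment lower bound for a degree-$\mathcal{O}(k)$ polynomial of independent single-qubit variables with a $2^{-\Theta(k)}$ constant, harmless for $k=\mathcal{O}(1)$.

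The step your plan would actually fail on as written is the passage from homogeneous to inhomogeneous $H$. The random-signed average $\E_\sigma\left[\sigma_1\cdots\sigma_m\,\Tr(H\rho)\right]$ annihilates every term whose replica degree is below $m$, so a single application of the fusing identity only ``sees'' one homogeneous weight class; your $H_i^{\mathrm{eff}}$, which sums over all $P\ni i$ of all weights, is not what that bookkeeping produces. The paper repairs this by first selecting the weight class $\kappa^*$ maximizing $\sum_{|P|=\kappa^*}|\alpha_P|$ (costing a factor of $k$) and then transferring the guarantee from $H_{\kappa^*}$ back to $H$ via the one-dimensional optimization over the fusing parameter $t$ combined with Markov brothers' inequality (Corollary~\ref{cor:poly-opt}); some such step is necessary in your plan as well. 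Relatedly, your closing claim that the minimization variant is ``identical with the bottom eigenstate replacing the top'' overclaims: the guarantee emerges as a lower bound on $\E\left|\Tr\left(H\rho(t^*)\right)-\alpha_I\right|$, so the algorithm obtains an improvement in one direction whose sign it cannot choose --- which is precisely why the corollary is phrased as ``either \ldots or''. This does not invalidate the statement (one direction suffices), but the symmetric max/min control is not available from this construction.
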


We note that in the above results, we cannot control whether our algorithm outputs an approximate maximizer or minimizer.
This caveat stems from the use of polarization, where the random-signed averaging only guarantees improvement in one of the two directions.
Modifying our approach to address this issue is an interesting direction for future work.

{\renewcommand\addcontentsline[3]{} \subsection{Norm inequalities from approximate optimization algorithms}}

The bridge that connects the optimization of $k$-local Hamiltonians and efficient learning of quantum states and processes is a set of norm inequalites.
A norm that characterizes the efficiency of learning is the Pauli-$p$ norm, defined as the $\ell_p$-norm on the Pauli coefficients of an observable/Hamiltonian $H = \sum_{P} \alpha_P P$,
\begin{equation}
    \norm{H}_{\mathrm{Pauli}, p} \triangleq \left(\sum_{P \in \{I, X, Y, Z\}^{\otimes n}} |\alpha_{P}|^p \right)^{1/p}.
\end{equation}
The rigorous guarantees from the previous section, namely on finding a state $\ket{\psi}$ whose energy is higher/lower than a Haar-random state by a margin that depends on the Pauli coefficients $\alpha_P$, give an algorithmic proof that the spectral norm $\norm{H}$ and the Pauli coefficients $\alpha_P$ are related.
The proof of this relation is given in Appendix~\ref{sec:norm-ineq-Pauli}.
In particular, for general and bounded-degree $k$-local Hamiltonian, we can use the rigorous guarantee from the approximation algorithms to obtain the following norm inequalites.
Corollary~\ref{cor:norm-ineq-klocal-main} proves the conjecture given in \cite{RWZ2022KKL}.

\vspace{0.8em}
\begin{corollary}[Norm inequality for general $k$-local Hamiltonian] \label{cor:norm-ineq-klocal-main}
    Given an $n$-qubit $k$-local Hamiltonian $H$. We have
    \begin{equation}
        \frac{1}{3} C(k) \norm{H}_{\mathrm{Pauli}, \frac{2k}{k + 1}} \leq \norm{H},
    \end{equation}
    where $C(k) = 1 / \exp(\Theta(k \log k))$.
\end{corollary}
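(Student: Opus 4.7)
The plan is to read the norm inequality straight off the approximation guarantee of the preceding corollary. Decompose $H = \alpha_I I + (H - \alpha_I I)$ and abbreviate $\Lambda := \bigl(\sum_{P \ne I} |\alpha_P|^{2k/(k+1)}\bigr)^{(k+1)/(2k)}$. Since the Haar-averaged pure state is maximally mixed and every non-identity Pauli is traceless, $\E_{\ket{\phi}:\mathrm{Haar}}[\bra{\phi}H\ket{\phi}] = \alpha_I$. The preceding corollary then supplies a random product state $\ket{\psi}$ whose expected energy deviates from $\alpha_I$ by at least $C(k)\Lambda$ in one of the two directions. Because $\norm{H} \ge \bra{\psi}H\ket{\psi}$ and $\norm{H} \ge -\bra{\psi}H\ket{\psi}$ for every unit vector, the probabilistic method applied to whichever direction the corollary delivers gives a single vector $\ket{\psi}$ certifying
\begin{equation}
\norm{H} + |\alpha_I| \;\ge\; C(k)\,\Lambda.
\end{equation}

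Next I would upgrade $\Lambda$ to the full Pauli $\ell_p$ norm of $H$ with $p = 2k/(k+1)$ and dispose of the identity coefficient. Minkowski's inequality on $\ell_p$ sequences applied to the decomposition $H = \alpha_I I + (H - \alpha_I I)$ gives $\norm{H}_{\mathrm{Pauli},p} \le |\alpha_I| + \Lambda$. The trivial bound $|\alpha_I| = |2^{-n}\Tr(H)| \le \norm{H}$ comes from sandwiching $-\norm{H}\,I \preceq H \preceq \norm{H}\,I$ and taking the normalized trace. Combining these two observations with the displayed inequality yields
\begin{equation}
\norm{H}_{\mathrm{Pauli},p} \;\le\; |\alpha_I| + \Lambda \;\le\; \norm{H} + \frac{\norm{H} + |\alpha_I|}{C(k)} \;\le\; \norm{H}\cdot\frac{C(k) + 2}{C(k)},
\end{equation}
which rearranges to $\norm{H} \ge \tfrac{C(k)}{C(k)+2}\,\norm{H}_{\mathrm{Pauli},p}$. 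Since $C(k) = 1/\exp(\Theta(k\log k)) \le 1$, we have $C(k)+2 \le 3$, and the claimed factor $1/3$ follows.

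The main obstacle is entirely upstream, inside the approximation algorithm itself; once the one-sided gap $C(k)\Lambda$ is granted, the norm inequality drops out in three short algebraic moves. The one remaining subtlety is that the optimizer is one-sided (we do not control whether the algorithm returns a near-maximizer or a near-minimizer), but this is harmless because the spectral norm dominates $\pm\bra{\psi}H\ket{\psi}$ symmetrically, which is exactly what lets us absorb both cases into the single bound $\norm{H} + |\alpha_I| \ge C(k)\Lambda$. The only mild bookkeeping is the triangle inequality that replaces the sum over $P \neq I$ by the full Pauli $\ell_p$ norm, paid for by the $|\alpha_I|$ penalty that the crude bound $\norm{H} \ge |\alpha_I|$ immediately cancels.
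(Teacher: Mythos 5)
Your proposal is correct and follows essentially the same route as the paper: both deduce $\norm{H}\ge|\alpha_I|$ from the maximally mixed state, extract a product state whose energy deviates from $\alpha_I$ by $C(k)\Lambda$ via the probabilistic method, and combine this with the $\ell_p$ triangle inequality on the Pauli coefficients to absorb the identity term. The only cosmetic difference is the final algebra — you chain the inequalities directly, while the paper invokes $\max(x_1, c x_2 - x_1) \ge \tfrac{c}{c+2}(x_1+x_2)$ — and both yield the same constant $\tfrac{C(k)}{C(k)+2} \ge \tfrac{C(k)}{3}$.
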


\begin{corollary}[Norm inequality for bounded-degree local Hamiltonian] \label{cor:norm-ineq-bd-klocal-main}
    Given an $n$-qubit $k$-local Hamiltonian $H$ with a bounded degree $d$. We have
    \begin{equation}
        \frac{1}{3} C(k, d) \norm{H}_{\mathrm{Pauli}, 1} \leq \norm{H},
    \end{equation}
    where $C(k, d) = 1 / (\sqrt{d}\exp(\Theta(k \log k)))$.
\end{corollary}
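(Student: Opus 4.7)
The plan is to extract the norm inequality directly from the approximate optimization algorithm for bounded-degree $k$-local Hamiltonians, following exactly the algorithmic recipe used for Corollary~\ref{cor:norm-ineq-klocal-main}. The key observation is that whenever we have an algorithm that constructs a (random) state whose energy is provably larger or smaller than the Haar-random expectation by a margin controlled by the Pauli coefficients, the spectral norm is automatically bounded below by that margin.

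First, I would reduce to the traceless case by writing $H = \alpha_I I + H_0$ with $H_0 = \sum_{P \neq I} \alpha_P P$, noting $\mathbb{E}_{\mathrm{Haar}}[\langle\phi|H_0|\phi\rangle] = 0$ and $\|H\|_{\mathrm{Pauli},1} = |\alpha_I| + \|H_0\|_{\mathrm{Pauli},1}$. Then I would apply the (general-$k$ form of the) bounded-degree optimization theorem from Appendix~\ref{sec:optimize-klocal} to $H_0$; after absorbing the $\exp(\Theta(k\log k))$ factor that appears in the general statement, this theorem produces a product state $|\psi\rangle$ with
\begin{equation}
\bigl| \langle \psi | H_0 | \psi \rangle \bigr| \;\geq\; \frac{1}{\sqrt{d}\,\exp(\Theta(k \log k))} \sum_{P \neq I} |\alpha_P| \;=\; C(k,d)\, \|H_0\|_{\mathrm{Pauli},1}.
\end{equation}
Since $\|H_0\| \geq |\langle \psi | H_0 | \psi \rangle|$ for any state, this yields $\|H_0\| \geq C(k,d)\, \|H_0\|_{\mathrm{Pauli},1}$.

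Next, I would transfer this bound back to $H$ using two elementary facts: (i) the triangle inequality gives $\|H_0\| = \|H - \alpha_I I\| \leq \|H\| + |\alpha_I|$, so $\|H_0\| \leq 2\|H\|$ provided we also use (ii) $\|H\| \geq |\mathrm{Tr}(H)|/2^n = |\alpha_I|$. Combining,
\begin{equation}
2\|H\| \;\geq\; \|H_0\| \;\geq\; C(k,d)\, \|H_0\|_{\mathrm{Pauli},1}, \qquad \|H\| \;\geq\; |\alpha_I| \;\geq\; C(k,d)\,|\alpha_I|,
\end{equation}
where the last step uses $C(k,d) \leq 1$. Summing the two inequalities gives $3\|H\| \geq C(k,d)\,(|\alpha_I| + \|H_0\|_{\mathrm{Pauli},1}) = C(k,d)\,\|H\|_{\mathrm{Pauli},1}$, which is exactly the claimed bound $\tfrac{1}{3}C(k,d)\,\|H\|_{\mathrm{Pauli},1} \leq \|H\|$.

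The only non-routine piece is confirming that the optimization theorem in Appendix~\ref{sec:optimize-klocal} actually produces the $\frac{1}{\sqrt{d}}\sum_{P\neq I}|\alpha_P|$ margin in the general-$k$ regime with the stated $\exp(\Theta(k\log k))$ dependence; the corollary quoted in the main text only covers $k = \mathcal{O}(1)$. I would resolve this by invoking the general theorem behind that corollary, where the polarization construction (replicating each qubit and performing a random-signed averaging in the final replica) yields a factor $1/\exp(\Theta(k\log k))$ coming from the Bohnenblust--Hille-style combinatorics, multiplied by the $1/\sqrt{d}$ factor arising from the bounded-degree variance control. This extraction is the only real obstacle; everything else is the simple triangle-inequality and trace-bound packaging described above.
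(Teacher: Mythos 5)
Your proof is correct and takes essentially the same route as the paper: both invoke the general optimization guarantee (Theorem~\ref{thm:main} with the expansion data $c_e = 4^k d$, $d_e = 1$ from Fact~\ref{fact:exp-bounded}, giving $r=1$ and the $1/(2^k\sqrt{d})$ factor), combine it with the trivial bound $\norm{H} \geq |\alpha_I|$ from the maximally mixed state, and lose a factor of $3$ in the packaging. The only cosmetic difference is the final step — you split off $H_0 = H - \alpha_I I$ and sum two bounds via the triangle inequality, whereas the paper keeps $H$ intact and applies the elementary inequality $\max(x_1, cx_2 - x_1) \geq \tfrac{c}{c+2}(x_1 + x_2)$ — and both yield the same constant.
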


{\renewcommand\addcontentsline[3]{} \subsection{Sample-optimal algorithm for predicting bounded-degree observables}}

As the first application of the above norm inequalities to learning, we consider the basic problem of predicting many properties of an unknown $n$-qubit state $\rho$.
Given $M$ observables $O_1, \ldots, O_M$, after performing measurements on multiple copies of $\rho$, we would like to predict $\Tr(O_i \rho)$ to $\epsilon$ error for all $i \in \{1, \ldots, M\}$.
This is the task known as shadow tomography \cite{aaronson2018shadow, aaronson2019gentle, huang2020predicting}.
One approach for obtaining practically-efficient algorithms for shadow tomography is via the classical shadow formalism \cite{huang2020predicting}.

We consider a physically-relevant class of observables, where the observable $O_i = \sum_{j} O_{ij}$ is a sum of few-body observables $O_{ij}$ and each qubit is acted on by $\mathcal{O}(1)$ of the few-body observables.
Despite significant recent progress in shadow tomography \cite{levy2021classical, zhao2021fermionic, hu2021hamiltonian, koh2020classical, chen2020robust, hadfield2020measurements, aaronson2018shadow, struchalin2020experimental, huang2022learning, o2022fermionic, wan2022matchgate, bu2022classical, huang2022quantum, chen2022exponential, coopmans2022predicting}, the sample complexity (number of copies of $\rho$) for predicting this class of observables has not been established.
The central challenge is the appearance of the Pauli-$1$ norm $\norm{O_i}_{\mathrm{Pauli}, 1}$ when characterizing the sample complexity.
In particular, one can bound the shadow norm $\norm{O_i}_{\mathrm{shadow}}$ \cite{huang2020predicting}, which gives an upper bound on the sample complexity in terms of the Pauli-$1$ norm $\norm{O_i}_{\mathrm{Pauli}, 1}$ up to a constant factor.
Using the new norm inequality established in this work, we give a sample-optimal algorithm for predicting bounded-degree observables.

The sample-optimal algorithm is equivalent to performing classical shadow tomography based on randomized Pauli measurements \cite{huang2020predicting, elben2022randomized}, and is essentially the ML algorithm given in Section~\ref{sec:ML-algo} with a fixed input state.
Consider an unknown $n$-qubit state $\rho$.
After performing $N$ randomized Pauli measurements on $N$ copies of $\rho$, we have a classical dataset denoted as
\begin{equation}
    S_N(\rho) \triangleq \left\{ \ket{\psi^{\mathrm{(out)}}_\ell} = \bigotimes_{i=1}^n \ket{s^{\mathrm{(out)}}_{\ell, i}} \right\}_{\ell = 1}^N,
\end{equation}
where $\ket{s^{\mathrm{(out)}}_{\ell, i}} \in \mathrm{stab}_1$ is a single-qubit stabilizer state.
Given an observable $O$, the algorithm predicts
\begin{equation}
    h(O) = \frac{1}{N} \sum_{\ell = 1}^N \Tr\left( O \bigotimes_{i=1}^n\left(3 \ketbra{s^{\mathrm{(out)}}_{\ell, i}}{s^{\mathrm{(out)}}_{\ell, i}} - I \right) \right).
\end{equation}
It is not hard to see that computing $h(O)$ only requires $\mathcal{O}(n N)$ classical computation time.
Hence, as we show later that $N = \mathcal{O}(\log(n) / \epsilon^2)$, the learning algorithm is very efficient.
Using the norm inequality for bounded-degree local Hamiltonian $ \norm{H}_{\mathrm{Pauli}, 1} \leq C \norm{H}$ for a constant $C$ in Corollary~\ref{cor:norm-ineq-bd-klocal-main}, and the classical shadow formalism \cite{huang2020predicting, elben2022randomized},
we obtain the following performance guarantee.

\vspace{0.8em}
\begin{theorem}[Sample complexity upper bound]
    Given an unknown $n$-qubit state $\rho$ and any $n$-qubit observables $O_1, \ldots, O_M$ with $\norm{O_i} \leq B_\infty$. Suppose each observable $O_i$ is a sum of few-body observables, where each qubit is acted on by $\mathcal{O}(1)$ of the few-body observables.
    Using a classical dataset $S_N(\rho)$ of size
    \begin{equation}
        N = \mathcal{O}\left( \frac{\log\big( \min( M, n) \big) B_\infty^2}{\epsilon^2} \right),
    \end{equation}
    we have $|h(O_i) - \Tr(O_i \rho)| \leq \epsilon, \forall i \in \{1, \ldots, M\}$ with high probability. The constant factor in the $\mathcal{O}(\cdot)$ notation above scales polynomially in the degree and exponentially in the locality of the observables.
\end{theorem}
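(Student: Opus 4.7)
The plan is to recognize $h(O_i)$ as the standard Pauli classical-shadow sample mean for $\Tr(O_i \rho)$, control the per-shot variance using the new norm inequality from Corollary~\ref{cor:norm-ineq-bd-klocal-main}, and then boost to a high-probability guarantee via median-of-means together with a union bound taken over either the $M$ observables or the $O(n^k)$ weight-$\leq k$ Pauli operators that span them.

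First I would check unbiasedness: averaging $3\ketbra{s}{s} - I$ over a uniform sample $s \in \mathrm{stab}_1$ inverts the single-qubit depolarizing channel, so the tensor-product form in the definition of $h(O_i)$ gives $\E[h(O_i)] = \Tr(O_i \rho)$. The core technical step is the variance bound for a single shadow. Expanding $O_i = \sum_P \alpha_P^{(i)} P$ in the Pauli basis, a direct computation (each non-identity single-qubit factor is nonzero with probability $1/3$ and of magnitude $3$ when nonzero) shows that any single Pauli $P$ of weight $|P| \leq k$ contributes per-shot variance at most $3^{|P|} \leq 3^k$. Applying Cauchy--Schwarz to the cross-covariance terms then yields a per-shot variance bound of $3^k \, \norm{O_i}_{\mathrm{Pauli},1}^2$ --- this is precisely where the shadow norm is upper-bounded by the Pauli-$1$ norm, as flagged in the main text. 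I then invoke Corollary~\ref{cor:norm-ineq-bd-klocal-main}, the crucial new ingredient, to replace $\norm{O_i}_{\mathrm{Pauli},1}$ by $C_1(k,d)\,\norm{O_i} \leq C_1(k,d)\,B_\infty$, leaving a per-shot variance of $O(B_\infty^2)$ whose implicit constant depends on $k$ and $d$ but not on $n$.

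Given constant per-shot variance, the remainder is routine. A median-of-means reduction converts $N = \Omega(B_\infty^2 \log(1/\delta)/\epsilon^2)$ shadows into the bound $|h(O_i) - \Tr(O_i \rho)| \leq \epsilon$ with probability $\geq 1 - \delta$. To obtain the $\log \min(M, n)$ factor I would take the better of two union bounds: either (i) set $\delta = 1/\mathrm{poly}(M)$ to handle all $M$ observables at once, paying $\log M$; or (ii) estimate each of the $O(n^k)$ weight-$\leq k$ Pauli operators to precision $\epsilon / \norm{O_i}_{\mathrm{Pauli},1}$ and reassemble $h(O_i)$ by linearity, paying only $O(\log n)$ since $k$ is constant. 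I expect the main obstacle to be the cross-covariance analysis in the variance step: for overlapping non-commuting $P, P'$ the covariance does not vanish, and any bound routed through $\norm{\cdot}_{\mathrm{Pauli},2}$ would scale with the number of few-body terms in $O_i$ --- hence with $n$. Going instead through $\norm{\cdot}_{\mathrm{Pauli},1}$ via Cauchy--Schwarz is what lets Corollary~\ref{cor:norm-ineq-bd-klocal-main} cleanly convert the estimate back into an $n$-independent spectral-norm bound and deliver the optimal sample complexity.
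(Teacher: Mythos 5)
Your proposal is correct and follows essentially the same route as the paper: the paper likewise bounds the shadow norm by $3^{k/2}\norm{O_i}_{\mathrm{Pauli},1}$ (your variance-plus-Cauchy--Schwarz step is just the unpacked version of that triangle inequality on shadow norms), converts to $\mathcal{O}(B_\infty)$ via Corollary~\ref{cor:norm-ineq-bd-klocal-main}, and obtains the $\log\min(M,n)$ factor by taking the better of the two union bounds you describe --- directly over the $M$ observables, or over the $\mathcal{O}(n^k)$ low-weight Paulis followed by linear reassembly at precision $\epsilon/\norm{O_i}_{\mathrm{Pauli},1}$.
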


The following theorem shows that the above algorithm achieves the optimal sample complexity among any algorithms that can perform collective measurement on many copies of $\rho$.

\begin{theorem}[Sample complexity lower bound]
    Consider the following task.
    There is an unknown $n$-qubit state $\rho$, and we are given $M$ observables $O_1,\ldots,O_M$ with $\max_{i} \norm{O_i} \leq B_\infty$.
    Each observable $O_i$ is a sum of few-body observables, where every qubit is acted on by $\mathcal{O}(1)$ of the few-body observables.
    We would like to estimate $\Tr(O_i\rho)$ to $\epsilon$ error for all $i\in[M]$ with high probability by performing arbitrary collective measurements on $N$ copies of $\rho$.
    The number of copies $N$ must be at least
    \begin{equation}
        N = \Omega\left( \frac{\log\big( \min( M, n) \big) B_\infty^2}{\epsilon^2} \right)
    \end{equation}
    for any algorithm to succeed in this task.
\end{theorem}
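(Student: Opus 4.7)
The plan is a multi-hypothesis distinguishing argument via Holevo's bound and Fano's inequality. Setting $M' := \min(M, n)$, I will construct a family of $M'$ candidate states (indexed by a ``hidden index'' $i$) together with a compatible choice of bounded-degree observables so that any algorithm meeting the $\epsilon$-error guarantee automatically identifies $i$ with constant probability. Information-theoretic constraints on how much information $N$ copies can reveal about $i$ then force $N = \Omega\bigl(B_\infty^2 \log(\min(M,n))/\epsilon^2\bigr)$.

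\textbf{Hard instance.} Take $O_i = B_\infty\, Z_i$ on the $i$-th qubit for $i \in \{1,\dots,M'\}$, and set the remaining $O_i$'s (for $i > M'$) equal to $0$. Each $O_i$ is a single local Pauli term, so the bounded-degree condition is satisfied trivially and $\norm{O_i} \leq B_\infty$. For $i \in [M']$, define
\begin{equation*}
\rho^{(i)} \;=\; \frac{I + \alpha\, Z_i}{2^n}, \qquad \alpha \;=\; \frac{3\epsilon}{B_\infty},
\end{equation*}
which is a valid density matrix provided $\epsilon \leq B_\infty/3$ (the nontrivial regime). A direct calculation gives $\Tr\bigl(O_j\rho^{(i)}\bigr) = 3\epsilon\,\delta_{ij}$. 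Hence any algorithm producing estimates within $\epsilon$ of the truth with probability $\geq 2/3$ also determines the hidden index by setting $\hat{\imath} = \arg\max_j \lvert\,\widehat{\Tr(O_j\rho^{(i)})}\,\rvert$, correctly identifying $i$ with probability $\geq 2/3$.

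\textbf{Fano--Holevo chain.} Place a uniform prior over $i \in [M']$ and let $\hat{\imath}$ denote the reconstructed index after an arbitrary collective (possibly adaptive) measurement on $(\rho^{(i)})^{\otimes N}$. Classical Fano's inequality gives $I(i;\hat{\imath}) \geq (2/3)\log M' - 1$. On the other side, data processing followed by Holevo's theorem and sub-additivity of $\chi$ on product ensembles yields
\begin{equation*}
I(i;\hat{\imath}) \;\leq\; \chi\!\Bigl(\bigl\{\tfrac{1}{M'},\,(\rho^{(i)})^{\otimes N}\bigr\}\Bigr) \;\leq\; N\cdot \chi\!\Bigl(\bigl\{\tfrac{1}{M'},\,\rho^{(i)}\bigr\}\Bigr) \;\leq\; N\cdot \max_i D\!\bigl(\rho^{(i)}\,\big\Vert\, I/2^n\bigr).
\end{equation*}
The eigenvalues of $\rho^{(i)}$ are $(1\pm\alpha)/2^n$, each with multiplicity $2^{n-1}$, so a direct spectral computation gives $D(\rho^{(i)}\|I/2^n) = \tfrac{1+\alpha}{2}\log(1+\alpha) + \tfrac{1-\alpha}{2}\log(1-\alpha) \leq \alpha^2 = 9\epsilon^2/B_\infty^2$. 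Combining the two inequalities produces $N = \Omega(B_\infty^2 \log M'/\epsilon^2)$. For the $M' = \mathcal{O}(1)$ regime (where the Fano bound is weak), one runs the same argument with a two-hypothesis test between $(I \pm \alpha Z_1)/2^n$, yielding the baseline $N = \Omega(B_\infty^2/\epsilon^2)$.

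\textbf{Main obstacle.} Conceptually, everything is standard; the only real care needed is ensuring that the bound applies to fully collective and adaptive measurement protocols. This is what the Holevo--Fano chain above is designed for: data processing forces the classical mutual information $I(i;\hat{\imath})$ under any such protocol to be dominated by the Holevo quantity of the $N$-copy ensemble, and sub-additivity of $\chi$ on i.i.d. products reduces the bound to the single-copy relative entropy $D(\rho^{(i)}\|I/2^n)$, which is in turn $O(\epsilon^2/B_\infty^2)$. The bounded-degree structural constraint poses no technical difficulty because the hard instance uses only single-qubit Paulis, and the $\min(M,n)$ cap arises naturally because the $n$-qubit system admits at most $n$ pairwise-disjoint single-qubit supports.
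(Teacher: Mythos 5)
Your proposal is correct, but it takes a genuinely different route from the paper. Both arguments use essentially the same hard instance, namely states of the form $\rho^{(i)} = (I + \alpha Z_i)/2^n$ with $\alpha = \Theta(\epsilon/B_\infty)$ and the single-qubit observables $O_i = B_\infty Z_i$, and both reduce accurate estimation to an index-recovery problem. Where you diverge is in the information-theoretic core. The paper observes that all hypothesis states are simultaneously diagonal in the computational basis, so any collective POVM can be replaced without loss of generality by $N$ independent computational-basis measurements; this reduces everything to a purely classical testing problem (uniform distribution versus the mixture $\E_i[(\pi^i)^{\otimes N}]$), which is then handled by Le Cam's two-point method together with an explicit chi-squared divergence calculation for the mixture. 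You instead stay quantum throughout: Fano's inequality lower-bounds $I(i;\hat{\imath})$ by $\Omega(\log M')$, and the Holevo bound, subadditivity of $\chi$ over i.i.d.\ product ensembles (via subadditivity of entropy), and the estimate $\chi \leq \max_i D(\rho^{(i)}\,\|\,I/2^n) \leq \alpha^2$ upper-bound it by $N\alpha^2$. Your chain of inequalities is valid (in particular $\chi(\{p_i,\sigma_i^{\otimes N}\}) \leq N\chi(\{p_i,\sigma_i\})$ follows from $S(\sum_i p_i \sigma_i^{\otimes N}) \leq N S(\bar{\sigma})$, and $\chi \leq \sum_i p_i D(\sigma_i\|\tau)$ for any reference state $\tau$), and the spectral computation of the relative entropy is right. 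The trade-off: your Holevo--Fano argument handles arbitrary collective measurements without needing the dephasing observation and would survive if the hypothesis states were not commuting, whereas the paper's classical-mixture chi-squared approach can in principle capture situations where the Holevo quantity is a loose bound on the extractable information; for this particular instance the two methods give identical $\Omega(B_\infty^2 \log(\min(M,n))/\epsilon^2)$ bounds. Your handling of the $M' = \mathcal{O}(1)$ regime via a separate two-point test is also appropriate, since the additive constant in Fano's inequality degrades the bound there.
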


\noindent The detailed proofs of the sample complexity stated in the above theorems are given in Appendix~\ref{sec:samp-opt-bd-local-H}.

{\renewcommand\addcontentsline[3]{} \subsection{Efficient algorithms for learning an unknown observable from $\log(n)$ samples}}

As a second learning application of the norm inequalities, we consider the task of learning an unknown $n$-qubit observable $O^{\mathrm{(unk)}} = \sum_{P \in \{I, X, Y, Z\}^{\otimes n}} \alpha_P P$.
We can think of this unknown observable as $\mathcal{E}^\dagger(O)$, i.e., the observable $O$ after Heisenberg evolution under the unknown process $\mathcal{E}$.
Suppose we are given a training dataset of $\{ \rho_\ell, \Tr\left(O^{\mathrm{(unk)}} \rho_\ell\right) \}_{\ell = 1}^N$, where $\rho_\ell$ is sampled from an arbitrary distribution $\mathcal{D}$ over $n$-qubit states that is invariant under single-qubit Clifford gates.
Given an integer $k > 0$, we define the weight-$k$ truncation of $O^{\mathrm{(unk)}}$ to be the following Hermitian operator
\begin{equation}
    O^{(\mathrm{unk}, k)} \triangleq \sum_{P \in \{I, X, Y, Z\}^{\otimes n}: |P| \leq k} \alpha_P P,
\end{equation}
where $|P|$ is the number of qubits upon which $P$ acts nontrivially.
For a small $k$, we can think of $O^{(\mathrm{unk}, k)}$ as a low-weight approximation of the unknown observable $O^{(\mathrm{unk})}$.
By definition, $O^{(\mathrm{unk}, k)}$ is a $k$-local Hamiltonian, hence the norm inequality in Corollary~\ref{cor:norm-ineq-klocal-main} shows that
\begin{equation} \label{eq:norm-ineq-bound}
    \frac{1}{3} C(k) \norm{O^{(\mathrm{unk}, k)} }_{\mathrm{Pauli}, \frac{2k}{k + 1}} = \frac{1}{3} C(k) \left(\sum_{P \in \{I, X, Y, Z\}^{\otimes n}: |P|\le k} |\alpha_{P}|^r \right)^{1/r} \leq \norm{O^{(\mathrm{unk}, k)}},
\end{equation}
where $r = 2 k / (k + 1) \in [1, 2)$.
An $\ell_r$ norm bound ($r < 2$) on the Pauli coefficients implies that we can remove most of the small Pauli coefficients without incurring too much change under the $\ell_2$ norm.
As an example, consider an $M$-dimensional vector $x$ with $\norm{x}_r \leq 1$. Given $\wt{\epsilon} > 0$,
let $\wt{x}$ be the $M$-dimensional vector with
$\wt{x}_i = x_i$ if $|x_i| > \wt{\epsilon}$ and $\wt{x}_i = 0$ if $|x_i| \leq \wt{\epsilon}$.
We have
\begin{equation}
    \norm{x - \wt{x}}_2^2 = \sum_{i: |x_i| \leq \wt{\epsilon}} \left| x_i \right|^2 \leq \wt{\epsilon}^{\,2-r} \sum_{i: |x_i| \leq \wt{\epsilon}} \left| x_i \right|^r \leq \wt{\epsilon}^{\,2-r} \sum_{i} \left| x_i \right|^r \leq \wt{\epsilon}^{\,2-r}.
\end{equation}
In Appendix~\ref{sec:low-deg-MSE}, we show that the average error (both the mean squared error and the mean absolute error) is characterized by the $\ell_2$ norm.
Hence, Eq.~\eqref{eq:norm-ineq-bound} implies that we can set most of the Pauli coefficients in $O^{(\mathrm{unk}, k)}$ to zero without incurring too much error on average.

Using the above reasoning, learning the low-weight truncation $O^{(\mathrm{unk}, k)}$ amounts to learning the large Pauli coefficients of $O^{(\mathrm{unk}, k)}$ and setting all small Pauli coefficients to zero.
This ensures that the learning can be done very efficiently.
This approach is presented in Appendix~\ref{sec:Pauli-tools} with the main result stated in Lemma~\ref{lem:filtering-full}.
It is inspired by the learning algorithm of \cite{eskenazis2022learning} that achieves a logarithmic sample complexity for learning classical low-degree functions.

The last step in the proof is to argue that the low-weight truncation $O^{(\mathrm{unk}, k)}$ is a good surrogate for the unknown observable $O^{(\mathrm{unk})}$ when the goal is to predict $\Tr(O^{(\mathrm{unk})} \rho)$. The key insight here is that for distributions $\mathcal{D}$ that are invariant under single-Clifford gates, the contribution of any Pauli term $P$ in $O^{(\mathrm{unk})}$ to $\E_{\rho\sim \mathcal{D}}[\Tr(O^{(\mathrm{unk})}\rho)^2]$ is \emph{exponentially decaying} in the 
the weight $|P|$. This allows us to prove that $\E_{\rho\sim \mathcal{D}}[\Tr((O^{(\mathrm{unk})} - O^{(\mathrm{unk},k)})\rho)^2]$ is small.

Putting these ingredients together, we arrive at the following theorem.
As stated in the theorem, the learning algorithm is computationally efficient.

\vspace{0.8em}
\begin{theorem}[Learning an unknown observable]
    Given $\epsilon, \epsilon', \delta > 0$.
    Let $k = \lceil \log_{1.5}(1 / \epsilon)\rceil$ and $r = \tfrac{2k}{k+1} \in [1, 2)$.
    From training data $\{ \rho_\ell, \Tr\left(O^{\mathrm{(unk)}} \rho_\ell\right) \}_{\ell = 1}^N$ of size
    \begin{equation}
        N = \log(n / \delta) \, \min\left( 2^{\mathcal{O}\left(\log(\frac{1}{\epsilon}) \left( \log\log(\frac{1}{\epsilon}) +  \log(\frac{1}{\epsilon'})\right)\right)}, 2^{\mathcal{O}(\log(\frac{1}{\epsilon}) \log(n))} \right),
    \end{equation}
    where $\rho_\ell$ is sampled from $\mathcal{D}$, we can learn a function $h(\rho)$ such that
    \begin{equation}
        \E_{\rho \sim \mathcal{D}} \left| h(\rho) - \Tr(O^{\mathrm{(unk)}} \rho) \right|^2 \leq \left(\epsilon + \epsilon'\right) \norm{O^{(\mathrm{unk})}}^2 + \epsilon' \norm{O^{(\mathrm{unk}, k)}}^r \norm{O^{(\mathrm{unk})}}^{2-r}
    \end{equation}
    with probability at least $1 - \delta$.
    The training and prediction time of $h(\rho)$ are $\mathcal{O}(N n^k)$.
\end{theorem}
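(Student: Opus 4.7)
The strategy is to reduce learning $O^{(\mathrm{unk})}$ to estimating a sparse set of low-weight Pauli coefficients of its weight-$k$ truncation $O^{(\mathrm{unk},k)}$, with $k=\lceil\log_{1.5}(1/\epsilon)\rceil$. The target mean-squared error naturally splits into three pieces: (i) a truncation error from replacing $O^{(\mathrm{unk})}$ by $O^{(\mathrm{unk},k)}$, (ii) a sparsification error from discarding Pauli coefficients of $O^{(\mathrm{unk},k)}$ that are smaller than some threshold $\widetilde\epsilon$, and (iii) a statistical error from estimating the surviving coefficients with finite $N$. These three pieces match the three summands in the claimed bound, and a single careful choice of $k$ and $\widetilde\epsilon$ balances them.

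For (i), single-qubit Clifford invariance of $\mathcal{D}$ implies that distinct Paulis are orthogonal under $\mathcal{D}$, i.e.\ $\E_{\rho}[\Tr(P\rho)\Tr(Q\rho)]=0$ for $P\neq Q$, since on any qubit where $P$ and $Q$ differ the average over single-qubit Cliffords vanishes; the same averaging shows that $\E_{\rho}[\Tr(P\rho)^2]$ decays exponentially in $|P|$. Combined with the Parseval-type inequality $\sum_{P}\alpha_P^2\le\norm{O^{(\mathrm{unk})}}^2$, this gives $\E_{\rho\sim\mathcal{D}}|\Tr((O^{(\mathrm{unk})}-O^{(\mathrm{unk},k)})\rho)|^2\le\epsilon\,\norm{O^{(\mathrm{unk})}}^2$ for the chosen $k$, accounting for the first $\epsilon\norm{O^{(\mathrm{unk})}}^2$ summand.

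For (ii), Corollary~\ref{cor:norm-ineq-klocal-main} gives the quantum Bohnenblust--Hille bound $\norm{O^{(\mathrm{unk},k)}}_{\mathrm{Pauli},r}\le (3/C(k))\,\norm{O^{(\mathrm{unk},k)}}$ for $r=2k/(k+1)\in[1,2)$. Discarding coefficients with $|\alpha_P|\le\widetilde\epsilon$ produces a Pauli-$\ell_2$ error controlled by the interpolation $\alpha_P^2=|\alpha_P|^r|\alpha_P|^{2-r}$: using $|\alpha_P|^{2-r}\le\widetilde\epsilon^{\,2-r}$ on the discarded terms yields $\widetilde\epsilon^{\,2-r}\norm{O^{(\mathrm{unk},k)}}_{\mathrm{Pauli},r}^r$, while the alternative bound $|\alpha_P|^{2-r}\le\norm{O^{(\mathrm{unk})}}^{2-r}$ (from $|\alpha_P|\le\norm{O^{(\mathrm{unk})}}$) converts the $\ell_r$-factor into the asymmetric form $\norm{O^{(\mathrm{unk},k)}}^r\norm{O^{(\mathrm{unk})}}^{2-r}$ that appears in the statement. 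By Pauli-orthogonality the Pauli-$\ell_2$ error translates (using $\E_\rho[\Tr(P\rho)^2]\le 1$) into $L^2(\mathcal{D})$ prediction error of the same order, so picking $\widetilde\epsilon$ as a small polynomial in $\epsilon'$ yields the $\epsilon'\norm{O^{(\mathrm{unk},k)}}^r\norm{O^{(\mathrm{unk})}}^{2-r}$ term.

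For (iii), I would employ the filtering procedure developed in Appendix~\ref{sec:Pauli-tools} (Lemma~\ref{lem:filtering-full}, adapted from \cite{eskenazis2022learning}) to identify the Paulis with $|\alpha_P|>\widetilde\epsilon$. Each $\alpha_P$ is estimated by an empirical average of a bounded statistic derived from $\Tr(P\rho_\ell)\Tr(O^{(\mathrm{unk})}\rho_\ell)$, whose variance is controlled by $\norm{O^{(\mathrm{unk})}}^2$; a Bernstein-type concentration inequality plus a union bound over the $n^{\mathcal{O}(k)}$ weight-$\le k$ Paulis gives the stated sample complexity $N=\log(n/\delta)\cdot 2^{\mathcal{O}(\log(1/\epsilon)(\log\log(1/\epsilon)+\log(1/\epsilon')))}$, while the alternative $N=\log(n/\delta)\cdot 2^{\mathcal{O}(\log(1/\epsilon)\log(n))}$ bound comes from the naive variant that estimates every weight-$\le k$ coefficient without thresholding. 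The residual statistical error contributes the remaining $\epsilon'\norm{O^{(\mathrm{unk})}}^2$ summand through the same orthogonality argument, and the total cost $\mathcal{O}(Nn^k)$ is dominated by enumerating low-weight Paulis when evaluating $h(\rho)=\sum_P\hat\alpha_P\Tr(P\rho)$. The main obstacle is the interpolation in (ii): getting the asymmetric $\norm{O^{(\mathrm{unk},k)}}^r\norm{O^{(\mathrm{unk})}}^{2-r}$ form (rather than a weaker $\norm{O^{(\mathrm{unk},k)}}^2$) requires combining the $\ell_r$ control from Corollary~\ref{cor:norm-ineq-klocal-main} with the pointwise bound $|\alpha_P|\le\norm{O^{(\mathrm{unk})}}$ at just the right exponents, and then choosing $k$ and $\widetilde\epsilon$ so that all three error contributions come out at the intended rates is the most delicate piece of the proof.
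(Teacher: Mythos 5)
Your three-way decomposition into truncation, sparsification, and statistical error is exactly the paper's route (Corollary~\ref{cor:low-deg-approx} for truncation, the Bohnenblust--Hille bound of Corollary~\ref{cor:any-norm} fed into the filtering Lemma~\ref{lem:filtering-full} for sparsification, and Hoeffding plus a union bound over the $n^{\mathcal{O}(k)}$ low-weight Paulis for estimation), so the architecture is right. The genuine gap is in how the sparsification threshold is set. To land on the asymmetric term $\epsilon'\norm{O^{(\mathrm{unk},k)}}^{r}\norm{O^{(\mathrm{unk})}}^{2-r}$ you cannot threshold $|\alpha_P|$ at an absolute level $\widetilde\epsilon$ and then invoke $|\alpha_P|^{2-r}\le\norm{O^{(\mathrm{unk})}}^{2-r}$ on the discarded terms as you suggest: the first choice loses the $\norm{O^{(\mathrm{unk})}}^{2-r}$ factor (the bound must be covariant under rescaling $O^{(\mathrm{unk})}\mapsto cO^{(\mathrm{unk})}$), while the second loses the factor of $\epsilon'$. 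The threshold has to be of the form $\eta\sqrt{\widetilde\epsilon}$ with $\eta$ within a constant factor of $\norm{O^{(\mathrm{unk})}}$ --- a quantity the learner does not know. The paper spends a substantial part of the proof on precisely this calibration: it computes the sample maximum $\hat{\Theta}=\max_\ell|\Tr(O^{(\mathrm{unk})}\rho_\ell)|\le\norm{O^{(\mathrm{unk})}}$ (which may be far below the spectral norm), runs the filter for every $\eta$ in the dyadic grid $\{2^{0}\hat{\Theta},\ldots,2^{R}\hat{\Theta}\}$, selects $\eta^*$ by empirical risk on a held-out validation set, and clips the predictor to $[-\hat{\Theta},\hat{\Theta}]$ so that the event $|\Tr(O^{(\mathrm{unk})}\rho)|>\hat{\Theta}$, which the sample-maximum lemma makes $\mathcal{O}(\epsilon')$-rare, contributes only $\mathcal{O}(\epsilon')\norm{O^{(\mathrm{unk})}}^{2}$. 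Without some such scale-selection step your argument does not close, and this is also where part of the $\epsilon'\norm{O^{(\mathrm{unk})}}^{2}$ term actually comes from (clipping plus validation error), rather than from residual coefficient noise.

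A secondary point you gloss over: the data does not hand you $\alpha_P$ but $x_P=\alpha_P\beta_P$ with $\beta_P=(2/3)^{|P|}\E_{\rho\sim\mathcal{D}}\gamma^\star(\rho_{\mathsf{dom}(P)})$ (Lemma~\ref{lem:extract-Pauli}); recovering $\alpha_P$ requires dividing by $\beta_P$, which depends on the unknown $\mathcal{D}$ and can be arbitrarily close to zero, making the naive estimator's error unbounded. The rescue is that the mean squared error of Lemma~\ref{lem:mse-D} weights each coefficient error by $\beta_P$ itself, so zeroing out all $P$ with $\hat{\beta}_P\le 2\widetilde\epsilon$ costs only $\mathcal{O}(\eta^{2}\widetilde\epsilon)$ per term (Lemma~\ref{lem:filter-smallweight}). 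You cite the right filtering lemma, so this is reparable, but it is a load-bearing step of the argument, not an optional refinement, and your phrase ``using $\E_\rho[\Tr(P\rho)^2]\le 1$'' points the inequality in the unhelpful direction.
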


The factor of $\norm{O^{(\mathrm{unk})}}^2$ in the prediction error is the natural scale of the squared error.
From the theorem, we can see that we only need $\mathcal{O}(\log(n))$ samples to obtain a constant prediction error relative to $\norm{O^{(\mathrm{unk})}}^2 + \norm{O^{(\mathrm{unk}, k)}}^{r} \norm{O^{\mathrm{(unk)}}}^{2-r}$.
The proof the the theorem and the detailed description of the ML algorithm are given in Appendix~\ref{sec:learn-unk-obs}.

{\renewcommand\addcontentsline[3]{} \subsection{Learning an unknown quantum process}}

The ML algorithm for learning an unknown $n$-qubit quantum process $\mathcal{E}$ is essentially the combination of the two learning applications described above with a few modifications.
At a high level, we consider the following.
There is an $n$-qubit state $\rho$ sampled from an unknown distribution $\mathcal{D}$, as well as an observable $O$ that can be written as a sum of few-body observables, where each qubit is acted on by a constant number of the few-body observables.
In the first stage, we use the sample-optimal algorithm for predicting the bounded-degree observable $O$, where $\mathcal{E}(\rho_\ell)$ is an unknown quantum state, thus transforming the classical dataset $S_N(\mathcal{E})$ in Eq.~\eqref{eq:classicalSn-E} into a dataset,
\begin{equation}
    \left\{ \rho_\ell \triangleq \ketbra{\psi^{\mathrm{(in)}}_\ell}{\psi^{\mathrm{(in)}}_\ell}, \,\, \Tr\left(O \mathcal{E}\left( \rho_\ell \right)\right) \right\}_{\ell = 1}^N
    \label{eq:second-stage-training}
\end{equation}
that maps quantum states to real numbers. In the second stage, we apply the efficient algorithm for learning an unknown observable $O^{\mathrm{(unk)}} = \mathcal{E}^\dagger(O)$, regarding  Eq.~\eqref{eq:second-stage-training} as the training data for this task, thus predicting $\Tr\left(\mathcal{E}^\dagger(O) \rho\right)=\Tr\left(O \mathcal{E}\left( \rho \right)\right)$ for the state $\rho$ drawn from the distribution $\mathcal{D}$.
Because both stages of the algorithm run in time polynomial in $n$, the overall runtime for this procedure is polynomial in $n$.

In our actual proofs, there are a few deviations from the above high-level design, stemming from the fact that the input states $\rho_\ell$ are tensor products of random single-qubit stabilizer states.
This specific setting allows a few simplifications to be made.
With the simplifications, we can remove an additive factor of $\epsilon'$ in the prediction error.
Furthermore, a surprising fact is that learning from random product states is sufficient to predict highly-entangled states sampled from any distribution $\mathcal{D}$ invariant under single-qubit Clifford unitaries.
This surprising fact is a result of the characterization of the prediction error given in Lemma~\ref{lem:mse-D} based on a modified purity on subsystems of an input quantum state $\rho \sim \mathcal{D}$.

By combining the five parts together, we can establish Theorem~\ref{thm:main-res}, the precise sample complexity scaling in Eq.~\eqref{eq:N-samp-final}, and the prediction error bound in Eq.~\eqref{eq:pred-error-final}.
The full proof is given in Appendix~\ref{sec:learning-qevo}.

\begin{figure*}[t]
\centering
\includegraphics[width=0.97\textwidth]{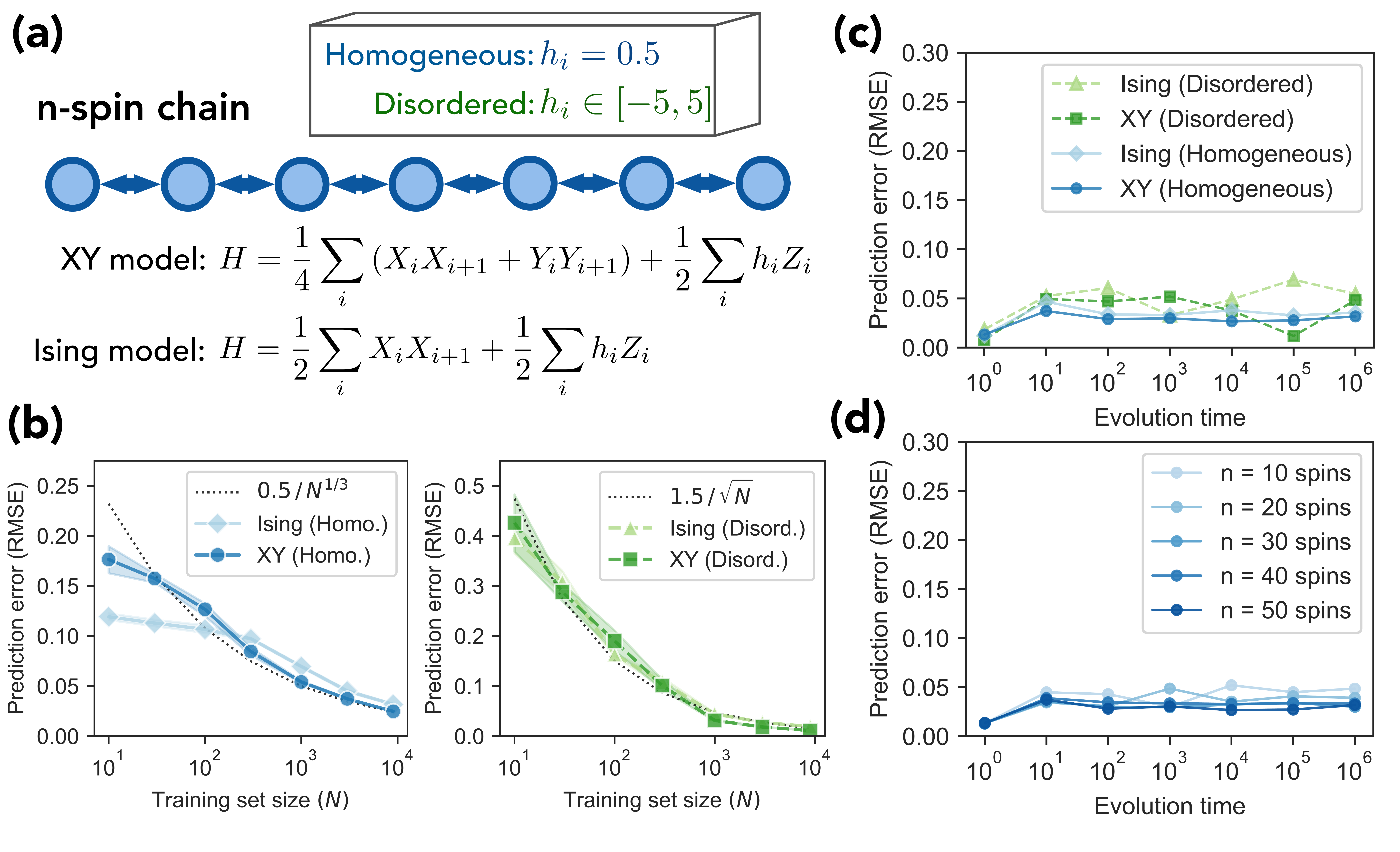}
    \caption{
    \emph{Prediction performance of ML models for learning $\mathcal{E}(\rho) = e^{-i t H} \rho e^{i t H}$ for a large time $t$.}
    \textsc{(a) Hamiltonians.} We consider XY/Ising model with a homogeneous/disordered $Z$ field on an $n$-spin open chain.
    \textsc{(b) Error scaling with training set size ($N$).} We show the root-mean-square error (RMSE) for predicting the Pauli-Z operator $Z_i$ on the output state $\mathcal{E}(\rho)$ for random product states~$\rho$.
    \textsc{(c, d) Error scaling with evolution time ($t$) and system size ($n$).} (d) shows the RMSE for the XY model with a homogeneous $Z$ field. The prediction error remains similar as we exponentially increase $t$ and Hilbert space dimension $2^n$.
    \label{fig:MLstats}}
\end{figure*}

\vspace{2em}
{\renewcommand\addcontentsline[3]{} \section{Numerical experiments}}

We have conducted numerical experiments to assess the performance of ML models in learning the dynamics of several physical systems.
The results corroborate our theoretical claims that long-time evolution over a many-body system can be learned efficiently. While our theorem only guarantees good performance for randomly sampled input states, we also find that the ML models work very well for structured input states that could be of practical interest. The source code can be found on a public GitHub repository\footnote{\url{https://github.com/hsinyuan-huang/learning-quantum-process}}.

We focus on training ML models to predict output state properties after the time dynamics of 1D $n$-spin XY/Ising chains with homogeneous/disordered $Z$ fields.
Let $H$ be the many-body Hamiltonian. The quantum processes $\mathcal{E}$ is given by $\mathcal{E}(\rho) = e^{-i t H} \rho e^{i t H}$ for a significantly long evolution time $t = 10^6$.
We consider the ML models described by Eq.~\eqref{eq:h-rhoO}.
While we utilize the very simple sparsity-enforcing strategy of setting small values to zero to prove Theorem~\ref{thm:main-res}, the standard sparsity-enforcing approach is through $\ell_1$ regularization \cite{tibshirani1996regression}.
A detailed description of applying $\ell_1$ regularization to enforce sparsity in $\alpha_{P}(O)$ is given in Appendix~\ref{sec:num-details}.
We find the best hyperparameters using four-fold cross-validation to minimize root-mean-square error (RMSE) and report the predictions on a test set.

\begin{figure*}[t]
\centering
\includegraphics[width=1.0\textwidth]{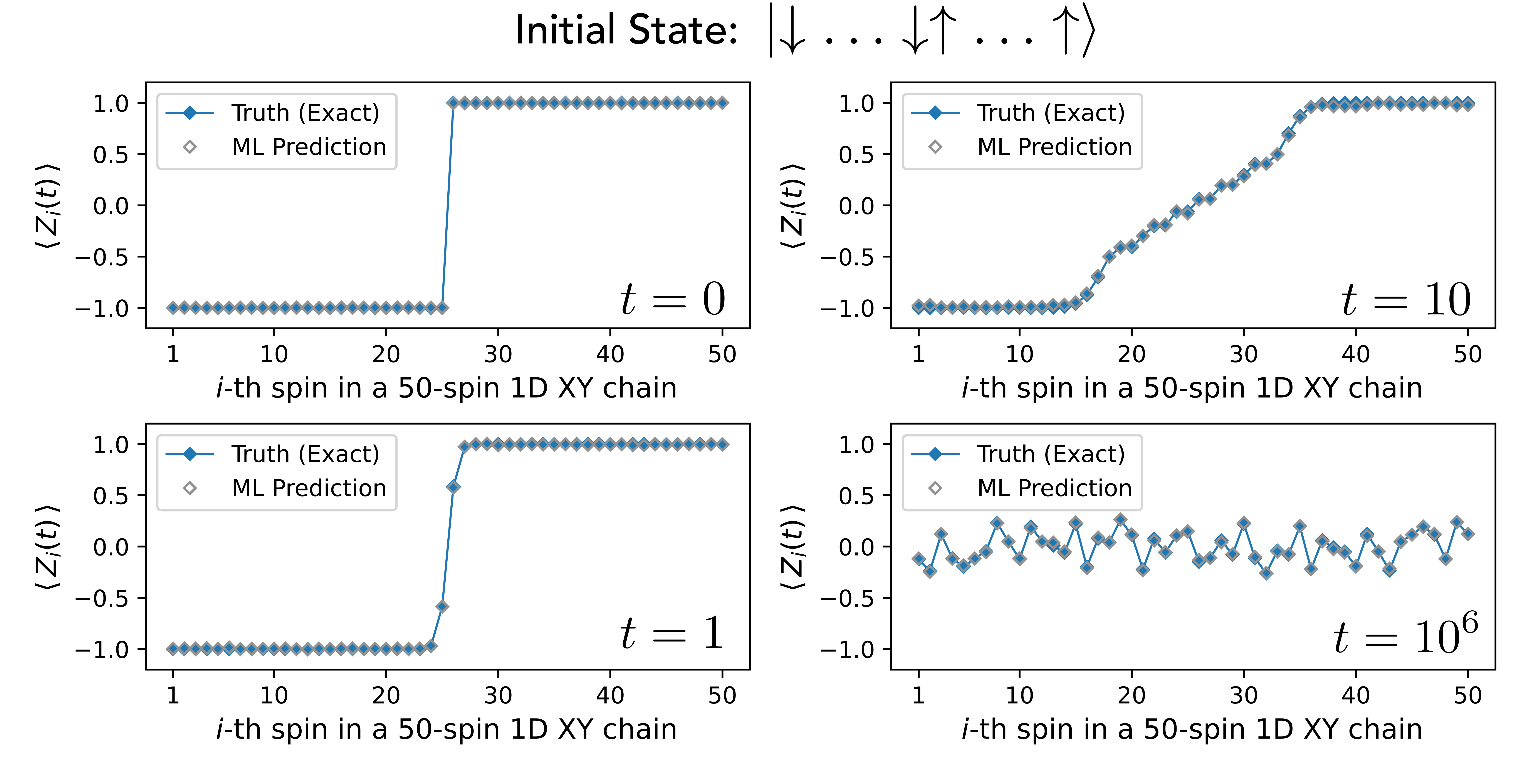}
    \caption{
    \emph{Visualization of ML model's prediction for an initial state $\rho = \ketbra{\psi}{\psi}$ with a domain wall.}
    We consider the 1D 50-spin XY chain with a homogeneous $Z$ field.
    We show the expectation value of $Z_i(t) = e^{i t H} Z_i e^{-i t H}$ for all the $50$ spins on the initial state $\ket{\psi} = \ket{\downarrow \ldots \downarrow \uparrow \ldots \uparrow}$. The ML model is trained on $10000$ random product states.
    We see that the ML model performs accurately for a significantly large range of time $t$.
    \label{fig:domainwall}}
\end{figure*}

Fig.~\ref{fig:MLstats} considers the performance for predicting the expectation of the Pauli-Z operator $Z_i$ on the output state for randomly sampled product input states not in the training data.
Fig.~\ref{fig:MLstats}(a) illustrates the many-body Hamiltonian $H$.
Fig.~\ref{fig:MLstats}(b) shows the dependence of the error on training set size $N$.
We can clearly see that as training set size $N$ increases, the prediction error notably decreases. This observation confirms our theoretical claim that long-time quantum dynamics could be efficiently learned.
In Fig.~\ref{fig:MLstats}(c), we consider how evolution time $t$ affects prediction performance. From the figure, we can see that even when we exponentially increase $t$, the prediction performance remains similar.
This matches with our theorem stating that no matter what the quantum process $\mathcal{E}$ is, even if $\mathcal{E}$ is an exponentially long-time dynamics, the ML model can still predict accurately and efficiently.
In Fig.~\ref{fig:MLstats}(d), we consider the dependence on system size $n$.
As $n$ increases linearly, the Hilbert space dimension $2^n$ grows exponentially.
Despite the exponential growth, even for $50$-spin systems, the ML model still predicts well. This matches with the logarithmic scaling on $n$ given in Theorem~\ref{thm:main-res}.

In Fig.~\ref{fig:domainwall}, we consider predicting properties of the final state after long-time dynamics for a highly structured input product state:
\begin{equation}
\ket{\psi} = \ket{\downarrow \ldots \downarrow \uparrow \ldots \uparrow},
\end{equation}
which has a single domain wall in the middle.
We focus on predicting the expected value for $Z_i(t) = e^{i t H} Z_i e^{-i t H}$ on every spin in the 1D 50-spin XY chain with a homogeneous $Z$ field $h_i = 0.5$ and consider evolution time $t$ from $0$ to $10^6$.
We train the ML model using $N = 10000$ random input product states.
We can see that the ML model predicts very well for this highly structured product state.
The collapse of the domain wall is accurately predicted by the ML model despite only seeing outcomes from random unstructured product states.
This numerical experiment suggests that the performance of the ML model goes beyond Theorem~\ref{thm:main-res}, which only guarantees accurate prediction on average.

Theorem~\ref{thm:main-res} states that the ML model can predict well on highly-entangled input states after learning only from random product state inputs.
We test this claim in Fig.~\ref{fig:GHZ} by considering an entangled input state
\begin{equation}
    \ket{\psi_e} = \sum_{\substack{s \in \{\leftarrow, \rightarrow\}^{n/2} \\ \mathrm{w/}\,\mathrm{even}\,\#\,\mathrm{of}\,\rightarrow}} \frac{1}{\sqrt{2^{(n/2) - 1}}} \ket{s} \otimes \ket{\rightarrow \downarrow \leftarrow \uparrow \rightarrow \downarrow \leftarrow \uparrow \ldots}.
\end{equation}
The left $n/2$ spins of the state $\ket{\psi_e}$ exhibit GHZ-like entanglement, which requires a linear-depth 1D quantum circuit to prepare.
The right $n/2$ spins of $\ket{\psi_e}$ form a product state with spins rotating clockwise from left to right.
Combining the left and right spins together, the state $\ket{\psi_e}$ cannot be generated by a short-depth 1D quantum circuit.
We can see that for this entangled input state, the ML model trained on random product states still predicts very well across a broad range of the evolution time $t$.

\begin{figure*}[t]
\centering
\includegraphics[width=1.0\textwidth]{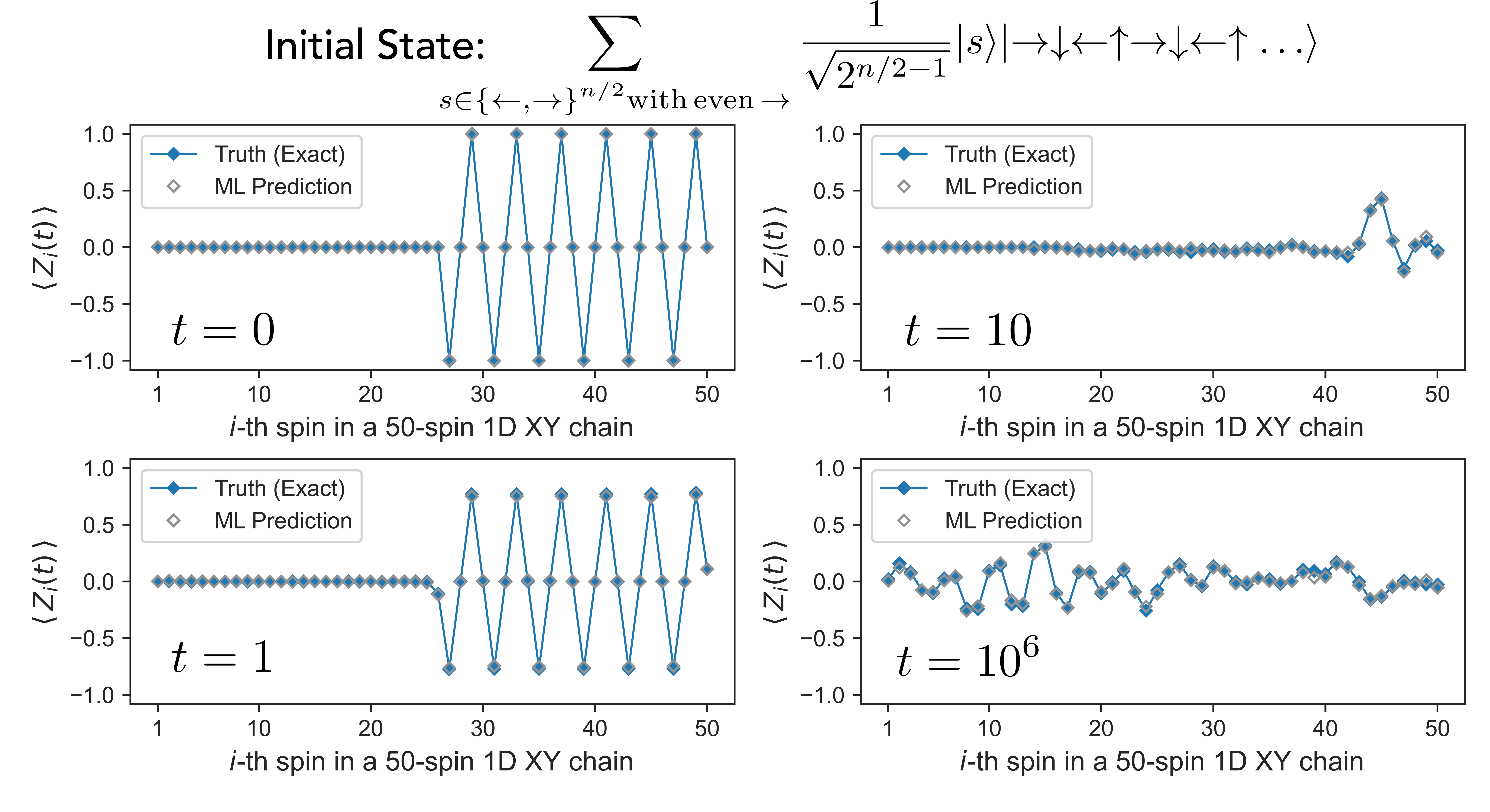}
    \caption{
    \emph{Visualization of ML model's prediction for a highly-entangled initial state $\rho = \ketbra{\psi}{\psi}$.}
    We consider the expected value of $Z_i(t) = e^{i t H} Z_i e^{-i t H}$, where $H$ corresponds to the 1D 50-spin XY chain with a homogeneous $Z$ field.
    The initial state $\ket{\psi}$ has a GHZ-like entanglement over the left-half chain and is a product state with spins rotating clockwise over the right-half chain.
    To prepare $\ket{\psi}$ with 1D circuits, a depth of at least $\Omega(n)$ is required.
    Even though the ML model is trained only on random product states (a total of $N = 10000$), it still performs accurately in predicting the highly-entangled state over a wide range of evolution time $t$.
    \label{fig:GHZ}}
\end{figure*}

\vspace{2em}
{\renewcommand\addcontentsline[3]{} \section{Outlook}}

The theorem established in this work shows that learning to predict a complex quantum process can be achieved with computationally-efficient ML algorithms.
Once we have obtained training data by accessing the unknown process $\mathcal{E}$ sufficiently many times, the proposed ML algorithm is entirely classical except for the step of obtaining the reduced density matrix (RDM) of the input state $\rho$, which may require quantum computation.
This algorithm is reminiscent of recent proposals for quantum ML based on kernel methods \cite{havlivcek2019supervised, schuld2019quantum, huang2020power}, in particular the projected quantum kernel \cite{huang2020power}.
This result highlights the potential for using hybrid quantum-classical ML algorithms to learn to model exotic quantum dynamics occurring in nature.

The results presented in this work also have implications in several previously-studied problems.
Prior works \cite{cirstoiu2020variational, gibbs2022dynamical, caro2022out} have proposed to train quantum ML algorithms on a given quantum process with the hope that the learned model can be faster than the process itself.
Our proof that even an exponential-time quantum dynamics can be predicted in quasi-polynomial time provides rigorous support for such a hope.
Furthermore, the proposed ML algorithm can be efficiently run on a classical computer when the few-body RDMs of the input state $\rho$ is easy to compute classically.
Hence, this result provides a rigorous foundation for empirical works using classical ML to learn and simulate quantum dynamics \cite{torlai2020quantum, gentile2021learning, banchi2018modelling, gilmer2017neural}.
When $\mathcal{E}$ is a parameterized quantum circuit $U_\theta$, such as a quantum neural network \cite{mcclean2018barren, caro2022generalization, havlivcek2019supervised, farhi2018classification, huang2020power, huang2021information},
the existence of a classical ML model that can efficiently predict the output of $U_\theta$ implies that the function $\Tr(O U_\theta \rho U_\theta^\dagger)$ is easy to represent and learn on a classical computer.
This finding shows that quantum circuits do not have strong representational power for a wide range of distributions over quantum state input $\rho$ with easy-to-compute RDMs.

Several open problems remain to be answered.
While we only focus on locally flat distributions $\mathcal{D}$, we believe that efficient ML algorithms also exist for other general classes of distributions.
An important open problem is hence the following: Can we obtain computationally efficient learning algorithms for any ``smooth'' distribution over quantum state space?
If not, how general can the class of distributions be?
Similar questions can be asked about the class of observables that we predict.
For what general classes of observables $O$ can one predict efficiently, in terms of both sample size and computation time?
This problem is closely related to the problem of when shadow tomography \cite{aaronson2018shadow, aaronson2019gentle, aaronson22shadow} can be made computationally efficient.
Other important questions include: If we restrict the quantum process $\mathcal{E}$ to be generated in polynomial time, can we obtain improved efficiency? What efficiency guarantees apply to fermionic or bosonic systems?
A better understanding of these problems would illuminate the ultimate power of classical and quantum ML algorithms for learning about physical dynamics.

\vspace{2em}
{\renewcommand\addcontentsline[3]{} \acknowledgements}
{ The authors thank Victor V. Albert, Chi-Fang (Anthony) Chen, Bryan Clark, Richard Kueng, and Spiros Michalakis for valuable input and inspiring discussions.
After learning about our proof of the quantum Bohnenblust-Hille inequality, Alexander Volberg and Haonan Zhang found a very different proof with a better Bohnenblust-Hille constant. We thank them for sharing their results with us.
HH is supported by a Google PhD fellowship. SC is supported by NSF Award 2103300. JP acknowledges funding from  the U.S. Department of Energy Office of Science, Office of Advanced Scientific Computing Research, (DE-NA0003525, DE-SC0020290), and the National Science Foundation (PHY-1733907). The Institute for Quantum Information and Matter is an NSF Physics Frontiers Center. }

\bibliographystyle{unsrt}
{\renewcommand\addcontentsline[3]{} \bibliography{references}}

\newpage
\vspace{4em}
\appendix

\renewcommand{\appendixname}{APPENDIX}
\renewcommand{\thesubsection}{\MakeUppercase{\alph{section}}.\arabic{subsection}}
\makeatletter
\renewcommand{\p@subsection}{}
\makeatother

\noindent
\textbf{ \LARGE{}Appendices }
\vspace{-0.5em}

\tableofcontents

\vspace{1em}

\section{Optimizing $k$-local Hamiltonian with random product states}
\label{sec:optimize-klocal}

While our goal is to design a good machine learning (ML) algorithm with low sample complexity, this section is a detour to a different task on the optimization of a $k$-local Hamiltonian.
We present an improved approximation algorithm for optimizing any $k$-local Hamiltonian.
The central result in this detour will become useful for showing the low sample complexity of several ML algorithms.

\subsection{Task description and main theorem}

\begin{task}[Optimizing quantum Hamiltonian]
    Given $n, k \geq 1$ and an $n$-qubit $k$-local Hamiltonian
    \begin{equation}
    H = \sum_{P \in \{I, X, Y, Z\}^{\otimes n}: |P|\leq k} \alpha_P P,
    \end{equation}
    where $|P|$ is the number of non-identity components in $P$.
    Find a state $\ket{\psi}$ that maximizes/minimizes $\bra{\psi} H \ket{\psi}$.
\end{task}

\noindent The task given above is related to solving ground states \cite{kempe2006complexity, sakurai_napolitano_2017} when we consider minimizing $\bra{\psi} H \ket{\psi}$ and quantum optimization \cite{farhi2014quantum,Farhi2014AQA,harrow2017extremal, parekh2020beating, Hallgren2020AnAA, anshu2021improved, hastings2022optimizing} when we consider maximizing $\bra{\psi} H \ket{\psi}$.
The maximization and minimization are often the same problem since maximizing $\bra{\psi} H \ket{\psi}$ is the same as minimizing $\bra{\psi} (-H) \ket{\psi}$.
Without further constraints, even for $k = 2$, finding the optimal state $\ket{\psi^*}$ maximizing $\bra{\psi} H \ket{\psi}$ is known to be QMA-hard \cite{piddock2015complexity},
hence it is expected to have no polynomial-time algorithm even on a quantum computer.
Most existing works consider deterministic or randomized constructions of $\ket{\psi}$ with rigorous upper/lower bound guarantees on $\bra{\psi} H \ket{\psi}$ for minimization/maximization.
Some of these lower bounds \cite{parekh2020beating, Hallgren2020AnAA, hastings2022optimizing} are based on the optimal value $\mathrm{OPT} = \sup_{\ket{\psi}} \bra{\psi} H \ket{\psi}$, while some \cite{Farhi2014AQA,harrow2017extremal, anshu2021improved} are based on the Pauli coefficients $\alpha_P$.

\subsubsection{Definition of expansion}

In this section, we present a random product state construction for the optimization problem, where the rigorous upper/lower bound is based on the Pauli coefficients $\alpha_P$ and the expansion property defined below.
The expansion property is defined for any Hamiltonian $H$.

\begin{definition}[Expansion property]
    Given an $n$-qubit Hamiltonian $H = \sum_{P} \alpha_P P$. We say $H$ has an expansion coefficient $c_e$ and expansion dimension $d_e$ if for any $\Upsilon \subseteq \{1, \ldots, n\}$ with $|\Upsilon| = d_e$,
    \begin{equation}
        \sum_{P \in \{I, X, Y, Z\}^{\otimes n}} \indicator\Big[\alpha_P \neq 0 \,\, \mathrm{and} \,\,  \big(\Upsilon \subseteq \mathsf{dom}(P) \,\, \mathrm{or} \,\, \mathsf{dom}(P) \subseteq \Upsilon \big) \, \Big] \leq c_e,
    \end{equation}
    where $\mathsf{dom}(P)$ is the set of qubits that $P$ acts nontrivially on.
\end{definition}

\noindent The expansion property captures the connectivity of the Hamiltonian.
We give two examples, general $k$-local Hamiltonian and geometrically-local Hamiltonian, to provide more intuition on the expansion property.

\begin{fact}[Expansion property for general $k$-local Hamiltonian] \label{fact:exp-klocal}
    Any Hamiltonian given by a sum of $k$-qubit observables has expansion coefficient $4^{k}$ and expansion dimension $k$.
\end{fact}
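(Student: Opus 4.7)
The plan is to unpack both the $k$-locality hypothesis and the definition of expansion, and to observe that the two conditions in the indicator collapse into a single containment once we use the size constraint. First I would note that writing a Hamiltonian as a sum of $k$-qubit observables and expanding each summand in the single-qubit Pauli basis ensures that every Pauli $P$ with $\alpha_P \neq 0$ satisfies $|\mathsf{dom}(P)| \leq k$. This is the only structural fact about $H$ that will be used.

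Next, fix an arbitrary $\Upsilon \subseteq \{1,\ldots,n\}$ with $|\Upsilon| = k$. I would analyze the two disjuncts inside the indicator separately. If $\Upsilon \subseteq \mathsf{dom}(P)$, then $|\mathsf{dom}(P)| \geq k$, which combined with $|\mathsf{dom}(P)| \leq k$ forces $\mathsf{dom}(P) = \Upsilon$. If instead $\mathsf{dom}(P) \subseteq \Upsilon$, then trivially $\mathsf{dom}(P)$ is a subset of $\Upsilon$. Either way the condition reduces to $\mathsf{dom}(P) \subseteq \Upsilon$, so the indicator-sum is bounded by the total number of Pauli strings whose support is contained in $\Upsilon$ (dropping the $\alpha_P \neq 0$ qualifier can only increase the count).

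Finally, counting these Paulis is elementary: the qubits outside $\Upsilon$ are forced to carry the identity, while each of the $k$ qubits inside $\Upsilon$ may independently carry any of $I, X, Y, Z$. This gives exactly $4^{|\Upsilon|} = 4^k$ candidates, which is the claimed expansion coefficient. No step here should present any real obstacle; the only thing worth being careful about is making the reduction $(\Upsilon \subseteq \mathsf{dom}(P)) \vee (\mathsf{dom}(P) \subseteq \Upsilon) \Longrightarrow \mathsf{dom}(P) \subseteq \Upsilon$ explicit, since it is precisely what lets the $k$-local assumption do its work and turns the bound into a simple subset-count.
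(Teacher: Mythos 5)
Your proposal is correct and follows essentially the same route as the paper's proof: both reduce the disjunction in the indicator to the single condition $\mathsf{dom}(P) \subseteq \Upsilon$ using $|\mathsf{dom}(P)| \le k = |\Upsilon|$, and then count the $4^k$ Pauli strings supported inside $\Upsilon$. You simply spell out the reduction more explicitly than the paper does.
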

\begin{proof}
    Let $H = \sum_P \alpha_P P$. All the Pauli observables $P$ with nonzero $\alpha_P$ act at most on $k$ qubits.
    For any $\Upsilon$ with $|\Upsilon| = k$, all the Pauli observables with nonzero $\alpha_P$ must have a domain contained in $\Upsilon$.
    There are at most $4^k$ such Pauli observables.
    Hence, the claim follows.
\end{proof}

\begin{fact}[Expansion property for bounded-degree $k$-local Hamiltonian] \label{fact:exp-bounded}
    Any Hamiltonian given by a sum of $k$-qubit observables $H = \sum_{j} h_j$, where each qubit is acted on by at most $d$ of the $k$-qubit observables $h_j$, has expansion coefficient $c_e = 4^{k} d$ and expansion dimension $d_e = 1$.
\end{fact}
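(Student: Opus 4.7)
The plan is to bound, for any single-qubit subset $\Upsilon = \{i\}$, the number of Pauli operators $P$ with $\alpha_P \neq 0$ that satisfy either $\Upsilon \subseteq \mathsf{dom}(P)$ or $\mathsf{dom}(P) \subseteq \Upsilon$. Since the second condition only adds the four candidate Paulis $I, X_i, Y_i, Z_i$, and the three nonidentity ones $X_i, Y_i, Z_i$ are already covered by the first condition whenever their coefficients are nonzero, the relevant terms reduce to the Paulis with $\alpha_P \neq 0$ that act nontrivially on qubit $i$, together with (possibly) the identity.

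First I would observe that any Pauli $P$ with $\alpha_P \neq 0$ must appear in the Pauli expansion of at least one summand $h_j$, since the global coefficient $\alpha_P$ is a sum of local contributions from the $h_j$'s. Next, if $P$ acts nontrivially on qubit $i$, then the summand producing it must satisfy $i \in S_j$, where $S_j$ denotes the support of $h_j$. By the bounded-degree hypothesis, at most $d$ summands satisfy $i \in S_j$. Each such $h_j$ is a $k$-qubit observable and hence has Pauli expansion supported on at most $4^k$ distinct Pauli operators indexed by $\{I,X,Y,Z\}^{S_j}$. Taking the union of these expansions over the at most $d$ relevant summands yields a set of size at most $d \cdot 4^k$ that contains every Pauli with $\alpha_P \neq 0$ acting nontrivially on qubit $i$. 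In effect, this step is Fact~\ref{fact:exp-klocal} applied locally to the subsystem swept out by the at most $d$ summands incident to qubit $i$.

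The last step is to verify that the identity term is absorbed into the same bound. The identity $I$ belongs to the Pauli expansion of every $k$-qubit summand $h_j$, so whenever there exists even one $h_j$ with $i \in S_j$, the identity is automatically counted in the union above. In the degenerate case where qubit $i$ is untouched by every $h_j$, the only candidate Pauli is $I$ itself, contributing at most $1 \leq 4^k d$ provided $d \geq 1$. I do not anticipate any substantive obstacle in this argument: the proof is essentially bookkeeping, and the mild subtlety lies only in making sure the disjunction $\Upsilon \subseteq \mathsf{dom}(P)$ or $\mathsf{dom}(P) \subseteq \Upsilon$ is dispatched without double-counting $X_i,Y_i,Z_i$ and without missing the identity contribution.
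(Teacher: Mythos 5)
Your proof is correct and follows essentially the same route as the paper's: for $\Upsilon=\{i\}$ you count the at most $d$ summands $h_j$ acting on qubit $i$, expand each into at most $4^k$ Pauli terms, and take the union to get $c_e = 4^k d$ with $d_e = 1$. Your extra bookkeeping about the identity term and the disjunction in the definition is more careful than the paper's terse argument, but it is the same proof.
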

\begin{proof}
    For every $\Upsilon$ with $|\Upsilon|$, $\Upsilon = \{i\}$ for some qubit $i$.
    For each qubit $i$ (corresponding to $\Upsilon = \{i\}$), we have at most $d$ $k$-qubit observables acting on $i$.
    Each of the $k$-qubit observables can be expanded into at most $4^k$ Pauli terms.
    Hence we can set $c_e = 4^{k} d$ and $d_e = 1$.
\end{proof}

\begin{fact}[Expansion property for geometrically-local Hamiltonian]
    Any Hamiltonian given by a sum of geometrically-local observables has expansion coefficient $c_e = \mathcal{O}(1)$ and expansion dimension $1$.
\end{fact}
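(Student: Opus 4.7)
The plan is to reduce the claim to Fact~\ref{fact:exp-bounded} by showing that a geometrically-local Hamiltonian is a special case of a bounded-degree $k$-local Hamiltonian for constants $k$ and $d$. First I would fix the convention of ``geometrically local'' that the paper seems to have in mind: each summand $h_j$ in $H = \sum_j h_j$ acts nontrivially only on qubits lying within a ball of constant radius $R$ in some underlying geometry (e.g.\ a $D$-dimensional lattice, or more generally a graph of bounded growth). Under this convention, each $h_j$ acts on at most $k = \mathcal{O}(R^D)$ qubits, so $H$ is $k$-local with $k = \mathcal{O}(1)$.

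Next I would bound the degree. For any fixed qubit $i$, the only summands $h_j$ that can act on $i$ are those whose supporting ball contains $i$. Because the underlying geometry has bounded growth, the number of distinct balls of radius $R$ containing a given point is $\mathcal{O}(1)$, so each qubit participates in at most $d = \mathcal{O}(1)$ summands. At this point the Hamiltonian satisfies the hypotheses of Fact~\ref{fact:exp-bounded} with constant $k$ and constant $d$, which gives expansion coefficient $c_e = 4^k d = \mathcal{O}(1)$ and expansion dimension $d_e = 1$, as required.

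The only step that requires any thought is the degree bound, since it is the one place where the geometric (as opposed to purely combinatorial) structure is used. If the paper intends a more general notion of ``geometric'' than a bounded-dimensional lattice, I would simply absorb ``bounded local geometry'' (finitely many terms touch any given site) into the definition; then Fact~\ref{fact:exp-bounded} applies verbatim. The rest is a direct invocation, so no further calculation is needed.
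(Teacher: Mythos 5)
Your proof is correct and follows essentially the same idea as the paper's: the key point in both is that each qubit is touched by only $\mathcal{O}(1)$ terms, so one may take $d_e = 1$ and a constant $c_e$. The paper counts the nonzero Pauli coefficients acting on a given qubit directly from the definition, whereas you route through Fact~\ref{fact:exp-bounded} via the bound $c_e = 4^k d$; the two are interchangeable here.
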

\begin{proof}
    For a geometrically-local Hamiltonian $H = \sum_{P} \alpha_P P$, each qubit $i$ is acted by at most a constant number $c_i = \mathcal{O}(1)$ of $P$ with non-zero $\alpha_P$. Hence for any qubit $i$, $\sum_{P \in \{I, X, Y, Z\}^{\otimes n}} \indicator[\alpha_P \neq 0 \,\, \mathrm{and} \,\,  (\Upsilon \subseteq \mathsf{dom}(P) \,\, \mathrm{or} \,\, \mathsf{dom}(P) \subseteq \Upsilon ) \, ] = c_i.$
    Thus, we can set $d_e = 1$ and $c_e = \max_i c_i = \mathcal{O}(1)$.
\end{proof}

\subsubsection{Main theorem}

With the expansion property defined, we can state the rigorous guarantee on the performance of the proposed randomized approximation algorithm on optimizing an $n$-qubit $k$-local Hamiltonian $H$.
We compare with the average energy $\E_{\ket{\phi}: \mathrm{Haar}} \big[ \bra{\phi} H \ket{\phi} \big] = \alpha_I$ over Haar random state.
The randomized approximation algorithm uses an optimization over a single-variable polynomial that guarantees improvement in at least one direction (minimization or maximization).

\begin{theorem}[Random product states for optimizing $k$-local Hamiltonian]\label{thm:main}
    Given an $n$-qubit $k$-local Hamiltonian $H = \sum_{P: |P| \leq k} \alpha_P P$ with expansion coefficient/dimension $c_e, d_e$.
    Let $r = 2 d_e / (d_e + 1) \in [1, 2)$ and $\mathsf{nnz}(H) \triangleq |\{P: \alpha_P \neq 0 \}|$.
    There is a randomized algorithm that runs in time $\mathcal{O}(nk + \mathsf{nnz}(H) 2^k)$ and produces either a random maximizing state $\ket{\psi} = \ket{\psi_1} \otimes \ldots \otimes \ket{\psi_n}$ satisfying
    \begin{equation} \label{eq:max-state-opt}
        \E_{\ket{\psi}} \big[ \bra{\psi} H \ket{\psi} \big] \geq \E_{\ket{\phi}: \mathrm{Haar}} \big[ \bra{\phi} H \ket{\phi} \big] + C(c_e, d_e, k) \left(\sum_{P \neq I} |\alpha_P|^r\right)^{1/r},
    \end{equation}
    or a random minimizing state $\ket{\psi} = \ket{\psi_1} \otimes \ldots \otimes \ket{\psi_n}$ satisfying
    \begin{equation} \label{eq:min-state-opt}
        \E_{\ket{\psi}} \big[ \bra{\psi} H \ket{\psi} \big] \leq \E_{\ket{\phi}: \mathrm{Haar}} \big[ \bra{\phi} H \ket{\phi} \big] - C(c_e, d_e, k) \left(\sum_{P \neq I} |\alpha_P|^r\right)^{1/r}.
    \end{equation}
    The constant $C(c_e, d_e, k)$ is given by
    \begin{equation}
        C(c_e, d_e, k) = \frac{\sqrt{2(k!)}}{c_e^{1/(2d_e)} k^{k+1.5+1/r} (\sqrt{6} + 2 \sqrt{3} )^k} = \Theta_k\left(\frac{1}{c_e^{1 / (2 d_e)}}\right),
    \end{equation}
    where $\Theta_k$ considers the asymptotic scaling when $k$ is a constant.
\end{theorem}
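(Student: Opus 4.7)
The plan is to adapt the polarization technique behind the classical Bohnenblust–Hille inequality to the quantum setting, in order to exhibit a random product state whose expected energy deviates from the Haar baseline $\alpha_I = \E_\phi[\bra\phi H \ket\phi]$ by at least $C(c_e,d_e,k)\,(\sum_{P\neq I}|\alpha_P|^r)^{1/r}$ in absolute value, with $r = 2d_e/(d_e+1)$. A one-sided sign-conditioning argument will then convert this unsigned estimate into either the maximization bound (\ref{eq:max-state-opt}) or the minimization bound (\ref{eq:min-state-opt}); this is why the theorem cannot predict \emph{which} of the two it delivers.

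The construction is as follows: for each qubit $i$, introduce $L = \Theta(k)$ virtual replicas carrying independent uniform single-qubit stabilizer states $\ket{\phi_{i,j}} \in \mathrm{stab}_1$ and independent Rademacher signs $\epsilon_{i,j} \in \{-1,+1\}$. For each sample of the random data, the actual single-qubit state on qubit $i$ is obtained by fixing replicas $1,\ldots,L-1$ at their random stabilizer values and choosing the last one to optimize a single-qubit effective Hamiltonian induced by $H$; the random signs enter as coefficients so that, after averaging over the stabilizer states $\vct{\phi}$, the quantity $\E_{\vct\phi}[\bra{\psi(\vct\epsilon)}(H - \alpha_I)\ket{\psi(\vct\epsilon)}]$ becomes a multilinear polynomial in $\vct\epsilon$ whose coefficients are proportional to the Pauli coefficients $\alpha_P$. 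This reduces the quantum optimization problem to lower-bounding the $L^1$-norm (in $\vct\epsilon$) of a Rademacher polynomial by the $\ell_r$-norm of its coefficients, which is precisely the content of the classical Bohnenblust–Hille inequality applied after polarization.

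The $\ell_r$ lower bound is then obtained by an iterated Khintchine/hypercontractive argument inducting not on the locality $k$ (as in the classical proof, which would give exponent $2k/(k+1)$ and a $k$-only constant), but on the \emph{expansion dimension} $d_e$, using the expansion coefficient $c_e$ to control the combinatorial blow-up at each induction step; this is what produces the announced exponent $r = 2d_e/(d_e+1)$ and the constant $C(c_e,d_e,k) = \Theta_k(c_e^{-1/(2d_e)})$. The runtime $\mathcal{O}(nk + \mathsf{nnz}(H)\,2^k)$ is immediate, since sampling the random state costs $\mathcal{O}(nk)$ and evaluating $\bra{\psi}H\ket{\psi}$ can be carried out term-by-term on supports of size at most $k$. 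The main obstacle is the final induction: executing it so that the constant depends on $c_e$ only through $c_e^{1/(2d_e)}$ (rather than through a much larger power), and so that the exponent is exactly $2d_e/(d_e+1)$, requires carefully coupling the non-commutative Pauli structure with the expansion hypothesis — a step for which the classical Bohnenblust–Hille proof has no direct analogue, since ordinary multilinear polynomials have no locality or expansion structure to exploit.
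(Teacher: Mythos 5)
Your high-level construction matches the paper's: replicate each qubit, fix all but the last replica at random single-qubit states, optimize the last replica against an induced single-qubit effective Hamiltonian, and use random signs so that a polarization identity turns the expected energy into a signed average whose coefficients track the $\alpha_P$. The one-sided caveat (you cannot control whether you get a maximizer or a minimizer) is also the right explanation. However, there are two genuine gaps.

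First, the step you yourself flag as "the main obstacle" — obtaining the $\ell_r$ lower bound with constant $c_e^{-1/(2d_e)}$ — is exactly the step you have not supplied, and your proposed route (an induction on the expansion dimension $d_e$ with hypercontractivity) is not what is needed and is not obviously executable. The paper's mechanism is different and more elementary: the iterated Khintchine argument is an induction on the number of replicas (i.e.\ on $k$), applied to the \emph{polarized} observable, and it only ever produces Frobenius ($\ell_2$) norms. What the single-qubit optimization of the last replica then buys is the quantity $\sum_{i\in[n]}\bigl(\sum_{P:\,P_i\neq I}\alpha_P^2\bigr)^{1/2}$, a sum over qubits of influence $\ell_2$-norms. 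The passage from this to $c_e^{-1/(2d_e)}\bigl(\sum_P|\alpha_P|^r\bigr)^{1/r}$ is a purely combinatorial, non-inductive argument: sort the qubits by influence, attribute each Pauli term $P$ to the last qubit (in the sorted order) on which it acts, bound the number of terms attributed to the $i$-th qubit by $c_e i^{d_e-1}$ using the expansion property, and apply H\"older with exponents $2/r$ and $d_e+1$. Without this (or an equivalent) argument your proof does not close.

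Second, your proposal treats the whole inhomogeneous Hamiltonian as yielding a single multilinear Rademacher polynomial, but the polarization identity applies only to \emph{homogeneous} $\kappa$-local slices: terms of different Pauli weight contribute monomials of different degree in the scaling parameter, with different combinatorial prefactors $\kappa^\kappa/\kappa!$. The paper handles this by first selecting the slice $\kappa^*$ maximizing $\sum_{|P|=\kappa}|\alpha_P|^r$, running the replica construction for that slice, and then recovering its contribution from the full Hamiltonian via a one-dimensional optimization of the degree-$k$ polynomial $t\mapsto\Tr(H\rho(t))$ on $[-1,1]$, using the Markov brothers' inequality at a cost of $(1+\sqrt 2)^{-k}$. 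Your writeup omits this homogenization step entirely, and without it the claimed relation between the signed average over $\vct\epsilon$ and the full coefficient vector $(\alpha_P)_{P\neq I}$ does not hold.
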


Some observations can be made. First, the improvement over Haar random states in Theorem~\ref{thm:main} becomes larger when the expansion coefficient $c_e$ is smaller.
Second, $(\sum_{P \neq I} |\alpha_P|^r)^{1/r}$ is the $\ell_r$-norm on the non-identity Pauli coefficients, so by monotonicity of $\ell_r$-norms, $(\sum_{P \neq I} |\alpha_P|^r)^{1/r}$ becomes smaller as $r$ becomes larger (corresponding to larger $d_e$).
Hence, the improvement is greater for smaller expansion dimension $d_e$.
In particular, it is helpful to contrast Eqs.~\eqref{eq:max-state-opt} and \eqref{eq:min-state-opt} with the following basic estimate corresponding to $r = 2$ which holds regardless of $c_e,d_e,k$:
\begin{equation}
    \sup_{\ket{\psi}} \Big|\bra{\psi} H \ket{\psi} - \E_{\ket{\phi}: \mathrm{Haar}}\big[ \bra{\phi} H \ket{\phi} \big]\Big| \ge \left(\sum_{P\neq I} |\alpha_P|^2\right)^{1/2}.
\end{equation}
This holds for any Hamiltonian $H = \sum_P \alpha_P P$ because $\left(\sum_{P\neq I} |\alpha_P|^2\right)^{1/2} = \frac{1}{2^{n/2}}\norm{H - \alpha_I I}_F \le \norm{H - \alpha_I I}_{\infty},$
where $\norm{\cdot}$ denotes spectral norm, and $\alpha_I = \E_{\ket{\psi}: \mathrm{Haar}}\big[ \bra{\phi} H \ket{\phi}\big]$.
This basic estimate shows that we can always find a state that improves by at least the $\ell_2$-norm of $\alpha_P$, although the optimization process can be computationally hard.

\subsubsection{An alternative version of the main theorem}\label{sec:alt-main-theorem}

By following the proof of Theorem~\ref{thm:main} and replacing the use of Corollary~\ref{cor:final-char} by Lemma~\ref{lem:key-char}, we can establish the following alternative theorem statement that does not utilize the expansion property.

\begin{theorem}[Random product states for optimizing $k$-local Hamiltonian; alternative]\label{thm:main-alt}
    Given an $n$-qubit $k$-local Hamiltonian $H = \sum_{P: |P| \leq k} \alpha_P P$ with $k = \mathcal{O}(1)$.
    Let $\mathsf{nnz}(H) \triangleq |\{P: \alpha_P \neq 0 \}|$.
    There is a randomized algorithm that runs in time $\mathcal{O}(nk + \mathsf{nnz}(H) 2^k)$ and produces a random state $\ket{\psi} = \ket{\psi_1} \otimes \ldots \otimes \ket{\psi_n}$ satisfying
    \begin{equation} \label{eq:maxmin-state-opt-alt}
        \left| \E_{\ket{\psi}} \big[ \bra{\psi} H \ket{\psi} \big] - \E_{\ket{\phi}: \mathrm{Haar}} \big[ \bra{\phi} H \ket{\phi} \big] \right| \geq D \sum_{i \in [n], p \in \{X, Y, Z\}} \sqrt{\sum_{\substack{P: P_i = p}} \alpha_P^2},
    \end{equation}
    for some constant $D$.
\end{theorem}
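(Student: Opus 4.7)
The plan is to reuse the polarization-based randomized construction that establishes Theorem~\ref{thm:main} essentially verbatim, and only change the final aggregation step. I would replicate each of the $n$ qubits into $k$ copies, forming an $nk$-qubit system, and lift $H$ to an operator $\tilde H$ on the replicas by replacing every non-identity Pauli factor $P_i$ at qubit $i$ with the symmetric average $\tfrac{1}{k}\sum_{j=1}^{k} P_i^{(j)}$ across its replicas. For each original qubit $i$, I would then pick one ``free'' replica at random and assign single-qubit states drawn from a fixed ensemble (e.g.\ Haar, or the six single-qubit stabilizer states) to the other $k-1$ replicas, and finally optimize the product state on the $n$ free replicas.

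Taking the expectation over the assignments to the non-free replicas, the induced operator on the free replicas becomes linear in each of the Pauli coefficients $\alpha_P$, so that per qubit one sees an effective single-qubit operator whose coefficients are polynomial expressions in the random non-free states. I would then choose each free replica to align with the appropriate Pauli eigenstate and use the random-signed averaging trick (the usual polarization identity over the $k$ replicas of each original qubit) to show that, in at least one of the two directions (maximization or minimization), the expected energy of the resulting random product state is shifted from $\E_{\ket{\phi}:\mathrm{Haar}}[\bra{\phi}H\ket{\phi}]$ by at least the stated amount. This entire scaffolding is identical to the proof of Theorem~\ref{thm:main}, including the $\mathcal{O}(nk + \mathsf{nnz}(H)\,2^k)$ runtime book-keeping (enumerate non-zero $\alpha_P$, process each qubit's single-qubit effective operator in $\mathcal{O}(1)$ time since $k = \mathcal{O}(1)$).

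The only substitution is in the final step. Where the proof of Theorem~\ref{thm:main} invokes Corollary~\ref{cor:final-char} to pass from the per-qubit contributions to the aggregated $\ell_r$-norm $(\sum_{P\ne I} |\alpha_P|^r)^{1/r}$ via the expansion property, I would instead invoke Lemma~\ref{lem:key-char}, which preserves the per-qubit structure and yields a sum of $\ell_2$-type terms $\sqrt{\sum_{P: P_i = p} \alpha_P^2}$ indexed by qubit $i$ and Pauli type $p \in \{X,Y,Z\}$. Summing over all such $(i,p)$ pairs gives precisely the right-hand side of Eq.~\eqref{eq:maxmin-state-opt-alt}, with the constant $D$ absorbing the universal factors from the Khintchine-type inequality and from the polarization identity at fixed $k$.

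The main obstacle is establishing that, after polarization and random-sign symmetrization, the contributions of different qubits and Pauli types truly decouple into the per-qubit $\ell_2$ form rather than collapsing to a global $\ell_2$ or $\ell_r$ norm as in Theorem~\ref{thm:main}. Concretely, one needs Lemma~\ref{lem:key-char} to give a Khintchine-type lower bound qubit-by-qubit, with the correct averaging over the non-free replicas so that squaring-and-summing happens only within each fixed pair $(i,p)$; getting the constants right when $k$ is allowed to be a large fixed constant is the only delicate point, and it is handled exactly as in the Bohnenblust--Hille-style argument underlying Theorem~\ref{thm:main}.
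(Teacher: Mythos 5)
Your proposal matches the paper's own proof: Theorem~\ref{thm:main-alt} is obtained by running the identical polarization-based algorithm from Theorem~\ref{thm:main} and simply terminating the chain of inequalities at Lemma~\ref{lem:key-char}, which already delivers the per-qubit $\ell_2$ form $\sum_{i,p}\sqrt{\sum_{P:P_i=p}\alpha_P^2}$, instead of pushing on to Corollary~\ref{cor:final-char} and the expansion property. The one step you leave implicit---passing from the best homogeneous slice $\kappa^*$ to the sum over all $P$ with $P_i=p$, at the cost of a factor $1/k=\Omega(1)$ absorbed into $D$---is glossed over identically in the paper, so your route is essentially the same.
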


We can compare the above theorem with a closely related result in \cite{harrow2017extremal}.
The following is a restatement of the approximation guarantee from Theorem 2 and Lemma 3 in \cite{harrow2017extremal}, which is a corollary of a powerful result in Boolean function analysis \cite{Dinur2006OnTF, barak_et_al:LIPIcs:2015:5298} relating the maximum influence and the ability to sample a bitstring from the Boolean hypercube with a large magnitude in the function value.
We can define the influence of qubit $i$ under Pauli matrix $p \in \{X, Y, Z\}$ as $I(i, p) = \sum_{\substack{P: P_i = p}} \alpha_P^2$.

\begin{theorem}[Approximation guarantee from \cite{harrow2017extremal} for optimizing $k$-local Hamiltonian] \label{thm:main-harrow}
    Given an $n$-qubit $k$-local Hamiltonian $H = \sum_{P: |P| \leq k} \alpha_P P$ with $k = \mathcal{O}(1)$.
    There is a polynomial-time randomized algorithm that produces a random state $\ket{\psi} = \ket{\psi_1} \otimes \ldots \otimes \ket{\psi_n}$ satisfying
    \begin{equation}
        \left| \E_{\ket{\psi}} \big[ \bra{\psi} H \ket{\psi} \big] - \E_{\ket{\phi}: \mathrm{Haar}} \big[ \bra{\phi} H \ket{\phi} \big] \right| \geq D \sum_{i \in [n], p \in \{X, Y, Z\}} \frac{\sum_{\substack{P: P_i = p}} \alpha_P^2}{\max_{j, q} \sqrt{\sum_{\substack{P: P_j = q}} \alpha_P^2}},
    \end{equation}
    for some constant $D$.
\end{theorem}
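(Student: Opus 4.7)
The plan is to reduce the quantum optimization to a low-degree polynomial maximization on the Boolean hypercube and then invoke the DFKO/BHSZ approximation algorithm cited in the hypothesis. First, I would parameterize a random product state by independently choosing, for each qubit $i$, a basis $b_i \in \{X, Y, Z\}$ uniformly at random, together with a sign $s_i \in \{\pm 1\}$, and setting $\ket{\psi_i}$ to be the $s_i$-eigenstate of the Pauli $b_i$. For such a state, $\bra{\psi_i} P_i \ket{\psi_i} = s_i \, \indicator[b_i = P_i]$ whenever $P_i \neq I$, so
\[
    f_b(s) := \bra{\psi} H \ket{\psi} - \alpha_I = \sum_{P \neq I} \alpha_P \prod_{i \in \mathsf{dom}(P)} s_i \, \indicator[b_i = P_i]
\]
is, conditioned on $b$, a multilinear polynomial of degree at most $k$ in $s \in \{\pm 1\}^n$ whose Fourier coefficients are exactly the $\alpha_P$ for those Pauli strings matching $b$ on all of $\mathsf{dom}(P)$.

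Next, I would apply the DFKO/BHSZ approximation algorithm to $f_b$: it produces in $\mathrm{poly}(n)$ time a sign vector $s^*(b)$ with $|f_b(s^*(b))| \geq C_k \cdot \sum_i \mathrm{Inf}_i(f_b) / \max_j \sqrt{\mathrm{Inf}_j(f_b)}$ for a constant $C_k$ depending only on $k$. A short Parseval computation gives $\mathrm{Inf}_i(f_b) = \sum_{P:\, P_i \neq I,\, b_l = P_l\ \forall l\in \mathsf{dom}(P)} \alpha_P^2$, from which one verifies that (a) the denominator satisfies $\max_j \mathrm{Inf}_j(f_b) \leq \max_{j,q}\sum_{P:P_j=q}\alpha_P^2$ deterministically for every $b$, because any Pauli contributing to $\mathrm{Inf}_j(f_b)$ must in particular have $P_j = b_j$; and (b) the numerator satisfies $\E_b \sum_i \mathrm{Inf}_i(f_b) = \sum_{P \neq I} |P|\, \alpha_P^2 \, (1/3)^{|P|} \geq 3^{-k} \sum_{i,p}\sum_{P:P_i=p}\alpha_P^2$. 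Combining (a) and (b) and folding the resulting constants into a single $D$ yields the claimed lower bound for $\E_b|f_b(s^*(b))|$.

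Finally, passing from $\E_b|f_b(s^*(b))|$ to $|\E[\bra{\psi}H\ket{\psi}] - \alpha_I|$ requires a short sign-alignment step: partition the basis strings $b$ by the sign of $f_b(s^*(b))$ and output the product state associated with whichever half contributes more to $\E_b|f_b(s^*(b))|$, losing only a factor of $2$ in $D$. The main obstacle throughout is precisely this mismatch between the per-$b$ DFKO/BHSZ guarantee and the $b$-averaged expression appearing in the conclusion: the numerator averages cleanly but the $\max_j$ in the denominator does not commute with $\E_b$, so we must rely on the deterministic pointwise upper bound in (a) rather than interchanging the expectation with the maximum. A secondary concern is derandomizing the choice of $b$ into a single polynomial-time procedure; this is straightforward since $H$ has at most $\mathsf{nnz}(H)$ nontrivial Paulis so only $O(nk)$ coordinates of $b$ actually influence $f_b$, and one can either sample $b$ directly (paying a single $\log(1/\delta)$ factor) or derandomize using the method of conditional expectations.
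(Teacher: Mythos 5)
The paper does not actually prove this statement: it is imported verbatim as a restatement of Theorem 2 and Lemma 3 of \cite{harrow2017extremal}, which in turn rest on the Boolean-function-analysis results of \cite{Dinur2006OnTF, barak_et_al:LIPIcs:2015:5298}. Your sketch is a faithful reconstruction of exactly that argument --- the random-Pauli-basis reduction to a degree-$k$ multilinear polynomial, the influence-based lower bound of DFKO/Barak et al., and the two averaging computations (a) and (b), all of which check out --- with the only step requiring further care being the final sign-selection (turning $\E_b|f_b(s^*(b))|$ into a one-sided expectation guarantee algorithmically), which you have already correctly identified and for which the cited works supply the standard remedy.
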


The guarantee from \cite{harrow2017extremal} is asymptotically optimal when the influence $I(i, p)$ are of a similar magnitude for different qubit $i$ and Pauli matrix $p$.
However, the approximation guarantee can be far from optimal when there is a large variation in the influence $I(i, p)$ over different qubits $i, p$.
As an example, consider a 1D $n$-qubit nearest-neighbor chain, where $|\alpha_P| = 1$ for only a constant number of Pauli observables $P$ and $|\alpha_P| = 1 / \sqrt{n}$ for the rest of the Pauli observables.
The improvements over Haar random state by our algorithm and the algorithm in \cite{harrow2017extremal} are respectively given by
\begin{align}
    \Theta\left(\sum_{i \in [n], p \in \{X, Y, Z\}} \sqrt{\sum_{\substack{P: P_i = p}} \alpha_P^2}\right) &= \Theta\left(\sqrt{n}\right),\\
    \Theta\left( \sum_{i \in [n], p \in \{X, Y, Z\}} \frac{\sum_{\substack{P: P_i = p}} \alpha_P^2}{\max_{j, q} \sqrt{\sum_{\substack{P: P_j = q}} \alpha_P^2}} \right) &= \Theta\left( 1 \right).
\end{align}
Hence, when there is large variation in the influence, our guarantee improves over that of \cite{harrow2017extremal}.
For our machine learning applications, the removal of the dependence on the maximum influence is central.
By removing the ratio $\sqrt{I(i, p)} / \max_{j, q} \sqrt{I(j, q)}$, we can obtain the $\ell_r$ norm dependence for an $r < 2$ as given in Theorem~\ref{thm:main}.
We will later see that having the $\ell_r$ norm bound (for $r < 2$) allows a substantial reduction in the sample complexity in training machine learning models for predicting properties.

We do want to mention that the improvement comes at a cost of a slightly worse dependence on $k = \mathcal{O}(1)$.
In Theorem~\ref{thm:main-harrow} from \cite{harrow2017extremal} based on Boolean function analysis \cite{Dinur2006OnTF, barak_et_al:LIPIcs:2015:5298}, the dependence on $D$ is $1 / 2^{\Theta(k)}$.
However, our result in Theorem~\ref{thm:main-alt} is $D = 1 / 2^{\Theta(k \log k)}$.
This difference stems from the construction for the random state $\ket{\psi}$.
\cite{Dinur2006OnTF, barak_et_al:LIPIcs:2015:5298, harrow2017extremal} utilize a random restriction approach, where some random subset of variables are fixed with some random values and the rest of the variables are optimized.
On the other hand, we utilize a polarization approach, where we replicate each variable many times, randomly fix all except the last replica, optimize the last replica, and combine using a random-signed averaging.

\subsection{Corollaries of the main theorem}
\label{sec:cor-main-opt-thm}

Here, we consider how the main theorem applies to certain classes of $k$-local Hamiltonians and discuss the relations of the corollaries to related works.

\subsubsection{Optimizing arbitrary $k$-local Hamiltonians}

The first corollary considers a general $k$-local Hamiltonian $H = \sum_{P: |P| \leq k} \alpha_P P$. We can combine Fact~\ref{fact:exp-klocal} and the main theorem to obtain the following corollary.

\begin{corollary}[Optimizing arbitrary $k$-local Hamiltonian] \label{cor:opt-any}
    Given an $n$-qubit $k$-local Hamiltonian $H = \sum_{P: |P| \leq k} \alpha_P P$.
    There is a randomized algorithm that runs in time $\mathcal{O}(n^k)$ and produces a random product state $\ket{\psi} = \ket{\psi_1} \otimes \ldots \otimes \ket{\psi_n}$ with
    \begin{equation}
        \left| \E_{\ket{\psi}} \big[ \bra{\psi} H \ket{\psi} \big] - \E_{\ket{\phi}: \mathrm{Haar}} \big[ \bra{\phi} H \ket{\phi} \big] \right| \geq C(k) \left(\sum_{P \neq I} |\alpha_P|^{2k / (k+1)}\right)^{(k+1) / (2k)},
    \end{equation}
    where $C(k) = \frac{\sqrt{2(k!)}}{2 k^{k+1.5+ (k+1)/(2k)} (\sqrt{6} + 2 \sqrt{3} )^k}$.
\end{corollary}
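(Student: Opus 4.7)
The plan is to derive Corollary~\ref{cor:opt-any} as an immediate specialization of Theorem~\ref{thm:main}, using the expansion parameters for general $k$-local Hamiltonians recorded in Fact~\ref{fact:exp-klocal}. First I would invoke Fact~\ref{fact:exp-klocal} to assert that any $k$-local Hamiltonian has expansion coefficient $c_e = 4^k$ and expansion dimension $d_e = k$. Plugging $d_e = k$ into $r = 2d_e/(d_e+1)$ yields $r = 2k/(k+1)$, which matches the Pauli $\ell_r$-norm exponent appearing in the target bound $(\sum_{P\neq I}|\alpha_P|^{2k/(k+1)})^{(k+1)/(2k)}$.

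Next I would substitute $c_e = 4^k$ and $d_e = k$ into $C(c_e, d_e, k)$ from Theorem~\ref{thm:main}. Using $(4^k)^{1/(2k)} = 2$ and $1/r = (k+1)/(2k)$ simplifies this to
\begin{equation}
    C(4^k, k, k) = \frac{\sqrt{2(k!)}}{2\, k^{k+1.5+(k+1)/(2k)} (\sqrt{6} + 2\sqrt{3})^k},
\end{equation}
which is exactly the $C(k)$ stated in the corollary. Theorem~\ref{thm:main} produces a random product state that either exceeds or falls below the Haar-random average by at least $C(k)(\sum_{P \neq I}|\alpha_P|^r)^{1/r}$, so taking absolute values on the left-hand side merges both cases into the single bound asserted by the corollary.

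For the runtime, I would bound the number of weight-$\le k$ Pauli terms by $\mathsf{nnz}(H) \leq \sum_{j=0}^{k}\binom{n}{j} 3^j = \mathcal{O}(n^k)$ for constant $k$; then the $\mathcal{O}(nk + \mathsf{nnz}(H) 2^k)$ guarantee of Theorem~\ref{thm:main} collapses to $\mathcal{O}(n^k)$. No substantive obstacle arises here beyond careful algebraic bookkeeping of the exponents, since the nontrivial content — the polarization-based construction of $\ket{\psi}$, the random-sign averaging, and the key single-variable polynomial optimization that drives the $\ell_r$-norm improvement — is already subsumed by Theorem~\ref{thm:main} and needs only to be instantiated at the values $(c_e,d_e) = (4^k, k)$.
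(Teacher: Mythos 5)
Your proposal is correct and follows exactly the route the paper takes: instantiate Theorem~\ref{thm:main} with the expansion parameters $(c_e,d_e)=(4^k,k)$ from Fact~\ref{fact:exp-klocal}, note $c_e^{1/(2d_e)}=2$ and $1/r=(k+1)/(2k)$ to recover the stated $C(k)$, fold the maximizing/minimizing dichotomy into the absolute-value form, and bound $\mathsf{nnz}(H)=\mathcal{O}(n^k)$ to get the runtime. Nothing is missing.
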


For $k = 2$, we have $2k / (k + 1) = 4/3$ and the above result resembles Littlewood's $4/3$ inequality.
Recall that Littlewood's $4/3$ inequality states that given $\{\beta_{i, j} \in \mathbb{C}\}_{i, j}$,
\begin{equation}
    \sup\left\{ \left|\sum_{i, j} \beta_{i, j} x^{(1)}_i x^{(2)}_j \right| : x^{(k)}_i \in \mathbb{C}, \left|x^{(k)}_i\right| \leq 1, \forall i \in \mathbb{N}, k \in \{1, 2\} \right\} \geq \frac{1}{\sqrt{2}} \left(\sum_{i, j} |\beta_{i, j}|^{4/3}\right)^{3/4}.
\end{equation}
For $k > 2$, the above result resembles Bohnenblust-Hille inequality, which states that given $\{\beta_{i_1, \ldots, i_k} \in \mathbb{C}\}_{i_1, \ldots, i_k}$,
\begin{align}
    &\sup\left\{ \left|\sum_{i_1, \ldots, i_k} \beta_{i_1, \ldots, i_k} x^{(1)}_{i_1} \ldots x^{(k)}_{i_k} \right| : x^{(\kappa)}_{i_{\kappa}} \in \mathbb{C}, \left|x^{(\kappa)}_{i_{\kappa}} \right| \leq 1, \forall i_{\kappa} \in \mathbb{N}, \kappa \in [k] \right\}\\
    &\geq D_k \left(\sum_{i_1, \ldots, i_k} |\beta_{i_1, \ldots, i_k}|^{2k / (k+1)}\right)^{(k+1) / (2k)},
\end{align}
for some constant $D_k$ that depends on $k$.
For optimizing general $k$-local Hamiltonian, the design of the randomized approximation algorithm is inspired by the original proof \cite{BHineq1931} of Bohnenblust-Hille inequality from 1931, which is used to study the absolute convergence of Dirichlet series.

\subsubsection{Optimizing bounded-degree $k$-local Hamiltonians}

Here, we consider a Hamiltonian given by a sum of $k$-qubit observables, where each qubit is acted on by at most $d$ of the $k$-qubit observables. This is often referred to as a $k$-local Hamiltonian with a bounded degree $d$.
We can combine Fact~\ref{fact:exp-bounded} and the main theorem to obtain the following corollary.

\begin{corollary}[Optimizing bounded-degree $k$-local Hamiltonian] \label{cor:opt-bounded}
    Given an $n$-qubit $k$-local Hamiltonian $H = \sum_{P: |P| \leq k} \alpha_P P$ with bounded degree $d$, $|\alpha_P| \leq 1$ for all $P$, and $k = \mathcal{O}(1)$.
    There is a randomized algorithm that runs in time $\mathcal{O}(nd)$ and produces either a random maximizing state $\ket{\psi} = \ket{\psi_1} \otimes \ldots \otimes \ket{\psi_n}$ satisfying
    \begin{equation}
        \E_{\ket{\psi}} \big[ \bra{\psi} H \ket{\psi} \big] \geq \E_{\ket{\phi}: \mathrm{Haar}} \big[ \bra{\phi} H \ket{\phi} \big] + \frac{C}{\sqrt{d}} \sum_{P \neq I} |\alpha_P|,
    \end{equation}
    or a random minimizing state $\ket{\psi} = \ket{\psi_1} \otimes \ldots \otimes \ket{\psi_n}$ satisfying
    \begin{equation}
        \E_{\ket{\psi}} \big[ \bra{\psi} H \ket{\psi} \big] \leq \E_{\ket{\phi}: \mathrm{Haar}} \big[ \bra{\phi} H \ket{\phi} \big] - \frac{C}{\sqrt{d}} \sum_{P \neq I} |\alpha_P|
    \end{equation}
    for some constant $C$.
\end{corollary}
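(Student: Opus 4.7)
The plan is to derive this corollary as a direct specialization of Theorem~\ref{thm:main}, using the expansion bounds for bounded-degree $k$-local Hamiltonians recorded in Fact~\ref{fact:exp-bounded}. Concretely, I would first invoke Fact~\ref{fact:exp-bounded} to conclude that $H$ has expansion coefficient $c_e = 4^k d$ and expansion dimension $d_e = 1$. Plugging $d_e = 1$ into the definition $r = 2d_e/(d_e+1)$ collapses the exponent to $r = 1$, so the $\ell_r$-norm of the non-identity Pauli coefficients that appears in Theorem~\ref{thm:main} simplifies to the $\ell_1$-norm $\sum_{P \neq I} |\alpha_P|$, which is exactly the quantity targeted in the corollary.

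Next I would track the constant. With $c_e = 4^k d$, $d_e = 1$, and $r = 1$, the prefactor from Theorem~\ref{thm:main} becomes
\begin{equation}
C(4^k d, 1, k) \;=\; \frac{\sqrt{2(k!)}}{(4^k d)^{1/2}\, k^{k+2.5}\,(\sqrt{6}+2\sqrt{3})^k} \;=\; \frac{1}{\sqrt{d}} \cdot \frac{\sqrt{2(k!)}}{2^k\, k^{k+2.5}\,(\sqrt{6}+2\sqrt{3})^k}.
\end{equation}
Since the corollary assumes $k = \mathcal{O}(1)$, the second factor is an absolute constant $C$, and Theorem~\ref{thm:main} therefore yields either a random maximizing or minimizing product state whose expected energy differs from $\alpha_I = \E_{\ket{\phi}:\mathrm{Haar}}[\bra{\phi}H\ket{\phi}]$ by at least $(C/\sqrt{d})\sum_{P\neq I}|\alpha_P|$, as claimed. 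The hypothesis $|\alpha_P|\le 1$ is not needed for the energy bound itself; it only appears implicitly in ensuring that the guarantee is nontrivially normalized.

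Finally I would verify the runtime claim. Theorem~\ref{thm:main} runs in time $\mathcal{O}(nk + \mathsf{nnz}(H)\,2^k)$. In the bounded-degree setting, each qubit participates in at most $d$ of the $k$-qubit observables, each of which expands into at most $4^k$ Pauli terms, so $\mathsf{nnz}(H) = \mathcal{O}(n d\, 4^k)$. With $k = \mathcal{O}(1)$, both $2^k$ and $4^k$ are constants, giving an overall runtime of $\mathcal{O}(nd)$, matching the statement.

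There is essentially no hard step here: the corollary is a plug-and-play consequence of Theorem~\ref{thm:main} once the expansion parameters from Fact~\ref{fact:exp-bounded} are substituted. The only point requiring minor care is confirming that the dependence $c_e^{1/(2d_e)} = \sqrt{4^k d}$ factorizes cleanly as $2^k\sqrt{d}$, so that the $k$-dependent piece can be absorbed into the absolute constant $C$ while the $\sqrt{d}$ is exposed in the denominator as written.
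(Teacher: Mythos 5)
Your proposal is correct and is exactly the paper's argument: the corollary is obtained by substituting the expansion parameters $c_e = 4^k d$, $d_e = 1$ (hence $r=1$) from Fact~\ref{fact:exp-bounded} into Theorem~\ref{thm:main}, with $c_e^{1/(2d_e)} = 2^k\sqrt{d}$ exposing the $\sqrt{d}$ and the remaining $k$-dependent factors absorbed into the constant, and the runtime following from $\mathsf{nnz}(H)=\mathcal{O}(nd\,4^k)$ with $k=\mathcal{O}(1)$. Your side observation that the hypothesis $|\alpha_P|\le 1$ is not actually used in the derivation is also accurate.
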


The task of optimizing bounded-degree $k$-local Hamiltonians has been considered in previous work \cite{anshu2021improved}.

\begin{theorem}[Approximation guarantee from \cite{anshu2021improved}] \label{thm:anshu-opt}
    Given an $n$-qubit $2$-local Hamiltonian $H = \sum_{P: |P| \leq 2} \alpha_P P$ with bounded degree $d$, and $|\alpha_P| \leq 1$ for all $P$.
    There is a polynomial-time randomized algorithm that produces a quantum circuit that generates a random maximizing state $\ket{\psi}$ satisfying
    \begin{equation}
        \E_{\ket{\psi}} \big[ \bra{\psi} H \ket{\psi} \big] \geq \E_{\ket{\phi}: \mathrm{Haar}} \big[ \bra{\phi} H \ket{\phi} \big] + \frac{C}{d} \left(\sum_{P \neq I} |\alpha_P|^2 \right) \cdot \frac{\sum_{P \neq I} |\alpha_P|^2}{\sum_{P \neq I} \indicator[\alpha_P \neq 0]},
    \end{equation}
    as well as a random minimizing state $\ket{\psi}$ satisfying
    \begin{equation}
        \E_{\ket{\psi}} \big[ \bra{\psi} H \ket{\psi} \big] \leq \E_{\ket{\phi}: \mathrm{Haar}} \big[ \bra{\phi} H \ket{\phi} \big] - \frac{C}{d} \left(\sum_{P \neq I} |\alpha_P|^2 \right) \cdot \frac{\sum_{P \neq I} |\alpha_P|^2}{\sum_{P \neq I} \indicator[\alpha_P \neq 0]}
    \end{equation}
    for some constant $C$.
\end{theorem}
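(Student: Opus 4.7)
The plan is to adapt a shallow-circuit-plus-random-product-state template, computing the expected energy to second order in the circuit parameters and exhibiting a choice that yields the claimed improvement. Concretely, I would first decompose $H = \alpha_I I + H^{(1)} + H^{(2)}$ into 1-local and 2-local Pauli parts, and form the interaction graph $G$ on qubits whose edges encode non-trivial 2-local terms. By hypothesis $G$ has maximum degree $O(d)$, so Vizing's theorem gives an edge-coloring into $L = O(d)$ matchings $M_1,\dots,M_L$, which will let me parallelize the non-commuting pair interactions into commuting layers. The goal of this combinatorial setup is to let me reason about each 2-local term essentially in isolation up to a controlled $O(d)$ correction.

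Next I would define the ansatz $\ket{\Psi(\theta)} = U_L(\theta_L) \cdots U_1(\theta_1) \ket{\psi_0}$, where $\ket{\psi_0}$ is a random product state whose single-qubit directions are tilted to align with the local field $H^{(1)}$ plus the averaged contribution from incident edges, and each $U_\ell(\theta_\ell) = \bigotimes_{(i,j)\in M_\ell} \exp(-i\theta_{ij} K_{ij})$ is an infinitesimal entangler generated by a Pauli correlator $K_{ij}$ depending on $H^{(2)}_{ij}$. Expanding $\bra{\Psi(\theta)} H \ket{\Psi(\theta)}$ to second order in $\theta$, the first-order term can be tuned to cancel, and the second-order term picks up $\sum_{(i,j)} \theta_{ij}^2 \operatorname{Tr}([K_{ij}, H^{(2)}_{ij}]^2) \sim \sum_{(i,j)} \theta_{ij}^2 \norm{H^{(2)}_{ij}}_F^2$ on the diagonal, which on the Pauli side is $\sum_{P:|P|=2} \alpha_P^2$ weighted by the $\theta$'s. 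Optimally choosing a common $\theta = \Theta(1/\sqrt{d})$ (so that all layers contribute coherently without saturating) produces an improvement on the order of $\frac{1}{d}\sum_{P} \alpha_P^2$ times a factor capturing the typical size of each coefficient.

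To squeeze out the extra factor $\frac{\sum_P \alpha_P^2}{\#\{P : \alpha_P \neq 0\}}$, I would use a second-moment / Paley--Zygmund argument: the variance of the energy under the randomness of $\ket{\psi_0}$ and of the color ordering produces an average deviation proportional to $\sqrt{\sum_P \alpha_P^4}$, which by Cauchy--Schwarz equals $\sqrt{(\sum_P \alpha_P^2)\cdot \bar{\alpha^2}}$ where $\bar{\alpha^2}$ is the average squared coefficient. Squaring this (since we get to convert variance into expected improvement via the sign choice inherent in maximization vs.\ minimization) recovers the required $\bigl(\sum_P \alpha_P^2\bigr) \cdot \bar{\alpha^2}$ scaling, after absorbing a final $1/d$ from the matching-layer overhead. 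Derandomization then turns expected improvement into a deterministic polynomial-time algorithm via conditional expectations over the coloring, product-state angles, and circuit parameters.

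The hard part will be the bookkeeping of cross-terms between 2-local interactions sharing a qubit: because the $H_{ij}^{(2)}$ and $H_{ik}^{(2)}$ do not commute in general, the naive second-order expansion picks up $O(d)$ interference terms per site, and a loose bound would wipe out the improvement. The key technical step, in the spirit of Anshu--Gosset--Morenz-Korol--Soleimanifar, is to show via the matching decomposition that within each layer the commutator cross-terms vanish in expectation over the random signs in $\ket{\psi_0}$, so that only the in-layer diagonal contributions survive, and the $O(d)$ layers then combine additively rather than destructively. Carrying out this cross-term cancellation while maintaining $|\alpha_P|\le 1$ bounds on the perturbation parameters is where almost all the work sits; everything else reduces to convex optimization of the $\theta$'s and standard derandomization.
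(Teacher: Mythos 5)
You should first note that the paper does not prove this statement at all: Theorem~\ref{thm:anshu-opt} is a verbatim restatement of a result from the cited reference \cite{anshu2021improved}, included only so the authors can compare it against their own Corollary~\ref{cor:opt-bounded}. There is therefore no in-paper proof to match; what you have written is an attempt to reconstruct the external Anshu--Gosset--Morenz~Korol--Soleimanifar argument. Your high-level ingredients are broadly consistent with that work (a random product state, a shallow entangling layer organized by the interaction graph, and a local expansion of the energy in the circuit parameters), and the paper itself summarizes the cited proof as ``a single-step gradient descent using a shallow quantum circuit on an initial random product state.''

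However, the step where you extract the distinctive factor $\bigl(\sum_{P\neq I}|\alpha_P|^2\bigr)\cdot\frac{\sum_{P\neq I}|\alpha_P|^2}{\sum_{P\neq I}\indicator[\alpha_P\neq 0]}$ is not a valid argument as written. You propose a second-moment/Paley--Zygmund bound giving an average deviation of order $\sqrt{\sum_P\alpha_P^4}$ and then ``square'' it to obtain the expected improvement; squaring a standard-deviation-scale fluctuation does not convert it into an expected gain, and the resulting quantity would not even have the right scaling in the $\alpha_P$ (the energy is linear in $H$, so any honest fluctuation bound is degree~$1$ in $\sum\alpha_P^2$-type quantities, not degree~$2$). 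The mechanism that actually produces this quartic-over-count form in \cite{anshu2021improved} is the standard guarantee for one step of gradient ascent/descent on a smooth function: the improvement is at least $\norm{\nabla f}_2^2/(2L)$, where $f$ is the energy as a function of the variational parameters, $\norm{\nabla f}_2^2$ is lower bounded in terms of $\sum_{P\neq I}|\alpha_P|^2$, and the smoothness constant $L$ scales with $d$ and with the number of nonzero terms; this is also what yields \emph{both} the maximizing and minimizing states (ascend or descend), a feature your sketch only gestures at. Without replacing your variance-squaring step by this gradient-step accounting (or an equivalent), the proposal does not reach the claimed bound.
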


The result from \cite{anshu2021improved} considers a single-step gradient descent using a shallow quantum circuit on an initial random product state.
Because $\sum_{P \neq I} |\alpha_P|^2 \leq \sum_{P \neq I} \indicator[\alpha_P \neq 0]$ and $\sum_{P \neq I} |\alpha_P| \geq \sum_{P \neq I} |\alpha_P|^2$,
our result in Corollary~\ref{cor:opt-bounded} improves either the maximization problem or the minimization problem over Theorem~\ref{thm:anshu-opt}.
For example, if we consider $\alpha_P = \Theta(1 / d)$, which sets the total interaction strength on each qubit to be $\Theta(1)$, then the improvement over Haar random state by our algorithm and the algorithm in \cite{anshu2021improved} is given by
\begin{equation}
    \Theta\left(\frac{1}{\sqrt{d}} \sum_{P \neq I} |\alpha_P|\right) = \Theta\left(\frac{n}{d^{1.5}}\right), \quad\quad \Theta\left(\frac{1}{\sqrt{d}} \left(\sum_{P \neq I} |\alpha_P|^2 \right) \cdot \frac{\sum_{P \neq I} |\alpha_P|^2}{\sum_{P \neq I} \indicator[\alpha_P \neq 0]} \right) = \Theta\left(\frac{n}{d^{4.5}}\right).
\end{equation}
We can see that our algorithm gives a larger improvement for the scaling with the degree $d$.
As another example, consider a 1D $n$-qubit nearest-neighbor chain (hence $d = 2$), where $|\alpha_P| = 1$ for only a constant number of Pauli observables $P$ and $|\alpha_P| = 1 / \sqrt{n}$ for the rest of the Pauli observables.
The improvement over Haar random state by our algorithm and the algorithm in \cite{anshu2021improved} is given by
\begin{equation}
    \Theta\left(\frac{1}{\sqrt{d}} \sum_{P \neq I} |\alpha_P|\right) = \Theta\left( \sqrt{n} \right), \quad\quad \Theta\left(\frac{1}{\sqrt{d}} \left(\sum_{P \neq I} |\alpha_P|^2 \right) \cdot \frac{\sum_{P \neq I} |\alpha_P|^2}{\sum_{P \neq I} \indicator[\alpha_P \neq 0]} \right) = \Theta\left(\frac{1}{n}\right).
\end{equation}
We can see that our algorithm gives a larger improvement for the scaling with the number $n$ of qubits.

\subsection{Description of the randomized approximation algorithm}
\label{sec:desc-algo}

There are a few steps in the proposed randomized algorithm.
The first step is to choose the best slice of the $k$-local Hamiltonian by splitting the $k$-local Hamiltonian $H = \sum_{P: |P| \leq k} \alpha_P P$ as follows,
\begin{equation} \label{eq:H-decomp}
    H = \alpha_I I + \sum_{\kappa = 1}^k H_\kappa, \,\,\quad H_\kappa \triangleq \sum_{P: |P| = \kappa} \alpha_P P.
\end{equation}
We choose $\kappa^* \in \{1, \ldots, k\}$ to be the $\kappa$ that maximizes $\sum_{P: |P| = \kappa} |\alpha_P|^r$, where $r = 2 d_e / (d_e + 1)$.
This step can be performed in time $\mathcal{O}( \mathsf{nnz}(H) k)$.

In the second step, the algorithm samples $(\kappa^* - 1) n$ Haar-random single-qubit pure states,
\begin{equation}
    \ket{\psi_{(s, j)}} \in \mathbb{C}^{2}, \,\,\quad \forall s \in \{1, \ldots, \kappa^* - 1\}, j \in \{1, \ldots, n\}.
\end{equation}
This step can be performed in time $\mathcal{O}(nk)$.

The third step is a local optimization on each qubit based on $\ket{\psi_{(s, j)}}$.
For each qubit $i$ and Pauli matrix $p \in \{X, Y, Z\}$, we define an $(n-1)$-qubit homogeneous $(\kappa^*-1)$-local Hermitian operator,
\begin{equation} \label{eq:Hkappaip}
    H_{\kappa^*, i, p} \triangleq \sum_{\substack{P \in \{I, X, Y, Z\}^{\otimes n}:\\ |P| = \kappa^*, P_i = p}} \alpha_P \left(\bigotimes_{j \neq i} P_j\right),
\end{equation}
For each qubit $i$ and $p \in \{X, Y, Z\}$, the algorithm computes the real value given as follows,
\begin{equation} \label{eq:beta-ip}
    \beta_{i, p} \triangleq \E_{\sigma\in\{\pm 1\}^{\kappa^* - 1}}\left[ \sigma_1\cdots\sigma_{\kappa^* - 1} \Tr\left[H_{\kappa^*, i, p} \bigotimes_{j \neq i} \left[\frac{I}{2} + \frac{1}{\kappa^* - 1} \sum^{\kappa^* - 1}_{s=1} \sigma_s \left(\ketbra{\psi_{(s, j)}}{\psi_{(s, j)}} - \frac{I}{2}\right)\right] \right] \right].
\end{equation}
Then for each qubit $j$, we consider a single-qubit local optimization
\begin{equation} \label{eq:psi-local-opt}
    \ket{\psi_{{(\kappa^*, j)}}} \triangleq \argmax_{\ket{\phi}: \, \mbox{\scriptsize 1-qubit state}} \bra{\phi} \left( \sum_{p \in \{X, Y, Z\}} \beta_{j, p}p \right) \ket{\phi} = \frac{I + n_X X + n_Y Y + n_Z Z}{2},
\end{equation}
where $n_p = \beta_{j, p} / \sqrt{\sum_{q} \beta_{j, q}^2}$ for $p \in \{X, Y, Z\}$.
After the optimization, the algorithm samples random numbers $\sigma_s \in \{ \pm 1 \}, \forall s \in \{1, \ldots, \kappa^*\}$ to define a one-dimensional parameterized family of $n$-qubit product states,
\begin{equation}
    \rho\left(t; \ket{\psi_{(\cdot, \cdot)}}, \sigma\right) \triangleq \bigotimes_{j = 1}^n \left(\frac{I}{2} + \frac{t}{\kappa^*} \sum^{\kappa^*}_{s=1} \sigma_s \left(\ketbra{\psi_{(s, j)}}{\psi_{(s, j)}} - \frac{I}{2}\right)\right), \,\,\quad \forall t \in [-1, 1].
\end{equation}
We will denote this by $\rho(t)$ when $\ket{\psi_{(\cdot,\cdot)}}, \sigma$ are clear from context.
This concludes the third step.
The third step can be performed in time $\mathcal{O}(\mathsf{nnz}(H) 2^k)$.

The fourth step performs a polynomial optimization over the one-dimensional family,
\begin{equation}
    \max_{t \in [-1, 1]} \left| \Tr\Big( H \rho\left(t; \ket{\psi_{(\cdot, \cdot)}}, \sigma\right) \Big)
    - \alpha_I \right|.
\end{equation}
The function $f(t) = \Tr( H \rho(t) )$ is a polynomial of degree at most $k$.
We can compute the function $f(t)$ efficiently in time $\mathcal{O}( \mathsf{nnz}(H) k)$ as $\rho(t)$ is a product state.
The optimization can thus be performed efficiently by sweeping through all possible values of $t$ on a sufficiently fine grid.
Let $t^*$ be the optimal $t$.

The final step considers the sampling of a random pure state $\ket{\psi} = \ket{\psi_1} \otimes \ldots \otimes \ket{\psi_n}$ from the distribution that corresponds to the mixed state $\rho\left(t^*; \ket{\psi_{(\cdot, \cdot)}}, \sigma\right)$.
If $\Tr( H \rho\left(t^*; \ket{\psi_{(\cdot, \cdot)}}, \sigma\right) ) - \alpha_I > 0$, then the random product state $\ket{\psi}$ is a maximizing state satisfying Eq.~\eqref{eq:max-state-opt}.
Otherwise, the random product state $\ket{\psi}$ is a minimizing state satisfying Eq.~\eqref{eq:min-state-opt}.
This step can be performed in time $\mathcal{O}(n)$.

\subsection{Proof of Theorem~\ref{thm:main}}

The first step of the algorithm considers splitting the $k$-local Hamiltonian $H$ into homogeneous $\kappa$-local Hamiltonians $H_\kappa$ defined below.
In particular, a homogeneous $\kappa^*$-local $H_{\kappa^*}$ is chosen.

\begin{definition}[Homogeneous $k$-local]
    A Hermitian operator $H$ is homogeneous $k$-local if $H = \sum_{P: |P| = k} \alpha_P P$.
\end{definition}

\noindent The second step is a random sampling that generates a single-qubit pure state $\ket{\psi_{(s, j)}}$ for each qubit $j$ and each copy $s \in \{1, \ldots, \kappa^* - 1\}$.
The third step is the most important part of the proof.
We will devote Section \ref{sec:polarization}, \ref{sec:khintchine}, and \ref{sec:charac-local} to establish the first inequality given below (Corollary~\ref{cor:final-char}).
\begin{align}
    \E_{\ket{\psi_{(\cdot, \cdot)}}} \E_{\sigma \in \{\pm 1\}^{\kappa^*}} \Big| \Tr\left( H_{\kappa^*} \rho\left(t = 1; \ket{\psi_{(\cdot, \cdot)}}, \sigma\right) \right) \Big|
    &\geq \frac{\sqrt{2 (k!)}}{ c_e^{1 / (2 d_e)} k^{k + 1.5} \sqrt{6}^{k}} \left( \sum_{P: |P| = \kappa^*} |\alpha_P|^r \right)^{1/r} \\
    &\geq \frac{\sqrt{2 (k!)}}{ c_e^{1 / (2 d_e)} k^{k + 1.5 + 1/r} \sqrt{6}^{k}} \left( \sum_{P \neq I} |\alpha_P|^r \right)^{1/r}.
\end{align}
The second inequality follows from $k \sum_{P: |P| = \kappa^*} |\alpha_P|^r \geq \sum_{\kappa = 1}^k \sum_{P: |P| = \kappa} |\alpha_P|^r = \sum_{P \neq I} |\alpha_P|^r$.
For the fourth step, the analysis of polynomial optimization given in Section \ref{sec:homo2inhomo} (Corollary~\ref{cor:poly-opt}) can be combined with the above inequality to obtain
\begin{equation}
    \E_{\ket{\psi_{(\cdot, \cdot)}}} \E_{\sigma \in \{\pm 1\}^{\kappa^*}} \Big| \Tr\left( H \rho\left(t^*; \ket{\psi_{(\cdot, \cdot)}}, \sigma\right) \right) - \alpha_I \Big|
    \geq \frac{\sqrt{2 (k!)}}{ c_e^{1 / (2 d_e)} k^{k + 1.5 + 1/r} (\sqrt{6} (1 + \sqrt{2}))^{k}} \left( \sum_{P \neq I} |\alpha_P|^r \right)^{1/r}.
\end{equation}
For the final step of the algorithm, using $\E_{\ket{\psi}} \ketbra{\psi}{\psi} = \rho(t^*; \rho_{(s, j)}, \sigma_s)$ and convexity, we have
\begin{align}
    \E_{\ket{\psi_{(\cdot, \cdot)}}} \E_{\sigma \in \{\pm 1\}^{\kappa^*}} \E_{\ket{\psi}} \Big| \bra{\psi} H \ket{\psi} - \alpha_I \Big| &\geq \E_{\ket{\psi_{(\cdot, \cdot)}}} \E_{\sigma \in \{\pm 1\}^{\kappa^*}} \Big| \Tr\left( H \E_{\ket{\psi}} \ketbra{\psi}{\psi} \right) - \alpha_I \Big|\\
    &\geq \frac{\sqrt{2 (k!)}}{ c_e^{1 / (2 d_e)} k^{k + 1.5 + 1/r} (\sqrt{6} + 2 \sqrt{3})^{k}} \left( \sum_{P \neq I} |\alpha_P|^r \right)^{1/r}.
\end{align}
The theorem follows by noting that $\E_{\ket{\phi}: \mathrm{Haar}} \big[ \bra{\phi} H \ket{\phi} \big] = \alpha_I$.


\subsubsection{Polarization}
\label{sec:polarization}

We justify the definition of $\beta_{i, p}$ using polarization.
Given an $n$-qubit homogeneous $k$-local observable $O = \sum_{P: |P| = k} \alpha_P P$, consider the following $nk$-qubit observable.
First, we will index the set $[nk]$ using ordered tuples $(s, i)$ where $s\in[k]$ and $i\in[n]$.
For every Pauli operator $P$ on $n$ qubits with $|P| = k$, suppose that it acts nontrivially on qubits $i_1 < \cdots < i_k$ via Pauli matrices $P_{i_1},\ldots,P_{i_k}$.
Then for any permutation $\pi\in\calS_k$, consider the $nk$-qubit observable $\pol_{\pi}(P)$ which acts on the $(\pi(s), i_s)$-th qubit via $P_{i_s}$ for all $s \in [k]$.
Then define
\begin{equation}
    \pol(P) \coloneqq \frac{1}{k!} \sum_{\pi\in \calS_k} \pol_{\pi}(P).
\end{equation}
We can extend $\pol(\cdot)$ linearly and define $\pol(O) \triangleq \sum_P \alpha_P \pol(P)$. We refer to $\pol(O)$ as the \emph{polarization} of $O$.
The squared Frobenius norm of $O$ and $\mathsf{pol}(O)$ are related by
\begin{equation} \label{eq:pol-frobenius}
     \Tr(O^2) = k! \Tr(\mathsf{pol}(O)^2).
\end{equation}
We prove the following operator analogue of the classical polarization identity:





\begin{lemma}[Polarization identity] \label{lem:polar-id}
    For any $nk$-qubit product state $\rho = \bigotimes_{s \in [k]} \left[ \bigotimes_{i\in [n]} \rho_{(s, i)} \right]$ and any $n$-qubit homogeneous $k$-local observable $O$ and any $t \in \mathbb{R}$, we have the following identity
    \begin{equation}
        t^k \Tr(\pol(O) \rho) = \frac{k^k}{k!} \E_{\sigma\in\{\pm 1\}^k}\left[\sigma_1\cdots\sigma_k \cdot \Tr\left( O \bigotimes_{i\in[n]} \left\{\frac{I}{2} + \frac{t}{k} \sum^k_{s=1} \sigma_s \left( \rho_{(s, i)} - \frac{I}{2} \right) \right\}\right)\right], \label{eq:polar}
    \end{equation}
    where the expectation is with respect to the uniform measure on $\{\pm 1\}^k$.
\end{lemma}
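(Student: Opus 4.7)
The plan is to verify the identity term-by-term by expanding the tensor product on the right-hand side and using the fact that $O$ is a sum of Pauli strings of weight exactly $k$. First, I would expand
\begin{equation*}
\bigotimes_{i\in[n]}\left\{\frac{I}{2} + \frac{t}{k}\sum_{s=1}^k \sigma_s\Bigl(\rho_{(s,i)}-\frac{I}{2}\Bigr)\right\} = \bigotimes_{i\in[n]}\left\{\frac{I}{2} + \frac{t}{k}\sum_{s=1}^k \sigma_s \tau_{(s,i)}\right\}
\end{equation*}
with $\tau_{(s,i)} \triangleq \rho_{(s,i)} - I/2$, and use linearity and the decomposition $O = \sum_{|P|=k}\alpha_P P$ to reduce to the case of a single Pauli $P$ with support $S_P = \{i_1<\cdots<i_k\}$.

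Next, because the state factorizes over qubits, the trace $\Tr(P \cdot \bigotimes_i A_i)$ also factorizes as $\prod_i \Tr(P_i A_i)$. On each qubit $i \notin S_P$, $P_i = I$ and $\Tr(A_i) = 1$ since $\Tr(\tau_{(s,i)}) = 0$. On each qubit $i \in S_P$, $P_i$ is a non-identity Pauli, so only the $\tau$ pieces survive: $\Tr(P_i A_i) = \frac{t}{k}\sum_{s}\sigma_s \Tr(P_i \tau_{(s,i)})$. Expanding the resulting product over $i\in S_P$ yields
\begin{equation*}
\Tr\Bigl(P \cdot \textstyle\bigotimes_i A_i\Bigr) = \Bigl(\tfrac{t}{k}\Bigr)^k \sum_{\vec{s}:S_P\to[k]} \prod_{j=1}^k \sigma_{s(i_j)}\,\Tr\bigl(P_{i_j}\tau_{(s(i_j),\,i_j)}\bigr).
\end{equation*}

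The key step is now to multiply by $\sigma_1\cdots\sigma_k$ and take the Rademacher average. For each $\vec{s}$, the resulting sign is $\prod_{j\in[k]}\sigma_j^{1+|\{\ell : s(i_\ell)=j\}|}$, which averages to zero unless every exponent is even, i.e., unless $|\{\ell: s(i_\ell)=j\}|$ is odd for every $j$. Since these $k$ nonnegative odd integers sum to $k$, each must equal exactly $1$, so $\vec{s}$ is a bijection $S_P\to[k]$; for such $\vec{s}$ the sign averages to $1$. Identifying bijections with permutations $\pi\in\calS_k$ via $\pi(j) = s(i_j)$, the surviving sum is precisely $\sum_{\pi\in\calS_k}\prod_{j=1}^k\Tr(P_{i_j}\tau_{(\pi(j),i_j)}) = \sum_\pi \Tr(\pol_\pi(P)\rho) = k!\,\Tr(\pol(P)\rho)$, where in the last equality I use the factorization of $\Tr(\pol_\pi(P)\rho)$ over $\rho = \bigotimes_{s,i}\rho_{(s,i)}$ together with the definition of $\pol$ (and that $\Tr(P_{i_j}\tau_{(\pi(j),i_j)}) = \Tr(P_{i_j}\rho_{(\pi(j),i_j)})$ since $P_{i_j}$ is traceless).

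Assembling the pieces, $\E_\sigma[\sigma_1\cdots\sigma_k \Tr(P\cdot\bigotimes_i A_i)] = (t/k)^k k!\,\Tr(\pol(P)\rho)$, and summing $\alpha_P$ over $P$ gives the identity after multiplying both sides by $k^k/k!$. The only real obstacle is the Rademacher parity argument that isolates bijections; everything else is bookkeeping enabled by the product structure of $\rho$ and the homogeneity $|P|=k$. Note that the identity does not require $\rho_{(s,i)}$ to be a state — only that the tensor product structure is preserved — which would be useful for extending the argument to signed states later.
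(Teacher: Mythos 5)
Your proposal is correct and follows essentially the same route as the paper's proof: expand the tensor product, use tracelessness of $\rho_{(s,i)} - I/2$ and of the non-identity Pauli factors to restrict to assignments supported on $\mathsf{dom}(P)$, and then use the Rademacher parity of $\sigma_1\cdots\sigma_k\,\sigma_{s(i_1)}\cdots\sigma_{s(i_k)}$ to isolate the bijections, which reassemble into $\sum_\pi \Tr(\pol_\pi(P)\rho) = k!\,\Tr(\pol(P)\rho)$. Your parity argument (each multiplicity must be odd, hence equal to $1$) is a slightly cleaner phrasing of the same cancellation the paper invokes.
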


\begin{proof}
    Let $O = \sum_{P: |P| = k} \alpha_P P$. By the multinomial theorem, we can expand the right-hand side to get
    \begin{equation}
        \frac{t^k}{k!} \sum_{P: |P| = k} \alpha_P \E_\sigma\left[\sigma_1\cdots\sigma_k \sum_{0 \le s_1,\ldots,s_n \le k} \Tr\left(P \bigotimes^n_{i=1} \left\{\frac{I}{2} \cdot \mathds{1}[s_i = 0] + \sigma_{s_i} \left( \rho_{s_i, i} - \frac{I}{2} \right)\cdot \mathds{1}[s_i > 0]\right\}\right)\right]. \label{eq:expand}
    \end{equation}
    For a given Pauli operator $P$, note that the only terms in the inner summation that are nonzero are given by $(s_1,\ldots,s_n)$ satisfying that if $s_i > 0$, then $P$ acts nontrivially on the $i$-th qubit, because otherwise $\Tr(\rho_{s_i, i} - I/2) = 0$ and the corresponding summand vanishes.
    Furthermore, for $(s_1,\ldots,s_n)$ satisfying this property, if $\{1,\ldots,k\}$ do not each appear exactly once, then
    \begin{align}
        &\sigma_1\cdots\sigma_k \cdot  \bigotimes^n_{i=1} \left\{\frac{I}{2} \cdot \mathds{1}[s_i = 0] + \sigma_{s_i} \left( \rho_{s_i, i} - \frac{I}{2} \right)\cdot \mathds{1}[s_i > 0]\right\} \\
        &= \sigma^{c_1}_1 \cdots \sigma^{c_k}_k \cdot \bigotimes^n_{i=1} \left\{\frac{I}{2} \cdot \mathds{1}[s_i = 0] + \left( \rho_{s_i, i} - \frac{I}{2} \right)\cdot \mathds{1}[s_i > 0]\right\}
    \end{align}
    for $0\le c_1,\ldots,c_k\le k$ such that $c_s = 1$ for some $s\in[k]$. In this case, the expectation of this term with respect to $\sigma$ vanishes. Altogether, we conclude that for $P$ which acts via $P_1,\ldots,P_k$ on qubits $1\le i_1< \cdots < i_k\le n$ and via identity elsewhere, the corresponding expectation over $\sigma$ in Eq.~\eqref{eq:expand} is given by
    \begin{equation}
        \sum_{\pi\in\calS_k} \Tr\left(\bigotimes^k_{s=1} P_j\left(\rho_{\pi(s), i_s} - \frac{I}{2} \right)\right) = \sum_{\pi\in\calS_k} \Tr\left(\bigotimes^k_{s=1} P_s \rho_{\pi(s), i_s}\right) = \sum_{\pi}\Tr(\pol_\pi(P)\rho),
    \end{equation}
    from which the lemma follows.
\end{proof}

Using the polarization identity, we can obtain the following corollary, which shows that $\beta_{i, p}$ is defined to be proportional to the expection of the polarization $\mathsf{pol}(H_{\kappa^*, i, p})$ of the homogeneous $\kappa^*$-local observable $H_{\kappa^*, i, p}$ on the tensor product of $n(\kappa^* - 1)$ single-qubit Haar-random states.
We will later study the expectation value of the polarized observable on random product states.

\begin{corollary} \label{cor:beta-pol}
From the definitions given in Section~\ref{sec:desc-algo}, we have
\begin{equation}
    \Tr\left(\mathsf{pol}(H_{\kappa^*, i, p}) \bigotimes_{s \in [\kappa^*-1], i \in [n]} \ketbra{\psi_{(s, j)}}{\psi_{(s, j)}} \right) = \frac{(\kappa^* - 1)^{\kappa^* - 1}}{(\kappa^* - 1)!} \beta_{i, p}.
\end{equation}
\end{corollary}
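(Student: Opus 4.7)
The statement is an almost immediate consequence of the polarization identity (Lemma~\ref{lem:polar-id}), which the corollary is specifically set up to invoke. The plan is to apply that identity to the $(n-1)$-qubit, homogeneous $(\kappa^*-1)$-local observable $O = H_{\kappa^*, i, p}$ defined in Eq.~\eqref{eq:Hkappaip}, with the degree parameter $k$ in the lemma replaced by $\kappa^*-1$, the evaluation parameter set to $t = 1$, and the product states chosen to be $\rho_{(s, j)} = \ketbra{\psi_{(s, j)}}{\psi_{(s, j)}}$ for $s \in [\kappa^*-1]$ and $j \in [n] \setminus \{i\}$.

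First I would verify that the hypotheses of Lemma~\ref{lem:polar-id} are met: $H_{\kappa^*, i, p}$ is indeed homogeneous of locality $\kappa^*-1$ on the $n-1$ qubits obtained by deleting qubit $i$, since by Eq.~\eqref{eq:Hkappaip} every term in its Pauli expansion has weight exactly $\kappa^* - 1$. Next, I would substitute into Eq.~\eqref{eq:polar}: with $t = 1$ the left-hand side becomes $\Tr(\pol(H_{\kappa^*, i, p}) \bigotimes_{s \in [\kappa^*-1],\, j \neq i} \ketbra{\psi_{(s, j)}}{\psi_{(s, j)}})$, which is exactly the left-hand side of the corollary (modulo the harmless typo of ``$i \in [n]$'' in the statement, which should read ``$j \in [n]\setminus\{i\}$'' to match the domain of $H_{\kappa^*, i, p}$).

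Then I would read off the right-hand side of Eq.~\eqref{eq:polar}: it is
\begin{equation*}
\frac{(\kappa^*-1)^{\kappa^*-1}}{(\kappa^*-1)!}\, \E_{\sigma \in \{\pm 1\}^{\kappa^*-1}}\!\left[\sigma_1\cdots\sigma_{\kappa^*-1}\,\Tr\!\left(H_{\kappa^*, i, p} \bigotimes_{j \neq i}\!\left\{\tfrac{I}{2} + \tfrac{1}{\kappa^*-1}\sum_{s=1}^{\kappa^*-1}\sigma_s\bigl(\ketbra{\psi_{(s, j)}}{\psi_{(s, j)}} - \tfrac{I}{2}\bigr)\right\}\right)\right].
\end{equation*}
By the definition of $\beta_{i, p}$ in Eq.~\eqref{eq:beta-ip}, the bracketed expectation is precisely $\beta_{i, p}$, giving the claimed identity.

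There is no real obstacle, since Lemma~\ref{lem:polar-id} does all the combinatorial work; the corollary is simply a bookkeeping statement that matches the algorithmically computable quantity $\beta_{i, p}$ defined in Section~\ref{sec:desc-algo} to the expectation of the polarized operator $\pol(H_{\kappa^*, i, p})$ on a tensor product of Haar-random single-qubit states. The only step requiring any care is reconciling the slight notational mismatch (the $(n-1)$-qubit support of $H_{\kappa^*, i, p}$ versus the $n$-qubit product indexing in the statement), which I would handle explicitly by restricting the tensor product to $j \neq i$ before invoking the lemma.
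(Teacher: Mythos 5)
Your proposal is correct and is exactly the paper's argument: the paper's proof is a one-line invocation of the polarization identity (Lemma~\ref{lem:polar-id}) applied to $H_{\kappa^*, i, p}$ with $t=1$, combined with the definition of $\beta_{i,p}$ in Eq.~\eqref{eq:beta-ip}. Your more explicit verification of the hypotheses and the indexing (that the tensor product runs over $j \neq i$) is just a careful spelling-out of the same steps.
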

\begin{proof}
    The claim follows from the polarization identity in Lemma~\ref{lem:polar-id} and the definition of $\beta_{i, p}$ in Eq.~\eqref{eq:beta-ip}.
\end{proof}

\subsubsection{Khintchine inequality for polarized observables}
\label{sec:khintchine}

We recall the following basic result in high-dimensional probability.

\begin{lemma}[Standard Khintchine inequality \cite{Haagerup1981}] \label{lem:khintchine}
Consider $\varepsilon_1, \ldots, \varepsilon_n$ to be i.i.d. random variables with $P(\varepsilon_i = \pm 1) = 1/2$. For any $a_1, \ldots, a_n \in \mathbb{R}$, we have
\begin{equation}
    \frac{1}{\sqrt{2}} \left( \sum_{i=1}^n a_i^2 \right)^{1/2} \leq \E_{\varepsilon_1, \ldots \varepsilon_n} \left| \sum_{i=1}^n a_i \varepsilon_i \right| \leq \left( \sum_{i=1}^n a_i^2 \right)^{1/2}.
\end{equation}
\end{lemma}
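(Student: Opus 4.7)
The proof splits cleanly into two halves. For the upper bound, one applies Jensen's inequality: with $S = \sum_i a_i \varepsilon_i$, $\E|S| \le (\E S^2)^{1/2}$, and since the $\varepsilon_i$ are mean-zero, independent, and satisfy $\varepsilon_i^2 = 1$, expanding $S^2$ gives $\E S^2 = \sum_i a_i^2$. This is entirely routine, using only that the $\varepsilon_i$ form an orthonormal system in $L^2$.

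For the lower bound I would present two routes depending on whether the sharp constant is required. The quick route uses log-convexity of $L^p$-norms to get $\norm{S}_2 \le \norm{S}_1^{1/3}\norm{S}_4^{2/3}$, which rearranges to
\begin{equation}
\E|S| \;\ge\; \frac{(\E S^2)^{3/2}}{(\E S^4)^{1/2}}.
\end{equation}
A direct expansion of $S^4$, using that $\E[\varepsilon_i \varepsilon_j \varepsilon_k \varepsilon_\ell]$ vanishes unless the indices pair up, gives $\E S^4 = 3\bigl(\sum_i a_i^2\bigr)^2 - 2\sum_i a_i^4 \le 3\bigl(\sum_i a_i^2\bigr)^2$. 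Substituting yields $\E|S| \ge \bigl(\sum_i a_i^2\bigr)^{1/2}/\sqrt{3}$, which suffices whenever only a universal constant is needed for the downstream applications.

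To reach the sharp constant $1/\sqrt{2}$, attained at $n = 2$ with $a_1 = a_2 = 1/\sqrt{2}$, I would follow Haagerup's Fourier-analytic argument. Normalize $\sum_i a_i^2 = 1$ and use the representation $|x| = \frac{2}{\pi}\int_0^\infty t^{-2}(1 - \cos tx)\,dt$; independence then gives
\begin{equation}
\E|S| \;=\; \frac{2}{\pi}\int_0^\infty \frac{1 - \prod_i \cos(t a_i)}{t^2}\,dt.
\end{equation}
The main obstacle is a sharp pointwise inequality identifying the two-atom configuration as extremal: one shows via an induction/merging argument on the $a_i$ that, among all coefficient vectors with $\sum_i a_i^2 = 1$, this integral is minimized by $a_1 = a_2 = 1/\sqrt{2}$, for which it evaluates to exactly $1/\sqrt{2}$ using the standard identity $\int_0^\infty \sin^2(ax)/x^2\,dx = \pi a/2$. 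This extremization is the delicate step; the Fourier setup and the moment computations above are routine.
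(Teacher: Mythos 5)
Your proposal is correct. Note that the paper does not prove this lemma at all --- it is imported wholesale via the citation to Haagerup --- so there is no in-paper argument to compare against. Your upper bound (Jensen plus orthonormality of the $\varepsilon_i$ in $L^2$) is the standard one, and your quick lower bound is complete and correct: the interpolation $\norm{S}_2 \le \norm{S}_1^{1/3}\norm{S}_4^{2/3}$ together with $\E S^4 = 3\bigl(\sum_i a_i^2\bigr)^2 - 2\sum_i a_i^4 \le 3\bigl(\sum_i a_i^2\bigr)^2$ gives constant $1/\sqrt{3}$. You are also right that this weaker constant suffices for every downstream use in the paper: the Khintchine constant only enters Lemma~\ref{lem:khintchine-obs-1} and from there the $\Theta_k$-type prefactors $C(c_e,d_e,k)$, so replacing $1/\sqrt{2}$ by $1/\sqrt{3}$ merely degrades explicit constants without affecting any asymptotic claim. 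For the sharp constant, your Fourier representation $\E|S| = \frac{2}{\pi}\int_0^\infty t^{-2}\bigl(1 - \prod_i \cos(t a_i)\bigr)\,dt$ and the identification of $a_1=a_2=1/\sqrt{2}$ as the extremal configuration (where $\E|S| = 1/\sqrt{2}$ exactly) are both correct, but you rightly flag that the extremization itself is the hard content; that step is precisely what the citation covers (the sharp $L^1$ constant is originally due to Szarek, with Haagerup's later work giving all $p$). In short: your elementary argument stands on its own with a marginally worse constant, and your sketch of the sharp version is a faithful outline of the cited proof rather than a replacement for it.
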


\noindent We prove an analogue of the Khintchine inequality when we replace the random $\pm 1$ variables with random product states and replace $a_1, \ldots, a_n$ with a homogeneous $1$-local observable.

\begin{lemma}[Khintchine inequality for homogeneous $1$-local observables] \label{lem:khintchine-obs-1}
    Let $n \geq 1$.
    Consider $\ket{\psi} = \bigotimes_{i=1}^n \ket{\psi_i}$ where $\ket{\psi_i}$ is a single-qubit Haar-random pure state.
    For any homogeneous $1$-local $n$-qubit observable $O$,
    \begin{equation}
        \frac{1}{\sqrt{6}} \sqrt{\Tr(O^2) / 2^n} \leq \E_{\ket{\psi}}\left[ \left| \bra{\psi} O \ket{\psi} \right| \right] \leq \frac{1}{\sqrt{3}} \sqrt{\Tr(O^2) / 2^n}.
    \end{equation}
\end{lemma}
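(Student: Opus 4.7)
My approach reduces the Haar expectation to one over a concrete one-dimensional random variable and then applies a sharp log-concave moment inequality. First, decompose $O = \sum_{i=1}^n O_i$, where each $O_i$ acts nontrivially only on qubit $i$ and, since $O$ is homogeneous $1$-local, is a linear combination of $X_i, Y_i, Z_i$; in particular $O_i$ is traceless. Writing $O_i = \vec{v}_i \cdot \vec{\sigma}$ for $\vec{v}_i \in \R^3$, tracelessness and disjointness of supports give $\Tr(O^2)/2^n = \sum_i \|\vec{v}_i\|^2$. By the tensor-product structure of $\ket{\psi} = \bigotimes_i \ket{\psi_i}$,
\begin{equation*}
\bra{\psi} O \ket{\psi} = \sum_{i=1}^n \vec{v}_i \cdot \vec{n}_i,
\end{equation*}
where $\vec{n}_i$ is the Bloch vector of $\ket{\psi_i}$, uniform on the two-sphere. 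By rotational invariance, each summand is equidistributed with $\|\vec{v}_i\|\, u_i$, where the $u_i$ are i.i.d.\ uniform on $[-1,1]$. Setting $a_i = \|\vec{v}_i\|$ and $Z = \sum_i a_i u_i$, the lemma reduces to proving $\frac{1}{\sqrt{6}}\sqrt{\sum_i a_i^2} \le \E|Z| \le \frac{1}{\sqrt{3}}\sqrt{\sum_i a_i^2}$.

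The upper bound is immediate: $\E Z^2 = \sum_i a_i^2/3$, and Jensen's inequality gives $\E|Z| \le \sqrt{\E Z^2}$. For the lower bound, equivalent to $\E|Z|^2 \le 2(\E|Z|)^2$, my plan is to exploit log-concavity. Each $a_i u_i$ has a log-concave density; by Pr\'ekopa--Leindler the convolution $Z$ inherits a log-concave density, and symmetry of $Z$ then implies $|Z|$ has a log-concave density on $[0,\infty)$. Berwald's inequality (equivalently, a consequence of Borell's lemma) asserts that any nonnegative random variable $Y$ with log-concave density satisfies $\E Y^q \le \Gamma(q+1)(\E Y)^q$ for all $q \ge 1$. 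Taking $q = 2$ and $Y = |Z|$ yields $\E|Z|^2 \le 2(\E|Z|)^2$, hence $\E|Z| \ge \sqrt{\E Z^2/2} = \sqrt{\sum_i a_i^2}/\sqrt{6}$.

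The main obstacle is pinning down the sharp constant $1/\sqrt{6}$ rather than a weaker one. A naive fourth-moment bound $\E Z^4 \le 3(\E Z^2)^2$ combined with H\"older's inequality only gives $\E|Z| \ge \sqrt{\sum_i a_i^2}/3$, which falls short. The log-concavity of $Z$ is precisely what closes this gap, since Berwald's inequality is saturated by the exponential distribution with the sharp constant $\sqrt{\Gamma(3)} = \sqrt{2}$ relating $\|Y\|_2$ and $\|Y\|_1$. If a self-contained derivation is preferred over citing Berwald, the needed inequality $\E Y^2 \le 2(\E Y)^2$ for log-concave nonnegative $Y$ admits a short proof via a pointwise majorization of the quantile function of $Y$ by that of an exponential random variable with the same mean.
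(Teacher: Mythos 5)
Your proof is correct, but it takes a genuinely different route from the paper's. The paper realizes each single-qubit Haar-random state as a Haar-random rotation of a uniform choice among the eight pure states with Bloch vectors $\tfrac{1}{\sqrt{3}}(\pm 1,\pm 1,\pm 1)$; this turns $\bra{\psi}O\ket{\psi}$ into a Rademacher sum $\tfrac{1}{\sqrt 3}\sum_{i,j}\alpha^U_{ij}s_{ij}$ in a rotated Pauli basis, and the constants $1/\sqrt 6$ and $1/\sqrt 3$ then come from Haagerup's sharp scalar Khintchine constants $1/\sqrt 2$ and $1$ multiplied by the Bloch-vector normalization $1/\sqrt 3$. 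You instead reduce, via Archimedes' theorem on the sphere, to $Z=\sum_i a_i u_i$ with $u_i$ i.i.d.\ uniform on $[-1,1]$, get the upper bound from $\E Z^2=\sum_i a_i^2/3$ and Jensen, and get the lower bound from $\E Z^2\le 2(\E|Z|)^2$, which you derive from log-concavity of $Z$ (Pr\'ekopa) plus Berwald's moment inequality for nonnegative log-concave variables, saturated by the exponential law. All the individual steps check out: the cross terms $\Tr(O_iO_j)$ vanish so $\Tr(O^2)/2^n=\sum_i\norm{\vec v_i}^2$; the symmetry of $Z$ is genuinely needed (and present) for the density of $|Z|$ to be log-concave on $[0,\infty)$; and degenerate $a_i=0$ terms are harmless. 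The one external ingredient is Berwald's inequality, which is standard but heavier machinery than the scalar Khintchine inequality the paper cites; in exchange your argument avoids the discretization of the Haar measure entirely and makes transparent why the same constant $\sqrt 2 = \sqrt{\Gamma(3)}$ appears. Either proof plugs equally well into the induction for the polarized $k$-local version (Lemma~\ref{lem:khintchine-obs}), which only uses this lemma as a black box.
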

\begin{proof}
    A homogeneous $1$-local observable $O$ is $\sum_{i = 1}^{n} \sum_{j=1}^3 \alpha_{ij} P^j_i,$ where $P^j_i$ is the Pauli matrix $\sigma_j \in \{X, Y, Z\}$ on the $i$-th qubit.
    Given $n$ single-qubit unitaries $U_1, \ldots, U_n$, we consider $O$ under the rotated Pauli basis
    \begin{equation}
        O = \sum_{i = 1}^{n} \sum_{j=1}^3 \alpha^U_{ij} U_i^\dagger P^j_i U_i.
    \end{equation}
    Using the orthogonality of Pauli matrices,  we have
    \begin{equation} \label{eq:Frob-rotated}
    \sqrt{\Tr(O^2) / 2^n} = \left( \sum_{i=1}^n \sum_{j=1}^3 (a^U_{ij})^2 \right)^{1/2}
    \end{equation}
    under any rotated Pauli basis.
    We will utilize the rotated Pauli basis to establish the claimed results.

    A single-qubit Haar-random pure state $\ket{\psi_i}$ can be sampled as follows.
    First, we sample a random single-qubit unitary $U_i$.
    Then, we consider $\ket{\psi_i}$ to be sampled uniformly from the set of $8$ pure states,
    \begin{equation}
        \Upsilon^{U_i} = \left\{ \frac{I + \frac{1}{\sqrt{3}}(s^X_i U_i X U_i^\dagger + s^Y_i U_i Y U_i^\dagger + s^Z_i U_i Z U_i^\dagger)}{2} \, \Bigg| \, s^X_i, s^Y_i, s^Z_i \in \{ \pm 1\}  \right\}.
    \end{equation}
    Using this sampling formulation and the rotated Pauli basis representation for $O$, we have
    \begin{align}
        \E_{\ket{\psi}}\left[ \left| \bra{\psi} O \ket{\psi} \right| \right] &= \E_{U_i} \E_{\ket{\psi_i} \sim \Upsilon^{U_i}} \left| \sum_{i = 1}^{n} \sum_{j=1}^3 \alpha^U_{ij} \Tr\left( U_i^\dagger P^j_i U_i \ketbra{\psi_i}{\psi_i} \right) \right|\\
        &= \frac{1}{\sqrt{3}} \E_{U_i} \E_{s^X_i, s^Y_i, s^Z_i \sim \{\pm 1\}} \left| \sum_{i = 1}^{n} \alpha^U_{i1} s^X_i + \alpha^U_{i2} s^Y_i + \alpha^U_{i3} s^Z_i \right|.
    \end{align}
    Using the standard Khintchine inequality given in Lemma~\ref{lem:khintchine}, we have
    \begin{equation}
        \frac{1}{\sqrt{2}} \left( \sum_{i=1}^n \sum_{j=1}^3 (a^U_{ij})^2 \right)^{1/2} \leq \E_{s^X_i, s^Y_i, s^Z_i \sim \{\pm 1\}} \left| \sum_{i = 1}^{n} \alpha^U_{i1} s^X_i + \alpha^U_{i2} s^Y_i + \alpha^U_{i3} s^Z_i \right| \leq \left( \sum_{i=1}^n \sum_{j=1}^3 (a^U_{ij})^2 \right)^{1/2}.
    \end{equation}
    Using Eq.~\eqref{eq:Frob-rotated}, we can obtain
    \begin{equation}
        \frac{1}{\sqrt{6}} \E_{U_i} \sqrt{\Tr(O^2) / 2^n} \leq \E_{\ket{\psi}}\left[ \left| \bra{\psi} O \ket{\psi} \right| \right] \leq \frac{1}{\sqrt{3}} \E_{U_i} \sqrt{\Tr(O^2) / 2^n},
    \end{equation}
    which implies the claimed result.
\end{proof}

We prove the left half of Khintchine inequality for polarized observables. The right half can be shown using a similar proof, but we are only going to use the left half stated below.

\begin{lemma}[Khintchine inequality for polarized observables]\label{lem:khintchine-obs}
    Given $n, k > 0$. Consider an $nk$-qubit observable $O = \pol(O')$, which is the polarization of an $n$-qubit  homogeneous $k$-local observable $O'$. Consider $\ket{\psi} = \bigotimes_{s \in [k], i \in [n]} \ket{\psi_{(s, i)}}$ where $\ket{\psi_{(s, i)}}$ is a single-qubit Haar-random pure state. We have
    \begin{equation}
        \left(\frac{1}{\sqrt{6}}\right)^k \sqrt{\Tr(O^2) / 2^n} \leq \E_{\ket{\psi}}\left[ \left| \bra{\psi} O \ket{\psi} \right| \right].
    \end{equation}
\end{lemma}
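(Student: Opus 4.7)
The plan is to prove the bound by induction on the polarization depth $k$ (uniformly over $n$), using a \emph{strengthened, vector-valued} inductive hypothesis. Specifically, for any finite family $\{A_r\}_r$ where $A_r = \pol(O'_r)$ is the $k$-layer polarization of a homogeneous $k$-local $n$-qubit observable $O'_r = \sum_{|P|=k}\alpha^{r}_P P$, I will show
\begin{equation*}
    \E_{\ket{\psi}} \sqrt{\sum_r \bigl(\bra{\psi} A_r \ket{\psi}\bigr)^2} \;\ge\; \Bigl(\tfrac{1}{\sqrt{6}}\Bigr)^{k} \sqrt{\sum_r \Tr(A_r^2)/2^{nk}},
\end{equation*}
which recovers the stated lemma when the family is a singleton. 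The vector-valued strengthening is necessary because a naive induction on $k$ stalls: applying Lemma~\ref{lem:khintchine-obs-1} to layer~1 leaves a quantity of the form $\E_{\mathrm{layers}\,2{:}k}\sqrt{\sum_{j,p}\Gamma_{j,p}^2}$ to bound from below, and Jensen's inequality on the square root points the wrong way.

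For the base case $k=1$, I would mimic the proof of Lemma~\ref{lem:khintchine-obs-1}: sample each $\ket{\psi_{(1,i)}}$ by choosing a Haar single-qubit unitary $U_i$ together with Rademacher signs $\sigma_i\in\{\pm 1\}^3$ picking one of the eight cube-vertex states in the rotated Pauli frame. Expanding $A_r = \sum_{i,p} a^{r}_{i,p}\, p^{(i)}$ in the rotated basis, $\bra{\psi}A_r\ket{\psi} = \tfrac{1}{\sqrt{3}}\sum_{i,q}\sigma^q_i\, b^{r}_{i,q}$ for coefficients $b^{r}_{i,q}=\sum_p a^{r}_{i,p}R^{p,q}_i$, and orthogonality of the rotation matrix yields $\sum_{i,q}(b^{r}_{i,q})^2 = \sum_{i,p}(a^{r}_{i,p})^2 = \Tr(A_r^2)/2^n$. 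Viewing $\vec{w}_{i,q}\triangleq (b^{r}_{i,q})_r$ as vectors, one has $\sqrt{\sum_r(\bra{\psi}A_r\ket{\psi})^2} = \tfrac{1}{\sqrt{3}}\norm{\sum_{i,q}\sigma^q_i \vec{w}_{i,q}}_2$, so the vector-valued Khintchine--Kahane $L^1$--$L^2$ inequality (Lata\l a--Oleszkiewicz, sharp constant $1/\sqrt{2}$) provides a factor $1/\sqrt{2}$, which combined with the $1/\sqrt{3}$ from the Bloch rescaling yields the desired $1/\sqrt{6}$. The bound is independent of $U$ so averaging over $U$ is free.

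For the inductive step, the key identity is the permutation rearrangement
\begin{equation*}
    \pol(O'_r) \;=\; \frac{1}{k}\sum_{j\in[n],\,p\in\{X,Y,Z\}} p^{(1,j)} \otimes \tilde{\pol}(O'^{r}_{j,p}),\qquad O'^{r}_{j,p} \triangleq \sum_{P:\,|P|=k,\,P_j=p}\alpha^{r}_P\, P|_{[n]\setminus\{j\}},
\end{equation*}
where $\tilde{\pol}$ denotes polarization across the remaining $k-1$ layers. This follows from the definition of $\pol$ by grouping the $k!$ permutations according to the unique qubit sent to layer~1: for fixed such qubit, the $(k-1)!$ assignments of the remaining qubits to layers $\{2,\ldots,k\}$ reproduce $\tilde{\pol}$. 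Conditioning on layers $2,\ldots,k$, each $\bra{\psi}A_r\ket{\psi}$ equals $\bra{\psi_{(1,\cdot)}}M_r\ket{\psi_{(1,\cdot)}}$ for the random $1$-local observable $M_r = \tfrac{1}{k}\sum_{j,p}\Gamma^{r}_{j,p}\, p^{(1,j)}$ with $\Gamma^{r}_{j,p} = \bra{\psi_{(2:k)}}\tilde{\pol}(O'^{r}_{j,p})\ket{\psi_{(2:k)}}$. Applying the strengthened $k=1$ case to $\{M_r\}$ over layer~1 gives $\E_{\mathrm{layer}\,1}\sqrt{\sum_r(\bra{\psi}A_r\ket{\psi})^2} \ge \tfrac{1}{k\sqrt{6}}\sqrt{\sum_{r,j,p}(\Gamma^{r}_{j,p})^2}$, and applying the inductive hypothesis (at depth $k-1$ on $n-1$ qubits) to the family $\{\tilde{\pol}(O'^{r}_{j,p})\}_{(r,j,p)}$ over layers $2,\ldots,k$ gives $\E_{\mathrm{layers}\,2{:}k}\sqrt{\sum_{r,j,p}(\Gamma^{r}_{j,p})^2} \ge (1/\sqrt{6})^{k-1}\sqrt{\sum_{r,j,p}\Tr(\tilde{\pol}(O'^{r}_{j,p})^2)/2^{(k-1)(n-1)}}$.

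The final ingredient is a combinatorial identity linking the two Frobenius densities: each Pauli $P$ with $|P|=k$ contributes to exactly $k$ pairs $(j,p)$ with $P_j=p$ (one per non-identity site of $P$), whence
\begin{equation*}
    \sum_{r,j,p} \Tr(\tilde{\pol}(O'^{r}_{j,p})^2)/2^{(k-1)(n-1)} \;=\; \frac{1}{(k-1)!}\sum_{r,j,p}\sum_{P:\,P_j=p,\,|P|=k}(\alpha^{r}_P)^2 \;=\; \frac{k}{(k-1)!}\sum_r\sum_P(\alpha^{r}_P)^2 \;=\; k^2 \sum_r \Tr(A_r^2)/2^{nk}.
\end{equation*}
Stitching the two estimates together, the factors of $k$ cancel and the two $1/\sqrt{6}$'s compound to $(1/\sqrt{6})^k$, closing the induction. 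The main obstacles are (i) formulating the right vector-valued strengthening so the induction becomes self-sustaining, and (ii) invoking the Hilbert-valued Khintchine--Kahane in the base case: a coordinate-wise scalar Khintchine applied component by component would only control the operator norm of the coefficient matrix rather than its Frobenius norm, so the genuinely vector-valued inequality is required.
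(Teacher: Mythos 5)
Your proof is correct, but it takes a genuinely different route from the paper's. The paper proves the claim for \emph{arbitrary} layered observables $\sum_{\ell_1,\ldots,\ell_k}\alpha_{\ell_1,\ldots,\ell_k}P^{(\ell_1)}\otimes\cdots\otimes P^{(\ell_k)}$ with no polarization structure assumed, peels off the \emph{last} layer, applies the scalar one-local Khintchine inequality (Lemma~\ref{lem:khintchine-obs-1}) to the inner sum over $\ell_k$, and then uses Minkowski's integral inequality to move the expectation outside the $\ell_2$ norm before invoking the inductive hypothesis on the resulting coefficient tensor $\bra{\psi^{(k)}}\sum_{\ell_k}\alpha_{\ell_1,\ldots,\ell_k}P^{(\ell_k)}\ket{\psi^{(k)}}$. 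You instead insist on carrying the polarization structure through the induction: you peel the \emph{first} layer via the permutation-grouping identity $\pol(O')=\tfrac1k\sum_{j,p}p^{(1,j)}\otimes\tilde{\pol}(O'_{j,p})$, strengthen the inductive hypothesis to a family (vector-valued) statement, and cancel the resulting factor of $k$ against the combinatorial fact that each weight-$k$ Pauli contributes to exactly $k$ pairs $(j,p)$; all of these steps check out, including the Frobenius-norm identity. The two routes are morally equivalent: for Hilbert-space targets, the $L^1$--$L^2$ Khintchine--Kahane inequality with constant $1/\sqrt{2}$ follows from the scalar Haagerup bound applied coordinate-wise in $r$ together with exactly the Minkowski interchange $\bigl(\sum_r(\E|\cdot|)^2\bigr)^{1/2}\le\E\bigl(\sum_r(\cdot)^2\bigr)^{1/2}$ that the paper uses, so you do not actually need to cite Lata\l{}a--Oleszkiewicz, and your closing remark that a coordinate-wise scalar argument ``would only control the operator norm'' is not accurate. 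What the paper's formulation buys is freedom from bookkeeping: because its induction runs over arbitrary coefficient tensors, the object obtained after integrating out one layer is automatically of the right form, and no $\pol$-preserving identity or Frobenius-norm accounting is needed. What your formulation buys is a reusable vector-valued statement and an explicit diagnosis of why the naive induction stalls. (One cosmetic point: your normalization $\Tr(A_r^2)/2^{nk}$ is the correct reading of the lemma's $\Tr(O^2)/2^n$, which should be normalized by the dimension of the $nk$-qubit space on which $O$ acts.)
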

\begin{proof}
    For $\ell \in [3n]$, define $P^{(\ell)}$ to be an $n$-qubit observable equal to the Pauli matrix $\sigma_{1+(\ell \, \mathrm{mod} \, 3)} \in \{X, Y, Z\}$ acting on the $\lceil \ell / 3\rceil$-th qubit.
    From the definition of polarization, we can represent $O$ as
    \begin{equation}
        O = \sum_{\ell_1, \ldots, \ell_k \in [3n]} \alpha_{\ell_1, \ldots, \ell_k} P^{(\ell_1)} \otimes \ldots \otimes P^{(\ell_k)}.
    \end{equation}
    For arbitrary coefficients $\alpha_{\ell_1, \ldots, \ell_k} \in \mathbb{R}$,
    we prove the following claim by induction on $k$,
    \begin{align} \label{eq:claim-polar-K}
        \left(\frac{1}{\sqrt{6}}\right)^k \left(\sum_{\ell_1, \ldots, \ell_k \in [3n]} \alpha^2_{\ell_1, \ldots, \ell_k} \right)^{1/2} \leq \E_{\ket{\psi}}\left[ \left| \bra{\psi} \sum_{\ell_1, \ldots, \ell_k \in [3n]} \alpha_{\ell_1, \ldots, \ell_k} P^{(\ell_1)} \otimes \ldots \otimes P^{(\ell_k)} \ket{\psi} \right| \right].
    \end{align}
    It is not hard to see that the left-hand side of Eq.~\eqref{eq:claim-polar-K} is $\left(\frac{1}{\sqrt{6}}\right)^k \sqrt{\Tr(O^2) / 2^n}$ and the right-hand side of Eq.~\eqref{eq:claim-polar-K} is $\E_{\ket{\psi}}\left[ \left| \bra{\psi} O \ket{\psi} \right| \right]$.
    Hence, the lemma follows from Eq.~\eqref{eq:claim-polar-K}.

    We now prove the base case and the inductive step.
    The base case of $k=1$ follows from Khintchine inequality for homogeneous $1$-local observables given in Lemma~\ref{lem:khintchine-obs-1}.
    Assume by induction hypothesis that the claim holds for $k-1$.
    By denoting $\ket{\psi^{(k)}}$ to be a product of $n$ Haar-random single-qubit states,
    we can then apply Khintchine inequality for homogeneous $1$-local observables (Lemma~\ref{lem:khintchine-obs-1}) to obtain
    \begin{align}
        &\left(\frac{1}{\sqrt{6}}\right)^k \left(\sum_{\ell_1, \ldots, \ell_k \in [3n]} \alpha^2_{\ell_1, \ldots, \ell_k} \right)^{1/2}\\
        &= \left(\sum_{\ell_1, \ldots, \ell_{k-1} \in [3n]} \left( \left( \sum_{\ell_k \in [3n]} \alpha^2_{\ell_1, \ldots, \ell_k} \right)^{1/2} \right)^2 \right)^{1/2} \\
        &\leq \left(\sum_{\ell_1, \ldots, \ell_{k-1} \in [3n]} \left( \E_{\ket{\psi^{(k)}}} \left| \bra{\psi^{(k)}} \sum_{\ell_k \in [3n]} \alpha_{\ell_1, \ldots, \ell_k} P^{(\ell_k)} \ket{\psi^{(k)}} \right| \right)^2 \right)^{1/2}.
    \end{align}
    We can then apply Minkowski's integral inequality to upper bound the above and yield
    \begin{align}
        &\left(\frac{1}{\sqrt{6}}\right)^k \left(\sum_{\ell_1, \ldots, \ell_k \in [3n]} \alpha^2_{\ell_1, \ldots, \ell_k} \right)^{1/2}\\
        &\leq \E_{\ket{\psi^{(k)}}} \left(\sum_{\ell_1, \ldots, \ell_{k-1} \in [3n]} \left( \bra{\psi^{(k)}} \sum_{\ell_k \in [3n]} \alpha_{\ell_1, \ldots, \ell_k} P^{(\ell_k)} \ket{\psi^{(k)}} \right)^2 \right)^{1/2} \\
        &\leq \E_{\ket{\psi^{(k)}}} \E_{\ket{\psi^{(1, \ldots, k-1)}}} \left| \bra{\psi^{(1, \ldots, k-1)}}\bra{\psi^{(k)}} \sum_{\ell_1, \ldots, \ell_k \in [3n]} \alpha_{\ell_1, \ldots, \ell_k} P^{(\ell_1)} \otimes \ldots \otimes P^{(\ell_k)} \ket{\psi^{(1, \ldots, k-1)}} \ket{\psi^{(k)}} \right|.
    \end{align}
    The last line considers $\bra{\psi^{(k)}} \sum_{\ell_k \in [3n]} \alpha_{\ell_1, \ldots, \ell_k} P^{(\ell_k)} \ket{\psi^{(k)}}$ to be a scalar indexed by $\ell_1, \ldots, \ell_{k-1}$ and uses the induction hypothesis. We have thus established the induction step.
    The claim in Eq.~\eqref{eq:claim-polar-K} follows.
\end{proof}

Khintchine inequality for polarized observable allows us to show that the average magnitude of $\mathsf{pol}(H_{\kappa^*, i, p})$ for the tensor product of single-qubit Haar-random states is at least as large as the Frobenius norm of $H_{\kappa^*, i, p}$ up to a constant depending on $\kappa^*$.
Using the definitions from the design of the approximate optimization algorithm, we can obtain the following corollary.

\begin{corollary} \label{cor:polH-random}
From the definitions given in Section~\ref{sec:desc-algo}, we have
\begin{equation}
    \E_{\ket{\psi_{(\cdot, \cdot)}}}
    \left| \Tr\left(\mathsf{pol}(H_{\kappa^*, i, p}) \bigotimes_{s \in [\kappa^*-1], i \in [n]} \ketbra{\psi_{(s, j)}}{\psi_{(s, j)}} \right) \right| \geq \left( \frac{1}{\sqrt{6}} \right)^{\kappa^* - 1} \sqrt{\frac{\Tr(H_{\kappa^*, i, p}^2)}{2^n (\kappa^* - 1)!}}.
\end{equation}
\end{corollary}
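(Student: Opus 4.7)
\textbf{Proof plan for Corollary~\ref{cor:polH-random}.} The plan is to observe that this is essentially a direct specialization of Lemma~\ref{lem:khintchine-obs} (the Khintchine inequality for polarized observables), combined with the Frobenius-norm identity of Eq.~\eqref{eq:pol-frobenius} that relates an observable to its polarization.

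First I would unpack the definition of $H_{\kappa^*, i, p}$ in Eq.~\eqref{eq:Hkappaip}. Since the Pauli action on qubit $i$ is fixed to $p$, the operator $H_{\kappa^*, i, p}$ is a homogeneous $(\kappa^*-1)$-local observable on the remaining $n-1$ qubits (or equivalently on all $n$ qubits, acting trivially on qubit $i$). Consequently, its polarization $\mathsf{pol}(H_{\kappa^*, i, p})$ is naturally an observable on $n(\kappa^*-1)$ qubits indexed by pairs $(s,j)\in [\kappa^*-1]\times [n]$, acting trivially on the qubits $(s,i)$ for each $s$. This is precisely the set of qubits on which the product $\bigotimes_{s\in[\kappa^*-1], j\in[n]} \ketbra{\psi_{(s,j)}}{\psi_{(s,j)}}$ lives.

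Next I would apply Lemma~\ref{lem:khintchine-obs} with the parameter $k$ in that lemma replaced by $\kappa^* - 1$ and with $O' := H_{\kappa^*, i, p}$. The conclusion of that lemma gives
\begin{equation}
\left(\tfrac{1}{\sqrt{6}}\right)^{\kappa^*-1}\sqrt{\Tr\bigl(\mathsf{pol}(H_{\kappa^*,i,p})^2\bigr)/2^n}
\;\leq\; \E_{\ket{\psi_{(\cdot,\cdot)}}}\bigl|\bra{\psi_{(\cdot,\cdot)}}\mathsf{pol}(H_{\kappa^*,i,p})\ket{\psi_{(\cdot,\cdot)}}\bigr|,
\end{equation}
where $\ket{\psi_{(\cdot,\cdot)}} = \bigotimes_{s,j}\ket{\psi_{(s,j)}}$ is the tensor of Haar-random single-qubit states. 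Finally I would substitute Eq.~\eqref{eq:pol-frobenius}, which yields $\Tr(\mathsf{pol}(H_{\kappa^*,i,p})^2) = \Tr(H_{\kappa^*,i,p}^2)/(\kappa^*-1)!$, and this produces exactly the right-hand side claimed in the corollary.

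There is no real obstacle here — the only minor bookkeeping is to make sure the dimension convention is consistent. Specifically, one has to check that treating $H_{\kappa^*,i,p}$ as an observable on $n$ qubits (trivial on qubit $i$) is compatible with the $2^n$ normalization in Lemma~\ref{lem:khintchine-obs}: this holds because the trivial factor contributes a factor of $2$ to both $\Tr(\mathsf{pol}(H_{\kappa^*,i,p})^2)$ and to the ambient dimension, so the ratio $\Tr(\mathsf{pol}(H_{\kappa^*,i,p})^2)/2^n$ is unchanged whether we count qubit $i$ or not, and the Haar expectation on a trivial factor is identically $1$. With this consistency check in place, the two-line chain above is the complete proof.
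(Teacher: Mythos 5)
Your proposal is correct and is exactly the paper's argument: the paper proves this corollary in one line by invoking Lemma~\ref{lem:khintchine-obs} (with $k$ set to $\kappa^*-1$ and $O'=H_{\kappa^*,i,p}$) together with Eq.~\eqref{eq:pol-frobenius}, which is precisely your two-step chain. Your extra bookkeeping about the trivially-acted qubit $i$ and the $2^n$ normalization is a sensible consistency check that the paper leaves implicit.
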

\begin{proof}
    The claim follows immediately from Lemma~\ref{lem:khintchine-obs} and Eq.~\eqref{eq:pol-frobenius}.
\end{proof}

\subsubsection{Characterization of the locally optimized random state}
\label{sec:charac-local}

Recall that $\rho\left(1; \ket{\psi_{(\cdot, \cdot)}}, \sigma\right)$ is created by sampling random product states and performing local single-qubit optimizations.
The locally optimized random state satisfies the following inequality.

\begin{lemma}[Characterization of $\rho(t)$ for $t = 1$] \label{lem:key-char}
From the definitions given in Section~\ref{sec:desc-algo}, we have
\begin{equation}
    \E_{\ket{\psi_{(\cdot, \cdot)}}} \E_{\sigma \in \{\pm 1\}^{\kappa^*}} \Big| \Tr\left( H_{\kappa^*} \rho\left(1; \ket{\psi_{(\cdot, \cdot)}}, \sigma\right) \right) \Big|
    \geq \frac{\sqrt{2 (\kappa^*!)}}{ (\kappa^*)^{\kappa^* + 1.5} \sqrt{6}^{\kappa^*}} \sum_{i \in [n], p \in \{X, Y, Z\}} \sqrt{\sum_{\substack{P \in \{I, X, Y, Z\}^{\otimes n}:\\ |P| = \kappa^*, P_i = p}} \alpha_P^2}.
\end{equation}
\end{lemma}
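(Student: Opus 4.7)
The plan is to control $\E_\sigma|\Tr(H_{\kappa^*}\rho(1;\psi,\sigma))|$ by identifying it, via the polarization identity, with a trace of $\pol(H_{\kappa^*})$ against a tensor product of single-qubit states, and then to exploit the fact that the last replica $s = \kappa^*$ hosts the locally optimized state. Concretely, applying Lemma~\ref{lem:polar-id} with $t=1$ and $k=\kappa^*$ gives
\begin{equation*}
\Tr\bigl(\pol(H_{\kappa^*})\,\tilde\rho\bigr) \;=\; \frac{(\kappa^*)^{\kappa^*}}{\kappa^*!}\,\E_\sigma\!\left[\sigma_1\cdots\sigma_{\kappa^*}\,\Tr\!\left(H_{\kappa^*}\,\rho(1;\psi,\sigma)\right)\right],
\end{equation*}
where $\tilde\rho = \bigotimes_{s,j}\ketbra{\psi_{(s,j)}}{\psi_{(s,j)}}$. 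Since $|\sigma_1\cdots\sigma_{\kappa^*}|=1$, the inequality $|\E X|\le \E|X|$ yields $\E_\sigma|\Tr(H_{\kappa^*}\rho(1))|\ge \tfrac{\kappa^*!}{(\kappa^*)^{\kappa^*}}\,|\Tr(\pol(H_{\kappa^*})\tilde\rho)|$, so it suffices to lower bound the right-hand side.

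The key step is a combinatorial decomposition of $\pol(H_{\kappa^*})$ that singles out replica $s=\kappa^*$. For each Pauli $P$ of weight $\kappa^*$, partition the $\kappa^*!$ permutations defining $\pol(P)$ according to which factor of $P$ gets assigned to replica $\kappa^*$; reindexing the remaining $(\kappa^*-1)!$ permutations as polarizations on replicas $1,\ldots,\kappa^*-1$ produces
\begin{equation*}
\pol(H_{\kappa^*}) \;=\; \frac{1}{\kappa^*}\sum_{i\in[n]}\sum_{p\in\{X,Y,Z\}} p^{(\kappa^*,i)} \otimes \pol(H_{\kappa^*,i,p}),
\end{equation*}
with the polarization on the right understood to live on replicas $1,\ldots,\kappa^*-1$. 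Tracing against $\tilde\rho$, Corollary~\ref{cor:beta-pol} converts each $\Tr\bigl(\pol(H_{\kappa^*,i,p})\,\bigotimes_{s<\kappa^*,j}\ketbra{\psi_{(s,j)}}{\psi_{(s,j)}}\bigr)$ into $\tfrac{(\kappa^*-1)^{\kappa^*-1}}{(\kappa^*-1)!}\beta_{i,p}$, while the defining optimization in Eq.~\eqref{eq:psi-local-opt} guarantees $\sum_p \bra{\psi_{(\kappa^*,i)}}p\ket{\psi_{(\kappa^*,i)}}\beta_{i,p}=\sqrt{\sum_p\beta_{i,p}^2}$. Combining these identities delivers
\begin{equation*}
\E_\sigma|\Tr(H_{\kappa^*}\rho(1))| \;\ge\; \frac{(\kappa^*-1)^{\kappa^*-1}}{(\kappa^*)^{\kappa^*}}\sum_i\sqrt{\sum_p\beta_{i,p}^2}.
\end{equation*}

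To take the expectation over the Haar-random states $\ket{\psi_{(s,j)}}$ with $s<\kappa^*$, first pass from $\ell_2$ to $\ell_1$ via the elementary estimate $\sqrt{\sum_p\beta_{i,p}^2}\ge \tfrac{1}{\sqrt 3}\sum_p|\beta_{i,p}|$ (Cauchy-Schwarz on three terms), reducing matters to bounding $\E|\beta_{i,p}|$. Applying Corollary~\ref{cor:beta-pol} to express $\beta_{i,p}$ as a polarized trace and then invoking the polarized Khintchine inequality (Corollary~\ref{cor:polH-random}), together with $\Tr(H_{\kappa^*,i,p}^2)/2^{n-1}=\sum_{P:|P|=\kappa^*,P_i=p}\alpha_P^2$, gives
\begin{equation*}
\E|\beta_{i,p}| \;\ge\; \frac{\sqrt{(\kappa^*-1)!}}{(\kappa^*-1)^{\kappa^*-1}(\sqrt 6)^{\kappa^*-1}}\sqrt{\sum_{P:|P|=\kappa^*,\,P_i=p}\alpha_P^2}.
\end{equation*}
Multiplying all the factors, cancelling the $(\kappa^*-1)^{\kappa^*-1}$ contributions, and simplifying using $\sqrt{(\kappa^*-1)!}=\sqrt{\kappa^*!/\kappa^*}$ and $1/\sqrt 3=\sqrt 2/\sqrt 6$ produces a constant at least as large as $\sqrt{2\kappa^*!}/\bigl((\kappa^*)^{\kappa^*+1.5}(\sqrt 6)^{\kappa^*}\bigr)$, giving the claim.

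The main obstacle I expect is the polarization decomposition: I need to verify carefully that after partitioning the $\kappa^*!$ permutations by which factor of each Pauli lands on replica $\kappa^*$, reindexing the remaining $(\kappa^*-1)!$ permutations produces exactly the polarization of $H_{\kappa^*,i,p}$ on replicas $1,\ldots,\kappa^*-1$ (modulo the global $1/\kappa^*$), so that the resulting $\beta_{i,p}$ in the trace matches Corollary~\ref{cor:beta-pol} with no hidden combinatorial mismatch. The rest is routine constant-tracking across the polarization identity, the Cauchy-Schwarz step, and the polarized Khintchine inequality.
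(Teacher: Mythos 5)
Your proof is correct and follows essentially the same route as the paper's: the polarization identity at $t=1$, the decomposition of $\pol(H_{\kappa^*})$ that singles out replica $\kappa^*$, Corollary~\ref{cor:beta-pol} together with the local optimization in Eq.~\eqref{eq:psi-local-opt} to produce $\sum_i\sqrt{\sum_p\beta_{i,p}^2}$, the Cauchy--Schwarz passage to $\tfrac{1}{\sqrt{3}}\sum_p|\beta_{i,p}|$, and the polarized Khintchine bound of Corollary~\ref{cor:polH-random}. Regarding the decomposition you flagged as the main risk: your coefficient $1/\kappa^*$ is in fact the correct one (grouping the $\kappa^*!$ permutations by which tensor factor lands on replica $\kappa^*$ gives $\kappa^*$ groups of $(\kappa^*-1)!$ each, hence $(\kappa^*-1)!/\kappa^*! = 1/\kappa^*$), whereas the paper writes $(1/\kappa^*)^2$, which merely costs it an extra factor of $\kappa^*$ in the final constant; your chain therefore yields a constant at least as large as the one claimed, and the lemma follows.
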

\begin{proof}
From the polarization identity given in Lemma~\ref{lem:polar-id}, we have
\begin{equation} \label{eq:Hkip-1}
    \frac{(\kappa^*)^{\kappa^*}}{\kappa^* !} \E_{\sigma \in \{\pm 1\}^{\kappa^*}} \left[ \sigma_1 \cdots \sigma_{\kappa^*} \Tr\left( H_{\kappa^*} \rho\left(1; \ket{\psi_{(\cdot, \cdot)}}, \sigma \right) \right) \right] = \Tr\left(\mathsf{pol}(H_{\kappa^*}) \bigotimes_{s \in [\kappa^*], j \in [n]} \ketbra{\psi_{(s, j)}}{\psi_{(s, j)}} \right).
\end{equation}
Next, using the definition of $H_{\kappa^*, i, p}$ in Eq.~\eqref{eq:Hkappaip}, we have
\begin{equation} \label{eq:Hkip-2}
    \mathsf{pol}(H_{\kappa^*}) = \left(\frac{1}{\kappa^*}\right)^2 \sum_{i \in [n]} \sum_{p \in \{X, Y, Z\}} \mathsf{pol}(H_{\kappa^*, i, p}) \otimes (I^{\otimes i-1} \otimes p \otimes I^{\otimes n-i}).
\end{equation}
We can see this by considering the case when $H_{\kappa^*}$ is a single Pauli observable $P \in \{I, X, Y, Z\}^{\otimes n}$ with $|P| = \kappa^*$, and then extending linearly to any homogeneous $\kappa^*$-local Hamiltonian $H_{\kappa^*}$.
Eq.~\eqref{eq:Hkip-1}~and~\eqref{eq:Hkip-2} give
\begin{align}
    &\frac{(\kappa^*)^{\kappa^*}}{\kappa^* !} \E_{\sigma \in \{\pm 1\}^{\kappa^*}} \left[ \sigma_1 \cdots \sigma_{\kappa^*} \Tr\left( H_{\kappa^*} \rho\left(1; \ket{\psi_{(\cdot, \cdot)}}, \sigma \right) \right) \right] \\
    &=
    \frac{1}{(\kappa^*)^2} \sum_{i \in [n], p \in \{X, Y, Z\}} \bra{\psi_{(\kappa^*, i)}} p \ket{\psi_{(\kappa^*, i)}} \, \Tr\left(\mathsf{pol}(H_{\kappa^*, i, p}) \bigotimes_{s \in [\kappa^*-1], j \in [n]} \ketbra{\psi_{(s, j)}}{\psi_{(s, j)}} \right).
\end{align}
From Corollary~\ref{cor:beta-pol}, we can rewrite the right hand side as
\begin{equation}
    \frac{1}{(\kappa^*)^2} \frac{(\kappa^* - 1)^{\kappa^* - 1}}{(\kappa^* - 1)!} \sum_{i \in [n]} \bra{\psi_{(\kappa^*, i)}} \left(\sum_{p \in \{X, Y, Z\}} \beta_{i, p} p \right) \ket{\psi_{(\kappa^*, i)}}.
\end{equation}
From the local optimization of $\ket{\psi_{(\kappa^*, i)}}$ given in Eq.~\eqref{eq:psi-local-opt}, we have that for every $i\in[n]$,
\begin{equation}
    \bra{\psi_{(\kappa^*, i)}} \left(\sum_{p \in \{X, Y, Z\}} \beta_{i, p} p \right) \ket{\psi_{(\kappa^*, i)}} =  \sqrt{\sum_{p \in \{X, Y, Z\}} \beta_{i, p}^2} \geq \frac{1}{\sqrt{3}}  \sum_{p \in \{X, Y, Z\}} \left| \beta_{i, p} \right|.
\end{equation}
Using Corollary~\ref{cor:beta-pol} yields the following lower bound,
\begin{align}
    &\frac{(\kappa^*)^{\kappa^*}}{\kappa^* !} \E_{\ket{\psi_{(\cdot, \cdot)}}} \E_{\sigma \in \{\pm 1\}^{\kappa^*}} \left[ \sigma_1 \cdots \sigma_{\kappa^*} \Tr\left( H_{\kappa^*} \rho\left( 1; \ket{\psi_{(\cdot, \cdot)}}, \sigma \right) \right) \right] \\
    &\geq \frac{1}{\sqrt{3} (\kappa^*)^2} \sum_{i \in [n], p \in \{X, Y, Z\}} \E_{\ket{\psi_{(\cdot, \cdot)}}} \left| \Tr\left(\mathsf{pol}(H_{\kappa^*, i, p}) \bigotimes_{s \in [\kappa^*-1], j \in [n]} \ketbra{\psi_{(s, j)}}{\psi_{(s, j)}} \right) \right|.
\end{align}
From Corollary~\ref{cor:polH-random}, we can further obtain
\begin{align}
    &\frac{(\kappa^*)^{\kappa^*}}{\kappa^* !} \E_{\ket{\psi_{(\cdot, \cdot)}}} \E_{\sigma \in \{\pm 1\}^{\kappa^*}} \left[ \sigma_1 \cdots \sigma_{\kappa^*} \Tr\left( H_{\kappa^*} \rho\left( 1; \ket{\psi_{(\cdot, \cdot)}}, \sigma \right) \right) \right] \\
    &\geq \frac{1}{\sqrt{3} (\kappa^*)^2} \sum_{i \in [n], p \in \{X, Y, Z\}} \left( \frac{1}{\sqrt{6}} \right)^{\kappa^* - 1} \sqrt{\frac{\Tr(H_{\kappa^*, i, p}^2)}{2^n (\kappa^* - 1)!}}.
\end{align}
The definition of $H_{\kappa^*, i, p}$, the above inequality, and the following inequality
\begin{equation}
    \E_{\ket{\psi_{(\cdot, \cdot)}}} \E_{\sigma \in \{\pm 1\}^{\kappa^*}} \Big| \Tr\left( H_{\kappa^*} \rho\left(1; \ket{\psi_{(\cdot, \cdot)}}, \sigma\right) \right) \Big| \geq \E_{\ket{\psi_{(\cdot, \cdot)}}} \E_{\sigma \in \{\pm 1\}^{\kappa^*}} \left[ \sigma_1 \cdots \sigma_{\kappa^*} \Tr\left( H_{\kappa^*} \rho\left( 1; \ket{\psi_{(\cdot, \cdot)}}, \sigma \right) \right) \right],
\end{equation}
can be used to establish the claim.
\end{proof}

Given the expansion property, we are going to use the following implication, which considers an arbitrary ordering $\pi$ of the $n$ qubits.
The inequality allows us to control the growth for the number of Pauli observables that act on qubits before the $i$-th qubit under the ordering $\pi$.
The precise statement is given below.

\begin{lemma}[A characterization of expansion] \label{lem:expansion-char}
    Given an $n$-qubit Hamiltonian $H = \sum_{P} \alpha_P P$ with expansion coefficient $c_e$ and expansion dimension $d_e$.
    Consider any permutation $\pi \in S_{n}$ over $n$ qubits.
    For any $i \in [n]$,
    \begin{equation}
        \sum_{P \in \{I, X, Y, Z\}^{\otimes n}} \indicator[\alpha_P \neq 0] \indicator[P_{\pi(i)} \neq I] \indicator[P_{\pi(j)} = I, \forall j > i] \leq c_e i^{d_e - 1},
    \end{equation}
\end{lemma}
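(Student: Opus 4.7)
The plan is a direct double-counting/union-bound argument. Let $S_i := \{\pi(1),\dots,\pi(i)\}$ and let $T$ denote the set of Pauli operators $P$ counted on the left-hand side, i.e.\ those with $\alpha_P \neq 0$, $\pi(i) \in \mathsf{dom}(P)$, and $\mathsf{dom}(P) \subseteq S_i$. I want to show $|T| \le c_e\, i^{d_e-1}$.

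The key observation is that every $P \in T$ can be matched to some $d_e$-element set $\Upsilon \subseteq S_i$ containing $\pi(i)$ for which the expansion property's dichotomy $\mathsf{dom}(P) \subseteq \Upsilon$ or $\Upsilon \subseteq \mathsf{dom}(P)$ holds. Consider two cases. If $|\mathsf{dom}(P)| \ge d_e$, then since $\pi(i) \in \mathsf{dom}(P) \subseteq S_i$, I can pick any $\Upsilon \subseteq \mathsf{dom}(P)$ of size $d_e$ that contains $\pi(i)$; such $\Upsilon$ automatically lies in $S_i$ and satisfies $\Upsilon \subseteq \mathsf{dom}(P)$. If instead $|\mathsf{dom}(P)| < d_e$, I extend $\mathsf{dom}(P)$ to a set $\Upsilon$ of size $d_e$ inside $S_i$ containing $\pi(i)$ (possible whenever $i \ge d_e$), giving $\mathsf{dom}(P) \subseteq \Upsilon$.

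Once this matching is established, I bound $|T|$ by summing the expansion-property bound over all admissible $\Upsilon$:
\begin{equation}
|T| \;\le\; \sum_{\substack{\Upsilon \subseteq S_i \\ |\Upsilon|=d_e,\; \pi(i)\in\Upsilon}} c_e \;=\; c_e \binom{i-1}{d_e-1} \;\le\; c_e\, i^{d_e-1},
\end{equation}
using the standard estimate $\binom{i-1}{d_e-1} \le (i-1)^{d_e-1} \le i^{d_e-1}$. The remaining edge case is $i < d_e$, in which $S_i$ is too small to contain a $d_e$-subset. There I simply choose any fixed $\Upsilon \subseteq [n]$ of size $d_e$ with $S_i \subseteq \Upsilon$; every $P \in T$ satisfies $\mathsf{dom}(P) \subseteq S_i \subseteq \Upsilon$, so the expansion property gives $|T| \le c_e \le c_e\, i^{d_e-1}$ (the last inequality holding because $i \ge 1$).

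The argument is essentially combinatorial and I do not expect a real obstacle; the only delicate point is handling both directions of the expansion-property dichotomy uniformly, which is why the split on whether $|\mathsf{dom}(P)|$ is at least or less than $d_e$ is needed. The small-$i$ case must also be handled separately since one cannot fit a $d_e$-subset inside $S_i$, but it is disposed of by a trivial one-term union bound.
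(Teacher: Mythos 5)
Your proof is correct and follows essentially the same route as the paper: the same case split on $i < d_e$ versus $i \ge d_e$, the same union bound over all $d_e$-subsets of $\{\pi(1),\dots,\pi(i)\}$ containing $\pi(i)$, and the same count $\binom{i-1}{d_e-1} \le i^{d_e-1}$. Your explicit matching argument (splitting on whether $|\mathsf{dom}(P)| \ge d_e$) just spells out a step the paper leaves implicit.
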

\begin{proof}
    Given a permutation $\pi \in S_{n}$ over $n$ qubits and an $i \in [n]$. We separately consider two cases: (1) $i < d_e$ and (2) $i \geq d_e$. For the first case, let $\Upsilon = \{\pi(1), \ldots, \pi(d_e)\}$, we have
    \begin{align}
        &\sum_{P \in \{I, X, Y, Z\}^{\otimes n}} \indicator[\alpha_P \neq 0] \indicator[P_{\pi(i)} \neq I] \indicator[P_{\pi(j)} = I, \forall j > i] \\
        &\leq \sum_{P \in \{I, X, Y, Z\}^{\otimes n}} \,\, \indicator\Big[\alpha_P \neq 0 \,\, \mathrm{and} \,\, \big( \mathsf{dom}(P) \subseteq \Upsilon \big)  \, \Big] \leq c_e.
    \end{align}
    The second inequality follows from the definition of the expansion coefficient $c_e$.
    For the second case, we consider all subset $\Upsilon \subseteq \pi([i]) \triangleq \{\pi(1), \pi(2), \ldots, \pi(i)\}$
    with $|\Upsilon| = d_e - 1$ and $\pi(i) \in \Upsilon$,
    \begin{align}
        &\sum_{P \in \{I, X, Y, Z\}^{\otimes n}} \indicator[\alpha_P \neq 0] \indicator[P_{\pi(i)} \neq I] \indicator[P_{\pi(j)} = I, \forall j > i] \\
        &\leq \sum_{P \in \{I, X, Y, Z\}^{\otimes n}} \,\, \sum_{\substack{\Upsilon \subseteq \pi([i]),\\ |\Upsilon| = d_e, \pi(i) \in \Upsilon}} \indicator\Big[\alpha_P \neq 0 \,\, \mathrm{and} \,\, \big( \mathsf{dom}(P) \subseteq \Upsilon \,\, \mathrm{or} \,\, \Upsilon \subseteq \mathsf{dom}(P) \big)  \, \Big] \\
        &\leq \sum_{\substack{\Upsilon \subseteq \pi([i]),\\ |\Upsilon| = d_e, \pi(i) \in \Upsilon}} c_e \leq c_e (i-1)^{d_e - 1} \leq c_e i^{d_e - 1}.
    \end{align}
    The second inequality again follows from the definition of $c_e$.
\end{proof}

Using the above implication of the expansion property, we can obtain the following inequality relating two norms.
Basically, we can use the limit on the growth of the number of Pauli observables to turn the sum of $\ell_2$-norm into an $\ell_r$-norm, where $r$ depends on the expansion dimension $d_e$.

\begin{lemma}[Norm inequality using expansion property] \label{lem:key-char-expansion}
    Given an $n$-qubit Hamiltonian $H = \sum_{P} \alpha_P P$ with an expansion coefficient $c_e$ and expansion dimension $d_e$.
    Let $r = 2 d_e / (d_e + 1)$.
    For any $\kappa^* \geq 1$, we have
    \begin{equation}
        \sum_{i \in [n] } \left( \sum_{\substack{P \in \{I, X, Y, Z\}^{\otimes n}:\\ |P| = \kappa^*, P_i \neq I}} \alpha_P^2\right)^{1/2} \geq \frac{1}{c_e^{1 / (2 d_e)}} \left( \sum_{\substack{P \in \{I, X, Y, Z\}^{\otimes n}:\\ |P| = \kappa^*}} \left| \alpha_P \right|^r \right)^{1/r}.
    \end{equation}
\end{lemma}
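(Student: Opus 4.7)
The right-hand side involves an $\ell_r$-norm of the Pauli coefficients with $r=2d_e/(d_e+1)<2$, while the left-hand side is a sum of $\ell_2$-like quantities $b_j \triangleq (\sum_{P:\,P_j\neq I,\,|P|=\kappa^*} \alpha_P^2)^{1/2}$. The plan is to bucket the Pauli operators according to a clever ordering of the qubits so that Lemma~\ref{lem:expansion-char} controls the size of each bucket, and then use H\"{o}lder's inequality together with a rearrangement argument to transform the bucketed $\ell_2$ bounds into the desired $\ell_r$ bound.

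First, pick a permutation $\pi$ of the qubits and, for each $i\in[n]$, define the bucket
\[
M_i \triangleq \{P\in\{I,X,Y,Z\}^{\otimes n} : |P|=\kappa^*,\ \alpha_P\neq 0,\ P_{\pi(i)}\neq I,\ P_{\pi(j)}=I\ \forall j>i\}
\]
of Paulis whose largest non-identity qubit (under the order $\pi$) is $\pi(i)$. By Lemma~\ref{lem:expansion-char}, $|M_i|\le c_e\, i^{d_e-1}$. The buckets $\{M_i\}$ partition $\{P:|P|=\kappa^*,\,\alpha_P\neq 0\}$, and every $P\in M_i$ acts nontrivially on $\pi(i)$, so $\sum_{P\in M_i} \alpha_P^2 \le b_{\pi(i)}^2$.

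Next, apply H\"{o}lder's inequality on each bucket with conjugate exponents $2/r$ and $2/(2-r)$:
\[
\sum_{P\in M_i} |\alpha_P|^r \ \le\ |M_i|^{(2-r)/2} \Bigl(\sum_{P\in M_i} \alpha_P^2\Bigr)^{r/2} \ \le\ c_e^{1/(d_e+1)}\, i^{(d_e-1)/(d_e+1)}\, b_{\pi(i)}^r,
\]
where I used $(2-r)/2 = 1/(d_e+1)$. Noting the key identity $(d_e-1)/(d_e+1)=r-1$, this rewrites as
\[
\sum_{P\in M_i} |\alpha_P|^r \ \le\ c_e^{1/(d_e+1)}\, b_{\pi(i)} \cdot (i\, b_{\pi(i)})^{r-1}.
\]

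Now choose $\pi$ so that $b_{\pi(1)}\ge b_{\pi(2)}\ge \cdots \ge b_{\pi(n)}$. With this ordering, $i\, b_{\pi(i)} \le \sum_{j=1}^i b_{\pi(j)} \le S$, where $S\triangleq \sum_{j\in[n]} b_j$ is exactly the left-hand side of the lemma. Since $r\ge 1$, we get $(i\,b_{\pi(i)})^{r-1}\le S^{r-1}$. Summing over $i$ and using $\sum_i b_{\pi(i)}=S$:
\[
\sum_{P:\,|P|=\kappa^*} |\alpha_P|^r \ =\ \sum_{i=1}^n \sum_{P\in M_i} |\alpha_P|^r \ \le\ c_e^{1/(d_e+1)}\, S^{r-1}\cdot S \ =\ c_e^{1/(d_e+1)} S^r.
\]
Taking $r$-th roots and observing that $1/(r(d_e+1)) = 1/(2d_e)$ (since $r=2d_e/(d_e+1)$) yields exactly $S \ge c_e^{-1/(2d_e)} \bigl(\sum_{P:|P|=\kappa^*} |\alpha_P|^r\bigr)^{1/r}$, completing the proof.

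The main conceptual obstacle is recognizing that $(d_e-1)/(d_e+1)=r-1$, which lets the weight $i^{(d_e-1)/(d_e+1)}$ be absorbed through the elementary bound $i\,b_{\pi(i)}\le S$ once the $b_{\pi(i)}$ are sorted in decreasing order; without this rearrangement, a naive application of H\"{o}lder on the sum $\sum_i i^{r-1} b_{\pi(i)}^r$ would introduce spurious $\log n$ factors or extra $L^p$ norms. The other technical care-point is making sure that the H\"{o}lder step on each bucket produces the exact exponent $1/(d_e+1)$ on $c_e$, which then matches $1/(2d_e)$ after taking the $r$-th root---this is what makes the constant dimensionally clean.
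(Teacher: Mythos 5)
Your proof is correct and follows essentially the same route as the paper's: bucket the weight-$\kappa^*$ Paulis by their last non-identity qubit under a permutation $\pi$, bound each bucket's size via Lemma~\ref{lem:expansion-char}, apply H\"{o}lder's inequality with exponents $2/r$ and $2/(2-r)$, and absorb the $i^{r-1}$ weight using $i\,b_{\pi(i)}\le S$ after sorting. (Your decreasing sort of the $b_j$ is in fact the direction that makes that last inequality hold; the paper's displayed sorting condition is written with the opposite inequality, which appears to be a typo.)
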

\begin{proof}
    We begin by considering a permutation $\pi$ over $n$ qubits, such that
    \begin{equation}
        \sqrt{\sum_{\substack{P \in \{I, X, Y, Z\}^{\otimes n}:\\ |P| = \kappa^*, P_{\pi(i)} \neq I}} \alpha_P^2} \leq \sqrt{\sum_{\substack{P \in \{I, X, Y, Z\}^{\otimes n}:\\ |P| = \kappa^*, P_{\pi(j)} \neq I}} \alpha_P^2}, \quad \forall i < j \in [n].
    \end{equation}
    The permutation $\pi$ can be obtained by sorting the $n$ qubits.
    The above ensures that for all $i \in [n]$,
    \begin{equation}\label{eq:pidef}
        i \sqrt{\sum_{\substack{P \in \{I, X, Y, Z\}^{\otimes n}:\\ |P| = \kappa^*, P_{\pi(i)} \neq I}} \alpha_P^2} \leq \sum_{j \in [n]} \sqrt{\sum_{\substack{P \in \{I, X, Y, Z\}^{\otimes n}:\\ |P| = \kappa^*, P_{\pi(j)} \neq I}} \alpha_P^2}.
    \end{equation}
    By going through the $n$ qubits based on the permutation $\pi$, we have the following identity,
    \begin{equation}
        \sum_{P: |P| = \kappa^*} \left| \alpha_P \right|^r = \sum_{i = 1}^n \sum_{p \in \{X, Y, Z\}} \sum_{\substack{P\in \{I, X, Y, Z\}^{\otimes n}:\\ |P| = \kappa^*, P_{\pi(i)} = p}} \left| \alpha_P \right|^r \indicator[\alpha_P \neq 0] \indicator[P_{\pi(j)} = I, \forall j > i].
    \end{equation}
    Holder's inequality and $1/(d_e + 1) = 1 - r/2$ allows us to obtain the following upper bound on $\sum_{P: |P| = \kappa^*} \left| \alpha_P \right|^r$,
    \begin{equation}
        \sum_{i = 1}^n \left( \sum_{\substack{P\in \{I, X, Y, Z\}^{\otimes n}:\\ |P| = \kappa^*, P_{\pi(i)} \neq I}} \alpha_P^2 \right)^{r/2} \left( \sum_{\substack{P\in \{I, X, Y, Z\}^{\otimes n}:\\ |P| = \kappa^*}} \indicator[\alpha_P \neq 0] \indicator[P_{\pi(i)} \neq I] \indicator[P_{\pi(j)} = I, \forall j > i] \right)^{1 / (d_e + 1)}.
    \end{equation}
    We can then use Lemma~\ref{lem:expansion-char} to obtain
    \begin{equation}
        \sum_{P: |P| = \kappa^*} \left| \alpha_P \right|^r \leq \sum_{i = 1}^n \left( \sum_{\substack{P\in \{I, X, Y, Z\}^{\otimes n}:\\ |P| = \kappa^*, P_{\pi(i)} \neq I}} \alpha_P^2 \right)^{r/2} \left( c_e i^{d_e - 1} \right)^{1 / (d_e + 1)}.
    \end{equation}
    Using $r - 1 = (d_e - 1) / (d_e + 1) \geq 0$, we have
    \begin{equation}
        \sum_{P: |P| = \kappa^*} \left| \alpha_P \right|^r \leq
        c_e^{1 / (d_e + 1)}
        \sum_{i = 1}^n \left( i \sqrt{\sum_{\substack{P\in \{I, X, Y, Z\}^{\otimes n}:\\ |P| = \kappa^*, P_{\pi(i)} \neq I}} \alpha_P^2} \right)^{r-1} \sqrt{ \sum_{\substack{P\in \{I, X, Y, Z\}^{\otimes n}:\\ |P| = \kappa^*, P_{\pi(i)} \neq I}} \alpha_P^2}
    \end{equation}
    The choice of $\pi$ ensures Eq.~\eqref{eq:pidef}, which gives rise to
    \begin{equation}
        \sum_{P: |P| = \kappa^*} \left| \alpha_P \right|^r \leq c_e^{1 / (d_e + 1)} \left( \sum_{i \in [n] } \left( \sum_{\substack{P \in \{I, X, Y, Z\}^{\otimes n}:\\ |P| = \kappa^*, P_i \neq I}} \alpha_P^2\right)^{1/2} \right)^r.
    \end{equation}
    The claim follows from $1 / (r(d_e + 1)) = 1 / (2 d_e)$.
\end{proof}

Together, we can obtain the $\ell_r$-norm lower bound for the expectation value of the homogeneous $\kappa^*$-local Hamiltonian $H_{\kappa^*}$ on the constructed product state $\rho\left(1; \ket{\psi_{(\cdot, \cdot)}}, \sigma\right)$.

\begin{corollary} \label{cor:final-char}
    From the definitions given in Section~\ref{sec:desc-algo}, we have
    \begin{align}
        \E_{\ket{\psi_{(\cdot, \cdot)}}} \E_{\sigma \in \{\pm 1\}^{\kappa^*}} \Big| \Tr\left( H_{\kappa^*} \rho\left(1; \ket{\psi_{(\cdot, \cdot)}}, \sigma\right) \right) \Big|
        &\geq \frac{\sqrt{2 (\kappa^*!)}}{ c_e^{1 / (2 d_e)} (\kappa^*)^{\kappa^* + 1.5} \sqrt{6}^{\kappa^*}} \left( \sum_{\substack{P \in \{I, X, Y, Z\}^{\otimes n}:\\ |P| = \kappa^*}} \left| \alpha_P \right|^r \right)^{1/r} \\
        &\geq \frac{\sqrt{2 (k!)}}{ c_e^{1 / (2 d_e)} k^{k + 1.5} \sqrt{6}^{k}} \left( \sum_{\substack{P \in \{I, X, Y, Z\}^{\otimes n}:\\ |P| = \kappa^*}} \left| \alpha_P \right|^r \right)^{1/r}.
    \end{align}
\end{corollary}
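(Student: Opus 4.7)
The plan is to chain together the two main ingredients built up in the preceding two subsections, namely Lemma~\ref{lem:key-char} (which handles the randomness of the construction, polarization, Khintchine, and the qubit-local optimization) and Lemma~\ref{lem:key-char-expansion} (which converts a sum of per-qubit $\ell_2$ masses into a global $\ell_r$ norm via the expansion property). Lemma~\ref{lem:key-char} already gives the lower bound
\begin{equation}
    \E_{\ket{\psi_{(\cdot,\cdot)}}}\E_{\sigma\in\{\pm 1\}^{\kappa^*}} \Big|\Tr\big(H_{\kappa^*}\,\rho(1;\ket{\psi_{(\cdot,\cdot)}},\sigma)\big)\Big| \;\geq\; \frac{\sqrt{2(\kappa^*!)}}{(\kappa^*)^{\kappa^*+1.5}\sqrt{6}^{\,\kappa^*}}\sum_{i\in[n],\,p\in\{X,Y,Z\}}\sqrt{\sum_{\substack{P:\,|P|=\kappa^*\\ P_i=p}}\alpha_P^2},
\end{equation}
so the remaining work is to lower bound the inner sum in terms of $\bigl(\sum_{P:|P|=\kappa^*}|\alpha_P|^r\bigr)^{1/r}$.

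First I would collapse the inner sum over $p\in\{X,Y,Z\}$ using subadditivity of the square root, $\sqrt{a}+\sqrt{b}+\sqrt{c}\geq\sqrt{a+b+c}$, which gives
\begin{equation}
    \sum_{p\in\{X,Y,Z\}}\sqrt{\sum_{\substack{P:\,|P|=\kappa^*\\P_i=p}}\alpha_P^2} \;\geq\; \sqrt{\sum_{\substack{P:\,|P|=\kappa^*\\P_i\neq I}}\alpha_P^2},
\end{equation}
so after summing over $i\in[n]$ we are exactly in the situation handled by Lemma~\ref{lem:key-char-expansion}. Applying that lemma (which is where the expansion coefficient $c_e$ and dimension $d_e$ enter via the exponent $r=2d_e/(d_e+1)$) yields the first displayed inequality of the corollary.

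For the second inequality I just need to check that replacing $\kappa^*$ by $k$ inside the prefactor only weakens the bound, since $1\leq\kappa^*\leq k$. This reduces to verifying monotonicity of $n\mapsto\sqrt{n!}/(n^{n+1.5}\sqrt{6}^{\,n})$: writing $f(n)=\log\sqrt{n!}-(n+1.5)\log n-n\log\sqrt{6}$ and using Stirling in the form $\log(n!) = n\log n - n + O(\log n)$, one sees $f(n) = -\tfrac{n}{2} - \tfrac{n}{2}\log 6 - (1+o(1))\log n$, which is strictly decreasing for $n\geq 1$. Hence the $k$-version of the prefactor is no larger than the $\kappa^*$-version, so the second inequality follows immediately from the first.

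The main conceptual obstacle has really already been absorbed by Lemmas~\ref{lem:key-char} and \ref{lem:key-char-expansion}; the only nontrivial step at this stage is making sure the per-direction $\ell_2$ quantities $\sqrt{\sum_{P_i=p}\alpha_P^2}$ combine into $\sqrt{\sum_{P_i\neq I}\alpha_P^2}$ in the \emph{correct} direction, and this is exactly where subadditivity of the square root (rather than, say, Cauchy--Schwarz) is essential. After that, the corollary is a two-line composition of the previous lemmas plus a Stirling-type monotonicity check.
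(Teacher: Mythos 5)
Your proposal is correct and follows essentially the same route as the paper's proof: apply Lemma~\ref{lem:key-char}, collapse the sum over $p$ via $\sqrt{x}+\sqrt{y}+\sqrt{z}\ge\sqrt{x+y+z}$, invoke Lemma~\ref{lem:key-char-expansion}, and use $\kappa^*\le k$ for the final prefactor. The only difference is that you spell out the monotonicity of $n\mapsto\sqrt{n!}/(n^{n+1.5}\sqrt{6}^{\,n})$, which the paper leaves implicit; your Stirling argument is only asymptotic, but the claim follows for all $n\ge 1$ from the elementary ratio bound $g(n+1)/g(n)=\tfrac{1}{\sqrt{6}}\,n^{n+1.5}/(n+1)^{n+2}<1$.
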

\begin{proof}
    From Lemma~\ref{lem:key-char}, we have
    \begin{equation}
        \E_{\ket{\psi_{(\cdot, \cdot)}}} \E_{\sigma \in \{\pm 1\}^{\kappa^*}} \Big| \Tr\left( H_{\kappa^*} \rho\left(1; \ket{\psi_{(\cdot, \cdot)}}, \sigma\right) \right) \Big|
        \geq \frac{\sqrt{2 (\kappa^*!)}}{ (\kappa^*)^{\kappa^* + 1.5} \sqrt{6}^{\kappa^*}} \sum_{i \in [n], p \in \{X, Y, Z\}} \sqrt{\sum_{\substack{P \in \{I, X, Y, Z\}^{\otimes n}:\\ |P| = \kappa^*, P_i = p}} \alpha_P^2}.
    \end{equation}
    By the elementary inequality $\sqrt{x} + \sqrt{y} + \sqrt{z} \ge \sqrt{x+y+z}$ for nonnegative $x,y,z$,
    \begin{equation}
        \sum_{i \in [n], p \in \{X, Y, Z\}} \sqrt{\sum_{\substack{P \in \{I, X, Y, Z\}^{\otimes n}:\\ |P| = \kappa^*, P_i = p}} \alpha_P^2} \geq  \sum_{i \in [n]} \sqrt{ \sum_{p \in \{X, Y, Z\}} \sum_{\substack{P \in \{I, X, Y, Z\}^{\otimes n}:\\ |P| = \kappa^*, P_i = p}} \alpha_P^2}.
    \end{equation}
    Combining with Lemma~\ref{lem:key-char-expansion} and the fact that $k \geq \kappa^*$ yields the stated result.
\end{proof}

\subsubsection{Homogeneous to inhomogeneous through polynomial optimization}
\label{sec:homo2inhomo}

We need the following basic result from real analysis:
\begin{lemma}[Markov brothers' inequality, see e.g. p. 248 of \cite{borwein1995polynomials}] \label{lem:markov-bro}
    For any real polynomial $p(t) = \sum^k_{\kappa=1} a_\kappa x^\kappa$,
    \begin{equation}
        |a_\kappa| \le (1+\sqrt{2})^k \sup_{|t|\le 1} |p(t)|
    \end{equation}
    for all $1 \le \kappa \le k$.
\end{lemma}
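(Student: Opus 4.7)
The plan is to convert the real uniform bound on $[-1,1]$ into a complex uniform bound on the closed unit disk, and then read off monomial coefficients via Cauchy's integral formula. Write $M \triangleq \sup_{|t|\le 1}|p(t)|$.

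The first and main step is a Bernstein--Walsh type estimate: for every $\rho \ge 1$, $|p(z)| \le M \rho^k$ for all $z$ on the Bernstein ellipse $E_\rho \triangleq \{(u+u^{-1})/2 : |u|=\rho\}$. I would prove this by introducing the symmetric Laurent polynomial $q(w) \triangleq p((w+w^{-1})/2)$, which obeys $|q(w)| = |p(\cos\theta)|\le M$ on $|w|=1$. Then $r(w) \triangleq w^k q(w)$ is an honest polynomial of degree at most $2k$, bounded by $M$ on the unit circle. Applying the standard reciprocal-polynomial trick (consider $r^*(w) \triangleq w^{2k}\overline{r(1/\bar w)}$, note that $|r^*(w)| = |r(w)|$ on $|w|=1$ and hence $|r^*| \le M$ on the closed unit disk by maximum modulus, then unfold the identity $|r(w)| = |w|^{2k}|r^*(1/\bar w)|$ valid for $|w|\ge 1$) upgrades the bound to $|r(w)| \le M|w|^{2k}$ on $|w|\ge 1$. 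Dividing by $|w|^k$ and parameterizing $|w|=\rho$ gives $|p(z)| \le M\rho^k$ on $E_\rho$.

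The second step is to specialize to $\rho = 1+\sqrt{2}$, which is the smallest value satisfying $(\rho - \rho^{-1})/2 \ge 1$. The ellipse $E_{1+\sqrt{2}}$ then has semi-axes $(\rho+\rho^{-1})/2 = \sqrt{2}$ and $(\rho-\rho^{-1})/2 = 1$, so the closed complex unit disk $\{|z|\le 1\}$ is inscribed in $E_{1+\sqrt{2}}$, touching its boundary precisely at $\pm i$. By Step 1 combined with the maximum modulus principle applied on the region enclosed by $E_{1+\sqrt{2}}$, one concludes that $|p(z)| \le M(1+\sqrt{2})^k$ for every $|z|\le 1$.

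Finally, Cauchy's integral formula for monomial coefficients gives $a_\kappa = \frac{1}{2\pi i}\oint_{|z|=1}p(z)\, z^{-\kappa-1}\,dz$, and bounding the integrand yields $|a_\kappa|\le \sup_{|z|=1}|p(z)|\le M(1+\sqrt{2})^k$, which is the claim. Essentially all of the technical content sits in the Bernstein--Walsh estimate; the particular value $\rho = 1+\sqrt{2}$ is forced by the requirement that $E_\rho$ contain the unit disk along its minor axis, and any smaller Bernstein ellipse would fail to dominate $\sup_{|z|=1}|p(z)|$.
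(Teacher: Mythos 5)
Your proof is correct. Note that the paper does not prove this lemma at all---it is imported as a known result from Borwein--Erd\'elyi---so there is no in-paper argument to compare against; what you have supplied is a valid self-contained derivation. Each step checks out: the Bernstein--Walsh bound $|p(z)|\le M\rho^k$ on $E_\rho$ follows from your Laurent-polynomial substitution $q(w)=p((w+w^{-1})/2)$ together with the exterior maximum-modulus bound $|r(w)|\le M|w|^{2k}$ for the degree-$2k$ polynomial $r(w)=w^kq(w)$; the choice $\rho=1+\sqrt{2}$ is exactly the one making the semi-minor axis $(\rho-\rho^{-1})/2$ equal to $1$, so the closed unit disk sits inside $E_\rho$ and the maximum principle transfers the bound $M(1+\sqrt{2})^k$ to $|z|\le 1$; and Cauchy's formula on $|z|=1$ then bounds every coefficient. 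The constant is essentially sharp (the Chebyshev polynomial $T_k$ has $|T_k(\pm i)|\sim(1+\sqrt{2})^k/2$, matching the tangency points of the disk and the ellipse that you identified), and the argument is indifferent to the absence of a constant term in $p$, so it covers the statement as written.
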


Using Markov brothers' inequality, we can show that performing the one-dimensional polynomial optimization over $t$ achieves a good advantage over $\alpha_I = \E_{\ket{\psi}:\mathrm{Haar}} \bra{\psi} H \ket{\psi}$.

\begin{corollary} \label{cor:poly-opt}
    From the definitions given in Section~\ref{sec:desc-algo}, we have
    \begin{equation}
        \Big| \Tr\left( H \rho\left(t^*; \ket{\psi_{(\cdot, \cdot)}}, \sigma\right) \right) - \alpha_I \Big|
        \geq \frac{1}{(1 + \sqrt{2})^k} \Big| \Tr\left( H_{\kappa^*} \rho\left(1; \ket{\psi_{(\cdot, \cdot)}}, \sigma\right) \right) \Big|.
    \end{equation}
\end{corollary}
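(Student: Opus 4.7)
The plan is to observe that the function $f(t) := \Tr(H \rho(t;\ket{\psi_{(\cdot,\cdot)}},\sigma)) - \alpha_I$ is a real polynomial in $t$ of degree at most $k$ with $f(0)=0$, then extract the coefficient corresponding to $\kappa^*$ and apply Markov brothers' inequality (Lemma~\ref{lem:markov-bro}) to compare that coefficient against the supremum of $|f|$ over $t\in[-1,1]$.

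First I would unpack the polynomial structure of $f(t)$. Using the decomposition $H = \alpha_I I + \sum_{\kappa=1}^k H_\kappa$ from Eq.~\eqref{eq:H-decomp}, we have $f(t) = \sum_{\kappa=1}^k \Tr(H_\kappa \rho(t))$. For any single Pauli $P$ with $|P|=\kappa$, the tensor-product structure of $\rho(t)$ means that each qubit $j$ with $P_j=I$ contributes $\Tr_j$ of the local factor, which equals $1$ because $\ketbra{\psi_{(s,j)}}{\psi_{(s,j)}} - I/2$ is traceless, while each qubit $j$ with $P_j \neq I$ contributes a factor of $(t/\kappa^*)\sum_{s=1}^{\kappa^*}\sigma_s\bra{\psi_{(s,j)}} P_j \ket{\psi_{(s,j)}}$, which is $t$ times a constant. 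Hence $\Tr(P\rho(t)) = t^{|P|} \cdot c_P$ for some constant $c_P$ independent of $t$, so $\Tr(H_\kappa \rho(t))$ is a homogeneous polynomial of degree $\kappa$ in $t$. In particular, $f$ has degree at most $k$ with zero constant term, and the coefficient $a_\kappa$ of $t^\kappa$ in $f$ satisfies $a_\kappa = \Tr(H_\kappa \rho(1;\ket{\psi_{(\cdot,\cdot)}},\sigma))$ by evaluating the homogeneous polynomial at $t=1$.

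Next I would apply Lemma~\ref{lem:markov-bro} to $f$, which is a real polynomial of degree at most $k$. This gives
\begin{equation}
|a_{\kappa^*}| \le (1+\sqrt{2})^k \sup_{|t|\le 1} |f(t)|.
\end{equation}
By the definition of $t^*$ in Section~\ref{sec:desc-algo} as the maximizer of $|\Tr(H\rho(t))-\alpha_I|$ over the grid (which, after taking a sufficiently fine grid, attains the supremum up to arbitrarily small error, so we may treat it as exactly attaining the maximum), the right-hand side equals $(1+\sqrt{2})^k |f(t^*)| = (1+\sqrt{2})^k \big|\Tr(H\rho(t^*)) - \alpha_I\big|$. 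Rearranging and substituting $a_{\kappa^*} = \Tr(H_{\kappa^*}\rho(1;\ket{\psi_{(\cdot,\cdot)}},\sigma))$ yields exactly the claimed inequality.

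There is no genuine obstacle here; the only point needing a bit of care is the homogeneity argument in the first step, which requires checking that the identity-qubit factors contribute exactly $1$ (so no lower-degree terms in $t$ leak in from Paulis of weight $\kappa$), and that we correctly identify $a_\kappa$ by evaluation at $t=1$. The second potential subtlety is that $t^*$ is defined via sweeping on a grid rather than as a true continuous maximizer, so I would note that taking the grid resolution small enough (or, equivalently, observing that the optimization problem over a bounded-degree univariate polynomial can be solved to arbitrary precision in $\mathcal{O}(k)$ time) lets us identify $|f(t^*)|$ with $\sup_{|t|\le 1}|f(t)|$ without loss.
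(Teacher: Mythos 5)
Your proposal is correct and follows essentially the same route as the paper: identify $f(t)=\Tr(H\rho(t))-\alpha_I$ as a degree-$\le k$ polynomial with vanishing constant term whose $t^{\kappa}$ coefficient is $\Tr(H_\kappa\rho(1))$, then apply Markov brothers' inequality to the coefficient $a_{\kappa^*}$. The only cosmetic difference is that you verify the polynomial structure by a direct tensor-product computation while the paper cites the polarization identity (Lemma~\ref{lem:polar-id}); both yield the same coefficients.
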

\begin{proof}
    Recall that $H = \alpha_I I + \sum_{\kappa = 1}^k H_{\kappa}$ from Eq.~\eqref{eq:H-decomp}.
    We can use the polarization identity given in Lemma~\ref{lem:polar-id} to see that the function $f(t) = \Tr\left( H \rho\left(t ; \ket{\psi_{(\cdot, \cdot)}}, \sigma\right) \right)$ is a polynomial,
    \begin{equation}
        \Tr\left( H \rho\left(t ; \ket{\psi_{(\cdot, \cdot)}}, \sigma\right) \right) = \alpha_I + \sum_{\kappa=1}^k \Tr\left( H_{\kappa} \rho\left(1; \ket{\psi_{(\cdot, \cdot)}}, \sigma\right) \right) t^{\kappa}.
    \end{equation}
    Recall that $t^*$ is chosen based on the optimization
    \begin{equation}
    \max_{t \in [-1, 1]} \left| \Tr\Big( H \rho\left(t; \ket{\psi_{(\cdot, \cdot)}}, \sigma\right) \Big)
    - \alpha_I \right|.
    \end{equation}
    By considering Lemma~\ref{lem:markov-bro} with $a_\kappa = \Tr\left( H_{\kappa} \rho\left(1; \ket{\psi_{(\cdot, \cdot)}}, \sigma\right) \right)$, we have
    \begin{equation}
        (1 + \sqrt{2})^k \left| \Tr\Big( H \rho\left(t^*; \ket{\psi_{(\cdot, \cdot)}}, \sigma\right) \Big)
    - \alpha_I \right| \geq \left| \Tr\left( H_{\kappa} \rho\left(1; \ket{\psi_{(\cdot, \cdot)}}, \sigma\right) \right) \right|.
    \end{equation}
    This concludes the proof of this corollary.
\end{proof}

\section{Norm inequalities from approximate optimization algorithm}
\label{sec:norm-ineq-Pauli}

The approximate optimization algorithm described in the previous section is not used directly in the ML algorithm, but used to derive norm inequalities, i.e., inequalities relating different norms over Hermitian operators.
An important norm that we will use in the ML algorithms is the Pauli-$p$ norm defined below.
The Pauli-$p$ norm is equivalent to the vector-$p$ norm on the Pauli coefficient of an observable $H$.

\begin{definition}[Pauli-$p$ norm]
    Given $H = \sum_{P \in \{I, X, Y, Z\}^{\otimes n}} \alpha_P P$ and $p \geq 1$.
    The Pauli-$p$ norm of $H$ is
    \begin{equation}
        \norm{H}_{\mathrm{Pauli}, p} = \left(\sum_{P} |\alpha_{P}|^p \right)^{1/p}.
    \end{equation}
\end{definition}

Recall that the spectral norm $\norm{H} = \max_{\ket{\psi}} |\bra{\psi} H \ket{\psi}| = \max_{\rho} | \Tr(H \rho) |$.
In this section, we will use the approximate optimization algorithm to derive several norm inequalities relating the Pauli-$p$ norm $\norm{\cdot}_{\mathrm{Pauli}, p}$ to the spectral norm $\norm{\cdot}$ for common classes of observables.

We begin with a well-known fact that equates the Frobenius norm and the Pauli-$2$ norm.
This proposition follows directly from the orthonormality of the Pauli observables $\{I, X, Y, Z\}^{\otimes n}$.

\begin{proposition}[Frobenius norm] \label{prop:basic-Pauli2}
    Given any $n$-qubit Hermitian operator $H$. We have
    \begin{equation}
        \frac{1}{\sqrt{2^n}} \norm{H}_F = \norm{H}_{\mathrm{Pauli}, 2} \leq \norm{H}.
    \end{equation}
\end{proposition}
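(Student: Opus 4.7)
The plan is to handle the equality and the inequality separately. For the equality $\frac{1}{\sqrt{2^n}}\|H\|_F = \|H\|_{\mathrm{Pauli},2}$, I would expand $H = \sum_P \alpha_P P$ in the Pauli basis and compute
\begin{equation}
    \|H\|_F^2 = \Tr(H^\dagger H) = \Tr(H^2) = \sum_{P,Q} \alpha_P \alpha_Q \Tr(PQ).
\end{equation}
The key fact is that the $n$-qubit Pauli operators form an orthogonal set with respect to the Hilbert–Schmidt inner product, satisfying $\Tr(PQ) = 2^n \delta_{PQ}$. This reduces the double sum to $2^n \sum_P \alpha_P^2 = 2^n \|H\|_{\mathrm{Pauli},2}^2$, and taking square roots yields the equality.

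For the inequality $\|H\|_{\mathrm{Pauli},2} \le \|H\|$, I would relate the Frobenius norm to the spectrum. Since $H$ is Hermitian, it has real eigenvalues $\lambda_1,\ldots,\lambda_{2^n}$, and
\begin{equation}
    \|H\|_F^2 = \sum_{i=1}^{2^n} \lambda_i^2 \le 2^n \max_i \lambda_i^2 = 2^n \|H\|^2.
\end{equation}
Dividing by $2^n$ and taking square roots, together with the equality already established, gives $\|H\|_{\mathrm{Pauli},2} = \frac{1}{\sqrt{2^n}}\|H\|_F \le \|H\|$, completing the proof.

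There is no real obstacle here — both steps are standard linear-algebra facts about the Pauli basis and Schatten norms. The only thing worth flagging is that the proof uses no property of $H$ beyond Hermiticity, so the bound is tight precisely when all eigenvalues have equal magnitude (e.g.\ for a unitary Hermitian operator such as a single Pauli string).
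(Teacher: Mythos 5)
Your proof is correct and follows essentially the same route as the paper's: the equality via Pauli orthogonality $\Tr(PQ)=2^n\delta_{PQ}$, and the inequality by bounding $\sum_i\lambda_i^2\le 2^n\max_i\lambda_i^2$. Nothing to add.
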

\begin{proof}
    Let $n$ be the number of qubits $H$ act on and $\lambda_1, \ldots, \lambda_{2^n}$ be the eigenvalues of $O$.
    From the fact that $\Tr(P Q) = 2^n \delta_{P=Q}$, we have
    \begin{equation}
        \norm{H}_F^2 = \Tr(H^2) = \sum_{P} |\alpha_P|^2 2^n = 2^n \norm{H}_{\mathrm{Pauli}, 2}^2.
    \end{equation}
    Since $\norm{H}_F^2 = \sum_{i=1}^{2^n} |\lambda_i|^2 \leq 2^n \max_i |\lambda_i|^2 = 2^n \norm{H}^2_\infty$, we have $\sum_{P} |\alpha_P|^2 = \norm{H}_F^2 / 2^n \leq \norm{H}^2_\infty$.
\end{proof}

We now utilize Theorem~\ref{thm:main} to obtain the following useful norm inequality.

\begin{theorem}[Norm inequality from Theorem~\ref{thm:main}]\label{thm:main-norm}
    Given an $n$-qubit $k$-local Hamiltonian $H$ with expansion coefficient/dimension $c_e, d_e$.
    Let $r = 2 d_e / (d_e + 1) \in [1, 2)$. We have
    \begin{equation}
        \frac{1}{3} C(c_e, d_e, k) \norm{H}_{\mathrm{Pauli}, r} \leq \norm{H},
    \end{equation}
    where $C(c_e, d_e, k) = \frac{\sqrt{2(k!)}}{c_e^{1/(2d_e)} k^{k+1.5+1/r} (\sqrt{6} + 2 \sqrt{3} )^k}$ is the same as Theorem~\ref{thm:main}.
\end{theorem}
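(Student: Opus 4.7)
The strategy is to use Theorem~\ref{thm:main} as a black box: it produces a product state whose expectation against $H$ deviates from the Haar average $\alpha_I$ by at least $C(c_e,d_e,k)\bigl(\sum_{P\neq I}|\alpha_P|^r\bigr)^{1/r}$, and the spectral norm dominates any such deviation. The only subtlety is converting the restricted sum $\sum_{P\neq I}|\alpha_P|^r$ into the full Pauli-$r$ norm and paying a small constant for the identity component.

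\textbf{Step 1 (invoke the optimization theorem).} Apply Theorem~\ref{thm:main} to $H$. Regardless of whether the algorithm outputs a random maximizer or minimizer, we obtain a pure product state $\lvert\psi\rangle$ satisfying
\begin{equation}
\bigl|\langle\psi|H|\psi\rangle - \alpha_I\bigr| \;\geq\; C(c_e,d_e,k)\left(\sum_{P\neq I} |\alpha_P|^r\right)^{1/r}.
\end{equation}
Since $\|H-\alpha_I I\| \geq |\langle\psi|(H-\alpha_I I)|\psi\rangle|$, this immediately gives
\begin{equation}
\|H - \alpha_I I\| \;\geq\; C(c_e,d_e,k)\left(\sum_{P\neq I} |\alpha_P|^r\right)^{1/r}.
\end{equation}

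\textbf{Step 2 (reintroduce the identity coefficient).} By the elementary inequality $(a^r-b^r)^{1/r} \geq a-b$ for $a \geq b \geq 0$ and $r\geq 1$ (which follows from $1 \geq x^r+(1-x)^r$ being reversed to $(x+(1-x))^r = 1 \geq x^r+(1-x)^r$ for $r \geq 1$), applied to $a = \|H\|_{\mathrm{Pauli},r}$ and $b = |\alpha_I|$, we get
\begin{equation}
\left(\sum_{P\neq I}|\alpha_P|^r\right)^{1/r} \;=\; \bigl(\|H\|_{\mathrm{Pauli},r}^r - |\alpha_I|^r\bigr)^{1/r} \;\geq\; \|H\|_{\mathrm{Pauli},r} - |\alpha_I|.
\end{equation}

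\textbf{Step 3 (bound the shifted spectral norm).} By the triangle inequality and the standard bound $|\alpha_I| = |\Tr(H)|/2^n \leq \|H\|$,
\begin{equation}
\|H-\alpha_I I\| \;\leq\; \|H\| + |\alpha_I| \;\leq\; 2\|H\|.
\end{equation}

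\textbf{Step 4 (combine).} Chaining the three estimates,
\begin{equation}
2\|H\| \;\geq\; \|H-\alpha_I I\| \;\geq\; C(c_e,d_e,k)\bigl(\|H\|_{\mathrm{Pauli},r} - |\alpha_I|\bigr),
\end{equation}
so rearranging and using $|\alpha_I| \leq \|H\|$ gives $\bigl(2 + C(c_e,d_e,k)\bigr)\|H\| \geq C(c_e,d_e,k)\|H\|_{\mathrm{Pauli},r}$. A quick check of the explicit formula for $C(c_e,d_e,k)$ shows $C(c_e,d_e,k) \leq 1$ (the denominator $(\sqrt{6}+2\sqrt{3})^k$ already dominates $\sqrt{2(k!)}$ in the numerator for every $k\geq 1$), so $2 + C(c_e,d_e,k) \leq 3$, which yields
\begin{equation}
\tfrac{1}{3}\, C(c_e,d_e,k)\, \|H\|_{\mathrm{Pauli},r} \;\leq\; \|H\|,
\end{equation}
as claimed.

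\textbf{Expected obstacles.} There are no real obstacles — the theorem is essentially a direct corollary of Theorem~\ref{thm:main} once one realizes that a lower bound on $|\langle\psi|H|\psi\rangle - \alpha_I|$ for some explicit $\lvert\psi\rangle$ is a lower bound on $\|H-\alpha_I I\|$, hence (up to the $|\alpha_I|$ correction) on $\|H\|$. The only mildly delicate point is handling the identity coefficient cleanly; the $\ell_r$ inequality in Step 2 together with $|\alpha_I|\leq\|H\|$ absorbs it into a constant factor of $3$.
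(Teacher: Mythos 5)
Your proposal is correct and follows essentially the same route as the paper: invoke Theorem~\ref{thm:main} to lower-bound the deviation of a product-state expectation from $\alpha_I$ by $C(c_e,d_e,k)\bigl(\sum_{P\neq I}|\alpha_P|^r\bigr)^{1/r}$, relate that deviation to the spectral norm, and absorb the identity coefficient via $|\alpha_I|\le\norm{H}$ at the cost of the factor $3$ (the paper packages this last step as $\max(x_1,cx_2-x_1)\ge\tfrac{c}{c+2}(x_1+x_2)$ plus $\ell_r$ subadditivity, while you use $(a^r-b^r)^{1/r}\ge a-b$ and $\norm{H-\alpha_I I}\le 2\norm{H}$, which is an equivalent rearrangement yielding the same constant). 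The only cosmetic point is that Theorem~\ref{thm:main} guarantees the deviation in expectation over the random state, so one should pass from $\E_{\ket{\psi}}$ to a single $\ket{\psi}$ in the support (or just bound $\norm{H-\alpha_I I}\ge|\E_{\ket{\psi}}[\bra{\psi}H\ket{\psi}]-\alpha_I|$ directly), which is immediate.
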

\begin{proof}
    Consider the Pauli representation $H = \sum_{P: |P| \leq k} \alpha_P P$.
    If we consider $\rho = I / 2^n$, then we have
    \begin{equation}
        \norm{H} \geq \left| \Tr(H) / 2^n \right| \geq \left| \E_{\ket{\phi}: \mathrm{Haar}} \big[ \bra{\phi} H \ket{\phi} \big] \right| = \left| \alpha_I \right|.
    \end{equation}
    If we consider the random product state $\ket{\psi}$ from Theorem~\ref{thm:main}, then we have
    \begin{equation}
        \E_{\ket{\psi}} \left| \bra{\psi} H \ket{\psi} - \E_{\ket{\phi}: \mathrm{Haar}} \big[ \bra{\phi} H \ket{\phi} \big] \right| \geq C(c_e, d_e, k) \left(\sum_{P \neq I} |\alpha_P|^r\right)^{1/r}.
    \end{equation}
    Using $\E_{\ket{\phi}: \mathrm{Haar}} \big[ \bra{\phi} H \ket{\phi} \big] = \alpha_I$ and $\E_{\ket{\psi}} \left| \bra{\psi} H \ket{\psi} - \alpha_I \right| \leq \E_{\ket{\psi}} \left| \bra{\psi} H \ket{\psi} \right| + | \alpha_I | $, we have
    \begin{equation}
        \norm{H} \geq \E_{\ket{\psi}} \left| \bra{\psi} H \ket{\psi} \right| \geq C(c_e, d_e, k) \left(\sum_{P \neq I} |\alpha_P|^r\right)^{1/r} - \left| \alpha_I \right|.
    \end{equation}
    Next, we utilize the following inequality
    \begin{equation}
    \max(x_1, c x_2 - x_1) \geq \frac{c}{c + 2} (x_1 + x_2), \forall x_1, x_2, c \geq 0,
    \end{equation}
    which can be shown by considering the two cases: $x_1 \geq (c / 2) x_2$ and $x_1 < (c / 2) x_2$,
    as well as the lower bounds on $\norm{H}$ to show that
    \begin{equation}
        \norm{H} \geq \frac{C(c_e, d_e, k)}{C(c_e, d_e, k) + 2} \left( \left| \alpha_I \right| +  \left(\sum_{P \neq I} |\alpha_P|^r\right)^{1/r} \right) \geq \frac{C(c_e, d_e, k)}{3} \left( \left| \alpha_I \right| +  \left(\sum_{P \neq I} |\alpha_P|^r\right)^{1/r} \right).
    \end{equation}
    The second inequality uses $k, c_e, d_e \geq 1$, which implies $C(c_e, d_e, k) \in [0, 1]$.
    Finally, the inequality
    \begin{equation}
        \left| \alpha_I \right| +  \left(\sum_{P \neq I} |\alpha_P|^r\right)^{1/r} \geq  \left(\sum_{P} |\alpha_P|^r\right)^{1/r},
    \end{equation}
    can be used to establish the claim.
\end{proof}

Using Fact~\ref{fact:exp-klocal} and Fact~\ref{fact:exp-bounded} that characterize the expansion property for general $k$-local Hamiltonians and bounded degree $k$-local Hamiltonians (i.e., each qubit is acted on by at most $d$ of the $k$-qubit observables), we can establish the following corollaries.

\begin{corollary}[Norm inequality for $k$-local Hamiltonian]\label{cor:any-norm}
    Given an $n$-qubit $k$-local Hamiltonian $H$. We have
    \begin{equation}
        \frac{1}{3} C(k) \norm{H}_{\mathrm{Pauli}, \frac{2k}{k + 1}} \leq \norm{H},
    \end{equation}
    where $C(k) = \frac{\sqrt{2(k!)}}{2 k^{k+1.5+ (k+1)/(2k)} (\sqrt{6} + 2 \sqrt{3} )^k}$ is the same as Corollary~\ref{cor:opt-any}.
\end{corollary}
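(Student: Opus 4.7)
The plan is to obtain Corollary~\ref{cor:any-norm} as a direct specialization of the general norm inequality in Theorem~\ref{thm:main-norm} to the class of arbitrary $k$-local Hamiltonians. The only thing that needs to be done is to compute the expansion coefficient and expansion dimension for this class and substitute into the formula for $C(c_e, d_e, k)$.

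First, I would invoke Fact~\ref{fact:exp-klocal}, which tells us that any Hamiltonian that can be written as a sum of $k$-qubit observables has expansion coefficient $c_e = 4^k$ and expansion dimension $d_e = k$. This is the ``worst case'' setting of the expansion parameters: there is no a priori bound on the degree, so any subset $\Upsilon$ of $k$ qubits can in principle support all $4^k$ distinct Pauli strings supported within it, and no smaller $d_e$ works in general.

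Next, I would plug these values into Theorem~\ref{thm:main-norm}. With $d_e = k$, the exponent becomes $r = 2 d_e / (d_e + 1) = 2k/(k+1)$, matching the $\mathrm{Pauli}$-norm index in the statement. The prefactor then simplifies because
\begin{equation}
    c_e^{1/(2 d_e)} = (4^k)^{1/(2k)} = 2,
\end{equation}
so
\begin{equation}
    C(c_e, d_e, k) = \frac{\sqrt{2 (k!)}}{2\, k^{k + 1.5 + (k+1)/(2k)} \, (\sqrt{6} + 2\sqrt{3})^k},
\end{equation}
which is exactly the $C(k)$ appearing in the corollary. Theorem~\ref{thm:main-norm} then yields $\tfrac{1}{3} C(k) \norm{H}_{\mathrm{Pauli},\, 2k/(k+1)} \le \norm{H}$, completing the proof.

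There is no real obstacle here once Theorem~\ref{thm:main-norm} and Fact~\ref{fact:exp-klocal} are in hand; the only thing to check carefully is that the exponent arithmetic matches (in particular the $k^{1/r} = k^{(k+1)/(2k)}$ factor) and that the substitution $c_e = 4^k$, $d_e = k$ is indeed admissible for every $k$-local Hamiltonian regardless of its connectivity structure, which is exactly the content of Fact~\ref{fact:exp-klocal}.
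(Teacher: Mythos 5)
Your proposal is correct and follows exactly the paper's own route: the paper derives Corollary~\ref{cor:any-norm} by combining Theorem~\ref{thm:main-norm} with Fact~\ref{fact:exp-klocal} ($c_e = 4^k$, $d_e = k$), and your arithmetic for $r = 2k/(k+1)$, $c_e^{1/(2d_e)} = 2$, and the exponent $1/r = (k+1)/(2k)$ all checks out against the stated constant $C(k)$.
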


\begin{corollary}[Norm inequality for bounded-degree Hamiltonian]\label{cor:bounded-norm}
    Given an $n$-qubit $k$-local Hamiltonian $H$ with a bounded degree $d$. We have
    \begin{equation}
        \frac{1}{3} C(k, d) \norm{H}_{\mathrm{Pauli}, 1} \leq \norm{H},
    \end{equation}
    where $C(k, d) = \frac{\sqrt{2(k!)}}{\sqrt{d} k^{k+2.5} (2\sqrt{6} + 4 \sqrt{3} )^k}$.
\end{corollary}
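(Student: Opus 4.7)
The plan is to obtain Corollary~\ref{cor:bounded-norm} as a direct specialization of the general norm inequality in Theorem~\ref{thm:main-norm}, by plugging in the expansion parameters appropriate to the bounded-degree setting. The work all lies in matching up constants, since the structural content has already been established.

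First I would invoke Fact~\ref{fact:exp-bounded}, which says that any $k$-local Hamiltonian of bounded degree $d$ has expansion coefficient $c_e = 4^k d$ and expansion dimension $d_e = 1$. The key observation is that with $d_e = 1$, the exponent appearing in Theorem~\ref{thm:main-norm} becomes
\begin{equation}
    r = \frac{2 d_e}{d_e + 1} = 1,
\end{equation}
so the $\ell_r$-norm on the Pauli coefficients collapses to the Pauli-$1$ norm $\norm{H}_{\mathrm{Pauli},1}$. This is exactly the norm appearing in the target statement.

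Next I would substitute $c_e = 4^k d$, $d_e = 1$, and $r = 1$ into the constant
\begin{equation}
    C(c_e, d_e, k) = \frac{\sqrt{2(k!)}}{c_e^{1/(2 d_e)} \, k^{k + 1.5 + 1/r}\, (\sqrt{6} + 2\sqrt{3})^k}
\end{equation}
from Theorem~\ref{thm:main-norm}. The factor $c_e^{1/(2 d_e)}$ becomes $(4^k d)^{1/2} = 2^k \sqrt{d}$, and $k^{k + 1.5 + 1/r}$ becomes $k^{k + 2.5}$. Absorbing the $2^k$ into the geometric factor yields $(2\sqrt{6} + 4\sqrt{3})^k$ in the denominator, so the specialized constant reads
\begin{equation}
    C(k,d) = \frac{\sqrt{2(k!)}}{\sqrt{d}\, k^{k+2.5}\, (2\sqrt{6} + 4\sqrt{3})^k},
\end{equation}
which matches the constant in the corollary exactly.

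Finally, Theorem~\ref{thm:main-norm} gives $\tfrac{1}{3} C(c_e, d_e, k) \norm{H}_{\mathrm{Pauli}, r} \le \norm{H}$, and after the substitutions above this is precisely the claimed $\tfrac{1}{3} C(k,d) \norm{H}_{\mathrm{Pauli},1} \le \norm{H}$. There is no genuine obstacle here; the only thing to be careful about is the algebraic bookkeeping when moving the $2^k$ from $c_e^{1/2}$ into the $(\sqrt{6} + 2\sqrt{3})^k$ factor to recover the stated form of $C(k,d)$.
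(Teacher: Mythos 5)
Your proposal is correct and is exactly the paper's route: Corollary~\ref{cor:bounded-norm} is obtained by specializing Theorem~\ref{thm:main-norm} with the expansion parameters $c_e = 4^k d$, $d_e = 1$ from Fact~\ref{fact:exp-bounded}, so that $r = 1$ and $c_e^{1/(2d_e)} = 2^k\sqrt{d}$, with the $2^k$ absorbed into $(2\sqrt{6}+4\sqrt{3})^k$. The constant bookkeeping checks out.
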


\section{Sample-optimal algorithms for predicting bounded-degree observables}
\label{sec:samp-opt-bd-local-H}

In this section, we consider one of the most basic learning problems in quantum information theory: predicting properties of an unknown $n$-qubit state $\rho$.
This has been studied extensively in the literature on shadow tomography \cite{aaronson2018shadow, aaronson2019gentle} and classical shadows \cite{huang2020predicting}.

\subsection{Review of classical shadow formalism}

We recall the following definition and theorem from classical shadow tomography~\cite{huang2020predicting} based on randomized Pauli measurements.
Each randomized Pauli measurement is performed on a single copy of $\rho$ and measures each qubit of $\rho$ in a random Pauli basis ($X, Y$, or $Z$).

\begin{definition}[Shadow norm from randomized Pauli measurements]
    Given an $n$-qubit observable $O$. Let $\mathcal{U}$ be the distribution over the tensor product of $n$ single-qubit random Clifford unitary, and $\mathcal{M}^{-1}_P = \bigotimes_{i=1}^n \mathcal{M}^{-1}_1$ with $\mathcal{M}^{-1}_1(A) = 3 A - \Tr(A) I$. The shadow norm of $O$ is defined as
    \begin{equation}
        \norm{O}_{\mathrm{shadow}} = \max_{\sigma: \mathrm{state}} \left( \E_{U \sim \mathcal{U}} \sum_{b \in \{0, 1\}^n} \bra{b} U \sigma U^\dagger \ket{b} \bra{b} U \mathcal{M}^{-1}_P(O) U^\dagger \ket{b}^2 \right)^{1/2}.
    \end{equation}
\end{definition}

\begin{theorem}[Classical shadow tomography using randomized Pauli measurements \cite{huang2020predicting}] \label{thm:classical-shadow}
    Given an unknown $n$-qubit state $\rho$ and $M$ observables $O_1, \ldots, O_M$ with $B_{\mathrm{shadow}} = \max_{i \in [M]} \norm{O_i}_{\mathrm{shadow}}$.
    After $N$ randomized Pauli measurements on copies of $\rho$ satisfying
    \begin{equation}
        N = \mathcal{O}\left( \frac{\log(M) B_{\mathrm{shadow}}^2}{\epsilon^2} \right),
    \end{equation}
    we can estimate $\Tr(O_i \rho)$ to $\epsilon$ error for all $i \in [M]$ with high probability.
\end{theorem}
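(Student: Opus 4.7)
The plan is to prove Theorem~\ref{thm:classical-shadow} following the classical shadow framework of \cite{huang2020predicting}: construct unbiased single-shot estimators of $\Tr(O_i\rho)$ via a measurement-channel inversion, bound their variance by the shadow norm, and amplify via median-of-means combined with a union bound over the $M$ observables.

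First, I would set up the estimator. For the $\ell$-th copy of $\rho$, sample $U_\ell \sim \mathcal{U}$ (a tensor product of uniformly random single-qubit Clifford unitaries), measure $U_\ell \rho U_\ell^\dagger$ in the computational basis to obtain $\ket{b_\ell}$, and form the classical shadow $\hat\rho_\ell = U_\ell^\dagger \mathcal{M}_P^{-1}(\ketbra{b_\ell}{b_\ell}) U_\ell$. The tensor structure $\mathcal{M}_P^{-1} = \bigotimes_i \mathcal{M}_1^{-1}$ with $\mathcal{M}_1^{-1}(A) = 3A - \Tr(A) I$ is precisely the inverse of the single-qubit measurement channel induced by a uniform choice of Pauli basis, so a short calculation shows $\E[\hat\rho_\ell] = \rho$. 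Consequently $\hat o_{i,\ell} \triangleq \Tr(O_i \hat\rho_\ell)$ is an unbiased estimator of $\Tr(O_i\rho)$.

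Second, I would bound $\Var[\hat o_{i,\ell}]$. Expanding the second moment, one finds
\begin{equation}
\E[\hat o_{i,\ell}^2] = \E_{U\sim \mathcal{U}} \sum_{b\in\{0,1\}^n} \bra{b} U \rho U^\dagger \ket{b}\, \bra{b} U \mathcal{M}_P^{-1}(O_i) U^\dagger\ket{b}^2,
\end{equation}
which by maximizing over the state $\rho$ is at most $\norm{O_i}_{\mathrm{shadow}}^2 \le B_{\mathrm{shadow}}^2$. This is the defining property of the shadow norm, so no extra work is needed here besides verifying that the Cauchy--Schwarz--style reorganization aligns with the stated definition.

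Third, I would amplify confidence using median-of-means. Partition the $N$ samples into $K = \Theta(\log(M/\delta))$ equal-sized blocks, compute the empirical mean $\bar o_i^{(j)}$ of $\hat o_{i,\ell}$ inside each block, and return $h(O_i) = \median_j \bar o_i^{(j)}$. Chebyshev's inequality on each block, using the variance bound above, shows that a single block satisfies $|\bar o_i^{(j)} - \Tr(O_i\rho)| \le \epsilon$ with probability $\ge 2/3$ provided the block size is $\Omega(B_{\mathrm{shadow}}^2/\epsilon^2)$. A standard Hoeffding bound on the number of successful blocks then gives $|h(O_i)-\Tr(O_i\rho)|\le \epsilon$ with probability $\ge 1 - \delta/M$, and a union bound over $i\in[M]$ yields success probability $\ge 1-\delta$ with total sample complexity $N = \mathcal{O}(\log(M)\, B_{\mathrm{shadow}}^2/\epsilon^2)$ as claimed.

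The only genuinely delicate step is the variance bound, i.e.\ recognizing that the shadow norm is exactly the right quantity to control $\E[\hat o^2]$; the unbiasedness follows from channel inversion, and the median-of-means argument is textbook. Since the statement is quoted verbatim from \cite{huang2020predicting}, I would ultimately cite their proof for the detailed computations rather than reproducing them.
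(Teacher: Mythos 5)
Your proposal is correct and follows exactly the argument that the paper relies on: the paper states this theorem without proof, citing \cite{huang2020predicting}, and your reconstruction (unbiased shadow estimator via inversion of the single-qubit measurement channel, second moment bounded by the shadow norm's defining maximization over states, then median-of-means with Chebyshev per block and a union bound over the $M$ observables) is precisely the proof given in that reference. No gaps; the sample complexity $N = \mathcal{O}(\log(M)\,B_{\mathrm{shadow}}^2/\epsilon^2)$ follows as you describe.
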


We can see that the sample complexity for predicting many properties of an unknown quantum state $\rho$ depends on the shadow norm $\norm{\cdot}_{\mathrm{shadow}}$.
The larger $\norm{\cdot}_{\mathrm{shadow}}$ is, the more experiments is needed to estimate properties of $\rho$ accurately.
From the original classical shadow paper \cite{huang2020predicting}, we can obtain the following shadow norm bounds for Pauli observables and for few-body observables.

\begin{lemma}[Shadow norm for Pauli observables \cite{huang2020predicting}] \label{lem:shadownorm-Pauli}
    For any $P \in \{I, X, Y, Z\}^{\otimes n}$, we have
    \begin{equation}
        \norm{P}_{\mathrm{shadow}} = 3^{|P| / 2}.
    \end{equation}
\end{lemma}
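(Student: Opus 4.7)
The plan is a direct computation from the definition of $\norm{\cdot}_{\mathrm{shadow}}$. First I would apply the Pauli-basis inverse channel: since $\mathcal{M}_1^{-1}(I) = I$ and $\mathcal{M}_1^{-1}(\sigma) = 3\sigma$ for $\sigma \in \{X,Y,Z\}$, the tensor structure of $\mathcal{M}_P^{-1} = \bigotimes_i \mathcal{M}_1^{-1}$ gives $\mathcal{M}_P^{-1}(P) = 3^{|P|} P$. Plugging this into the definition,
\begin{equation}
    \norm{P}_{\mathrm{shadow}}^2 = 3^{2|P|}\max_{\sigma}\,\E_{U \sim \mathcal{U}} \sum_{b \in \{0,1\}^n} \bra{b}U\sigma U^\dagger\ket{b}\,\bra{b}UPU^\dagger\ket{b}^2,
\end{equation}
so the task reduces to evaluating the inner expectation.

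Next I would exploit the single-qubit Clifford structure of $U$. Writing $P = \bigotimes_i P_i$ with $P_i \in \{I,X,Y,Z\}$, the conjugate $UPU^\dagger = \bigotimes_i U_i P_i U_i^\dagger$ factorizes, and for each non-identity $P_i$ the rotated operator $U_i P_i U_i^\dagger$ is uniformly distributed over $\{\pm X, \pm Y, \pm Z\}$ (identity qubits stay identity). Since $\bra{b}X\ket{b} = \bra{b}Y\ket{b} = 0$ and $\bra{b}Z\ket{b} = \pm 1$, the quantity $\bra{b}UPU^\dagger\ket{b}$ vanishes unless every non-identity qubit of $P$ is rotated to $\pm Z$; this event has probability exactly $3^{-|P|}$ and is independent of $b$. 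Conditioned on this event, $\bra{b}UPU^\dagger\ket{b}^2 = 1$ for every $b$.

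Under this conditioning the inner sum becomes $\sum_b \bra{b}U\sigma U^\dagger\ket{b} = \Tr(\sigma) = 1$, independent of $U$ and of the choice of $\sigma$. Combining,
\begin{equation}
    \E_{U}\sum_b \bra{b}U\sigma U^\dagger\ket{b}\,\bra{b}UPU^\dagger\ket{b}^2 = 3^{-|P|},
\end{equation}
whence $\norm{P}_{\mathrm{shadow}}^2 = 3^{2|P|}\cdot 3^{-|P|} = 3^{|P|}$ and the lemma follows by taking the square root. The main thing to be careful about is keeping track of the two separate sources of cancellation: the probabilistic vanishing from rotations to $X$ or $Y$, and the fact that once a $\pm Z$ rotation occurs the $\pm 1$ sign cancels in the square; the maximum over $\sigma$ then plays no role because $\sigma$ only enters through a trace that equals one.
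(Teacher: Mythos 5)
Your computation is correct: the paper itself states this lemma by citation to the classical-shadows reference without reproducing a proof, and your direct evaluation — $\mathcal{M}_P^{-1}(P)=3^{|P|}P$, the probability-$3^{-|P|}$ event that every non-identity factor rotates to $\pm Z$, the squared $\pm 1$ on that event, and $\sum_b\bra{b}U\sigma U^\dagger\ket{b}=\Tr(\sigma)=1$ making the maximum over $\sigma$ trivial — is exactly the standard argument from that reference and yields the claimed equality $\norm{P}_{\mathrm{shadow}}^2=3^{|P|}$. No gaps.
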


\begin{lemma}[Shadow norm for few-body observables \cite{huang2020predicting}]
    For any observable $O$ that acts nontrivially on at most $k$ qubits, we have
    \begin{equation}
        \norm{O}_{\mathrm{shadow}} \leq 2^{k} \norm{O}.
    \end{equation}
\end{lemma}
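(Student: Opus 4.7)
The plan is to exploit the tensor structure of the random Pauli measurement ensemble to reduce the bound to a single-qubit estimate, which is then lifted by a tensor factorization.

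First, I would reduce to the $k$-qubit support of $O$. Let $S \subseteq [n]$ denote the qubits on which $O$ acts nontrivially, so $|S| \leq k$ and $O = O_S \otimes I_{S^c}$. Since $\mathcal{M}^{-1}_P$ factorizes over qubits and $\mathcal{M}^{-1}_1(I) = I$, we have $\mathcal{M}^{-1}_P(O) = \mathcal{M}^{-1}_k(O_S) \otimes I_{S^c}$. Plugging this into the shadow-norm definition and writing $U = U_S \otimes U_{S^c}$, $b = (b_S, b_{S^c})$, the factor $\bra{b_{S^c}} U_{S^c} I U_{S^c}^\dagger\ket{b_{S^c}} = 1$ is trivial; summing over $b_{S^c}$ marginalizes $\sigma$ onto $S$. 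The problem reduces to proving $\norm{O_S}_{\mathrm{shadow},k}^2 \leq 4^k \norm{O_S}^2$ for arbitrary $k$-qubit observables $O_S$ and arbitrary $k$-qubit states $\sigma_S$.

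Next, I would handle the single-qubit case and lift by tensorization. For $k=1$ with $O = cI + \vec{a}\cdot\vec{\sigma}$ and $\norm{O} = |c|+\|\vec{a}\|$, one has $\mathcal{M}^{-1}_1(O) = cI + 3\vec{a}\cdot\vec{\sigma}$; a direct computation over the six stabilizer measurement outcomes yields
\begin{equation*}
\E_{U,b\sim\mathrm{Born}}\Big[\bra{b}U\mathcal{M}^{-1}_1(O)U^\dagger\ket{b}^2\Big] = c^2 + 3\|\vec{a}\|^2 + 2c\,(\vec{a}\cdot\vec{s}),
\end{equation*}
where $\vec s$ is the Bloch vector of $\sigma$; maximizing over $\|\vec s\|\leq 1$ gives $c^2 + 3\|\vec a\|^2 + 2|c|\,\|\vec a\| \le 4(|c|+\|\vec a\|)^2 = 4\norm{O}^2$. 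To lift to $k$ qubits, I would compute the bilinear single-qubit kernel $\widetilde{\Omega}_1(P, P') = \E_{U,b}\,U^\dagger\ket{b}\bra{b}U \cdot \bra{b}U\mathcal{M}^{-1}_1(P)U^\dagger\ket{b}\cdot \bra{b}U\mathcal{M}^{-1}_1(P')U^\dagger\ket{b}$, obtaining $I$ for $(I,I)$, $P$ for $(I,P)$ or $(P,I)$, $3I$ for $(P,P)$, and $0$ for distinct non-identity $P\neq Q$. Since the random Pauli measurement is a tensor of independent single-qubit Clifford rotations, the $k$-qubit kernel factorizes as $\widetilde{\Omega}_k = \widetilde{\Omega}_1^{\otimes k}$. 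Expanding $O_S = \sum_P \alpha_P P$ in Paulis, the operator $\Omega_k(O_S) = \widetilde{\Omega}_k(O_S, O_S)$ respects this factorization, and bounding its spectral norm in terms of the Pauli coefficients yields $\norm{\Omega_k(O_S)} \leq 4^k\norm{O_S}^2$ via Proposition~\ref{prop:basic-Pauli2}.

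The main obstacle is the lifting step: a naive bound of $9^k$ obtained from the spectral norm $\norm{\mathcal{M}^{-1}_k(O_S)}$ is too weak, since the factor-of-$3$ amplification in $\mathcal{M}^{-1}_1$ appears at every nontrivial Pauli site. The source of the sharper $4^k$ bound is a cancellation inside the kernel $\widetilde{\Omega}_1$: a nontrivial Pauli only contributes when the random measurement basis happens to match it, which costs a factor of $3$ rather than $9$ per overlapping qubit. Combined with the $\ell_2$-to-spectral bound $\norm{O_S}_{\mathrm{Pauli},2}\leq\norm{O_S}$ from Proposition~\ref{prop:basic-Pauli2}, the Pauli-basis expansion of $\widetilde{\Omega}_k(O_S,O_S)$ contracts to the desired $4^k\norm{O_S}^2$. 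The base case already exhibits this mechanism: the naive contribution $9\|\vec a\|^2$ is reduced to $3\|\vec a\|^2$ after averaging over the three Pauli bases, and the residual cross term $2|c|\,\|\vec a\|$ is absorbed into the slack of the inequality $c^2 + 3\|\vec a\|^2 + 2|c|\|\vec a\| \leq 4(|c|+\|\vec a\|)^2$.
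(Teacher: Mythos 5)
The paper does not prove this lemma itself --- it is quoted from \cite{huang2020predicting} --- and your reconstruction follows essentially the same route as the proof there: reduce to the $k$-qubit support, compute the single-qubit measurement kernel, and use the tensor factorization of the random Pauli ensemble. Your reduction step, your single-qubit computation ($c^2+3\|\vec a\|^2+2c\,(\vec a\cdot\vec s)$, maximized to $c^2+3\|\vec a\|^2+2|c|\|\vec a\|\le 4\norm{O}^2$), and your kernel table for $\widetilde{\Omega}_1$ are all correct.

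The one place where the argument is asserted rather than carried out is the final contraction $\norm{\widetilde{\Omega}_1^{\otimes k}(O_S,O_S)}\le 4^k\norm{O_S}^2$, and this is precisely where the constant $4^k$ is actually produced, so it should not be left implicit. Writing $O_S=\sum_P\alpha_P P$, your table gives $\bigotimes_i\widetilde{\Omega}_1(P_i,Q_i)=3^{s(P,Q)}R_{P,Q}$ when $P,Q$ are \emph{compatible} (on every qubit they agree or one of them is $I$), where $s(P,Q)$ counts the qubits on which they agree nontrivially and $R_{P,Q}$ is a Pauli string; the term vanishes otherwise. Hence, by the triangle inequality and $|\alpha_P\alpha_Q|\le\tfrac12(\alpha_P^2+\alpha_Q^2)$,
\begin{equation*}
\norm{\widetilde{\Omega}_1^{\otimes k}(O_S,O_S)} \;\le\; \sum_{(P,Q)\ \mathrm{compat.}} |\alpha_P||\alpha_Q|\,3^{s(P,Q)} \;\le\; \sum_P \alpha_P^2\ \max_P \sum_{Q\ \mathrm{compat.}\ P} 3^{s(P,Q)} \;=\; \sum_P\alpha_P^2\,\prod_{i=1}^{k}(1+3) \;=\; 4^k\sum_P\alpha_P^2,
\end{equation*}
since on each qubit of the $k$-qubit support a compatible $Q$ either matches $P_i$ (weight $3$) or contributes weight $1$. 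Combining with $\sum_P\alpha_P^2\le\norm{O_S}^2$ (Proposition~\ref{prop:basic-Pauli2}) gives the claim. With this step filled in, your proof is complete; note also that the base case $k=1$ cannot be tensorized directly (since $O_S$ need not be a product operator), so this kernel expansion is not optional but the actual content of the lifting step.
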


\noindent Combining the above lemmas and Theorem~\ref{thm:classical-shadow}, we can see that Pauli observables and few-body observables can both be predicted efficiently under very few number of randomized Pauli measurements.

\subsection{Upper bound for predicting bounded-degree observables}

Consider an $n$-qubit observable $O$ given as a sum of $k$-qubit observables $O = \sum_{j} O_j$, where each qubit is acted on by at most $d$ of these $k$-qubit observables $_Oj$.
We focus on $k = \mathcal{O}(1)$ and $d = \mathcal{O}(1)$, and refer to such an observable as a bounded-degree observable.
These bounded-degree observables arise frequently in quantum many-body physics and quantum information. For example, the Hamiltonian in a quantum spin system can often be described by a geometrically-local Hamiltonian, which is an instance of bounded-degree observables.
For these observables, the shadow norm is related to the Pauli-$1$ norm of the observable,
\begin{equation}
    \norm{O}_{\mathrm{shadow}} \leq \sum_{P: |P| \leq k} |\alpha_P| \norm{P}_{\mathrm{shadow}} \leq 3^{k/2} \sum_{P: |P| \leq k} |\alpha_P| = 3^{k/2} \norm{O}_{\mathrm{pauli}, 1}.
\end{equation}
If we consider the norm inequality between $\ell_1$-norm and $\ell_2$-norm and use the standard result relating Frobenius norm and spectral norm (Proposition~\ref{prop:basic-Pauli2}), we would obtain the following upper bound on shadow norm.
\begin{equation}
    \norm{O}_{\mathrm{shadow}} \leq 3^{k/2} \norm{O}_{\mathrm{pauli}, 1} \leq (2 \sqrt{3})^k \sqrt{n d} \norm{O}_{\mathrm{pauli}, 2} = \mathcal{O}\left( \sqrt{n} \norm{O} \right).
\end{equation}
Using Theorem~\ref{thm:classical-shadow}, this shadow norm bound would give rise to a number of measurements scaling as
\begin{equation} \label{eq:sample-comp-nlogM}
    N = \mathcal{O}\left( \frac{n \log(M) B^2_\infty}{\epsilon^2} \right),
\end{equation}
where $B_\infty = \max_{i \in [M]} \norm{O_i}_{\infty}$ is an upper bound on the spectral norm $\norm{\cdot}$.
Due to the linear dependence on the number $n$ of qubits in the unknown quantum state, this scaling is not ideal. Furthermore, we will later show that this scaling is actually far from optimal.

To improve the sample complexity, we will use the improved approximate optimization algorithm presented in Appendix~\ref{sec:optimize-klocal}, and the corresponding norm inequality presented in Appendix~\ref{sec:norm-ineq-Pauli}.
Using the norm inequality relating Pauli-$1$ norm and the spectral norm (Corollary~\ref{cor:bounded-norm}), we can obtain the following shadow norm bound.

\begin{lemma}[Shadow norm for bounded-degree observables] \label{lem:shadownorm-bounded}
    Given $k, d = \mathcal{O}(1)$ and an $n$-qubit observable $O$ that is a sum of $k$-qubit observables, where each qubit is acted on by at most $d$ of these $k$-qubit observables.
    \begin{equation}
         \norm{O}_{\mathrm{shadow}} \leq C \norm{O},
    \end{equation}
    for some constant $C > 0$.
\end{lemma}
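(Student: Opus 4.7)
The plan is to chain together two ingredients already in hand: the triangle inequality bound on shadow norm via the Pauli-$1$ norm, and the quantum Bohnenblust--Hille-type inequality of Corollary~\ref{cor:bounded-norm}. First I would use subadditivity of the shadow norm (which follows immediately from its definition as a supremum of an $L^2$-type quantity, equivalently from Minkowski's inequality applied to the square-root) together with the Pauli-observable shadow norm bound from Lemma~\ref{lem:shadownorm-Pauli}. Expanding $O = \sum_{P:|P|\le k} \alpha_P P$, this gives
\begin{equation}
    \norm{O}_{\mathrm{shadow}} \;\le\; \sum_{P:|P|\le k} |\alpha_P|\,\norm{P}_{\mathrm{shadow}} \;=\; \sum_{P:|P|\le k} |\alpha_P|\, 3^{|P|/2} \;\le\; 3^{k/2}\,\norm{O}_{\mathrm{Pauli},1}.
\end{equation}
This step is the one explicitly foreshadowed in the paragraph preceding the lemma.

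Next I would verify that $O$ itself qualifies for Corollary~\ref{cor:bounded-norm} as a bounded-degree $k$-local Hamiltonian. Writing $O = \sum_j O_j$ with each $O_j$ supported on at most $k$ qubits and each qubit touched by at most $d$ of the $O_j$'s, the Pauli expansion of each $O_j$ produces at most $4^k$ Pauli terms on the same qubit set. Hence the combined Pauli decomposition of $O$ is $k$-local with degree bounded by $4^k d$, which is still $\mathcal{O}(1)$ under the assumption $k,d = \mathcal{O}(1)$. This is the one place a small bookkeeping argument is needed; I would phrase it carefully so that the resulting degree bound feeds cleanly into the constant $C(k, 4^k d)$ appearing in Corollary~\ref{cor:bounded-norm}.

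Finally I would apply Corollary~\ref{cor:bounded-norm} to $O$, which yields $\norm{O}_{\mathrm{Pauli},1} \le (3/C(k,4^k d))\,\norm{O}$, and combine with the first display to conclude
\begin{equation}
    \norm{O}_{\mathrm{shadow}} \;\le\; \frac{3^{k/2+1}}{C(k,4^k d)}\,\norm{O} \;=\; C\,\norm{O},
\end{equation}
where $C$ depends only on $k$ and $d$ and is therefore a constant. The main obstacle was never in this lemma itself but in the preceding Corollary~\ref{cor:bounded-norm}, whose proof required the polarization-based approximate optimization algorithm of Appendix~\ref{sec:optimize-klocal}; once that quantum Bohnenblust--Hille inequality is available, the present statement is essentially a two-line consequence, modulo the mild degree-inflation calculation above.
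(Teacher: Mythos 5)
Your proposal is correct and matches the paper's (implicit) proof exactly: the triangle-inequality bound $\norm{O}_{\mathrm{shadow}} \leq 3^{k/2}\norm{O}_{\mathrm{Pauli},1}$ is displayed right before the lemma, and the paper then invokes Corollary~\ref{cor:bounded-norm} to replace the Pauli-$1$ norm by $\frac{3}{C(k,d)}\norm{O}$. The only cosmetic difference is your degree-inflation step: Corollary~\ref{cor:bounded-norm} already applies to $O$ as a sum of $k$-qubit observables with degree $d$ (the $4^k$ Pauli-term blow-up is absorbed into the expansion coefficient in Fact~\ref{fact:exp-bounded}), so passing to $C(k,4^k d)$ merely loses a harmless factor of $2^k$ in the constant.
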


Combining the above lemma with Theorem~\ref{thm:classical-shadow} allows us to establish the following theorem.
Comparing to Eq.~\eqref{eq:sample-comp-nlogM}, the following theorem uses $n$ times fewer measurements.

\begin{theorem}[Classical shadow tomography for bounded-degree observables] \label{cor:shadow-bounded}
    Given an unknown $n$-qubit state~$\rho$ and $M$ observables $O_1, \ldots, O_M$ with $B_\infty = \max_{i} \norm{O_i}_{\infty}$.
    Suppose each observable $O_i$ is a sum of few-body observables $O_i = \sum_{j} O_{ij}$, where every qubit is acted on by a constant number of the few-body observables $O_{ij}$.
    After $N$ randomized Pauli measurements on copies of $\rho$ with
    \begin{equation}
        N = \mathcal{O}\left( \frac{\log\big( \min( M, n) \big) B_\infty^2}{\epsilon^2} \right),
    \end{equation}
    we can estimate $\Tr(O_i \rho)$ to $\epsilon$ error for all $i \in [M]$ with high probability.
\end{theorem}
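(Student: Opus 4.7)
The plan is to invoke Theorem~\ref{thm:classical-shadow} in two different ways, optimized for the two regimes $M \lesssim n$ and $M \gtrsim n$, and then take the better of the two sample complexities. In both cases the protocol is the same classical-shadow procedure based on randomized Pauli measurements; the new ingredient is the shadow-norm bound of Lemma~\ref{lem:shadownorm-bounded} (itself a consequence of Corollary~\ref{cor:bounded-norm}), which replaces the naive $\sqrt{n}$-type upper bound on $\norm{O_i}_{\mathrm{shadow}}$ for bounded-degree observables by an $\mathcal{O}(1)$ bound in terms of $\norm{O_i}$.

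First, for the ``small $M$'' branch, I would directly apply Theorem~\ref{thm:classical-shadow} to the input observables $O_1,\dots,O_M$. Lemma~\ref{lem:shadownorm-bounded} yields $\norm{O_i}_{\mathrm{shadow}} \le C\norm{O_i} \le C B_\infty$ for every $i$, so the theorem gives success with
\begin{equation*}
N \;=\; \mathcal{O}\bigl(\log(M)\, B_\infty^2/\epsilon^2\bigr).
\end{equation*}

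Second, for the ``large $M$'' branch, I would not try to estimate each $O_i$ individually. Instead, using the same classical-shadow data, I would estimate $\Tr(P\rho)$ for every Pauli observable $P$ with $|P|\le k$, where $k=\mathcal{O}(1)$ is the locality of the few-body terms $O_{ij}$. There are at most $\binom{n}{k}3^k = \mathcal{O}(n^k) = \mathrm{poly}(n)$ such $P$, and each satisfies $\norm{P}_{\mathrm{shadow}} = 3^{|P|/2} = \mathcal{O}(1)$ by Lemma~\ref{lem:shadownorm-Pauli}. Theorem~\ref{thm:classical-shadow} then gives estimates $\widehat{\Tr(P\rho)}$ accurate to $\epsilon'$ simultaneously for all such $P$ using $N = \mathcal{O}\bigl(\log(n)/\epsilon'^{\,2}\bigr)$ samples. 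Given these, form the plug-in estimator $\widehat{\Tr(O_i\rho)} = \sum_{P: |P|\le k} a^{(i)}_P \widehat{\Tr(P\rho)}$, whose error is bounded by $\norm{O_i}_{\mathrm{Pauli},1}\cdot\epsilon'$. Here is where the new norm inequality enters: Corollary~\ref{cor:bounded-norm} gives $\norm{O_i}_{\mathrm{Pauli},1} \le C'\norm{O_i} \le C' B_\infty$, so choosing $\epsilon' = \epsilon/(C' B_\infty)$ delivers the required per-observable accuracy with $N = \mathcal{O}\bigl(\log(n)\,B_\infty^2/\epsilon^2\bigr)$.

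Taking the minimum of the two sample counts yields the claimed $\log(\min(M,n))$ dependence. The main conceptual point --- and the only nontrivial step --- is the second branch: without Corollary~\ref{cor:bounded-norm}, the best available estimate on $\norm{O_i}_{\mathrm{Pauli},1}$ for a bounded-degree $O_i$ is the trivial $\sqrt{nd}\,\norm{O_i}_{\mathrm{Pauli},2}\lesssim \sqrt{n}\,\norm{O_i}$ obtained by combining the $\ell_1$--$\ell_2$ inequality with Proposition~\ref{prop:basic-Pauli2}, and this reintroduces an extra factor of $n$ into the sample complexity, nullifying the improvement. The Bohnenblust--Hille-style bound is precisely what certifies that reconstructing each $O_i$ from its low-weight Pauli estimates incurs only an $\mathcal{O}(1)$ blowup in error, and this is what closes the argument.
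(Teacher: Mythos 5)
Your proposal is correct and follows essentially the same route as the paper's proof: the $\log M$ branch via Theorem~\ref{thm:classical-shadow} combined with the shadow-norm bound of Lemma~\ref{lem:shadownorm-bounded}, and the $\log n$ branch via estimating all $\mathcal{O}(n^k)$ low-weight Pauli expectations and reconstructing each $\Tr(O_i\rho)$ with error controlled by $\norm{O_i}_{\mathrm{Pauli},1}\le C\norm{O_i}$ from Corollary~\ref{cor:bounded-norm}. Your bookkeeping of where $B_\infty$ enters (through $\epsilon'=\epsilon/(C'B_\infty)$ rather than through the Pauli-estimation step) is a cosmetic difference only.
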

\begin{proof}
    The upper bound of $N = \mathcal{O}\left( \log(M) \max_{i \in [M]} \norm{O_i}_{\infty}^2 / \epsilon^2 \right)$ follows immediately from Theorem~\ref{thm:classical-shadow} and Lemma~\ref{lem:shadownorm-bounded}.
    We can also establish an upper bound of $N = \mathcal{O}\left( \log(n) \max_{i \in [M]} \norm{O_i}_{\infty}^2 / \epsilon^2 \right)$.
    To see this, consider the task of predicting all $k$-qubit Pauli observables $P \in \{I, X, Y, Z\}^{\otimes n}$ with $|P| \leq k$.
    There are at most $\mathcal{O}(n^k)$ such Pauli observables.
    To predict all of the $k$-qubit Pauli observables to $\epsilon'$ error under the unknown state $\rho$, we can combine Theorem~\ref{thm:classical-shadow} and Lemma~\ref{lem:shadownorm-Pauli} to see that we only need
    \begin{equation}
        N = \mathcal{O}\left( \log(n) \max_{i \in [M]} \norm{O_i}_{\infty}^2 / (\epsilon')^2 \right)
    \end{equation}
    randomized Pauli measurements.
    Now, given any observable $O_i = \sum_P \alpha_P P$ that is a sum of few-body observables $O_i = \sum_{j} O_{ij}$, where every qubit is acted on by a constant number of the few-body observables $O_{ij}$, we can predict $\Tr(O_i \rho)$ using the following identity
    \begin{equation}
        \Tr(O_i \rho) = \sum_{P: |P| \leq k} \alpha_P \Tr(P \rho),
    \end{equation}
    which incurs a prediction error of at most $\sum_{P} |\alpha_P| \epsilon'$.
    Using the norm inequality in Corollary~\ref{cor:bounded-norm}, we have
    \begin{equation}
         \norm{O_i}_{\mathrm{Pauli}, 1} = \sum_{P} |\alpha_P| \leq C \norm{O_i},
    \end{equation}
    for a constant $C$. Hence, by setting $\epsilon' = \epsilon / C$, we can predict $O_i$ to $\epsilon$ error.
    Thus we can also establish an upper bound of $N = \mathcal{O}\left( \log(n) \max_{i \in [M]} \norm{O_i}_{\infty}^2 / \epsilon^2 \right)$.
    The claim follows by considering the corresponding prediction algorithm (use the standard classical shadow when $M < n$, and use the above algorithm when $M \geq n$).
\end{proof}

\subsection{Optimality of Theorem~\ref{cor:shadow-bounded}}

Here we prove the following lower bound on the sample complexity of shadow tomography for bounded-degree observables demonstrating that Theorem~\ref{cor:shadow-bounded} is optimal.
The optimality holds even when we considered collective measurement procedure on many copies of $\rho$.
This is in stark contrast to other sets of observables, such as the collection of high-weight Pauli observables, where single-copy measurements (e.g., classical shadow tomography) require exponentially more copies than collective measurements.

\begin{theorem}[Lower bound for predicting bounded-degree observables] \label{thm:shadow_lbd}
    Consider the following task.
    Given any unknown $n$-qubit state $\rho$ and any $M$ observables $O_1,\ldots,O_M$ with $B_\infty = \max_{i} \norm{O_i}$.
    Each observable $O_i$ is a sum of few-body observables $O_i = \sum_{j} O_{ij}$, where every qubit is acted on by a constant number of the few-body observables $O_{ij}$.
    We would like to estimate $\Tr(O_i\rho)$ to $\epsilon$ error for all $i\in[M]$ with high probability by performing arbitrary collective measurements on $N$ copies of $\rho$.
    The number of copies needs to be at least
    \begin{equation}
        N = \Omega\left( \frac{\log\big( \min( M, n) \big) B_\infty^2}{\epsilon^2} \right),
    \end{equation}
    for any algorithm to succeed in this task.
\end{theorem}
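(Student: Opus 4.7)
The plan is to reduce the problem to a multi-hypothesis distinguishing task and then apply a quantum information-theoretic lower bound. Set $m \triangleq \min(M,n)$ and assume $\epsilon \le B_\infty/3$ (outside this regime the claim is trivial). As part of the hard instance, choose $O_i = B_\infty Z_i$ for $i \in [m]$ (these are single-qubit Paulis up to scaling, hence trivially bounded-degree, with spectral norm exactly $B_\infty$), padding with arbitrary bounded-degree observables of spectral norm at most $B_\infty$ if $M > m$. The hard ensemble consists of $m+1$ product hypothesis states
\[
    \rho_0 = \frac{I}{2^n}, \qquad \rho_i = \frac{I + (3\epsilon/B_\infty)\, Z_i}{2} \otimes \left(\frac{I}{2}\right)^{\otimes (n-1)} \quad \text{for } i \in [m],
\]
where the tilted single-qubit factor of $\rho_i$ sits on position $i$. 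A direct computation gives $\Tr(O_j \rho_0) = 0$ and $\Tr(O_j \rho_i) = 3\epsilon\,\delta_{ij}$ for $j \in [m]$, so any algorithm producing $\epsilon$-accurate estimates of all $\Tr(O_j \rho)$ with success probability at least $2/3$ can identify the true hypothesis index by thresholding on the first $m$ coordinates.

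Next, I will lower-bound the copy complexity of this identification task. Let $I$ be uniform on $\{0,1,\ldots,m\}$ and $\hat I$ the algorithm's guess. Fano's inequality together with $\Pr[\hat I = I] \ge 2/3$ yields $I(I;\hat I) = \Omega(\log m)$. For any (possibly collective and adaptive) measurement on $N$ copies, the data-processing inequality and Holevo's theorem give
\[
    I(I;\hat I) \,\le\, \chi\bigl(\{\rho_i^{\otimes N}\}\bigr) \,\le\, \frac{1}{m+1}\sum_{i=0}^{m} S\bigl(\rho_i^{\otimes N} \,\big\|\, \rho_0^{\otimes N}\bigr) \,=\, \frac{N}{m+1}\sum_{i=0}^{m} S(\rho_i \,\|\, \rho_0),
\]
where I use the variational characterization of the Holevo information (with reference state $\rho_0^{\otimes N}$) and additivity of relative entropy under tensor products. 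A one-qubit binary-entropy expansion around $\eta = 3\epsilon/B_\infty$ gives $S(\rho_i \,\|\, \rho_0) = \tfrac{1}{2}\eta^2 + O(\eta^4) = O(\epsilon^2/B_\infty^2)$. Chaining the two inequalities yields $N = \Omega\bigl(\log m \cdot B_\infty^2/\epsilon^2\bigr)$, which is exactly the desired bound once we recall $m = \min(M,n)$.

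The main technical hurdle is that the algorithm may apply arbitrary collective (and adaptive) measurements across the $N$ copies; this is precisely what Holevo's theorem is designed to handle, and the tensor-product structure of the hypothesis ensemble ensures that the relative-entropy upper bound on $\chi$ factorizes cleanly into $N$ times a single-copy quantity. A secondary point is verifying the bounded-degree assumption: each $O_i = B_\infty Z_i$ acts non-trivially on a single qubit, so the degree condition holds trivially; the $M - m$ padding observables (only relevant when $M > n$) can be chosen arbitrarily subject to the spectral-norm constraint because the hypothesis states are distinguished using only the first $m$ observables. A minor degenerate case is $m \le 2$, where the Fano step is vacuous but the stated bound $\Omega(B_\infty^2/\epsilon^2)$ follows from the standard two-point Helstrom/Fuchs-van de Graaf argument applied to $\rho_0$ and $\rho_1$.
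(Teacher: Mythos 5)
Your proposal is correct, and it uses essentially the same hard instance as the paper (states of the form $\frac{1}{2^n}(I + \eta Z_i)$ distinguished via the single observables $B_\infty Z_i$), but the information-theoretic engine is genuinely different. The paper reduces to a \emph{binary detection} problem (maximally mixed versus a uniform mixture of the $\rho^i$), observes that all hypothesis states are diagonal so that collective measurements reduce without loss of generality to classical post-processing of computational-basis outcomes, and then applies Le Cam's two-point method with an explicit chi-squared computation for the mixture $\E_i[(\pi^i)^{\otimes N}]$ against the uniform distribution; the $\log(\min(M,n))$ factor emerges from the $1/n'$ prefactor in $\chi^2 \le \frac{1}{n'}((1+\epsilon^2)^N - 1)$. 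You instead treat the \emph{identification} problem over $m+1$ hypotheses and chain Fano's inequality with the Holevo bound, the variational upper bound $\chi \le \frac{1}{m+1}\sum_i S(\rho_i^{\otimes N}\|\rho_0^{\otimes N})$, and additivity of relative entropy; the $\log m$ factor comes from Fano. Your route avoids the diagonalization/classicalization step entirely (Holevo handles arbitrary collective POVMs directly) and is arguably more modular, while the paper's route proves the slightly stronger statement that even the easier detection task is hard and avoids any appeal to quantum channel-coding machinery. All the individual steps you invoke check out: the single-copy relative entropy is $\frac{1+\eta}{2}\log(1+\eta)+\frac{1-\eta}{2}\log(1-\eta) \le (\log 2)\,\eta^2 = O(\epsilon^2/B_\infty^2)$ uniformly for $\eta = 3\epsilon/B_\infty \le 1$, the reference-state upper bound on $\chi$ is the standard identity $\sum_i p_i S(\sigma_i\|\sigma) = \chi + S(\bar\sigma\|\sigma)$, and your handling of the degenerate case $m = O(1)$ via a two-point argument is appropriate.
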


To show Theorem~\ref{thm:shadow_lbd}, we show a lower bound for the following \emph{distinguishing task}, from which the lower bound for shadow tomography will follow readily. Given $i\in[n]$, let $P_i$ denote the $n$-body Pauli operator that acts as $Z$ on the $i$-th qubit and trivially elsewhere, and define the mixed state
\begin{equation}
    \rho^i \triangleq \frac{1}{2^n}\left(I + \frac{\epsilon}{B_\infty} \cdot P_i \right).
\end{equation}
We will show a lower bound for distinguishing whether $\rho$ is maximally mixed or of the form $\rho^i$ for some $i$.

\begin{lemma}[Lower bound for a distinguishing task] \label{lem:rhoit_lbd}
    Let $0 \le \epsilon \le 1$ and $\delta \ge 2\epsilon$. Let $\mathcal{A}$ be an algorithm that, given access to $N$ copies of a mixed state $\rho$ which is either the maximally mixed state or $\rho^{i}$ for some $i\in[\min(M,n)]$, correctly determines whether or not $\rho$ is maximally mixed with probability at least $3/4$. Then $N = \Omega(\log(\min(M, n)) B_\infty^2 /\epsilon^2)$.
\end{lemma}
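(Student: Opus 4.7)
The plan is to reduce the quantum distinguishing task to a classical sparse detection problem and then apply the Ingster--Suslina $\chi^2$-method. The critical observation is that the Pauli operators $P_i$, each a single-qubit $Z$ supported on a different qubit, pairwise commute, so every $\rho^i$ together with $I/2^n$ is simultaneously diagonal in the computational basis, and the same holds for all $N$-fold tensor powers. By the data-processing inequality, any collective POVM on $N$ copies can be simulated by first measuring all $nN$ qubits in the computational basis and then post-processing the outcomes classically. The problem therefore reduces to deciding, from $N$ i.i.d.\ draws of a bit string $X \in \{0,1\}^n$, whether $X \sim P^0 \triangleq \mathrm{Unif}(\{0,1\}^n)$ or $X \sim P^j$ for some $j \in [\min(M,n)]$, where $P^j$ is uniform in every coordinate except that coordinate $j$ is biased as $\Pr_{P^j}[X_j = b] = (1 + (\epsilon/B_\infty)(-1)^b)/2$.

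Next I would apply the standard Ingster--Suslina averaging argument. Let $\phi$ denote the algorithm's decision rule as a function of the $N$-sample outcome. The success hypothesis implies $\Pr_{P^j_N}[\phi = 1] \ge 3/4$ for every $j$ and $\Pr_{P^0_N}[\phi = 0] \ge 3/4$. Averaging over a uniformly random alternative, the mixture $\bar P_N \triangleq \frac{1}{\min(M,n)} \sum_j P^j_N$ likewise satisfies $\Pr_{\bar P_N}[\phi = 1] \ge 3/4$, so $\mathrm{TV}(\bar P_N, P^0_N) \ge 1/2$, whence $\chi^2(\bar P_N \,\|\, P^0_N) \ge 4\,\mathrm{TV}(\bar P_N, P^0_N)^2 \ge 1$ by the Cauchy--Schwarz comparison of total variation with $\chi^2$.

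The final step is an explicit calculation of $\chi^2(\bar P_N \,\|\, P^0_N)$. The single-sample likelihood ratio $L_j(x) = 1 + (\epsilon/B_\infty)(-1)^{x_j}$ satisfies $\E_{x \sim P^0}[L_j(x)\, L_{j'}(x)] = 1 + (\epsilon/B_\infty)^2 \, \delta_{jj'}$; multiplying across $N$ independent samples and then averaging over uniform $j, j' \in [\min(M,n)]$ yields
\begin{equation}
\chi^2(\bar P_N \,\|\, P^0_N) = \frac{1}{\min(M,n)} \bigl[(1 + (\epsilon/B_\infty)^2)^N - 1\bigr].
\end{equation}
Combining $\chi^2 \ge 1$ with $\log(1+x) \le x$ forces $N(\epsilon/B_\infty)^2 \ge \log(\min(M,n) + 1)$, giving the desired lower bound. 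The main subtlety is really just justifying the reduction to a classical problem; the commutativity of the Pauli-$Z$ operators $P_i$ makes this step automatic, whereas for a variant with non-commuting observables one would need a genuine quantum $\chi^2$-divergence and the operator Cauchy--Schwarz inequality.
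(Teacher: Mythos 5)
Your proposal is correct and follows essentially the same route as the paper: reduce to a classical testing problem using the fact that all hypotheses are simultaneously diagonal in the computational basis, then lower-bound the sample complexity via the two-point method applied to the uniform mixture of alternatives, with the identical $\chi^2$ computation $\chi^2(\bar P_N\,\|\,P^0_N) = \tfrac{1}{\min(M,n)}\bigl[(1+(\epsilon/B_\infty)^2)^N-1\bigr]$. The only cosmetic differences are that you name the averaging step after Ingster--Suslina where the paper cites Le Cam, and you carry the $B_\infty$ normalization explicitly through the likelihood ratios (the paper's write-up drops it and reinstates it only in the final bound).
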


\begin{proof}[Proof of Theorem~\ref{thm:shadow_lbd}]
    Let $\calA$ be an algorithm that solves the task in Theorem~\ref{thm:shadow_lbd} to error $\epsilon/3$. We can use this to give an algorithm for the task in Lemma~\ref{lem:rhoit_lbd}: applying $\calA$ to the following $\min(M,n)$ observables,
    \begin{equation}
        O_1 \triangleq B_\infty P_1, \quad \ldots, \quad O_{\min(M,n)} \triangleq B_\infty P_{\min(M,n)},
    \end{equation}
    we can produce $\epsilon/3$-accurate estimates for $\Tr(\rho P_j)$ for all $j\in[\min(M,n)]$. Note that if $\rho$ is maximally mixed, $\Tr(\rho O_j) = 0$ for all $j$, whereas if $\rho = \rho^i$, then $\Tr(\rho O_j) = \epsilon\cdot \mathds{1}[i = j]$. In particular, by checking whether there is a $j$ for which $\Tr(\rho P_j) > 2\epsilon/3$, we can determine whether $\rho$ is maximally mixed or equal to some $\rho^i$. The lower bound in Lemma~\ref{lem:rhoit_lbd} thus implies the lower bound in Theorem~\ref{thm:shadow_lbd}.
\end{proof}

For convenience, define $n' \triangleq \min(M,n)$. Note that for any $i\in[n]$, $(\rho^i)^{\otimes N}$ is diagonal, so we can assume without loss of generality that $\mathcal{A}$ simply makes $N$ independent measurements in the computational basis. Proving Lemma~\ref{lem:rhoit_lbd} thus amounts to showing a lower bound for a classical distribution testing task.

Note that the distribution $\pi^{i}$ over outcomes of a single measurement of $\rho^{i}$ in the computational basis places
\begin{equation}
    \frac{1 + (-1)^{x_i} \epsilon}{2^n}
\end{equation} mass on each string $x\in\{0,1\}^n$. The distribution $\pi$ over outcomes of a single measurement of the maximally mixed state in the computational basis is uniform over all strings $x\in\{0,1\}^n$. The following basic result in binary hypothesis testing lets us reduce proving Lemma~\ref{lem:rhoit_lbd} to upper bounding
\begin{equation}
    d_{\rm TV}\left(\E_{i}[(\pi^{i})^{\otimes N}], \pi^{\otimes N}\right). \label{eq:mixture_tv}
\end{equation}

\begin{lemma}[Le Cam's two-point method \cite{LeCam2pt}] \label{lem:lecam}
    Let $p_0,p_1$ be distributions over a domain $\Omega$ for which there exists a distribution $D$ such that $d_{\rm TV}(p_0,p_1) < 1/3$. Then there is no algorithm $\calA$ that maps elements of $\Omega$ to $\{0,1\}$ for which $\Pr_{x\sim p_i}[\calA(x) = i]\ge 2/3$ for both $i = 0,1$.
\end{lemma}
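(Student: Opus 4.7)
The plan is to prove Le Cam's two-point method by direct contradiction: if a successful distinguisher existed, it would witness too large a total variation distance between $p_0$ and $p_1$, contradicting the hypothesis. This exploits the variational characterization of total variation distance as the supremum over events of the gap in probabilities assigned by the two distributions. (The mention of an auxiliary distribution $D$ in the statement appears to be vestigial — only the TV distance between $p_0$ and $p_1$ matters — so I would simply proceed from $d_{\rm TV}(p_0,p_1) < 1/3$.)

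Concretely, suppose toward contradiction that such an algorithm $\calA$ exists, possibly randomized. First I would reduce to the deterministic case: by averaging over the internal randomness of $\calA$, there is a deterministic decision rule achieving success probability at least $2/3$ on each of $p_0$ and $p_1$. A deterministic rule is equivalent to a subset $S \subseteq \Omega$ with $\calA(x) = 1$ iff $x \in S$. Then the success conditions read $p_1(S) \ge 2/3$ and $p_0(S^c) \ge 2/3$, i.e.\ $p_0(S) \le 1/3$.

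Next I would invoke the standard variational identity
\begin{equation}
d_{\rm TV}(p_0,p_1) = \sup_{T \subseteq \Omega} |p_0(T) - p_1(T)|,
\end{equation}
which follows directly from the definition $d_{\rm TV}(p_0,p_1) = \tfrac{1}{2}\sum_{x\in\Omega}|p_0(x)-p_1(x)|$ by taking $T = \{x : p_1(x) \ge p_0(x)\}$. Applying this with $T = S$ gives
\begin{equation}
d_{\rm TV}(p_0,p_1) \ge p_1(S) - p_0(S) \ge 2/3 - 1/3 = 1/3,
\end{equation}
contradicting the hypothesis $d_{\rm TV}(p_0,p_1) < 1/3$. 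Hence no such $\calA$ can exist.

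There is essentially no obstacle here — the lemma is a textbook consequence of the definition of TV distance — but the one step worth stating carefully is the reduction from randomized to deterministic algorithms, since without it the ``event'' $S$ is not well-defined. This is handled by a standard averaging argument: writing the success probability as an expectation over $\calA$'s random seed, there must be some fixing of the seed (in fact the same fixing need not work simultaneously for both $i=0,1$ in general, but here one can instead directly define $S = \{x : \Pr[\calA(x)=1] \ge 1/2\}$ or simply work with the induced sub-probability measures $q_i(T) \triangleq \Pr_{x\sim p_i}[\calA(x) \in T]$ on $\{0,1\}$ and apply $d_{\rm TV}(q_0,q_1) \le d_{\rm TV}(p_0,p_1)$ by the data-processing inequality for TV distance).
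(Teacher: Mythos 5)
Your proof is correct, but note that the paper does not actually prove this lemma: it is stated as a standard fact and attributed to the reference \cite{LeCam2pt}, so there is no in-paper argument to compare against. What you give is the textbook argument, and it works. Two small remarks. First, you are right that the clause ``there exists a distribution $D$'' in the statement is vestigial; only $d_{\rm TV}(p_0,p_1)<1/3$ is used. Second, regarding the reduction from randomized to deterministic algorithms: your initial phrasing (``by averaging over the internal randomness there is a deterministic rule achieving $2/3$ on each of $p_0$ and $p_1$'') is not quite right as stated, since a single fixing of the seed need not succeed for both hypotheses simultaneously, and you correctly flag this yourself. Of the two patches you propose, the thresholding set $S=\{x:\Pr[\calA(x)=1]\ge 1/2\}$ does not obviously recover the $1/3$ gap (Markov-type bounds in the two directions give $p_1(S)\ge 1/3$ and $p_0(S)\le 2/3$, which is vacuous), but the data-processing route is airtight: letting $q_i$ be the law of $\calA(x)$ for $x\sim p_i$, one has $d_{\rm TV}(q_0,q_1)\le d_{\rm TV}(p_0,p_1)$ while $q_1(\{1\})-q_0(\{1\})\ge 1/3$. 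Equivalently, apply the variational characterization of total variation over $[0,1]$-valued test functions directly to $g(x)=\Pr[\calA(x)=1]$, which gives $1/3\le \E_{p_1}[g]-\E_{p_0}[g]\le d_{\rm TV}(p_0,p_1)$ with no reduction to deterministic rules needed.
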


\begin{proof}[Proof of Lemma~\ref{lem:rhoit_lbd}]
    To bound the expression in Eq.~\eqref{eq:mixture_tv}, it suffices to bound the chi-squared divergence $\chi^2(\E_{i}[(\pi^{i})^{\otimes N}] \| \pi^{\otimes N})$ because for any distributions $p,q$, we have $d_{\rm TV}(p,q) \le 2\sqrt{\chi^2(p\|q)}$.
    For convenience, let us define the likelihood ratio perturbation
    \begin{equation}
        \eta^{i}(x) \triangleq \frac{\mathrm{d} \pi^{i}}{\mathrm{d} \pi}(x) - 1 = (-1)^{x_i} \epsilon
    \end{equation}
    and observe that for any $i,j\in[n]$,
    \begin{equation}
        \E_{x\sim\pi}[\eta^{i}(x) \cdot \eta^{j}(x)] = \epsilon^2 \cdot \mathds{1}[i=j].
    \end{equation}
    Also given strings $x^1,\ldots,x^N\in\{0,1\}^n$ and $S\subseteq[N]$, denote
    \begin{equation}
        \eta^{i}(x^S) \triangleq \prod_{j\in S} \eta^{i}(x_j).
    \end{equation}
    We then have the standard calculation, see e.g. \cite[Lemma 22.1]{wu2017lecture}:
    \begin{align}
        1 + \chi^2\left(\E_{i\sim[n']}[(\pi^{i})^{\otimes N}] \| \pi^{\otimes N} \right) &= \E_{x^1,\ldots,x^N\sim\pi^{\otimes N}}\left[\E_{i\sim[n']}\left[\prod^N_{j=1} (1 + \eta^{i,t}(x^j))\right]^2\right] \\
        &= \E_{i,i'\sim[n']}\left[\E_{x^1,\ldots,x^N\sim\pi^{\otimes N}} \left[\sum_{S,T\subseteq[N]} \eta^{i}(x^S)\eta^{i'}(x^T)\right]\right] \\
        &= \E_{i,i'\sim[n']}\left[\E_{x^1,\ldots,x^N\sim\pi^{\otimes N}} \left[\sum_{S\subseteq[N]} \eta^i(x^S)\eta^{i'}(x^S)\right]\right] \\
        &= \E_{i,i'\sim[n']}\left[\E_{x^1,\ldots,x^N\sim\pi^{\otimes N}} \left[\prod^N_{j=1} (1 + \eta^{i}(x^j)\eta^{i'}(x^j))\right]\right] \\
        &= \E_{i,i'\sim[n']}\left[(1 + \E_{x\sim\pi}[\eta^{i}(x)\eta^{i'}(x)])^N\right] \\
        &= \frac{1}{n'}(1 + \epsilon^2)^N + \frac{n'-1}{n'}
    \end{align}
    We conclude that
    \begin{equation}
        \chi^2(\E_{i\sim[n']}[(\pi^{i})^{\otimes N}] \| \pi^{\otimes N}) \le \frac{1}{n'}((1 + \epsilon^2)^N - 1),
    \end{equation}
    so for $N = c\log(n')/\epsilon^2$ for sufficiently small constant $c > 0$, this quantity is less than $1/3$. By applying Lemma~\ref{lem:lecam} to $p_0 = \pi^{\otimes N}$ and $p_1 = \E_{i\sim[n']}[(\pi^i)^{\otimes N}]$, we obtain the claimed lower bound.
\end{proof}

\section{Learning to predict an unknown observable}
\label{sec:learn-unk-obs}

We begin with a definition of invariance for distribution over quantum states.

\begin{definition}[Invariance under a unitary]
    A probability distribution $\mathcal{D}$ over quantum states is invariant under a unitary $U$ if the probability density remains unchanged after the action of $U$, i.e.,
    \begin{equation}
        f_{\mathcal{D}}(\rho) = f_{\mathcal{D}}(U \rho U^\dagger)
    \end{equation}
    for any state $\rho$.
\end{definition}

In this section, we will utilize the norm inequalities in Section~\ref{sec:norm-ineq-Pauli} to give a learning algorithm that achieves the following guarantee.
The learning algorithm can learn any unknown $n$-qubit observable $O^{\mathrm{(unk)}}$ even if the scale $\norm{O^{\mathrm{(unk)}}}$ is unknown.
The mean squared error $ \E_{\rho \sim \mathcal{D}} \left| h(\rho) - \Tr\left(O^{\mathrm{(unk)}} \rho\right) \right|^2$ scales quadratically with the scale of the unknown observable $O^{\mathrm{(unk)}}$.
We can see that the sample complexity $N$ has a quasi-polynomial dependence on the error $\epsilon, \epsilon'$ relative to the scale of the unknown observable $O^{\mathrm{(unk)}}$, and only depends on the system size $n$ and the failure probability $\delta$ logarithmically.

\begin{theorem}[Learning to predict an unknown observable] \label{thm:main-learning}
    Given $n, \epsilon, \epsilon', \delta > 0$.
    Consider any unknown $n$-qubit observable $O^{\mathrm{(unk)}} = \sum_{P} \alpha_P P$ and any unknown $n$-qubit state distribution $\mathcal{D}$ that is invariant under single-qubit $H$ and $S$ gates.
    Given training data $\{ \rho_\ell, \Tr(O^{\mathrm{(unk)}} \rho_\ell)) \}_{\ell=1}^N$ of size
    \begin{equation}
        N = \log\left(\frac{n}{\delta}\right) \, \min\left( 2^{\mathcal{O}\left(\log(\frac{1}{\epsilon}) \left( \log\log(\frac{1}{\epsilon}) +  \log(\frac{1}{\epsilon'})\right)\right)}, 2^{\mathcal{O}(\log(\frac{1}{\epsilon}) \log(n))} \right).
    \end{equation}
    Let $k = \lceil \log_{1.5}(1/\epsilon) \rceil$, $O^{\mathrm{(low)}} = \sum_{|P| \leq k} \alpha_P P$ be the low-degree approximation of $O^{\mathrm{(unk)}}$, and $r = \tfrac{2k}{k + 1} \in [1, 2)$.
    The algorithm can learn a function $h(\rho) = \max(-\hat{\Theta}, \min(\hat{\Theta}, \Tr(\hat{O} \rho)))$ for an observable $\hat{O}$ and a real number~$\hat{\Theta}$ that achieves a prediction error
    \begin{equation}
        \E_{\rho \sim \mathcal{D}} \left| h(\rho) - \Tr\left(O^{\mathrm{(unk)}} \rho\right) \right|^2 \leq
        \left( \epsilon + \epsilon'  \left[ \, 1 + \left(\tfrac{\norm{O^{\mathrm{(low)}}}}{\norm{O^{\mathrm{(unk)}}}}\right)^r \, \right] \right) \norm{O^{\mathrm{(unk)}}}^2 \label{eq:main-learn-bound}
    \end{equation}
    with probability at least $1 - \delta$.
\end{theorem}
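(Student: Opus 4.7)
The strategy is to reduce learning $O^{(\mathrm{unk})}$ to learning only the few ``heavy'' low-weight Pauli coefficients of its weight-$k$ truncation $O^{(\mathrm{low})} = \sum_{|P|\le k}\alpha_P P$, leveraging the norm inequality in Corollary~\ref{cor:any-norm}. First I would show that the low-weight truncation is a good surrogate for predictions. Using invariance of $\mathcal{D}$ under single-qubit $H$ and $S$ gates, one can argue (through a modified-purity computation on subsystems, of the flavor alluded to in Section~\ref{sec:idea-proof}) that the contribution of a Pauli term $P$ to $\E_{\rho\sim\mathcal{D}}|\Tr(O\rho)|^2$ decays exponentially in the weight $|P|$. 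Choosing $k = \lceil \log_{1.5}(1/\epsilon)\rceil$ then yields
\[
\E_{\rho\sim\mathcal{D}}\left|\Tr\!\left(\left(O^{(\mathrm{unk})} - O^{(\mathrm{low})}\right)\rho\right)\right|^2 \le \epsilon\,\norm{O^{(\mathrm{unk})}}^2,
\]
which accounts for the additive $\epsilon\norm{O^{(\mathrm{unk})}}^2$ in Eq.~\eqref{eq:main-learn-bound}.

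Next, I would exploit sparsity of the low-degree coefficient vector. By Corollary~\ref{cor:any-norm} with $r = 2k/(k+1)\in[1,2)$, one has $\norm{O^{(\mathrm{low})}}_{\mathrm{Pauli},r}\lesssim \norm{O^{(\mathrm{low})}}$. Since the coefficient vector sits in an $\ell_r$-ball of this radius, replacing $\alpha_P$ by $0$ whenever $|\alpha_P|\le\tilde\epsilon$ changes the vector in $\ell_2$ by at most $\tilde\epsilon^{(2-r)/2}\norm{O^{(\mathrm{low})}}^{r/2}$, exactly as in the toy calculation of Section~\ref{sec:idea-proof}. Passing through the $\ell_2$-to-MSE relation of Appendix~\ref{sec:low-deg-MSE} converts this into the $\epsilon'\norm{O^{(\mathrm{low})}}^r\norm{O^{(\mathrm{unk})}}^{2-r}$ term appearing in the bound. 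This step tells us it suffices to correctly recover the (few) Pauli coefficients whose true magnitude exceeds a threshold of order $\sqrt{\tilde\epsilon}\cdot(\mathrm{norm})$.

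The third step is empirical estimation with thresholding, in analogy to Eqs.~\eqref{eq:hatxP}--\eqref{eq:hatalphaP}. For each $|P|\le k$ I would build an unbiased estimator $\hat\alpha_P$ of $\alpha_P$ from the $N$ training pairs $(\rho_\ell,\Tr(O^{(\mathrm{unk})}\rho_\ell))$; invariance under single-qubit Clifford gates makes the shadow-style formula in Eq.~\eqref{eq:hatxP} unbiased with variance scaling as $3^{|P|}$. The thresholding rule in Eq.~\eqref{eq:hatalphaP} keeps only entries whose empirical estimate clearly exceeds the noise level, zeroing out the rest. A Hoeffding bound together with a union bound over all $O(n^k)$ strings of weight $\le k$ gives the dimension-dependent branch $2^{\mathcal{O}(\log(1/\epsilon)\log n)}$ of the sample complexity; combining concentration with the $\ell_r$-ball upper bound $|\{P:|\alpha_P|>\tilde\epsilon\}|\le\norm{\alpha}_r^r/\tilde\epsilon^r$ (so the number of truly heavy coefficients is dimension-free) yields the other branch $2^{\mathcal{O}(\log(1/\epsilon)(\log\log(1/\epsilon)+\log(1/\epsilon')))}$. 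Finally, I would clip the prediction to $[-\hat\Theta,\hat\Theta]$ with $\hat\Theta$ a data-driven estimate of $\norm{O^{(\mathrm{unk})}}$; since the ground truth lies in this interval, clipping can only decrease the squared error.

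The main obstacle I expect is two-fold. Conceptually, one must prove the exponential per-Pauli decay of $\E_{\rho\sim\mathcal{D}}|\Tr(O\rho)|^2$ from only single-qubit Clifford invariance; this requires a careful characterization of the ``Pauli mass'' that $\mathcal{D}$-samples place on each weight class. Technically, the delicate piece is the parameter balancing that produces the minimum in $N$ and the precise error shape $(\epsilon+\epsilon'[1+(\norm{O^{(\mathrm{low})}}/\norm{O^{(\mathrm{unk})}})^r])\norm{O^{(\mathrm{unk})}}^2$: the threshold $\tilde\epsilon$, the sparsity bound from Corollary~\ref{cor:any-norm}, and the concentration failure probability must be traded off so that both the dimension-dependent and dimension-independent regimes drop out cleanly from a single thresholding procedure.
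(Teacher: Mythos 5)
Your outline tracks the paper's proof at a high level (truncation controlled by the non-identity-purity MSE formula, the $\ell_r$ bound from Corollary~\ref{cor:any-norm}, thresholding of small coefficients, and the two-branch analysis giving the minimum in $N$), but it glosses over two points on which the argument in Appendix~\ref{sec:learn-unk-obs} spends most of its machinery, and both are genuine gaps as written.

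First, for a general locally flat $\mathcal{D}$ the quantity you can actually estimate from the training pairs is not $\alpha_P$ but $x_P = \beta_P\alpha_P$ with $\beta_P = (2/3)^{|P|}\,\E_{\rho\sim\mathcal{D}}[\gamma^\star(\rho_{\mathsf{dom}(P)})]$ (Lemma~\ref{lem:extract-Pauli}); the weight factor $\beta_P$ depends on the unknown distribution and can be arbitrarily close to zero (e.g.\ when the relevant marginals of $\mathcal{D}$ are nearly maximally mixed), so no number of samples makes $\hat\alpha_P = \hat x_P/\hat\beta_P$ accurate. Your claim that the estimator is ``unbiased with variance scaling as $3^{|P|}$'' holds for the product-stabilizer distribution $\mathcal{D}^0$, not for the arbitrary $\mathcal{D}$ of the theorem. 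The paper's fix (Lemmas~\ref{lem:filter-smallweight} and~\ref{lem:filtering-full}) is to also estimate $\hat\beta_P$, discard Pauli terms with $\hat\beta_P\le 2\tilde\epsilon$, and exploit that the same $\beta_P$ reappears as the weight multiplying $|\hat\alpha_P-\alpha_P|^2$ in the MSE decomposition of Lemma~\ref{lem:mse-D}, so the coefficients you cannot estimate also cannot contribute to the error. Without this, your concentration-plus-union-bound step does not close for general $\mathcal{D}$.

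Second, your threshold ``$\sqrt{\tilde\epsilon}\cdot(\mathrm{norm})$'' presupposes knowledge of $\norm{O^{(\mathrm{unk})}}$, which the theorem does not grant; the sample maximum $\hat\Theta=\max_\ell|y_\ell|$ only lower-bounds the norm and may be far from it. The paper resolves this by running the filter over a logarithmic grid of candidate scales $\eta\in\{2^{j}\hat\Theta\}_{0\le j\le R}$ and selecting $\eta^*$ on a held-out validation split; that model-selection step, together with the sample-maximum lemma guaranteeing $\Pr_{\rho\sim\mathcal{D}}[|\Tr(O^{(\mathrm{unk})}\rho)|>\hat\Theta]\le\epsilon'/12$, is what makes the clipped predictor sound. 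Relatedly, ``clipping can only decrease the squared error'' is not quite right: the ground truth lies in $[-\hat\Theta,\hat\Theta]$ only with probability $1-\epsilon'/12$, and the complementary event is exactly what contributes one of the $\epsilon'\norm{O^{(\mathrm{unk})}}^2$ terms in Eq.~\eqref{eq:main-learn-bound}.
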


\subsection{Low-degree approximation under mean squared error}
\label{sec:low-deg-MSE}

In order to characterize the mean squared error $\E_{\rho \sim \mathcal{D}} \Tr(O_1 \rho) - \Tr(O_2) \rho$ between two observables $O_1, O_2$, we need the following definition of a modified purity for quantum states.

\begin{definition}[Non-identity purity] \label{def:nonI-purity}
    Given a $k$-qubit state $\rho$.
    The non-identity purity of $\rho$ is
    \begin{equation}
        \gamma^{\star}(\rho) \triangleq \frac{1}{2^k} \sum_{Q \in \{X, Y, Z\}^{\otimes k}} \Tr(Q \rho)^2.
    \end{equation}
    Non-identity purity is bounded by purity, $\gamma^{\star}(\rho) \leq \gamma(\rho) = \Tr(\rho^2) = \frac{1}{2^k} \sum_{Q \in \{I, X, Y, Z\}^{\otimes k}} \Tr(Q \rho)^2$.
\end{definition}

\begin{lemma}[Mean squared error] \label{lem:mse-D}
    Given two $n$-qubit observables $O_1, O_2$ with
    \begin{equation}
    O_1 - O_2 = \sum_{P \in \{I, X, Y, Z\}^{\otimes n}} \Delta\alpha_P P,
    \end{equation}
    and a distribution $\mathcal{D}$ over quantum states that is invariant under single-qubit $H$ and $S$ gates.
    We have
    \begin{equation}
        \E_{\rho \sim \mathcal{D}} \left| \Tr(O_1 \rho) - \Tr(O_2 \rho) \right|^2 = \sum_{P \in \{I, X, Y, Z\}^{\otimes n}} \E_{\rho \sim \mathcal{D}} \left[ \gamma^\star\left(\rho_{\mathsf{dom}(P)}\right) \right] \left(\frac{2}{3}\right)^{|P|} \left| \Delta\alpha_P \right|^2.
    \end{equation}
\end{lemma}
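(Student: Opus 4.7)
The plan is to expand the mean squared error in the Pauli basis, then use the single-qubit Clifford invariance of $\mathcal{D}$ to kill off-diagonal terms and to symmetrize the diagonal terms. Writing $O_1 - O_2 = \sum_P \Delta\alpha_P P$, we have
\begin{equation*}
  \E_{\rho\sim\mathcal{D}}|\Tr((O_1-O_2)\rho)|^2 \;=\; \sum_{P,Q} \Delta\alpha_P\,\Delta\alpha_Q \,\beta_{P,Q},\qquad \beta_{P,Q} \triangleq \E_{\rho\sim\mathcal{D}}[\Tr(P\rho)\Tr(Q\rho)].
\end{equation*}
Since the single-qubit gates $H$ and $S$ generate the whole single-qubit Clifford group (and in particular contain $X$ and $Z$), invariance of $\mathcal{D}$ under $\rho\mapsto U_i\rho U_i^\dagger$ gives $\beta_{P,Q} = \beta_{U_i^\dagger P U_i, U_i^\dagger Q U_i}$ for every qubit $i$ and every single-qubit Clifford $U_i$.

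To show $\beta_{P,Q}=0$ when $P\neq Q$, I would apply the above identity with $U_i = Z_i$ and with $U_i = X_i$ on any qubit $i$ where $P_i\neq Q_i$. Conjugation by $Z$ flips the sign of $X$ and $Y$ but fixes $I,Z$, so $\beta_{P,Q} = (-1)^{s_Z(P_i)+s_Z(Q_i)}\beta_{P,Q}$ with $s_Z(\sigma)=\mathbb{1}[\sigma\in\{X,Y\}]$; similarly conjugation by $X$ yields $s_X(\sigma)=\mathbb{1}[\sigma\in\{Y,Z\}]$. The pair $(s_Z,s_X)$ takes four distinct values on $\{I,X,Y,Z\}$, so $P_i\neq Q_i$ forces at least one of the two sign factors to be $-1$, and hence $\beta_{P,Q}=0$. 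Consequently the double sum collapses to $\sum_P |\Delta\alpha_P|^2 \,\E_\rho[\Tr(P\rho)^2]$.

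For the diagonal terms, note that $\Tr(P\rho) = \Tr(P_{\mathsf{dom}(P)}\,\rho_{\mathsf{dom}(P)})$, where $P_{\mathsf{dom}(P)}\in\{X,Y,Z\}^{\otimes|P|}$. Since single-qubit Cliffords act transitively (up to signs, which are squared away) on $\{X,Y,Z\}$, applying a random Clifford on each qubit of $\mathsf{dom}(P)$ shows
\begin{equation*}
  \E_{\rho}[\Tr(P\rho)^2] \;=\; \frac{1}{3^{|P|}}\sum_{Q\in\{X,Y,Z\}^{\otimes|P|}} \E_{\rho}[\Tr(Q\,\rho_{\mathsf{dom}(P)})^2] \;=\; \Bigl(\tfrac{2}{3}\Bigr)^{|P|}\E_{\rho}\!\left[\gamma^\star(\rho_{\mathsf{dom}(P)})\right],
\end{equation*}
where the last equality uses the definition $\gamma^\star(\sigma) = 2^{-|P|}\sum_{Q\in\{X,Y,Z\}^{\otimes|P|}}\Tr(Q\sigma)^2$. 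Summing over $P$ gives the claimed identity.

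The computation is essentially routine; the only subtlety is verifying that the subgroup generated by the available single-qubit gates is rich enough to realize both the sign-flip argument (needing $X$ and $Z$ conjugations) and the transitive permutation of $\{X,Y,Z\}$. The first follows because $S^2=Z$ and $HS^2H=X$ lie in the Clifford group generated by $H,S$; the second follows because $H$ swaps $X\leftrightarrow Z$ (up to sign on $Y$) and $S$ sends $X\to Y$, so together they realize all of $S_3$ on $\{X,Y,Z\}$ up to signs, which is all that is needed once everything is squared.
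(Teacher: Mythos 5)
Your proof is correct and follows essentially the same route as the paper: both exploit invariance under single-qubit Cliffords to kill the off-diagonal Pauli cross terms and to symmetrize each diagonal term over $\{X,Y,Z\}^{\otimes|P|}$, yielding the factor $(2/3)^{|P|}\E[\gamma^\star(\rho_{\mathsf{dom}(P)})]$. The only difference is presentational — the paper invokes the unitary $2$-design identity for the averaged conjugation $\E_{U_i}[U_i^\dagger P_i U_i \otimes U_i^\dagger Q_i U_i]$ in one step, whereas you derive the same two consequences by hand from the specific generators $Z$, $X$, and the transitive action on $\{X,Y,Z\}$, including the worthwhile check that $H$ and $S$ actually generate these elements.
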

\begin{proof}
    Consider $U_1, \ldots, U_n$ to be independent random single-qubit Clifford unitaries.
    Because $\mathcal{D}$ is invariant under single-qubit Hadamard and phase gates, $\mathcal{D}$ is invariant under any tensor product of single-qubit Clifford unitaries.
    This implies that the distribution of the random state $\rho$ is the same as the distribution of the random state $(U_1 \otimes \ldots \otimes U_n) \rho (U_1 \otimes \ldots \otimes U_n)^\dagger$.
    Using this fact, we expand the mean squared error as
    \begin{align}
        &\E_{\rho \sim \mathcal{D}} \left| \Tr(O_1 \rho) - \Tr(O_2 \rho) \right|^2\\
        &= \E_{\rho \sim \mathcal{D}} \E_{U_1, \ldots, U_n} \sum_{P, Q \in \{I, X, Y, Z\}^{\otimes n}} \Delta \alpha_P \Delta \alpha_Q \Tr\left(\left(\bigotimes_{i=1}^n U_i^\dagger P_i U_i \right) \otimes \left(\bigotimes_{i=1}^n U_i^\dagger Q_i U_i \right) (\rho \otimes \rho) \right).
    \end{align}
    Using the unitary $2$-design property of random Clifford unitary and $ \mathrm{SWAP} = \frac{1}{2} \sum_{P \in \{I, X, Y, Z\}} P \otimes P$, we have
    \begin{equation} \label{eq:2design-Pauli}
       \E_{U_i} \big[ U_i^\dagger P_i U_i \otimes U_i^\dagger Q_i U_i \big] =
       \begin{cases}
            I \otimes I, & P_i = Q_i = I, \\
            \frac{1}{3}\left(X \otimes X + Y \otimes Y + Z \otimes Z \right), & P_i = Q_i \neq I, \\
            0, & P_i \neq Q_i.
       \end{cases}
    \end{equation}
    We can now write the target value as
    \begin{equation} \label{eq:exp-Orho}
        \E_{\rho \sim \mathcal{D}} \left| \Tr(O_1 \rho) - \Tr(O_2 \rho) \right|^2 = \E_{\rho \sim \mathcal{D}} \sum_{P \in \{I, X, Y, Z\}^{\otimes n}} \frac{1}{3^{|P|}} \left|\Delta \alpha_P\right|^2 \sum_{Q \in \{X, Y, Z\}^{\otimes |P|}} \Tr(Q \rho_{\mathsf{dom}(P)})^2.
    \end{equation}
    The claim follows from Definition~\ref{def:nonI-purity} on non-identity purity $\gamma^\star$.
\end{proof}

The following lemma tells us that the mean absolute error can be upper bounded by the root mean squared error.
Hence, both the mean absolute error and the mean squared error are characterized by the $\ell_2$ distance between the Pauli coefficients (as well as the average non-identity purity).
Due to the following relation, we will focus on the mean squared error throughout the text.

\begin{lemma}[Mean absolute error] \label{lem:mae-D}
    Given two $n$-qubit observables $O_1, O_2$ with
    \begin{equation}
    O_1 - O_2 = \sum_{P \in \{I, X, Y, Z\}^{\otimes n}} \Delta\alpha_P P,
    \end{equation}
    and a distribution $\mathcal{D}$ over quantum states that is invariant under single-qubit $H$ and $S$ gates.
    We have
    \begin{equation}
        \E_{\rho \sim \mathcal{D}} \left| \Tr(O_1 \rho) - \Tr(O_2 \rho) \right| \leq \left(\sum_{P \in \{I, X, Y, Z\}^{\otimes n}} \E_{\rho \sim \mathcal{D}} \left[ \gamma^\star\left(\rho_{\mathsf{dom}(P)}\right) \right] \left(\frac{2}{3}\right)^{|P|} \left| \Delta\alpha_P \right|^2\right)^{1/2}.
    \end{equation}
\end{lemma}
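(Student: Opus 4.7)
The plan is to reduce this lemma to the mean squared error identity already established in Lemma~\ref{lem:mse-D} via a single application of Jensen's inequality (equivalently, the Cauchy--Schwarz inequality applied to the constant function $1$ and $|X|$ in $L^2(\mathcal{D})$).

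Concretely, setting $X(\rho) \triangleq \Tr(O_1 \rho) - \Tr(O_2 \rho)$, the map $\rho \mapsto |X(\rho)|$ is a nonnegative real-valued random variable under $\rho \sim \mathcal{D}$, so by Jensen,
\begin{equation}
\E_{\rho \sim \mathcal{D}} \bigl| X(\rho) \bigr| \;=\; \E_{\rho \sim \mathcal{D}} \sqrt{|X(\rho)|^2} \;\leq\; \sqrt{ \E_{\rho \sim \mathcal{D}} |X(\rho)|^2 }.
\end{equation}
Then I would invoke Lemma~\ref{lem:mse-D} directly to rewrite the right-hand side as
\begin{equation}
\sqrt{\, \sum_{P \in \{I,X,Y,Z\}^{\otimes n}} \E_{\rho \sim \mathcal{D}}\!\left[\gamma^\star\!\left(\rho_{\mathsf{dom}(P)}\right)\right] \left(\tfrac{2}{3}\right)^{|P|} |\Delta\alpha_P|^2 \,},
\end{equation}
which is exactly the claimed bound.

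There is no real obstacle here: the hypotheses (single-qubit $H, S$ invariance of $\mathcal{D}$, the Pauli expansion of $O_1 - O_2$) are used only through Lemma~\ref{lem:mse-D}, and the passage from the mean squared error to the mean absolute error is the elementary inequality $\E|X| \leq (\E X^2)^{1/2}$, valid for any real random variable in $L^2$. The only thing worth double-checking is that $X(\rho)$ is indeed square-integrable under $\mathcal{D}$, but this is immediate because $|X(\rho)| \leq \norm{O_1 - O_2}$ is uniformly bounded on states, so both sides are finite. Hence the proof reduces to one line following Lemma~\ref{lem:mse-D}.
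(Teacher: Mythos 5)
Your proposal is correct and matches the paper's own proof exactly: both apply Jensen's inequality to bound $\E|X|$ by $(\E X^2)^{1/2}$ and then substitute the identity from Lemma~\ref{lem:mse-D}. No issues.
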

\begin{proof}
Jensen's inequality gives
\begin{equation}
    \E_{\rho \sim \mathcal{D}} \left| \Tr(O_1 \rho) - \Tr(O_2 \rho) \right| \leq \left( \E_{\rho \sim \mathcal{D}} \left| \Tr(O_1 \rho) - \Tr(O_2 \rho) \right|^2 \right)^{1/2}.
\end{equation}
Combining with Lemma~\ref{lem:mse-D} yields the stated result.
\end{proof}

From Lemma~\ref{lem:mse-D}, we can construct a low-degree approximation by removing all high-weight Pauli terms for any observable $O$.
The approximation error decays exponentially with the weight of the Pauli terms.

\begin{corollary}[Low-degree approximation] \label{cor:low-deg-approx}
    Given an $n$-qubit observable $O = \sum_{P \in \{I, X, Y, Z\}^{\otimes n}} \alpha_P P$ and a distribution $\mathcal{D}$ over quantum states that is invariant under single-qubit $H$ and $S$ gates.
    For $k > 0$, consider $O^{(k)} = \sum_{P: |P| < k} \alpha_P P$. We have
    \begin{equation}
        \E_{\rho \sim \mathcal{D}} \left| \Tr(O \rho) - \Tr(O^{(k)} \rho) \right|^2 \leq \left(\frac{2}{3}\right)^k \norm{O}^2.
    \end{equation}
\end{corollary}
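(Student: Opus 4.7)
The plan is to apply Lemma~\ref{lem:mse-D} directly with $O_1 = O$ and $O_2 = O^{(k)}$, so that the Pauli-coefficient difference $\Delta\alpha_P$ equals $\alpha_P$ for $|P|\ge k$ and vanishes otherwise. This immediately gives the identity
\begin{equation}
    \E_{\rho \sim \mathcal{D}} \left| \Tr(O \rho) - \Tr(O^{(k)} \rho) \right|^2 = \sum_{P:\,|P|\ge k} \E_{\rho \sim \mathcal{D}} \!\left[ \gamma^\star(\rho_{\mathsf{dom}(P)}) \right] \left(\tfrac{2}{3}\right)^{|P|} |\alpha_P|^2.
\end{equation}

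From this point, the bound follows by three routine estimates. First, Definition~\ref{def:nonI-purity} says $\gamma^\star(\rho_{\mathsf{dom}(P)}) \le \gamma(\rho_{\mathsf{dom}(P)}) \le 1$, so each expectation factor is at most $1$. Second, since $|P|\ge k$ in every surviving term, $(2/3)^{|P|}\le (2/3)^k$, and this factor may be pulled out of the sum. Third, extending the resulting sum $\sum_{|P|\ge k}|\alpha_P|^2$ to all Paulis only enlarges it, giving $\sum_P|\alpha_P|^2 = \|O\|_{\mathrm{Pauli},2}^2$.

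Finally, I would invoke Proposition~\ref{prop:basic-Pauli2}, which states $\|O\|_{\mathrm{Pauli},2} \le \|O\|$, to replace the Pauli-$2$ norm by the spectral norm. Combining the three bounds yields
\begin{equation}
    \E_{\rho \sim \mathcal{D}} \left| \Tr(O \rho) - \Tr(O^{(k)} \rho) \right|^2 \le \left(\tfrac{2}{3}\right)^k \|O\|^2,
\end{equation}
as claimed. There is no real obstacle here: the corollary is essentially a bookkeeping consequence of Lemma~\ref{lem:mse-D}, with the exponential decay coming from the $(2/3)^{|P|}$ weight already built into that lemma, and the passage from Pauli-$2$ to spectral norm handled by Proposition~\ref{prop:basic-Pauli2}. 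The only subtle point worth flagging explicitly in the write-up is that the invariance hypothesis on $\mathcal{D}$ (under single-qubit $H$ and $S$ gates) is inherited directly from Lemma~\ref{lem:mse-D}, so no additional assumption is needed.
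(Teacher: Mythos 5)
Your proposal is correct and follows exactly the paper's own argument: apply Lemma~\ref{lem:mse-D} to $O - O^{(k)}$, bound $\gamma^\star \le 1$ and $(2/3)^{|P|} \le (2/3)^k$ on the surviving terms, extend the sum, and finish with Proposition~\ref{prop:basic-Pauli2}. No differences worth noting.
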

\begin{proof}
    Using Lemma~\ref{lem:mse-D} and the fact that $\gamma^\star(\varrho) \leq \gamma(\varrho) \leq 1$ for any state $\varrho$, we have
    \begin{equation}
        \E_{\rho \sim \mathcal{D}} \left| \Tr(O \rho) - \Tr(O^{(k)} \rho) \right|^2 \leq \sum_{P: |P| \geq k} \left(\frac{2}{3}\right)^{|P|} |\alpha_P|^2 \leq \left(\frac{2}{3}\right)^k \sum_{P} |\alpha_P|^2.
    \end{equation}
    The norm inequality given in Prop.~\ref{prop:basic-Pauli2} establishes the claim.
\end{proof}

\subsection{Tools for extracting and filtering Pauli coefficients}
\label{sec:Pauli-tools}

In order to learn the low-degree approximation of an arbitrary observable $O$, we need to be able to extract the relevant $\alpha_P$.
Furthermore, we will impose criteria for filtering out uninfluential Pauli observables $P$ to prevent them from increasing the noise and leading to a higher prediction error.

\subsubsection{Extracting Pauli coefficient}

\begin{lemma}[Extracting Pauli coefficient] \label{lem:extract-Pauli}
    Given an $n$-qubit observable $O = \sum_{P \in \{I, X, Y, Z\}^{\otimes n}} \alpha_P P$ and a distribution $\mathcal{D}$ over quantum states that is invariant under single-qubit $H$ and $S$ gates.
    For any Pauli observable $P \in \{I, X, Y, Z\}^{\otimes n}$,
    we have
    \begin{equation}
    \E_{\rho \sim \mathcal{D}} \Tr(O \rho) \Tr(P \rho) = \left(\frac{2}{3}\right)^{|P|} \alpha_P \E_{\rho \sim \mathcal{D}} \gamma^*(\rho_{\mathsf{dom}(P)}).
    \end{equation}
\end{lemma}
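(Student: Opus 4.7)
The plan is to adapt the argument used in Lemma~\ref{lem:mse-D} to extract a single Pauli coefficient rather than bound a squared difference. First I would expand $\Tr(O\rho) = \sum_Q \alpha_Q \Tr(Q\rho)$ and pull the sum out of the expectation, reducing the task to computing $\E_{\rho\sim\cD}[\Tr(Q\rho)\Tr(P\rho)]$ for each Pauli $Q$.

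Next I would exploit the invariance assumption. Since single-qubit Hadamard and phase gates generate the single-qubit Clifford group, $\cD$ is invariant under any tensor product $U = U_1\otimes\cdots\otimes U_n$ of independent single-qubit Cliffords. Hence I can write
\begin{equation}
\E_{\rho\sim\cD}\Tr(Q\rho)\Tr(P\rho) = \E_{\rho\sim\cD}\E_{U_1,\ldots,U_n}\Tr\bigl((U^\dagger Q U)\otimes(U^\dagger P U)\,\rho\otimes\rho\bigr),
\end{equation}
and the qubit-wise Clifford 2-design identity recorded in Eq.~\eqref{eq:2design-Pauli} collapses this qubit by qubit. The key observation is that the only way every tensor factor survives is if $P_i = Q_i$ at each site, i.e.\ $Q = P$. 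Consequently only the $Q=P$ term in the sum contributes, yielding the prefactor $\alpha_P$.

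For that surviving term, Eq.~\eqref{eq:2design-Pauli} gives
\begin{equation}
\E_{U}(U^\dagger P U)^{\otimes 2} = \frac{1}{3^{|P|}}\sum_{R\in\{X,Y,Z\}^{\otimes |P|}} (R\otimes I_{\mathsf{dom}(P)^c})^{\otimes 2},
\end{equation}
so tracing against $\rho\otimes\rho$ and then taking $\E_{\rho\sim\cD}$ produces $\frac{1}{3^{|P|}}\sum_{R\in\{X,Y,Z\}^{\otimes|P|}} \E_{\rho\sim\cD}\Tr(R\rho_{\mathsf{dom}(P)})^2$. Recognising the inner sum as $2^{|P|}\gamma^\star(\rho_{\mathsf{dom}(P)})$ by the definition in Def.~\ref{def:nonI-purity} completes the computation and gives the factor $(2/3)^{|P|}\E_{\rho\sim\cD}\gamma^\star(\rho_{\mathsf{dom}(P)})$.

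This proof is essentially a lighter variant of Lemma~\ref{lem:mse-D} — there is no genuine obstacle, since the Clifford averaging over each qubit is already carried out in the earlier proof. The only subtlety worth double-checking is that the cross-Pauli terms ($P_i \ne Q_i$ at some site, including the case of two different non-identity Paulis) really do vanish; this is exactly the third case of Eq.~\eqref{eq:2design-Pauli} and is what forces $Q=P$. Once that is noted, the remaining bookkeeping is just collecting constants.
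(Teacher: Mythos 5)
Your proposal is correct and follows essentially the same route as the paper's proof: expand $O$ in the Pauli basis, use locally flat invariance to insert an average over tensor products of single-qubit Cliffords, apply the qubit-wise 2-design identity of Eq.~\eqref{eq:2design-Pauli} to kill all cross terms $Q \neq P$, and recognize the surviving sum as $(2/3)^{|P|}\gamma^\star(\rho_{\mathsf{dom}(P)})$. No gaps.
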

\begin{proof}
    Using the invariance of $\mathcal{D}$, we have
    \begin{equation}
        \E_{\rho \sim \mathcal{D}} \Tr(O \rho) \Tr(P \rho) = \E_{\rho \sim \mathcal{D}} \E_{U_1, \ldots, U_n} \sum_{Q \in \{I, X, Y, Z\}^{\otimes n}} \alpha_Q \Tr\left(\left(\bigotimes_{i=1}^n U_i^\dagger P_i U_i \right) \otimes \left(\bigotimes_{i=1}^n U_i^\dagger Q_i U_i \right) (\rho \otimes \rho) \right).
    \end{equation}
    Using Eq.~\eqref{eq:2design-Pauli}, we can rewrite the above expression as
    \begin{equation}
        \E_{\rho \sim \mathcal{D}} \Tr(O \rho) \Tr(P \rho) = \E_{\rho \sim \mathcal{D}} \frac{1}{3^{|P|}} \alpha_P \sum_{Q \in \{X, Y, Z\}^{\otimes |P|}} \Tr(Q \rho_{\mathsf{dom}(P)})^2.
    \end{equation}
    The claim follows from the definition of the non-identity purity $\gamma^*$.
\end{proof}

For each Pauli observable $P \in \{I, X, Y, Z\}^{\otimes n}$, define the quantity we can extract using the lemma to be
\begin{equation}
    x_P = \left(\frac{2}{3}\right)^{|P|} \alpha_P \E_{\rho \sim \mathcal{D}} \gamma^*(\rho_{\mathsf{dom}(P)}).
\end{equation}
We can obtain an estimate $\hat{x}_P$ for $x_P$ by averaging $\Tr(O \rho) \Tr(P \rho)$ over the training data.
However, to obtain an estimate $\hat{\alpha}_P$ for $\alpha_P$, we need to divide $\hat{x}$ by $\left(\frac{2}{3}\right)^{|P|} \E_{\rho \sim \mathcal{D}} \gamma^*(\rho_{\mathsf{dom}(P)})$.
The error in the estimate $\hat{\alpha}_P$ could be arbitrarily large if $\left(\frac{2}{3}\right)^{|P|} \E_{\rho \sim \mathcal{D}} \gamma^*(\rho_{\mathsf{dom}(P)})$ is close to zero.
Hence, we present a filter in Section~\ref{sec:filter-smallweight} to handle this issue.
In addition to this filter, the norm inequalities given in Section~\ref{sec:norm-ineq-Pauli} show that most $\alpha_P$ would be close to zero.
Hence, when $\alpha_P$ is small, we could simply set them to zero to avoid noise build-up.
This gives rise to the second filtering layer given in Section~\ref{sec:filter-uninfluentialPauli}.


\subsubsection{Filtering small weight factor}
\label{sec:filter-smallweight}

The first filter sets the estimate $\hat{\alpha}_P$ to be zero when the average non-identity purity $\E_{\rho \sim \mathcal{D}} \gamma^*(\rho_{\mathsf{dom}(P)})$ is close to zero.
We define the weight factor for a Pauli observable $P$ to be
\begin{equation}
    \beta_P = \left(\frac{2}{3}\right)^{|P|} \E_{\rho \sim \mathcal{D}} \gamma^*(\rho_{\mathsf{dom}(P)}).
\end{equation}
The weight factor $\beta_P$ depends on the distribution $\mathcal{D}$, which may be unknown.
Hence, we can only obtain an estimate $\hat{\beta}_P$ for $\beta_P$ by utilizing the training data.
Recall from Lemma~\ref{lem:extract-Pauli}, we can only obtain an estimate $\hat{x}_P$ for $x_P = \alpha_P \beta_P$.
The mean squared error (Lemma~\ref{lem:mse-D}) shows that the contribution from error in $\hat{\alpha}_P$ is
\begin{equation}
    \beta_P \left| \hat{\alpha}_P - \alpha_P \right|^2.
\end{equation}
The presence of $\beta_P$ in the mean squared error is very useful since it counteracts the fact that we cannot estimate $\hat{\alpha}_P$ accurately when $\beta_P$ is close to zero.
The following lemma shows that estimates for $\beta_P$ and $x_P$ are sufficient to perform filtering and achieve a small mean squared error.

\begin{lemma}[Filtering small weight factor] \label{lem:filter-smallweight}
    Given $\tilde{\epsilon}, \eta > 0$.
    Consider $\alpha \in [-\eta, \eta]$, and $\beta \in [0, 1]$.
    Let $x = \alpha \beta \in [-\eta, \eta]$.
    Given estimates $\hat{x}$ and $\hat{\beta}$ with $|\hat{x} - x| < \eta\tilde{\epsilon}$ and $|\hat{\beta} - \beta| < \tilde{\epsilon}$.
    If we define the estimate
    \begin{equation}
        \hat{\alpha} =
        \begin{cases}
            0, & \hat{\beta} \leq 2 \tilde{\epsilon},\\
            \hat{x} / \hat{\beta}, & \hat{\beta} > 2 \tilde{\epsilon},
        \end{cases}
    \end{equation}
    then we have $\beta |\hat{\alpha} - \alpha|^2 \leq 3 \eta^2 \tilde{\epsilon}$.
\end{lemma}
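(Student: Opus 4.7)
The plan is to split into the two cases of the definition of $\hat{\alpha}$ and bound $\beta|\hat{\alpha}-\alpha|^2$ in each case using the triangle inequality together with the promised accuracy of $\hat{x}$ and $\hat{\beta}$.

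In the first case where $\hat{\beta}\le 2\tilde{\epsilon}$, the estimate is $\hat{\alpha}=0$, so the quantity to control is simply $\beta\alpha^2$. The key observation is that $\hat{\beta}$ being small forces $\beta$ itself to be small: from $|\hat{\beta}-\beta|<\tilde{\epsilon}$ we get $\beta<\hat{\beta}+\tilde{\epsilon}\le 3\tilde{\epsilon}$. Combining with $|\alpha|\le\eta$ immediately yields $\beta\alpha^2\le 3\eta^2\tilde{\epsilon}$, which is what we want. This is the easy case.

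In the second case where $\hat{\beta}>2\tilde{\epsilon}$, I would rewrite the error as
\begin{equation}
\hat{\alpha}-\alpha \;=\; \frac{\hat{x}-\alpha\hat{\beta}}{\hat{\beta}} \;=\; \frac{(\hat{x}-x)+\alpha(\beta-\hat{\beta})}{\hat{\beta}},
\end{equation}
so by the triangle inequality and the two accuracy hypotheses, $|\hat{\alpha}-\alpha|<2\eta\tilde{\epsilon}/\hat{\beta}$. Squaring and multiplying by $\beta$ gives $\beta|\hat{\alpha}-\alpha|^2<4\eta^2\tilde{\epsilon}^2\beta/\hat{\beta}^2$. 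To finish, I would use that $\hat{\beta}>2\tilde{\epsilon}$ together with $\beta\le\hat{\beta}+\tilde{\epsilon}\le\tfrac{3}{2}\hat{\beta}$, so that $\beta/\hat{\beta}^2\le \tfrac{3}{2\hat{\beta}}\le \tfrac{3}{4\tilde{\epsilon}}$, yielding $\beta|\hat{\alpha}-\alpha|^2<3\eta^2\tilde{\epsilon}$ as desired.

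There is no real obstacle here; the lemma is essentially a routine stability statement for the ratio $\hat{x}/\hat{\beta}$ that exploits the helpful factor of $\beta$ on the left-hand side to absorb the blow-up $1/\hat{\beta}^2$ when $\hat{\beta}$ is near the threshold. The only place that requires a little care is making sure the threshold $2\tilde{\epsilon}$ in the definition of $\hat{\alpha}$ is chosen so that the two cases glue together with matching constants: the factor of $2$ ensures both that $\beta<3\tilde{\epsilon}$ in the truncation case and that $\beta/\hat{\beta}^2\le 3/(4\tilde{\epsilon})$ in the division case, leading to the clean final bound $3\eta^2\tilde{\epsilon}$.
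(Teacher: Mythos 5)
Your proof is correct and follows essentially the same approach as the paper: the same case split at the threshold $2\tilde{\epsilon}$, and in the division case the same triangle-inequality decomposition of the ratio error into the $\hat{x}$-error and $\hat{\beta}$-error contributions (your identity $\hat{\alpha}-\alpha = \bigl((\hat{x}-x)+\alpha(\beta-\hat{\beta})\bigr)/\hat{\beta}$ is algebraically the paper's splitting with the factor of $\beta$ cancelled). Your final numeric step via $\beta \le \tfrac{3}{2}\hat{\beta}$ and $\hat{\beta} > 2\tilde{\epsilon}$ is a slightly more streamlined route to the same constant $3\eta^2\tilde{\epsilon}$.
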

\begin{proof}
    Consider the first case of $\hat{\beta} \leq 2 \tilde{\epsilon}$.
    We have
    \begin{equation}
        \beta |\hat{\alpha} - \alpha|^2 = \beta \alpha^2 \leq \eta^2 \beta \leq \eta^2 \hat{\beta} + \eta^2 \tilde{\epsilon} \leq 3 \eta^2 \tilde{\epsilon}.
    \end{equation}
    For the second case of $\hat{\beta} > 2 \tilde{\epsilon}$, we have $\beta > \tilde{\epsilon}$. By applying triangle inequality, we have
    \begin{equation}
        \left| \sqrt{\beta} \hat{\alpha} - \sqrt{\beta} \alpha \right| \leq \frac{\sqrt{\beta}}{\hat{\beta}} \left| \hat{x} - x \right| + \left| \sqrt{\beta} x \right| \left| \frac{1}{\hat{\beta}} - \frac{1}{\beta} \right|.
    \end{equation}
    The first term can be bounded as $\frac{\sqrt{\beta}}{\hat{\beta}} \left| \hat{x} - x \right| \leq \eta \frac{\sqrt{\beta}}{\hat{\beta}} \tilde{\epsilon}$.
    The second term can be bounded by the same expression
    \begin{equation}
        \left| \sqrt{\beta} x \right| \left| \frac{1}{\hat{\beta}} - \frac{1}{\beta} \right| = \beta^{3/2} |\alpha| \frac{|\hat{\beta} - \beta|}{\hat{\beta} \beta} \leq \eta \frac{\sqrt{\beta}}{\hat{\beta}} \tilde{\epsilon}.
    \end{equation}
    Using the fact that $\sqrt{z + \tilde{\epsilon}} / z$ is monotonically decreasing for $z > 0$, we have
    \begin{equation}
        \frac{\sqrt{\beta}}{\hat{\beta}} \tilde{\epsilon} \leq \frac{\sqrt{\hat{\beta} + \tilde{\epsilon}}}{\hat{\beta}} \tilde{\epsilon} \leq \sqrt{\frac{3}{4} \tilde{\epsilon}}.
    \end{equation}
    Together, $\left| \sqrt{\beta} \hat{\alpha} - \sqrt{\beta} \alpha \right|^2 \leq 3 \eta^2 \tilde{\epsilon}$ and the claim is established.
\end{proof}

\subsubsection{Filtering uninfluential Pauli observables}
\label{sec:filter-uninfluentialPauli}

Consider a set $S \subseteq \{I, X, Y, Z\}^{\otimes n}$ that contains the Pauli observables of interest.
For example, we will later consider $S$ to be the set of all few-body Pauli observables.
Using the norm inequalities given in Appendix~\ref{sec:norm-ineq-Pauli}, we can filter out more $\alpha_P$ to achieve an improved mean squared error.
Below is the filtering lemma that combines both the filtering of Pauli observables with a small weight factor (Lemma~\ref{lem:filter-smallweight}) and the filtering of those with a small contribution (characterized by $|x_P| / \beta_P^{1/2}$).

\begin{lemma}[Filtering lemma] \label{lem:filtering-full}
    Given $\tilde{\epsilon}, \eta > 0,$ and a set $S \subseteq \{I, X, Y, Z\}^{\otimes n}$.
    Consider $\alpha_P \in [-\eta, \eta]$, $\beta_P \in [0, 1]$, $x_P = \alpha_P \beta_P \in [-\eta, \eta]$ for all $P \in S$.
    Suppose there exists $A > 0$ and $1 \leq r < 2$, such that
    \begin{equation}
        \sum_{P \in S} |\alpha_P|^r \leq A^r.
    \end{equation}
    Given $\hat{x}_P$ and $\hat{\beta}_P$ with $|\hat{x}_P - x_P| < \eta \tilde{\epsilon}$ and $|\hat{\beta}_P - \beta_P| < \tilde{\epsilon}$ for all $P \in S$.
    If we define
    \begin{equation} \label{eq:alphahat-def}
        \hat{\alpha}_P =
        \begin{cases}
            0, & \hat{\beta}_P \leq 2 \tilde{\epsilon},\\
            0, & \hat{\beta}_P > 2 \tilde{\epsilon}, \, |\hat{x}_P| / \hat{\beta}_P^{1/2} \leq 2 \eta \sqrt{\tilde{\epsilon}},\\
            \hat{x}_P / \hat{\beta}_P, & \hat{\beta}_P > 2 \tilde{\epsilon}, \, |\hat{x}_P| / \hat{\beta}_P^{1/2} > 2 \eta \sqrt{\tilde{\epsilon}},
        \end{cases}
    \end{equation}
    then we have $\sum_{P \in S} \beta_P |\hat{\alpha}_P - \alpha_P|^2 \leq 6 A^r \eta^{2-r} \tilde{\epsilon}^{1 - (r/2)}$.
    We also have $\beta_P |\hat{\alpha}_P - \alpha_P|^2 \leq 9 \eta^2 \tilde{\epsilon}, \forall P \in S$.
\end{lemma}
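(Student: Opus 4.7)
The plan is to establish both claims by decomposing $S$ according to which of the three branches in the definition of $\hat{\alpha}_P$ (Eq.~\eqref{eq:alphahat-def}) is triggered, then combining per-term estimates with a counting argument that uses the $\ell_r$ hypothesis $\sum_P|\alpha_P|^r \le A^r$. Throughout, I will use that $|\hat{x}_P - x_P| \le \eta\tilde{\epsilon}$ and $|\hat{\beta}_P - \beta_P| \le \tilde{\epsilon}$, and will assume $\tilde{\epsilon}\le 1$ (otherwise the claims are vacuous).

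First I would handle the per-term bound $\beta_P|\hat{\alpha}_P - \alpha_P|^2 \le 9\eta^2\tilde{\epsilon}$. In branch one ($\hat{\beta}_P \le 2\tilde{\epsilon}$, $\hat{\alpha}_P = 0$) and branch three ($\hat{\alpha}_P = \hat{x}_P/\hat{\beta}_P$), Lemma~\ref{lem:filter-smallweight} immediately yields $\beta_P|\hat{\alpha}_P-\alpha_P|^2 \le 3\eta^2\tilde{\epsilon}$. The interesting case is branch two, where $\hat{\beta}_P > 2\tilde{\epsilon}$ but $|\hat{x}_P|/\hat{\beta}_P^{1/2}\le 2\eta\sqrt{\tilde{\epsilon}}$ forces $\hat{\alpha}_P = 0$. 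Here $\beta_P > \tilde{\epsilon}$ and $\hat{\beta}_P < 2\beta_P$, so
\begin{equation}
|x_P|\;\le\;|\hat{x}_P| + \eta\tilde{\epsilon}\;\le\;2\eta\sqrt{\tilde{\epsilon}\hat{\beta}_P}+\eta\tilde{\epsilon}\;\le\;2\sqrt{2}\,\eta\sqrt{\tilde{\epsilon}\beta_P}+\eta\sqrt{\tilde{\epsilon}\beta_P},
\end{equation}
giving $\beta_P\alpha_P^2 = x_P^2/\beta_P \le (2\sqrt{2}+1)^2\eta^2\tilde{\epsilon} \le 9\eta^2\tilde{\epsilon}$ after a routine squaring.

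Next I would prove the aggregate bound. Partition $S$ into $S_{\rm keep}$ (the branch-three indices) and $S_{\rm zero}$ (branches one and two). For $P\in S_{\rm keep}$, Lemma~\ref{lem:filter-smallweight} gives per-term error $3\eta^2\tilde{\epsilon}$, so I only need $|S_{\rm keep}|$. The selection rule $|\hat{x}_P|/\sqrt{\hat{\beta}_P}>2\eta\sqrt{\tilde{\epsilon}}$, combined with $|\hat{x}_P - x_P| \le \eta\tilde{\epsilon}$ and $\hat{\beta}_P \le \beta_P + \tilde{\epsilon} \le 2\beta_P$, implies $|\alpha_P|\sqrt{\beta_P} \ge c_0\,\eta\sqrt{\tilde{\epsilon}}$ for a universal $c_0$; since $\beta_P\le 1$, this gives $|\alpha_P|^r \ge c_0^r\,\eta^r\tilde{\epsilon}^{r/2}$, and hence $|S_{\rm keep}| \le A^r c_0^{-r} \eta^{-r}\tilde{\epsilon}^{-r/2}$. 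Multiplying by $3\eta^2\tilde{\epsilon}$ yields $O(A^r\eta^{2-r}\tilde{\epsilon}^{1-r/2})$.

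For $P\in S_{\rm zero}$ the contribution is $\beta_P\alpha_P^2$. In branch one, $\beta_P\le 3\tilde{\epsilon}$, so $\beta_P\alpha_P^2 \le 3\tilde{\epsilon}\cdot\eta^{2-r}|\alpha_P|^r$; summing over these indices and invoking $\sum_P|\alpha_P|^r \le A^r$ together with $\tilde{\epsilon}\le\tilde{\epsilon}^{1-r/2}$ bounds the total by $3 A^r\eta^{2-r}\tilde{\epsilon}^{1-r/2}$. In branch two I use the factorization
\begin{equation}
\beta_P\alpha_P^2 \;=\; (|\alpha_P|\sqrt{\beta_P})^{2-r}\cdot|\alpha_P|^r\beta_P^{r/2} \;\le\;(3\eta\sqrt{\tilde{\epsilon}})^{2-r}\,|\alpha_P|^r,
\end{equation}
where the inequality comes from the per-term bound $|\alpha_P|\sqrt{\beta_P}\le 3\eta\sqrt{\tilde{\epsilon}}$ proved above and from $\beta_P^{r/2}\le 1$. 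Summing over branch-two indices again gives $O(A^r\eta^{2-r}\tilde{\epsilon}^{1-r/2})$. Tracking constants across all three contributions yields the stated constant~$6$.

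The main obstacle is the tight per-term estimate in branch two, because Lemma~\ref{lem:filter-smallweight} does not apply there (its hypothesis matches branch one, not branch two). All the other steps are standard once that bound is in hand: branch three uses Lemma~\ref{lem:filter-smallweight} directly, and the counting argument for $|S_{\rm keep}|$ together with the $(|\alpha_P|\sqrt{\beta_P})^{2-r}\cdot|\alpha_P|^r$ decomposition is the natural way to exploit the $\ell_r$ constraint to get the exponent $1-r/2$ on $\tilde{\epsilon}$.
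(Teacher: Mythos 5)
Your overall route is the same as the paper's: split $S$ by which branch of Eq.~\eqref{eq:alphahat-def} fires, use Lemma~\ref{lem:filter-smallweight} for the kept terms, count the kept terms via the $\ell_r$ hypothesis, and handle the zeroed terms with the factorization $\beta_P\alpha_P^2 = |\beta_P^{1/2}\alpha_P|^{2-r}\cdot|\beta_P^{1/2}\alpha_P|^{r}\beta_P^{0}\le(\text{const}\cdot\eta\sqrt{\tilde\epsilon})^{2-r}|\alpha_P|^r\beta_P^{r/2}$. Those are exactly the paper's ingredients, and conceptually the proof is there. However, the two specific constants you claim do not follow from your steps.

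First, your branch-two per-term estimate gives $|x_P|\le(2\sqrt2+1)\,\eta\sqrt{\tilde\epsilon\beta_P}$, hence $\beta_P\alpha_P^2\le(2\sqrt2+1)^2\eta^2\tilde\epsilon=(9+4\sqrt2)\,\eta^2\tilde\epsilon\approx14.7\,\eta^2\tilde\epsilon$; the assertion that this is $\le9\eta^2\tilde\epsilon$ ``after a routine squaring'' is simply false. The paper gets the constant $9$ by bounding the normalized difference directly, $\bigl|\hat{x}_P/\hat{\beta}_P^{1/2}-x_P/\beta_P^{1/2}\bigr|\le\eta\sqrt{\tilde\epsilon}$ (splitting into $\hat{\beta}_P^{-1/2}|\hat x_P-x_P|\le\eta\sqrt{\tilde\epsilon/2}$ plus a term controlled by $|x_P|\le\eta\beta_P$), so that $|x_P|/\beta_P^{1/2}\le2\eta\sqrt{\tilde\epsilon}+\eta\sqrt{\tilde\epsilon}=3\eta\sqrt{\tilde\epsilon}$; your cruder conversion $\sqrt{\hat\beta_P}\le\sqrt{2\beta_P}$ loses the factor. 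Second, your aggregate bound applies $\sum_{P\in S}|\alpha_P|^r\le A^r$ separately to each of the three branches, each contributing at least $3A^r\eta^{2-r}\tilde\epsilon^{1-r/2}$, so even with optimal per-branch constants you land at $9$ (and with your branch-two constant, closer to $10$), not $6$. The paper reaches $6$ by treating branches one and two as a single filtered set with the uniform bound $|\beta_P^{1/2}\alpha_P|\le3\eta\sqrt{\tilde\epsilon}$, so the budget $A^r$ is spent only twice (once for the unfiltered count, once for the filtered sum). Neither issue is conceptual --- the asymptotics survive with a worse constant --- but as written your argument does not prove the lemma with the stated constants $6$ and $9$.
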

\begin{proof}
    We first define $S^{u} \subseteq S$ to be the set of Pauli observables $P$ with $\hat{\beta}_P > 2 \tilde{\epsilon},  |\hat{x}_P| / \hat{\beta}_P^{1/2} > 2 \eta \sqrt{\tilde{\epsilon}}$.
    The set $S^{u}$ contains all the unfiltered Pauli observables. We define $S^{f}$ to be $S \setminus S^u$, which contains all the filtered Pauli observables.
    We separate the contribution of $S^u$ and $S^f$ in the mean squared error $\sum_{P \in S} \beta_P |\hat{\alpha}_P - \alpha_P|^2$,
    \begin{equation}
        \sum_{P \in S} \beta_P |\hat{\alpha}_P - \alpha_P|^2 = \sum_{P \in S^u} \beta_P |\hat{\alpha}_P - \alpha_P|^2 + \sum_{P \in S^f} \beta_P |\hat{\alpha}_P - \alpha_P|^2.
    \end{equation}
    A key quantity for the analysis is $\beta_P^{1/2} \alpha_P = x_P / \beta_P^{1/2}$.
    For Pauli $P$ with $\hat{\beta}_P \leq 2 \tilde{\epsilon}$, we have
    \begin{equation} \label{eq:smallweightP-bound}
        |\beta_P^{1/2} \alpha_P| \leq \eta \sqrt{\hat{\beta}_P + \tilde{\epsilon}} \leq \eta \sqrt{3 \tilde{\epsilon}}.
    \end{equation}
    For Pauli $P$ with $\hat{\beta}_P > 2 \tilde{\epsilon}$, we have
    \begin{equation} \label{eq:filter-1beta}
        \left| \frac{\hat{x}_P}{\hat{\beta}_P^{1/2}} - \frac{x_P}{\beta_P^{1/2}} \right| \leq \frac{1}{\hat{\beta}_P^{1/2}} \left| \hat{x}_P - x_P \right| + |x_P| \left| \frac{1}{\hat{\beta}_P^{1/2}} - \frac{1}{\beta_P^{1/2}} \right| \leq \eta \sqrt{\frac{\tilde{\epsilon}}{2}} + \eta \left| \frac{\beta_P}{\hat{\beta}_P^{1/2}} - \beta_P^{1/2} \right| \leq \eta \sqrt{\tilde{\epsilon}}.
    \end{equation}
    The last inequality uses the fact that $\beta_P > \tilde{\epsilon}$, $\hat{\beta}_P / \beta_P > 2$, and hence
    \begin{equation} \label{eq:filter-2beta}
        \left| \frac{\beta_P}{\hat{\beta}_P^{1/2}} - \beta_P^{1/2} \right| = \frac{\left| \hat{\beta}_P - \beta_P \right|}{\hat{\beta}_P^{1/2} \left(1 + \left(\frac{\hat{\beta}_P}{\beta_P}\right)^{1/2} \right)} \leq \frac{\sqrt{\tilde{\epsilon}}}{2 + \sqrt{2}}.
    \end{equation}
    We are now ready to analyze the contributions of $S^u$ and $S^f$.

    For the unfiltered Pauli observables (those in set $S^u$), we can use Lemma~\ref{lem:filter-smallweight} to obtain
    \begin{equation}
        \sum_{P \in S^u} \beta_P |\hat{\alpha}_P - \alpha_P|^2 \leq 3 \eta^2 \tilde{\epsilon} |S^u|.
    \end{equation}
    Eq.~\eqref{eq:filter-1beta} shows that for Pauli observable $P$ with $\hat{\beta}_P > 2 \tilde{\epsilon}$ and $|\hat{x}_P| / \hat{\beta}_P^{1/2} > 2 \eta \sqrt{\tilde{\epsilon}}$, we have $|x_P| / \beta_P^{1/2} > 2 \eta \sqrt{\tilde{\epsilon}} - \eta \sqrt{\tilde{\epsilon}}.$
    We will use this fact to bound the size of the set $|S^u|$,
    \begin{equation}
        |S^u| \leq \sum_{P \in S^u} \frac{(|x_P| / \beta_P^{1/2})^r}{\left(2 \eta \sqrt{\tilde{\epsilon}} - \eta \sqrt{\tilde{\epsilon}}\right)^r} = \frac{1}{\eta^r \tilde{\epsilon}^{r/2}}
        \sum_{P \in S^u}| \alpha_P|^r \beta_P^{r/2} \leq \frac{1}{\eta^r\tilde{\epsilon}^{r/2}}
        \sum_{P \in S} |\alpha_P|^r = \frac{A^r}{\eta^r \tilde{\epsilon}^{r/2}}.
    \end{equation}
    Together, we have the following upper bound,
    \begin{equation}
        \sum_{P \in S^u} \beta_P |\hat{\alpha}_P - \alpha_P|^2 \leq 3 \eta^{2-r} A^r \tilde{\epsilon}^{1 - (r/2)}.
    \end{equation}
    For the filtered Pauli observables (those in set $S^f$), we have
    \begin{equation}
        \sum_{P \in S^f} \beta_P |\hat{\alpha}_P - \alpha_P|^2 = \sum_{P \in S^f} \left|\beta_P^{1/2} \alpha_P\right|^r \left|\beta_P^{1/2} \alpha_P\right|^{2-r}.
    \end{equation}
    There are two types of Pauli observables in $S^f$.
    \begin{enumerate}
        \item For $P$ with $\hat{\beta}_P \leq 2 \tilde{\epsilon}$, we have $\left|\beta_P^{1/2} \alpha_P\right| \leq \eta \sqrt{3 \tilde{\epsilon}}$ from Eq.~\eqref{eq:smallweightP-bound}.
        \item For $P$ with $\hat{\beta}_P > 2 \tilde{\epsilon}$ and $ |\hat{x}_P| / \hat{\beta}_P^{1/2} \leq \eta 2 \sqrt{\tilde{\epsilon}}$, we have $\left|\beta_P^{1/2} \alpha_P\right| = |x_P| / \beta_P^{1/2} \leq 2 \eta \sqrt{\tilde{\epsilon}} + \eta \sqrt{\tilde{\epsilon}}$ from Eq.~\eqref{eq:filter-1beta}.
    \end{enumerate}
    Together, we have the following upper bound,
    \begin{equation}
        \sum_{P \in S^f} \beta_P |\hat{\alpha}_P - \alpha_P|^2 \leq (3 \eta \sqrt{\tilde{\epsilon}})^{2-r} \sum_{P \in S^f} \beta_P^{r/2} \left| \alpha_P\right|^r \leq A^r (3 \eta \sqrt{\tilde{\epsilon}})^{2-r} \leq 3 A^r \eta^{2-r} \tilde{\epsilon}^{1 - (r/2)}.
    \end{equation}
    Combining the contribution of $S^u$ and $S^f$ yields
    \begin{equation}
        \sum_{P \in S} \beta_P |\hat{\alpha}_P - \alpha_P|^2 \leq 6 A^r \eta^{2-r} \tilde{\epsilon}^{1 - (r/2)}.
    \end{equation}
    Thus we have established the first statement of the lemma.

    We now focus on the second statement of the lemma. For Pauli observable $P$ that satisfies the first and the third cases of Eq.~\eqref{eq:alphahat-def}, we can use Lemma~\ref{lem:filter-smallweight} to obtain $\beta_P |\hat{\alpha}_P - \alpha_P|^2 \leq 3 \eta^2 \tilde{\epsilon} < 9 \eta^2 \tilde{\epsilon}$.
    For the second case of Eq.~\eqref{eq:alphahat-def}, we can use Eq.~\eqref{eq:filter-1beta} to see that
    \begin{equation}
        \beta_P |\hat{\alpha}_P - \alpha_P|^2 = \left( \frac{x_P}{\beta_P^{1/2}} \right)^2 \leq \left( \frac{|\hat{x}_P|}{\hat{\beta}_P^{1/2}} + \left| \frac{\hat{x}_P}{\hat{\beta}_P^{1/2}} - \frac{x_P}{\beta_P^{1/2}} \right| \right)^2 \leq 9 \eta^2 \tilde{\epsilon}.
    \end{equation}
    Hence, for all $P \in S$, we have $\beta_P |\hat{\alpha}_P - \alpha_P|^2 \leq 9 \eta^2 \tilde{\epsilon}$.
\end{proof}

\subsection{Learning algorithm}
\label{sec:learning-algo}

In this section, we present a learning algorithm satisfying the guarantee given in Theorem~\ref{thm:main-learning}.
Consider the full training data $\{ \rho_\ell, y_\ell = \Tr(O^{\mathrm{(unk)}} \rho_\ell)) \}_{\ell=1}^N$ of size $N$.
The learning algorithm splits the full data into a smaller training set of size $N_{\mathrm{tr}}$ and a validation set of size $N_{\mathrm{val}}$ with $N = N_{\mathrm{tr}} + N_{\mathrm{val}}$.
The training set is used to extract Pauli coefficients and perform filtering with a hyperparameter $\eta$.
The validation set is used to choose the best hyperparameter $\eta$.
We can set $N_{\mathrm{tr}} = (4/5) N$ and $N_{\mathrm{val}} = (1/5) N$.

We consider two slightly different learning algorithms for the sample complexity scaling of
\begin{equation} \label{eq:samp-learning-obs-dual}
N = \log\left(\frac{n}{\delta}\right) \, 2^{\mathcal{O}\left(\log(\frac{1}{\epsilon}) \left( \log\log(\frac{1}{\epsilon}) +  \log(\frac{1}{\epsilon'})\right)\right)}
 \quad \mbox{and} \quad
N = \log\left(\frac{n}{\delta}\right) \,  2^{\mathcal{O}(\log(\frac{1}{\epsilon}) \log(n))}.
\end{equation}
We can simply look at which sample complexity is smaller and select the corresponding learning algorithm.

We begin with the learning algorithm for achieve the sample complexity on the left of Eq.~\eqref{eq:samp-learning-obs-dual}.
First, the algorithm computes the sample maximum over the training set,
\begin{equation}
    \hat{\Theta} = \max_{\ell \in \{1, \ldots, N_{\mathrm{tr}} \} } \left| y_\ell \right| = \max_{\ell \in \{1, \ldots, N_{\mathrm{tr}} \} } \left| \Tr(O^{\mathrm{(unk)}} \rho_\ell)) \right| \leq \norm{O^{\mathrm{(unk)}}}.
\end{equation}
to obtain a scale for the function value.
Let $C(k)$ be the constant from Corollary~\ref{cor:any-norm}.
We define
\begin{equation}
    \tilde{\epsilon} \triangleq \left( \frac{\epsilon'}{12} \right)^{k + 1} \left( \frac{C(k)}{3} \right)^{2k}. \label{eq:tildeeps_def}
\end{equation}
Next, we consider the following grid of hyperparameters,
\begin{equation} \label{eq:eta-grid}
    \eta \in \left\{ 2^0 \hat{\Theta}, 2^1\hat{\Theta}, 2^2 \hat{\Theta} \ldots, 2^R \hat{\Theta} \right\},
\end{equation}
where $R = \log_2 \lceil 1 / \tilde{\epsilon} \rceil$.
For each hyperparameter $\eta$, the learning algorithm runs the following.
The learning algorithm considers every Pauli observable $P \in \{I, X, Y, Z\}^{\otimes n}$ with $|P| \leq \log_{1.5}(1 / \epsilon)$.
We define the set that contains the Pauli observables of interest,
\begin{equation}
    S = \left\{ P: |P| \leq \log_{1.5}(1 / \epsilon) \right\},
\end{equation}
and $k = \lceil \log_{1.5}(1 / \epsilon) \rceil$.
For each $P \in S$, the algorithm computes
\begin{align}
    \hat{x}_P &= \frac{1}{N_{\mathrm{tr}}} \sum_{\ell=1}^{N_{\mathrm{tr}}} \Tr(P \rho_\ell) y_\ell,\\
    \hat{\beta}_P &= \frac{1}{N_{\mathrm{tr}}} \sum_{\ell=1}^{N_{\mathrm{tr}}} \Tr(P \rho_\ell) \Tr(P \rho_\ell),
\end{align}
using the training set $\{(\rho_\ell, y_\ell = \Tr(O^{\mathrm{unk} \rho_\ell}))\}_{\ell=1}^{N_{\mathrm{tr}}}$.
By definition of $\hat{x}_P$ and $\hat{\Theta}$, we have
\begin{equation} \label{eq:hatx-bound}
    \left|\hat{x}_P\right| \leq \hat{\Theta}, \quad \forall P \in S.
\end{equation}
Then, for each $P \in S$, the algorithm computes
\begin{equation}
    \hat{\alpha}_P(\eta) =
    \begin{cases}
        0, & \hat{\beta}_P \leq 2 \tilde{\epsilon},\\
        0, & \hat{\beta}_P > 2 \tilde{\epsilon}, \, |\hat{x}_P| / \hat{\beta}_P^{1/2} \leq 2 \eta \sqrt{\tilde{\epsilon}},\\
        \hat{x}_P / \hat{\beta}_P, & \hat{\beta}_P > 2 \tilde{\epsilon}, \, |\hat{x}_P| / \hat{\beta}_P^{1/2} > 2 \eta \sqrt{\tilde{\epsilon}},
    \end{cases}
\end{equation}
The algorithm considers the function $h(\rho; \eta) = \max(-\hat{\Theta}, \min(\hat{\Theta}, \Tr(\hat{O}(\eta) \rho)))$, where the observable $\hat{O}(\eta)$ is defined as follows,
\begin{equation}
    \hat{O}(\eta) = \sum_{P \in S} \hat{\alpha}_P(\eta) P.
\end{equation}
The best $\eta$ is selected using the validation set,
\begin{equation}
    \eta^* = \argmin_{\eta \in \{2^0 \hat{\Theta}, \ldots, 2^R \hat{\Theta}\}} \frac{1}{N_{\mathrm{val}}}\sum_{\ell = N_{\mathrm{tr}} + 1}^{N_{\mathrm{tr}} + N_{\mathrm{val}}} \left| h(\rho_\ell; \eta) - y_\ell \right|^2.
\end{equation}
The learning algorithm outputs $h(\rho; \eta^*)$ as the learned function.

We now present the learning algorithm for achieving the sample complexity on the right of Eq.~\eqref{eq:samp-learning-obs-dual}.
We define the set that contains the Pauli observables of interest,
\begin{equation}
    S' = \left\{ P: |P| \leq \log_{1.5}(2 / \epsilon) \right\},
\end{equation}
and $k' = \lceil \log_{1.5}(2 / \epsilon) \rceil$.
For each $P \in S'$, the algorithm computes
\begin{align}
    \hat{x}_P' &= \frac{1}{N} \sum_{\ell=1}^{N} \Tr(P \rho_\ell) y_\ell,\\
    \hat{\beta}_P' &= \frac{1}{N} \sum_{\ell=1}^{N} \Tr(P \rho_\ell) \Tr(P \rho_\ell),
\end{align}
using the full dataset $\{(\rho_\ell, y_\ell = \Tr(O^{\mathrm{unk} \rho_\ell}))\}_{\ell=1}^{N}$.
The algorithm uses the following hyperparameter
\begin{equation} \label{eq:t-eps-prime}
    \tilde{\epsilon}' \triangleq \frac{\epsilon}{6 n^{k'}}.
\end{equation}
Then, for each $P \in S'$, the algorithm computes
\begin{equation}
    \hat{\alpha}_P' =
    \begin{cases}
        0, & \hat{\beta}_P' \leq 2 \tilde{\epsilon}',\\
        \hat{x}_P' / \hat{\beta}_P', & \hat{\beta}_P' > 2 \tilde{\epsilon}'.
    \end{cases}
\end{equation}
The algorithm outputs the function $h'(\rho) = \Tr(\hat{O}' \rho)$, where the observable $\hat{O}'$ is defined as $\hat{O}' = \sum_{P \in S'} \hat{\alpha}_P' P.$

Here, we assume that $\Tr(P \rho_\ell)$ can be obtained from the training data.
However, for each $\Tr(P \rho_\ell)$, we only need to be able to obtain an unbiased estimator for $\Tr(P \rho_\ell)$ and for $\Tr(P \rho_\ell)^2$.
Recall that an unbiased estimator for $a$ is a random variable with expectation value equal to $a$.
For example, an unbiased estimator for $\Tr(P \rho_\ell)^2$ can be obtained by performing two quantum measurements on two individual copies of $\rho_\ell$ using the observable $P$ and multiplying the results, or by utilizing classical shadow formalism \cite{huang2020predicting} and randomized measurement \cite{elben2022randomized}.

\subsection{Rigorous performance guarantee}

In this section, we prove that the learning algorithm presented in the last section satisfies Theorem~\ref{thm:main-learning}.
We separate the proof for achieving the sample complexity on the left and right of Eq.~\eqref{eq:samp-learning-obs-dual}.

The proof for the sample complexity stated on the left of Eq.~\eqref{eq:samp-learning-obs-dual} consists of three parts: (1) a characterization of the prediction error, (2) the existence of a good hyperparamter $\eta^{\triangle}$ that achieves a small prediction error, (3) the hyperparameter $\eta^*$ found through grid search on the validation set has a small prediction error.

The proof for the sample complexity stated on the right of Eq.~\eqref{eq:samp-learning-obs-dual} is simpler and is given at the end.

\subsubsection{Characterization of the prediction error}

We begin with a lemma about the sample maximum.

\begin{lemma}[Sample maximum]
    Given $1 > \epsilon, \delta > 0$.
    Consider an arbitrary real-valued random variable $X$. Let $X_1, \ldots, X_N$ be $N$ independent samples of $X$ with $N = \lceil \log(1 / \delta) / \epsilon\rceil$ and let $\hat{\Theta} = \max_i X_i$. Then
    \begin{equation}
        \Pr\left[ X \leq \hat{\Theta} \right] \geq 1 - \epsilon.
    \end{equation}
    with probability at least $1 - \delta$.
\end{lemma}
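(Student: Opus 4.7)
The plan is to reduce the statement to a simple concentration argument about the empirical $(1-\epsilon)$-quantile. Let $F(x) = \Pr[X \le x]$ denote the (right-continuous) CDF of $X$, and define
\begin{equation}
q \triangleq \inf\{x \in \R : F(x) \ge 1 - \epsilon\}.
\end{equation}
By right-continuity of $F$, the set $\{x : F(x) \ge 1-\epsilon\}$ is closed from the left, so $F(q) \ge 1-\epsilon$. Meanwhile, every $x < q$ satisfies $F(x) < 1-\epsilon$, so the left-limit obeys $F(q^-) \le 1-\epsilon$.

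With $q$ in hand, the proof splits into two trivial observations. First, monotonicity of $F$ implies that whenever $\hat{\Theta} \ge q$, one has $\Pr[X \le \hat{\Theta}] \ge F(q) \ge 1-\epsilon$, which is exactly the inner probability bound we want. Second, the event $\{\hat{\Theta} < q\}$ coincides with $\{X_i < q \text{ for all } i\}$, whose probability equals $F(q^-)^N \le (1-\epsilon)^N \le e^{-\epsilon N}$. For the stated choice $N = \lceil \log(1/\delta)/\epsilon \rceil$, this is at most $\delta$. Taking the complement gives $\Pr[\hat{\Theta} \ge q] \ge 1-\delta$, which combined with the first observation yields the lemma.

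I do not anticipate any real obstacles, but I would take care of one potentially subtle point: if $F$ has an atom at $q$, then $F(q)$ and $F(q^-)$ may differ, and one must be sure to use $F(q) \ge 1-\epsilon$ in the ``good'' direction and $F(q^-) \le 1-\epsilon$ in the concentration step. Right-continuity of every CDF is what makes both inequalities hold simultaneously, so this wrinkle is handled cleanly by the definition of $q$ above. No independence or boundedness assumption on $X$ beyond what is already stated is needed.
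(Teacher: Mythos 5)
Your proposal is correct and follows essentially the same route as the paper: both define the $(1-\epsilon)$-quantile as $\inf\{x: F(x)\ge 1-\epsilon\}$, use right-continuity of $F$ to get $F(q)\ge 1-\epsilon$, and bound $\Pr[\hat{\Theta}<q]\le (1-\epsilon)^N\le\delta$. Your direct use of the left-limit $F(q^-)\le 1-\epsilon$ is a slightly cleaner packaging of the paper's contradiction argument for $\Pr[X\ge\Theta]\ge\epsilon$, but the substance is identical.
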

\begin{proof}
Recall that the cumulative distribution function is defined as $F(\theta) = \Pr\left[ X \leq \theta \right]$.
We define the approximate maximum as follows,
\begin{equation}
\Theta \triangleq \inf_{\theta: F(\theta) \geq 1 - \epsilon} \theta.
\end{equation}
Using the right-continuity of $F(\theta) = \Pr\left[ X \leq \theta \right]$, we have
\begin{equation}
    F(\Theta) = \Pr\left[ X \leq \Theta \right] \geq 1 - \epsilon.
\end{equation}
Furthermore, from the definition of $\Theta$, we have
\begin{equation}
    \Pr\left[ X \geq \Theta \right] \geq \epsilon.
\end{equation}
To see the above inequality, suppose that $\Pr\left[ X \geq \Theta \right] < \epsilon$.
Then from the left-continuity of $F'(\theta) = \Pr\left[ X \geq \theta \right]$, we can find $\Theta' < \Theta$, such that $\Pr\left[ X \geq \Theta' \right] \leq \epsilon$.
Thus, there exists $\Theta' < \Theta$ with $\Pr\left[ X \leq \Theta' \right] \geq 1 - \epsilon$, which is a contradiction to the definition of $\Theta$.
Together, we have
\begin{equation}
    \Pr\left[ X_i < \Theta, \forall i \in [N] \right] \leq (1 - \epsilon)^N.
\end{equation}
By choosing $N = \lceil\log(1 / \delta) / \epsilon\rceil$, we have
\begin{equation}
    \Pr\left[ \max_{i} X_i \geq \Theta \right] \geq 1 - (1 - \epsilon)^{\log(1 / \delta) / \epsilon} \geq  1 - \delta.
\end{equation}
Thus with probability at least $1 - \delta$, we have $\hat{\Theta} \geq \Theta$. Using the monotonicity of $F(\theta)$, we have
\begin{equation}
    \Pr\left[ X \leq \hat{\Theta} \right] = F(\hat{\Theta}) \geq F(\Theta) \geq 1 - \epsilon,
\end{equation}
which establishes this lemma.
\end{proof}

Using the above lemma, we can show that given a training set of size
\begin{equation}
    N_{\mathrm{tr}} \geq \frac{12 \log(3 / \delta)}{\epsilon'},
\end{equation}
the real value $\hat{\Theta} \leq \norm{O^{\mathrm{(unk)}}}$ obtained by the algorithm satisfies
\begin{equation}
    \Pr_{\rho \sim \mathcal{D}} \left[ \left| \Tr(O^{\mathrm{(unk)}} \rho) \right| \leq \hat{\Theta} \right] \geq 1 - \frac{\epsilon'}{12}
\end{equation}
with probability at least $1 - (\delta / 3)$.
Hence, with probability at least $1 - (\delta / 3)$, we have
\begin{equation}
    \E_{\rho \sim \mathcal{D}} \left| h(\rho; \eta) - \Tr(O^{\mathrm{(unk)}} \rho) \right|^2 \leq \E_{\rho \sim \mathcal{D}} \left| \Tr(\hat{O}(\eta) \rho) - \Tr(O^{\mathrm{(unk)}} \rho) \right|^2 + \frac{\epsilon'}{12} \left| \hat{\Theta} +  \norm{O^{\mathrm{(unk)}}} \right|^2.
\end{equation}
Using Lemma~\ref{lem:mse-D} on mean squared error and Corollary~\ref{cor:low-deg-approx} on low-degree approximation, we have
\begin{align}
    &\E_{\rho \sim \mathcal{D}} \left| h(\rho; \eta) - \Tr(O^{\mathrm{(unk)}} \rho) \right|^2\\
    &\leq \underbrace{(2/3)^k \norm{O^{\mathrm{(unk)}}}^2}_{\leq \norm{O^{\mathrm{(unk)}}}^2 \epsilon} + \sum_{P \in S} \E_{\rho \sim \mathcal{D}} \left[ \gamma^*(\rho_{\mathsf{dom}(P)}) \right] \left( \frac{2}{3} \right)^{|P|} \left|\hat{\alpha}_P(\eta) -\alpha_P \right|^2 + \frac{\epsilon'}{3} \norm{O^{\mathrm{(unk)}}}^2
\end{align}
with probability at least $1 - (\delta / 3)$.

Let us define the following variables,
\begin{equation}
    x_P \triangleq \E_{\rho \sim \mathcal{D}} \left[ \gamma^*(\rho_{\mathsf{dom}(P)}) \right] \left( \frac{2}{3} \right)^{|P|} \alpha_P, \quad \beta_P \triangleq \E_{\rho \sim \mathcal{D}} \left[ \gamma^*(\rho_{\mathsf{dom}(P)}) \right] \left( \frac{2}{3} \right)^{|P|}, \quad \forall P \in S.
\end{equation}
Then, with probability at least $1 - (\delta / 3)$ over the sampling of the training set, we have the following characterization of the prediction error for all $\eta > 0$,
\begin{equation} \label{eq:char-prediction-error}
    \E_{\rho \sim \mathcal{D}} \left| h(\rho; \eta) - \Tr(O^{\mathrm{(unk)}} \rho) \right|^2 \leq \epsilon \norm{O^{\mathrm{(unk)}}}^2 + \frac{\epsilon'}{3} \norm{O^{\mathrm{(unk)}}}^2 + \sum_{P \in S} \beta_P \left|\hat{\alpha}_P(\eta) -\alpha_P \right|^2.
\end{equation}
We will utilize this form to show the existence of a good hyperparameter $\eta^{\triangle}$.

\subsubsection{Existence of a good hyperparamter $\eta^{\triangle}$}

By considering the training set size to be
\begin{equation}
    N_{\mathrm{tr}} = \Omega\left( \frac{\log(1 / \delta)}{\epsilon'} + \frac{\log(|S| / \delta)}{\tilde{\epsilon}^2} \right), \label{eq:Ntrdef}
\end{equation}
we can guarantee Eq.~\eqref{eq:char-prediction-error} with probability at least $1 - (\delta / 3)$.
Furthermore, utilizing Hoeffding's inequality and union bound, we could also guarantee that
\begin{equation} \label{eq:hat-est-error}
    \left| \hat{x}_P - x_P \right| \leq \norm{O^{\mathrm{(unk)}}} \tilde{\epsilon}, \quad \left| \hat{\beta}_P - \beta_P \right| \leq \tilde{\epsilon}, \quad \forall P \in S
\end{equation}
with probability at least $1 - (\delta/3)$.
The norm inequality given in Corollary~\ref{cor:any-norm} shows that
\begin{equation}
    \sum_{P \in S} \left| \alpha_P \right|^r \leq \left(  \frac{3}{C(k)} \right)^{r} \norm{O^{\mathrm{(low)}}}^r
\end{equation}
for a constant given by
\begin{equation}
    C(k) = \frac{\sqrt{2(k!)}}{2 k^{k+1.5+ (k+1)/(2k)} (\sqrt{6} + 2 \sqrt{3} )^k}.
\end{equation}
We now condition on the event that Eq.~\eqref{eq:char-prediction-error} and Eq.~\eqref{eq:hat-est-error} both hold, which happens with probability at least $1 - (2/3) \delta$.
We are now ready to define the good hyperparameter $\eta^{\triangle}$.

Let hyperparameter $\eta^{\triangle}$ belonging to the grid in Eq.~\eqref{eq:eta-grid} be defined as follows,
\begin{equation}
    \eta^{\triangle} = 2^{\min\left(R, \left\lceil \log_2\left(\norm{O^{\mathrm{(unk)}}} / \hat{\Theta} \right) \right\rceil \right)} \hat{\Theta}.
\end{equation}
We separately consider two cases: (1) $\eta^{\triangle} = 2^R \hat{\Theta}$, (2) $\eta^{\triangle} < 2^R \hat{\Theta}$.
For the first case $\eta^{\triangle} = 2^R \hat{\Theta}$, we can use $|\hat{x}_P| \leq \hat{\Theta}$ in Eq.~\eqref{eq:hatx-bound} and the definition of $R$ to see that
\begin{equation}
    \hat{\alpha}_P(\eta^{\triangle}) = 0, \quad \forall P \in S.
\end{equation}
Since $\eta^{\triangle} = 2^R \hat{\Theta}$, we have $R \leq \left\lceil \log_2\left(\norm{O^{\mathrm{(unk)}}} / \hat{\Theta} \right) \right\rceil$.
This yields $\eta^{\triangle} \leq 2 \norm{O^{\mathrm{(unk)}}}$, which implies that
\begin{equation}
    \hat{\alpha}_P\left(2 \norm{O^{\mathrm{(unk)}}}\right) = 0, \quad \forall P \in S.
\end{equation}
Hence, the reconstructed Pauli coefficients $\hat{\alpha}_P(\cdot)$ are the same for $\eta^{\triangle}$ and $2 \norm{O^{\mathrm{(unk)}}}$.
The filtering lemma given in Lemma~\ref{lem:filtering-full} shows that
\begin{align}
    &\sum_{P \in S} \E_{\rho \sim \mathcal{D}} \left[ \gamma^*(\rho_{\mathsf{dom}(P)}) \right] \left( \frac{2}{3} \right)^{|P|} \left|\hat{\alpha}_P(\eta^{\triangle}) -\alpha_P \right|^2\\
    &= \sum_{P \in S} \E_{\rho \sim \mathcal{D}} \left[ \gamma^*(\rho_{\mathsf{dom}(P)}) \right] \left( \frac{2}{3} \right)^{|P|} \left|\hat{\alpha}_P\left(2 \norm{O^{\mathrm{(unk)}}}\right) -\alpha_P \right|^2 \\
    &\leq 12 \left(  \frac{3}{C(k)} \right)^{r} \norm{O^{\mathrm{(unk)}}}^{2-r} \norm{O^{\mathrm{(low)}}}^r \tilde{\epsilon}^{1 - (r / 2)}.
\end{align}
For the second case $\eta^{\triangle} < 2^R \hat{\Theta}$, we have the following bound on $\eta^{\triangle}$,
\begin{equation}
    \eta^{\triangle} = 2^{\left\lceil \log_2\left(\norm{O^{\mathrm{(unk)}}} / \hat{\Theta} \right) \right\rceil} \hat{\Theta} \in \left[ \norm{O^{\mathrm{(unk)}}}, 2\norm{O^{\mathrm{(unk)}}} \right].
\end{equation}
The filtering lemma given in Lemma~\ref{lem:filtering-full} shows that
\begin{align}
    \sum_{P \in S} \E_{\rho \sim \mathcal{D}} \left[ \gamma^*(\rho_{\mathsf{dom}(P)}) \right] \left( \frac{2}{3} \right)^{|P|} \left|\hat{\alpha}_P(\eta^{\triangle}) -\alpha_P \right|^2 &\leq 6 (\eta^{\triangle})^r \left(  \frac{3}{C(k)} \right)^{r} \norm{O^{\mathrm{(low)}}}^r \tilde{\epsilon}^{1 - (r / 2)} \\
    &\leq 12 \left(  \frac{3}{C(k)} \right)^{r} \norm{O^{\mathrm{(unk)}}}^{2-r} \norm{O^{\mathrm{(low)}}}^r \tilde{\epsilon}^{1 - (r / 2)}.
\end{align}
In both case (1) and case (2), using the definition $r = 2k / (k+1)$ and $\tilde{\epsilon} = \left( \frac{\epsilon'}{12} \right)^{k + 1} \left( \frac{C(k)}{3} \right)^{2k}$, we have
\begin{equation}
    \sum_{P \in S} \E_{\rho \sim \mathcal{D}} \left[ \gamma^*(\rho_{\mathsf{dom}(P)}) \right] \left( \frac{2}{3} \right)^{|P|} \left|\hat{\alpha}_P(\eta^\triangle) -\alpha_P \right|^2 \leq \epsilon' \norm{O^{\mathrm{(unk)}}}^{2-r} \norm{O^{\mathrm{(low)}}}^r.
\end{equation}
Combining with Eq.~\eqref{eq:char-prediction-error}, we have
\begin{equation} \label{eq:good-eta-tri}
    \E_{\rho \sim \mathcal{D}} \left| h(\rho; \eta^\triangle) - \Tr(O^{\mathrm{(unk)}} \rho) \right|^2 \leq \epsilon \norm{O^{\mathrm{(unk)}}}^2 + \frac{\epsilon'}{3} \norm{O^{\mathrm{(unk)}}}^2 +  \epsilon' \norm{O^{\mathrm{(unk)}}}^{2-r} \norm{O^{\mathrm{(low)}}}^r
\end{equation}
with probability at least $1 - (2/3) \delta$.

\subsubsection{The prediction performance of the hyperparameter $\eta^*$}

From the definition of $h(\rho; \eta)$, for any quantum state $\rho$, we have
\begin{equation}
    \left| h(\rho; \eta) - \Tr(O^{\mathrm{(unk)}} \rho)) \right|^2 \leq \left|\hat{\Theta} + \norm{O^{\mathrm{(unk)}}} \right|^2 \leq 4 \norm{O^{\mathrm{(unk)}}}^2
\end{equation}
Using Hoeffding's inequality and union bound, we can show that given a validation set of size
\begin{equation}
    N_{\mathrm{val}} = \Omega\left( \frac{\log(R / \delta) }{(\epsilon')^2} \right), \label{eq:Nvaldef}
\end{equation}
with probability at least $1 - (\delta / 3)$, we have
\begin{equation}
 \left| \frac{1}{N_{\mathrm{val}}}\sum_{\ell = N_{\mathrm{tr}} + 1}^{N_{\mathrm{tr}} + N_{\mathrm{val}}} \left| h(\rho_\ell; \eta) - \Tr(O^{\mathrm{(unk)}} \rho_\ell)) \right|^2 - \E_{\rho \sim \mathcal{D}}  \left| h(\rho; \eta) - \Tr(O^{\mathrm{(unk)}} \rho)) \right|^2 \right| \leq \norm{O^{\mathrm{(unk)}}}^2 \frac{\epsilon'}{3},
\end{equation}
for all $\eta \in \{2^0 \hat{\Theta}, \ldots, 2^R \hat{\Theta}\}$.
Using the definition of $\eta^*$ and $\eta^\triangle$, we have
\begin{align}
    \E_{\rho \sim \mathcal{D}}  \left| h(\rho; \eta^*) - \Tr(O^{\mathrm{(unk)}} \rho)) \right|^2 &\leq \frac{1}{N_{\mathrm{val}}}\sum_{\ell = N_{\mathrm{tr}} + 1}^{N_{\mathrm{tr}} + N_{\mathrm{val}}} \left| h(\rho_\ell; \eta^*) - \Tr(O^{\mathrm{(unk)}} \rho_\ell)) \right|^2 + \norm{O^{\mathrm{(unk)}}}^2 \frac{\epsilon'}{3}\\
    &\leq \frac{1}{N_{\mathrm{val}}}\sum_{\ell = N_{\mathrm{tr}} + 1}^{N_{\mathrm{tr}} + N_{\mathrm{val}}} \left| h(\rho_\ell; \eta^\triangle) - \Tr(O^{\mathrm{(unk)}} \rho_\ell)) \right|^2 + \norm{O^{\mathrm{(unk)}}}^2 \frac{\epsilon'}{3}\\
    &\leq \E_{\rho \sim \mathcal{D}}  \left| h(\rho; \eta^\triangle) - \Tr(O^{\mathrm{(unk)}} \rho)) \right|^2 + \norm{O^{\mathrm{(unk)}}}^2 \frac{2 \epsilon'}{3}
\end{align}
with probability at least $1 - (\delta / 3)$ over the sampling of the validation set.
Combining with Eq.~\eqref{eq:good-eta-tri} and employing union bound, we have
\begin{equation}
    \E_{\rho \sim \mathcal{D}}  \left| h(\rho; \eta^*) - \Tr(O^{\mathrm{(unk)}} \rho)) \right|^2 \leq \epsilon \norm{O^{\mathrm{(unk)}}}^2 + \epsilon' \norm{O^{\mathrm{(unk)}}}^2 + \epsilon' \norm{O^{\mathrm{(unk)}}}^{2-r} \norm{O^{\mathrm{(low)}}}^r
\end{equation}
with probability at least $1 - \delta$, as claimed in Eq.~\eqref{eq:main-learn-bound}.

Finally, by noting that $|S| = \mathcal{O}(n^k)$ and $k = \log_{1.5}(1/\epsilon)$ and recalling the definition of $\tilde{\epsilon}$ in Eq~.\eqref{eq:tildeeps_def} on the right-hand side of Eq.~\eqref{eq:Ntrdef}, we have
\begin{equation}
    \frac{\log(1 / \delta)}{\epsilon'} + \frac{\log(|S| / \delta)}{\tilde{\epsilon}^2} = \log\left(\frac{n}{\delta}\right) \left(\frac{1}{\epsilon'}\right)^{k+1} 2^{\mathcal{O}(k \log k)} = \log\left(\frac{n}{\delta}\right) 2^{\mathcal{O}\left(\log(\frac{1}{\epsilon}) \left( \log\log(\frac{1}{\epsilon}) +  \log(\frac{1}{\epsilon'})\right)\right)}.
\end{equation} So it suffices to have
\begin{equation}
    N_{\mathrm{val}} = \log\left(\frac{n}{\delta}\right) 2^{\Omega\left(\log(\frac{1}{\epsilon}) \left( \log\log(\frac{1}{\epsilon}) +  \log(\frac{1}{\epsilon'})\right)\right)}. \label{eq:Ntr_bd}
\end{equation}
Furthermore, by noting that $R = \log_2 \lceil 1 / \tilde{\epsilon} \rceil = \mathcal{O}(k \log(\epsilon') + k \log^2 k)$ in Eq.~\eqref{eq:Nvaldef}, we see that it suffices to have
\begin{equation}
    N_{\mathrm{val}} = \Omega\left( \frac{\log \log (\epsilon) + \log \log (\epsilon') + \log(1 / \delta)}{(\epsilon')^2} \right). \label{eq:Nval_bd}
\end{equation}
Recall that the full data size $N = N_{\mathrm{tr}} + N_{\mathrm{val}}$, and the quantity in Eq.~\eqref{eq:Nval_bd} is dominated by the one in Eq.~\eqref{eq:Ntr_bd}, yielding one argument in the minimum of the sample complexity claimed in Theorem~\ref{thm:main-learning}.

\subsubsection{Establishing sample complexity on the right of Eq.~\eqref{eq:samp-learning-obs-dual}}

By considering the full dataset size to be
\begin{equation}
    N = \Omega\left( \frac{\log(|S'| / \delta)}{(\tilde{\epsilon}')^2} \right),
\end{equation}
Hoeffding's inequality and union bound can be used to guarantee that
\begin{equation}
    \left| \hat{x}_P' - x_P \right| \leq \norm{O^{\mathrm{(unk)}}} \tilde{\epsilon}', \quad \left| \hat{\beta}_P' - \beta_P \right| \leq \tilde{\epsilon}', \quad \forall P \in S'
\end{equation}
with probability at least $1 - \delta$.
Using Lemma~\ref{lem:filter-smallweight} on filtering small-weight factor,
we have
\begin{equation}
    \beta_P \left| \hat{\alpha}_P' - \alpha_P \right|^2 \leq 3 \norm{O^{\mathrm{(unk)}}}^2 \tilde{\epsilon}'.
\end{equation}
Using Lemma~\ref{lem:mse-D} on mean squared error and Corollary~\ref{cor:low-deg-approx} on low-degree approximation, we have
\begin{align}
    \E_{\rho \sim \mathcal{D}} \left| \Tr(\hat{O}' \rho) - \Tr(O^{\mathrm{(unk)}} \rho) \right|^2 &\leq (2/3)^k \norm{O^{\mathrm{(unk)}}}^2 + \sum_{P \in S'} \beta_P \left|\hat{\alpha}_P' -\alpha_P \right|^2\\
    &\leq \norm{O^{\mathrm{(unk)}}}^2 \frac{\epsilon}{2} + 3 n^{k'} \norm{O^{\mathrm{(unk)}}}^2 \tilde{\epsilon}'.
\end{align}
From the definition of $\tilde{\epsilon}'$ in Eq.~\eqref{eq:t-eps-prime}, we have
\begin{equation}
    \E_{\rho \sim \mathcal{D}} \left| \Tr(\hat{O}' \rho) - \Tr(O^{\mathrm{(unk)}} \rho) \right|^2 \leq \epsilon \norm{O^{\mathrm{(unk)}}}^2.
\end{equation}
The sample complexity is
\begin{equation}
    N = \mathcal{O}\left( \frac{\log(|S'| / \delta)}{(\tilde{\epsilon}')^2} \right) = \log(n / \delta) \, 2^{\mathcal{O}(\log(1 / \epsilon) \log(n))},
\end{equation}
which completes the sample complexity claimed in Theorem~\ref{thm:main-learning}.

\section{Learning quantum evolutions from randomized experiments}
\label{sec:learning-qevo}

We recall the following definitions pertaining to classical shadows for quantum states and quantum evolutions, based on randomized Pauli measurements and random input states.

\begin{definition}[Single-qubit stabilizer state]
We define
\begin{equation}
\mathrm{stab}_1 \triangleq \{\ket{0}, \ket{1}, \ket{+}, \ket{-}, \ket{y+}, \ket{y-}\}
\end{equation}
to be the set of single-qubit stabilizer states.
\end{definition}

We define randomized Pauli measurements as follows.

\begin{definition}[Randomized Pauli measurement]
Given $n > 0$. A randomized Pauli measurement on an $n$-qubit state is given by a $6^n$-outcome POVM
\begin{equation}
    \mathcal{F}^{\mathrm{(Pauli)}} \triangleq \left\{ \frac{1}{3^n} \bigotimes_{i=1}^n \ketbra{s_i}{s_i} \right\}_{s_1, \ldots, s_n \in \mathrm{stab}_1},
\end{equation}
which corresponds to measuring every qubit under a random Pauli basis ($X, Y, Z$).
The outcome of $\mathcal{F}^{\mathrm{(Pauli)}}$ is an $n$-qubit state $\ket{\psi} = \bigotimes_{i=1}^n \ket{s_{i}},$ where $\ket{s_{i}} \in \mathrm{stab}_1$ is a single-qubit stabilizer state.
\end{definition}

In the following, we define the classical shadow of a quantum state based on randomized Pauli measurements.
Classical shadows could also be defined based on other randomized measurements \cite{huang2020predicting}.

\begin{definition}[Classical shadow of a quantum state]
Given $n, N > 0$.
Consider an $n$-qubit state $\rho$.
A size-$N$ classical shadow $S_N(\rho)$ of quantum state $\rho$ is a random set given by
\begin{equation}
    S_N(\rho) \triangleq \left\{ \ket{\psi_\ell} \right\}_{\ell = 1}^N,
\end{equation}
where $\ket{\psi_\ell} = \bigotimes_{i=1}^n \ket{s_{\ell, i}}$ is the outcome of the $\ell$-th randomized Pauli measurement on a single copy of $\rho$.
\end{definition}

We can generalize classical shadows from quantum states to quantum processes by considering random product input states and randomized Pauli measurements.
A similar generalization has been studied in \cite{levy2021classical}.

\begin{definition}[Classical shadow of a quantum process]
Given an $n$-qubit CPTP map $\mathcal{E}$.
A size-$N$ classical shadow $S_N(\mathcal{E})$ of quantum evolution $\mathcal{E}$ is a random set given by
\begin{equation}
    S_N(\mathcal{E}) \triangleq \left\{ \ket{\psi^{\mathrm{(in)}}_\ell}, \ket{\psi^{\mathrm{(out)}}_\ell} \right\}_{\ell = 1}^N,
\end{equation}
where $\ket{\psi^{\mathrm{(in)}}_\ell} = \bigotimes_{i=1}^n \ket{s^{\mathrm{(in)}}_{\ell, i}}$ is a random input state with $\ket{s^{\mathrm{(in)}}_{\ell, i}} \in \mathrm{stab}_1$ sampled uniformly, and $\ket{\psi^{\mathrm{(out)}}_\ell} = \bigotimes_{i=1}^n \ket{s^{\mathrm{(out)}}_{\ell, i}}$ is the outcome of performing randomized Pauli measurement on $\mathcal{E}(\ketbra{\psi^{\mathrm{(in)}}_\ell}{\psi^{\mathrm{(in)}}_\ell})$.
\end{definition}

After obtaining the outcome from $N$ randomized experiments, we can design a learning algorithm that learns a model of the unknown CPTP map $\mathcal{E}$, such that given an input state $\rho$ and an observable $O$, the algorithm could predict $\Tr(O \mathcal{E}(\rho))$.
The rigorous guarantee is given in the following theorem.

\begin{theorem}[Learning to predict a quantum evolution] \label{thm:main-learning-evo}
    Given $n, \epsilon, \epsilon', \delta > 0$.
    Consider any unknown $n$-qubit CPTP map $\mathcal{E}$.
    Given a classical shadow $S_N(\mathcal{E})$ of $\mathcal{E}$ obtained by $N$ randomized experiments with
    \begin{equation} \label{eq:Nsamp-evo}
        N = \log\left(\frac{n}{\delta}\right) \, \min\left( 2^{\mathcal{O}\left(\log(\frac{1}{\epsilon}) \left( \log\log(\frac{1}{\epsilon}) +  \log(\frac{1}{\epsilon'})\right)\right)}, \,\, 2^{\mathcal{O}\left( \log(1 / \epsilon) \log(n) \right)} \right).
    \end{equation}
    With probability $\geq 1 - \delta$,
    the algorithm learns a function~$h$, s.t. for any $n$-qubit state distribution $\mathcal{D}$ invariant under single-qubit $H$ and $S$ gates, and any observable $O$ given as a sum of few-body observables, where each qubit is acted on by $\mathcal{O}(1)$ of the few-body observables,
    \begin{equation}
        \E_{\rho \sim \mathcal{D}} \left| h(\rho, O) - \Tr\left( O \mathcal{E}(\rho) \right) \right|^2 \leq
        \left( \epsilon + \epsilon'  \left[ \tfrac{\norm{O^{\mathrm{(\mathrm{low})}}}}{\norm{O}} \right]^{\frac{2 \lceil \log_{1.5}(1/\epsilon) \rceil}{\lceil \log_{1.5}(1/\epsilon) \rceil + 1}} \right) \norm{O}^2. \label{eq:main-learn-bound-evo}
    \end{equation}
    Here, $O^{(\mathrm{low})}$ is the low-degree approximation of $O$ after Heisenberg evolution under $\mathcal{E}$.
\end{theorem}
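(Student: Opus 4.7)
The plan is to instantiate the high-level pipeline sketched in Section~\ref{sec:idea-proof}: reduce learning the process to learning the single unknown observable $O^{\mathrm{(unk)}} \triangleq \mathcal{E}^\dagger(O)$ with training inputs drawn from a specially chosen locally flat distribution, then apply the observable-learning machinery of Theorem~\ref{thm:main-learning} with a few quantitative simplifications that exploit the product-stabilizer input ensemble. Concretely, the classical shadow $S_N(\mathcal{E})$ provides, for each $\ell$, an input $\rho_\ell \triangleq \ketbra{\psi^{\mathrm{(in)}}_\ell}{\psi^{\mathrm{(in)}}_\ell}$ drawn from the product-stabilizer distribution $\mathcal{D}'$, and an unbiased classical-shadow estimator $\hat{\sigma}_\ell \triangleq \bigotimes_i (3\ketbra{s^{\mathrm{(out)}}_{\ell,i}}{s^{\mathrm{(out)}}_{\ell,i}}-I)$ for $\mathcal{E}(\rho_\ell)$. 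Because $\mathcal{D}'$ is invariant under single-qubit Clifford gates and because $\gamma^*$ is multiplicative under tensor products with $\gamma^*(\sigma) = 1/2$ for every single-qubit stabilizer state $\sigma$, the weight factor $\beta_P$ of Section~\ref{sec:Pauli-tools} is the deterministic constant $(1/3)^{|P|}$; this removes the need to estimate $\hat{\beta}_P$ and collapses the filter in Lemma~\ref{lem:filtering-full} to exactly the thresholding rule of Eq.~\eqref{eq:hatalphaP}.

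Next I would carry out the concentration analysis for $\hat{x}_P(O)$. By Lemma~\ref{lem:extract-Pauli} and the unbiasedness of $\hat{\sigma}_\ell$, each term $\Tr(P\rho_\ell)\Tr(O\hat{\sigma}_\ell)$ in Eq.~\eqref{eq:hatxP} has expectation $(1/3)^{|P|}\alpha_P$, where $\alpha_P$ is the Pauli coefficient of $O^{\mathrm{(unk)}}$. The first factor is bounded in $[-1,1]$ and the variance of the second is controlled by the shadow norm, which by Lemma~\ref{lem:shadownorm-bounded} satisfies $\norm{O}_{\mathrm{shadow}} = \mathcal{O}(\norm{O})$ because $O$ is bounded-degree. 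Bernstein's inequality plus a union bound over the $\mathcal{O}(n^k)$ Pauli strings with $|P|\le k = \lceil\log_{1.5}(1/\epsilon)\rceil$ then gives $|\hat{x}_P(O) - (1/3)^{|P|}\alpha_P| \le \tilde{\epsilon}\cdot\sum_{Q:a_Q\neq 0}|a_Q|$ simultaneously for all such $P$, with the desired failure probability $\delta$, provided $N$ satisfies Eq.~\eqref{eq:Nsamp-evo}. The $\sum_Q|a_Q|$ factor is $\mathcal{O}(\norm{O})$ by the bounded-degree inequality Corollary~\ref{cor:norm-ineq-bd-klocal-main}.

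With these estimates I would invoke the filtering lemma (Lemma~\ref{lem:filtering-full}) on the set $S = \{P:|P|\le k\}$, taking $A^r = (3/C(k))^r \norm{O^{\mathrm{(low)}}}^r$ via the quantum Bohnenblust--Hille inequality Corollary~\ref{cor:norm-ineq-klocal-main} applied to the $k$-local truncation $O^{\mathrm{(low)}} = \sum_{|P|\le k}\alpha_P P$ of $O^{\mathrm{(unk)}}$, with $r = 2k/(k+1)$. The lemma then bounds $\sum_{P\in S}\beta_P|\hat{\alpha}_P - \alpha_P|^2$ by $6 A^r\,\eta^{2-r}\tilde{\epsilon}^{1-r/2}$. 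Finally, to translate from $\mathcal{D}'$-training to $\mathcal{D}$-evaluation, I would apply Lemma~\ref{lem:mse-D} on the target distribution: since $\gamma^*(\rho_{\mathsf{dom}(P)}) \le 1$ for any $\rho$, the evaluation MSE decomposes into (i) the truncation error $\E_{\rho\sim\mathcal{D}}|\Tr((O^{\mathrm{(unk)}}-O^{\mathrm{(low)}})\rho)|^2 \le (2/3)^k\norm{O^{\mathrm{(unk)}}}^2 \le \epsilon\norm{O}^2$ from Corollary~\ref{cor:low-deg-approx} together with $\norm{O^{\mathrm{(unk)}}} = \norm{\mathcal{E}^\dagger(O)} \le \norm{O}$, and (ii) the coefficient-recovery error controlled by the filtering lemma, which rearranges to the $\epsilon'\norm{O^{\mathrm{(low)}}/O}^r\norm{O}^2$ contribution in Eq.~\eqref{eq:main-learn-bound-evo}. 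The second branch of the minimum in Eq.~\eqref{eq:Nsamp-evo} is handled by an analogous but more direct argument that dispenses with the hyperparameter sweep, using $k' = \lceil\log_{1.5}(2/\epsilon)\rceil$ and a uniform Pauli threshold $\tilde{\epsilon}' = \epsilon/(6n^{k'})$ as in the second learning algorithm of Section~\ref{sec:learning-algo}.

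The main obstacle I anticipate is keeping the two layers of randomness disentangled in the concentration step: within each $\ell$, the randomness of the stabilizer input $\rho_\ell$ is coupled to the randomness of the randomized Pauli measurement on $\mathcal{E}(\rho_\ell)$. This forces one to treat the random variable $\Tr(P\rho_\ell)\Tr(O\hat{\sigma}_\ell)$ as a single object and argue about its variance via the shadow norm of $O$, rather than treating input sampling and output measurement as separate stages. A related subtlety is that the norm inequality (Corollary~\ref{cor:norm-ineq-klocal-main}) applies to $O^{\mathrm{(low)}}$, which depends on the unknown $\mathcal{E}$; showing that the final bound can be expressed purely in terms of $\norm{O}$ and $\norm{O^{\mathrm{(low)}}}$, with the latter automatically bounded by $\norm{O}$, is what allows the statement of Theorem~\ref{thm:main-learning-evo} to be written without reference to properties of $\mathcal{E}$.
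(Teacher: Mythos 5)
Your overall architecture matches the paper's proof exactly: reduce to learning $O^{\mathrm{(unk)}}=\mathcal{E}^\dagger(O)$ from the product-stabilizer ensemble, observe that $\beta_P=(1/3)^{|P|}$ is deterministic there (so the $\hat\beta_P$-estimation and validation-sweep machinery of the general observable-learning theorem can be dropped and the filter collapses to Eq.~\eqref{eq:hatalphaP}), apply Lemma~\ref{lem:filtering-full} with $A^r=(3/C(k))^r\norm{O^{\mathrm{(low)}}}^r$ from the Bohnenblust--Hille-type inequality, control $\eta=\norm{O}_{\mathrm{Pauli},1}$ via the bounded-degree norm inequality, transfer from $\mathcal{D}^0$ to a general locally flat $\mathcal{D}$ using $\gamma^*\le 1$ (at the cost of a $2^{k}$ factor that is pre-absorbed into $\tilde\epsilon$), and run a second, simpler instantiation for the other branch of the minimum. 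All of this is the paper's argument.

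The one step that would fail as written is the concentration bound. You propose applying Bernstein's inequality directly to the per-sample random variable $\Tr(P\rho_\ell)\Tr(O\hat\sigma_\ell)$, with its variance controlled by the shadow norm $\norm{O}_{\mathrm{shadow}}=\mathcal{O}(\norm{O})$. The variance is indeed $\mathcal{O}(1)$, but the \emph{range} of $\Tr(O\hat\sigma_\ell)$ for a bounded-degree $O$ with $\mathcal{O}(n)$ few-body terms is $\Theta(n)$ in the worst case, so the Bernstein tail contributes a term of order $n\log(n^k/\delta)/N$ to the deviation; making this smaller than $\tilde\epsilon$ forces $N=\Omega(n)$, destroying the claimed $\log(n)$ dependence. (This is precisely why classical shadows for sums of local terms use median-of-means rather than a plain empirical mean --- but the algorithm whose guarantee you are proving uses a plain mean.) The paper sidesteps this by never concentrating the aggregated quantity: it applies Hoeffding separately to each $\hat x_P(Q)$ for the $\mathcal{O}(n)$ nonzero Pauli components $Q$ of $O$, each of which is uniformly bounded by $3^{|Q|}=\mathcal{O}(1)$, takes a union bound over the $\mathcal{O}(n^{k+\kappa})$ pairs $(P,Q)$, and only then recombines via the triangle inequality, paying exactly the factor $\sum_Q|a_Q|=\norm{O}_{\mathrm{Pauli},1}$ that appears in your target estimate $|\hat x_P(O)-x_P(O)|\le\tilde\epsilon\sum_Q|a_Q|$ and that is subsequently tamed by Corollary~\ref{cor:norm-ineq-bd-klocal-main}. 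Since you already state the error target in this form, the fix is local: replace Bernstein-on-the-sum by Hoeffding-per-Pauli-component, and the rest of your argument goes through.
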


The scaling given in the main text corresponds to the additional assumption that $\norm{O} \leq 1$.
By noting that $\frac{2 \lceil \log_{1.5}(1/\epsilon) \rceil}{\lceil \log_{1.5}(1/\epsilon) \rceil + 1} \in [1, 2)$, we have
\begin{equation}
    \left[ \tfrac{\norm{O^{\mathrm{(\mathrm{low})}}}}{\norm{O}} \right]^{\frac{2 \lceil \log_{1.5}(1/\epsilon) \rceil}{\lceil \log_{1.5}(1/\epsilon) \rceil + 1}} \norm{O}^2 \leq \norm{O^{\mathrm{(\mathrm{low})}}}^{\frac{2 \lceil \log_{1.5}(1/\epsilon) \rceil}{\lceil \log_{1.5}(1/\epsilon) \rceil + 1}} \leq \max\left(\norm{O^{\mathrm{(\mathrm{low})}}}^2, 1\right).
\end{equation}
Theorem~\ref{thm:main-res} follows by considering $\epsilon' \rightarrow 0$.

\subsection{Learning algorithm}

Recall that a size-$N$ classical shadow $S_N(\mathcal{E})$ of the CPTP map $\mathcal{E}$ is a set given by
\begin{equation}
    S_N(\mathcal{E}) \triangleq \left\{ \ket{\psi^{\mathrm{(in)}}_\ell} = \bigotimes_{i=1}^n \ket{s^{\mathrm{(in)}}_{\ell, i}}, \ket{\psi^{\mathrm{(out)}}_\ell} = \bigotimes_{i=1}^n \ket{s^{\mathrm{(out)}}_{\ell, i}} \right\}_{\ell = 1}^N.
\end{equation}
Given an observable $O$ that can be written as a sum of $\kappa$-qubit observables, where each qubit is acted on by at most $d$ of the $\kappa$-qubit observables with $\kappa, d = \mathcal{O}(1)$. We have
\begin{equation}
    O = \sum_{Q \in \{I, X, Y, Z\}^{\otimes n}: |Q| \leq \kappa} a_Q Q,
\end{equation}
where $\sum_{Q: |Q| \leq \kappa} \indicator[a_Q \neq 0] = \mathcal{O}(n)$.
The algorithm creates a dataset,
\begin{equation}
    \left\{ \rho_\ell = \ketbra{\psi^{\mathrm{(in)}}_\ell}{\psi^{\mathrm{(in)}}_\ell}, \quad y_\ell(O) = \sum_{Q: |Q| \leq \kappa} a_Q \Tr\left(Q \bigotimes_{i=1}^n\left(3 \ketbra{s^{\mathrm{(out)}}_{\ell, i}}{s^{\mathrm{(out)}}_{\ell, i}} - I \right) \right) \right\}_{\ell=1}^N
\end{equation}
from the classical shadow $S_N(\mathcal{E})$, which requires $\mathcal{O}(n N)$ computational time.
We also define the parameter
\begin{equation}
    \eta \triangleq \sum_{Q: |Q| \leq \kappa} |a_Q| = \norm{O}_{\mathrm{Pauli}, 1}
\end{equation}
based on the given observable $O$.

The sample complexity in Eq.~\eqref{eq:Nsamp-evo} is the minimum of two arguments.
Each of the two corresponds to a hyperparameter setting for $k$ and $\tilde{\epsilon}$.
Let $C(k)$ be the function from Corollary~\ref{cor:any-norm} and $C(k, d)$ be the function from Corollary~\ref{cor:bounded-norm}.
The first hyperparameter setting considers
\begin{equation} \label{eq:tilde-eps-def}
k = \lceil \log_{1.5}(1 / \epsilon) \rceil, \quad
\tilde{\epsilon} = \left( \frac{\epsilon'}{6 \cdot 2^k} \right)^{k + 1} \left( \frac{C(\kappa, d)}{3} \right)^{2} \left( \frac{C(k)}{3} \right)^{2k}.
\end{equation}
The second hyperparameter setting considers
\begin{equation} \label{eq:tilde-eps-def-prime}
k = \lceil \log_{1.5}(2 / \epsilon) \rceil, \quad
\tilde{\epsilon} = \frac{\epsilon}{9 \cdot 2^{k+1} \cdot n^{k}} \left( \frac{C(\kappa, d)}{3} \right)^{2}.
\end{equation}
For every Pauli observable $P \in \{I, X, Y, Z\}^{\otimes n}$ with $|P| \leq k$, the algorithm computes
\begin{align}
    \hat{x}_P(O) &= \frac{1}{N} \sum_{\ell=1}^{N} \Tr(P \rho_\ell) y_\ell(O),\\
    \hat{\beta}_P &= \left(\frac{1}{3}\right)^{|P|},\\
    \hat{\alpha}_P(O) &=
    \begin{cases}
        0, & \hat{\beta}_P \leq 2 \tilde{\epsilon},\\
        0, & \hat{\beta}_P > 2 \tilde{\epsilon}, \, |\hat{x}_P(O)| / \hat{\beta}_P^{1/2} \leq 2 \eta \sqrt{\tilde{\epsilon}},\\
        \hat{x}_P(O) / \hat{\beta}_P, & \hat{\beta}_P > 2 \tilde{\epsilon}, \, |\hat{x}_P(O)| / \hat{\beta}_P^{1/2} > 2 \eta \sqrt{\tilde{\epsilon}},
    \end{cases}
\end{align}
which requires $\mathcal{O}(k N)$ time per Pauli observable $P$.
Finally, given an $n$-qubit state $\rho$, the algorithm outputs
\begin{equation}
    h(\rho, O) \triangleq \sum_{P: |P| \leq k} \hat{\alpha}_P(O) \Tr(P\rho),
\end{equation}
which uses a computational time of $\mathcal{O}(n^k)$.

\subsection{Rigorous performance guarantee}

In this section, we prove that the learning algorithm presented in the last section satisfies Theorem~\ref{thm:main-learning-evo}.
The proof uses the tools presented in Section~\ref{sec:Pauli-tools} and is similar to the proof of Theorem~\ref{thm:main-learning}.

\subsubsection{Definitions}

For a given observable that is a sum of $\kappa$-qubit observables, where $\kappa = \mathcal{O}(1)$ and each qubit is acted on by $d = \mathcal{O}(1)$ of the $\kappa$-qubit observables, we can write
\begin{equation}
    O = \sum_{Q \in \{I, X, Y, Z\}^{\otimes n}: |Q| \leq \kappa} a_Q Q.
\end{equation}
We define a few variables based on $O$ as follows. We consider the unknown observable to be
\begin{equation}
O^{(\mathrm{unk})} \triangleq \mathcal{E}^\dagger(O) \triangleq \sum_{P \in \{I, X, Y, Z\}^{\otimes n}} \alpha_P(O) P,
\end{equation}
and the low-degree approximation of $O^{(\mathrm{unk})}$ to be
\begin{equation}
    O^{(\mathrm{low})} \triangleq \sum_{P \in \{I, X, Y, Z\}^{\otimes n}: |P| \leq k} \alpha_P(O) P.
\end{equation}
Then for all Pauli observables $P \in \{I, X, Y, Z\}^{\otimes n}$, we define
\begin{equation}
    x_P(O) \triangleq \left(\frac{1}{3}\right)^{|P|} \alpha_P(O), \quad \beta_P \triangleq \left(\frac{1}{3}\right)^{|P|}.
\end{equation}
We also define the standard $n$-qubit input state distribution $\mathcal{D}^0$
to be the uniform distribution over the tensor product of $n$ single-qubit stabilizer states.
A nice property of $\mathcal{D}^0$ is that for any state $\rho$ in the support of $\mathcal{D}^0$, the non-identity purity for a subsystem $A$ of size $L$ is
\begin{equation}
    \gamma^*(\rho_A) = \frac{1}{2^L}.
\end{equation}
Using this property and Lemma~\ref{lem:extract-Pauli} on extracting Pauli coefficients, we have the following identities
\begin{equation}
    x_P(O) = \E_{\rho \sim \mathcal{D}^0} \Tr(P \rho) \Tr\left(\mathcal{E}^\dagger(O) \rho \right), \quad \beta_P = \E_{\rho \sim \mathcal{D}^0} \Tr(P \rho)^2 = \E_{\rho \sim \mathcal{D}^0} \left[ \gamma^*(\rho_{\mathsf{dom}(P)}) \right] \left( \frac{2}{3} \right)^{|P|}.
\end{equation}
We are now ready to prove Theorem~\ref{thm:main-learning-evo}.

\subsubsection{Prediction error under standard distribution $\mathcal{D}^0$ (first set of hyperparameters)}

We begin the proof by considering the first set of hyperparameters $k, \tilde{\epsilon}$ as given in Eq.~\eqref{eq:tilde-eps-def}.
For a Pauli observable $Q \in \{I, X, Y, Z\}^{\otimes n}$ with $|Q| \leq \kappa = \mathcal{O}(1)$, we consider the random variable
\begin{equation}
    \hat{x}_P(Q) = \frac{1}{N} \sum_{\ell=1}^{N} \Tr(P \rho_\ell) y_\ell(Q) = \frac{1}{N} \sum_{\ell=1}^{N} \Tr(P \rho_\ell) \Tr\left(Q \bigotimes_{i=1}^n\left(3 \ketbra{s^{\mathrm{(out)}}_{\ell, i}}{s^{\mathrm{(out)}}_{\ell, i}} - I \right) \right).
\end{equation}
Because $|Q| = \mathcal{O}(1)$, we have $\left|\Tr\left(Q \bigotimes_{i=1}^n\left(3 \ketbra{s^{\mathrm{(out)}}_{\ell, i}}{s^{\mathrm{(out)}}_{\ell, i}} - I \right) \right)\right| = \mathcal{O}(1)$ with probability one.
By considering the size of the classical shadow $S_N(\mathcal{E})$ to be
\begin{equation}
    N = \Omega\left( \frac{\log(n^{k + \kappa} / \delta)}{\tilde{\epsilon}^2} \right), \label{eq:Ntrdef-evo}
\end{equation}
we can utilize Hoeffding's inequality and union bound to guarantee that
\begin{equation}
    \left| \hat{x}_P(Q) - x_P(Q) \right| \leq \tilde{\epsilon}, \quad \forall P, Q \in \{I, X, Y, Z\}^{\otimes n}, |P| \leq k, |Q| \leq \kappa
\end{equation}
with probability at least $1 - \delta$.
In the following proof, we will condition on the above event.

Using triangle inequality, we have
\begin{equation} \label{eq:hat-est-error-learn-evo}
    \left| \hat{x}_P(O) - x_P(O) \right| \leq \norm{O}_{\mathrm{Pauli},1}\tilde{\epsilon} = \eta \tilde{\epsilon}, \quad \left| \hat{\beta}_P - \beta_P \right| = 0, \quad \forall P: |P| \leq k.
\end{equation}
The norm inequality given in Corollary~\ref{cor:any-norm} shows that
\begin{equation}
    \sum_{P: |P| \leq k} \left| \alpha_P(O) \right|^r \leq \left(  \frac{3}{C(k)} \right)^{r} \norm{O^{\mathrm{(low)}}}^r
\end{equation}
for the constant $C(k)$ defined in \eqref{cor:opt-any}.

The filtering lemma given in Lemma~\ref{lem:filtering-full} shows that
\begin{align}
    \sum_{P: |P| \leq k} \E_{\rho \sim \mathcal{D}^0} \left[ \gamma^*(\rho_{\mathsf{dom}(P)}) \right] \left( \frac{2}{3} \right)^{|P|} \left|\hat{\alpha}_P(O) -\alpha_P(O) \right|^2 &\leq 6 \eta^{2-r} \left(  \frac{3}{C(k)} \right)^{r} \norm{O^{\mathrm{(low)}}}^r \tilde{\epsilon}^{1 - (r / 2)}.
\end{align}
From the norm inequality and the constant $C(k, d)$ given in Corollary~\ref{cor:bounded-norm}, we have
\begin{equation}
    \eta = \norm{O}_{\mathrm{Pauli}, 1} \leq \frac{3}{C(\kappa, d)} \norm{O}.
\end{equation}
Combining with the definition of $\tilde{\epsilon}$ given in Eq.~\eqref{eq:tilde-eps-def}, we have
\begin{equation} \label{eq:error-bound-D0}
    \sum_{P: |P| \leq k} \E_{\rho \sim \mathcal{D}^0} \left[ \gamma^*(\rho_{\mathsf{dom}(P)}) \right] \left( \frac{2}{3} \right)^{|P|} \left|\hat{\alpha}_P(O) -\alpha_P(O) \right|^2 \leq \left[\tfrac{\norm{O^{\mathrm{(low)}}}}{\norm{O}}\right]^r \frac{\epsilon'}{2^k} \cdot \norm{O}^2.
\end{equation}
Using Lemma~\ref{lem:mse-D} on mean squared error and Corollary~\ref{cor:low-deg-approx} on low-degree approximation, we have
\begin{align}
    &\E_{\rho \sim \mathcal{D}^0} \left| h(\rho, O) - \Tr(O^{\mathrm{(unk)}} \rho) \right|^2\\
    &\leq \underbrace{(2/3)^k \norm{O^{\mathrm{(unk)}}}^2}_{\leq \norm{O^{\mathrm{(unk)}}}^2 \epsilon} + \sum_{P: |P| \leq k} \E_{\rho \sim \mathcal{D}^0} \left[ \gamma^*(\rho_{\mathsf{dom}(P)}) \right] \left( \frac{2}{3} \right)^{|P|} \left|\hat{\alpha}_P(O) -\alpha_P(O) \right|^2.
\end{align}
Using the definition of $O^{\mathrm{(unk)}}$, we have $O^{\mathrm{(unk)}} = \mathcal{E}^\dagger(O)$ and $\norm{O^{\mathrm{(unk)}}} \leq \norm{O}$. Hence
\begin{equation}
    \E_{\rho \sim \mathcal{D}^0} \left| h(\rho, O) - \Tr(O \mathcal{E}(\rho)) \right|^2 \leq \left(\epsilon +  \frac{\epsilon'}{2^k} \left[\tfrac{\norm{O^{\mathrm{(low)}}}}{\norm{O}}\right]^r \right) \norm{O}^2,
\end{equation}
which establishes a prediction error bound for distribution $\mathcal{D}^0$.

\subsubsection{Prediction error under general distribution $\mathcal{D}$  (first set of hyperparameters)}

We now consider an arbitrary $n$-qubit state distribution $\mathcal{D}$ invariant under single-qubit $H$ and $S$ gates.
Using Lemma~\ref{lem:mse-D} on mean squared error and Corollary~\ref{cor:low-deg-approx} on low-degree approximation, we have
\begin{align}
    &\E_{\rho \sim \mathcal{D}} \left| h(\rho, O) - \Tr(O^{\mathrm{(unk)}} \rho) \right|^2\\
    &\leq \epsilon \norm{O}^2 + \sum_{P: |P| \leq k} \E_{\rho \sim \mathcal{D}} \left[ \gamma^*(\rho_{\mathsf{dom}(P)}) \right] \left( \frac{2}{3} \right)^{|P|} \left|\hat{\alpha}_P(O) -\alpha_P(O) \right|^2.
\end{align}
Recall that $\gamma^*(\rho_{\mathsf{dom}(P)}) \leq 1$, hence
\begin{equation}
    \E_{\rho \sim \mathcal{D}} \left[ \gamma^*(\rho_{\mathsf{dom}(P)}) \right] \left( \frac{2}{3} \right)^{|P|} \leq 2^k \left( \frac{1}{3} \right)^{|P|}, \quad \forall P \in \{I, X, Y, Z\}^{\otimes n}, |P| \leq k.
\end{equation}
Furthermore, we have $\E_{\rho \sim \mathcal{D}_0} \left[ \gamma^*(\rho_{\mathsf{dom}(P)}) \right] \left( \frac{2}{3} \right)^{|P|} = (1 / 3)^{|P|}$. Together, we have
\begin{equation}
    \E_{\rho \sim \mathcal{D}} \left| h(\rho, O) - \Tr(O^{\mathrm{(unk)}} \rho) \right|^2 \leq \epsilon \norm{O}^2 + 2^k \sum_{P: |P| \leq k} \E_{\rho \sim \mathcal{D}^0} \left[ \gamma^*(\rho_{\mathsf{dom}(P)}) \right] \left( \frac{2}{3} \right)^{|P|} \left|\hat{\alpha}_P(O) -\alpha_P(O) \right|^2.
\end{equation}
Combining the above with Eq.~\eqref{eq:error-bound-D0}, we have
\begin{equation}
    \E_{\rho \sim \mathcal{D}} \left| h(\rho, O) - \Tr(O \mathcal{E}(\rho)) \right|^2 \leq \left(\epsilon +  \epsilon' \left[\tfrac{\norm{O^{\mathrm{(low)}}}}{\norm{O}}\right]^r \right) \norm{O}^2,
\end{equation}
which is the prediction error under distribution $\mathcal{D}$.

\subsubsection{Putting everything together (first set of hyperparameters)}

From Eq.~\eqref{eq:tilde-eps-def}, we have set the parameter $\tilde{\epsilon}$ to be
\begin{equation}
    \tilde{\epsilon} = \left( \frac{\epsilon'}{6} \right)^{k + 1} \left( \frac{C(\kappa, d)}{3} \right)^{2} \left( \frac{C(k)}{3} \right)^{2k}.
\end{equation}
Furthermore, given the classical shadow $S_N(\mathcal{E})$ of size
\begin{equation}
    N = \mathcal{O}\left( \frac{\log(n^{k + \kappa} / \delta)}{\tilde{\epsilon}^2} \right) = \log\left(\frac{n}{\delta}\right) \, 2^{\mathcal{O}\left(\log(\frac{1}{\epsilon}) \left( \log\log(\frac{1}{\epsilon}) +  \log(\frac{1}{\epsilon'})\right)\right)},
\end{equation}
we can guarantee that with probability at least $1 - \delta$, the following holds.
For any observable $O$ that is a sum of $\kappa$-qubit observables, where $\kappa = \mathcal{O}(1)$ and each qubit is acted on by $d = \mathcal{O}(1)$ of the $\kappa$-qubit observables, and any $n$-qubit state distribution $\mathcal{D}$ invariant under single-qubit $H$ and $S$ gates, we have
\begin{equation}
    \E_{\rho \sim \mathcal{D}} \left| h(\rho, O) - \Tr(O \mathcal{E}(\rho)) \right|^2 \leq \left(\epsilon +  \epsilon' \left[\tfrac{\norm{O^{\mathrm{(low)}}}}{\norm{O}}\right]^r \right) \norm{O}^2.
\end{equation}
This establishes one of the argument for the sample complexity stated in Theorem~\ref{thm:main-learning-evo}.

\subsubsection{Prediction error under standard distribution $\mathcal{D}^0$ (second set of hyperparameters)}

In the following proof, we consider the second set of hyperparameters $k, \tilde{\epsilon}$ as given in Eq.~\eqref{eq:tilde-eps-def-prime}.
By considering the size of the classical shadow $S_N(\mathcal{E})$ to be
\begin{equation}
    N = \Omega\left( \frac{\log(n^{k + \kappa} / \delta)}{\tilde{\epsilon}^2} \right),
\end{equation}
we can utilize Hoeffding's inequality and union bound to guarantee that
\begin{equation}
    \left| \hat{x}_P(Q) - x_P(Q) \right| \leq \tilde{\epsilon}, \quad \forall P, Q \in \{I, X, Y, Z\}^{\otimes n}, |P| \leq k, |Q| \leq \kappa
\end{equation}
with probability at least $1 - \delta$.
In the following proof, we will condition on the above event.
Using triangle inequality, we have
\begin{equation}
    \left| \hat{x}_P(O) - x_P(O) \right| \leq \norm{O}_{\mathrm{Pauli},1}\tilde{\epsilon} = \eta \tilde{\epsilon}, \quad \left| \hat{\beta}_P - \beta_P \right| = 0, \quad \forall P: |P| \leq k.
\end{equation}
The filtering lemma given in Lemma~\ref{lem:filtering-full} shows that
\begin{align}
    \sum_{P: |P| \leq k} \E_{\rho \sim \mathcal{D}^0} \left[ \gamma^*(\rho_{\mathsf{dom}(P)}) \right] \left( \frac{2}{3} \right)^{|P|} \left|\hat{\alpha}_P(O) -\alpha_P(O) \right|^2 &\leq 9 \eta^2 \tilde{\epsilon}^2.
\end{align}
From the norm inequality and the function $C(k, d)$ given in Corollary~\ref{cor:bounded-norm}, we have
\begin{equation}
    \eta = \norm{O}_{\mathrm{Pauli}, 1} \leq \frac{3}{C(\kappa, d)} \norm{O}.
\end{equation}
Combining with the definition of $\tilde{\epsilon}$ given in Eq.~\eqref{eq:tilde-eps-def-prime}, we have
\begin{equation} \label{eq:error-bound-D0-prime}
    \sum_{P: |P| \leq k} \E_{\rho \sim \mathcal{D}^0} \left[ \gamma^*(\rho_{\mathsf{dom}(P)}) \right] \left( \frac{2}{3} \right)^{|P|} \left|\hat{\alpha}_P(O) -\alpha_P(O) \right|^2 \leq \frac{\epsilon}{2^{k+1}} \cdot \norm{O}^2.
\end{equation}
Using Lemma~\ref{lem:mse-D} on mean squared error and Corollary~\ref{cor:low-deg-approx} on low-degree approximation, we have
\begin{align}
    &\E_{\rho \sim \mathcal{D}^0} \left| h(\rho, O) - \Tr(O^{\mathrm{(unk)}} \rho) \right|^2\\
    &\leq (2/3)^k \norm{O^{\mathrm{(unk)}}}^2 + \sum_{P: |P| \leq k} \E_{\rho \sim \mathcal{D}^0} \left[ \gamma^*(\rho_{\mathsf{dom}(P)}) \right] \left( \frac{2}{3} \right)^{|P|} \left|\hat{\alpha}_P(O) -\alpha_P(O) \right|^2 \\
    &\leq \frac{\epsilon}{2} \norm{O^{\mathrm{(unk)}}}^2 + \sum_{P: |P| \leq k} \E_{\rho \sim \mathcal{D}^0} \left[ \gamma^*(\rho_{\mathsf{dom}(P)}) \right] \left( \frac{2}{3} \right)^{|P|} \left|\hat{\alpha}_P(O) -\alpha_P(O) \right|^2.
\end{align}
Using the definition of $O^{\mathrm{(unk)}}$, we have $O^{\mathrm{(unk)}} = \mathcal{E}^\dagger(O)$ and $\norm{O^{\mathrm{(unk)}}} \leq \norm{O}$. Hence
\begin{equation}
    \E_{\rho \sim \mathcal{D}^0} \left| h(\rho, O) - \Tr(O \mathcal{E}(\rho)) \right|^2 \leq \frac{1}{2} \left(\epsilon +  \frac{\epsilon}{2^{k}} \right) \norm{O}^2,
\end{equation}
which establishes a prediction error bound for distribution $\mathcal{D}^0$.

\subsubsection{Prediction error under general distribution $\mathcal{D}$  (second set of hyperparameters)}

We now consider an arbitrary $n$-qubit state distribution $\mathcal{D}$ invariant under single-qubit $H$ and $S$ gates.
Using Lemma~\ref{lem:mse-D} on mean squared error, Corollary~\ref{cor:low-deg-approx} on low-degree approximation, $k = \lceil \log_{1.5}(2 / \epsilon) \rceil$, the fact that $\gamma^*(\rho_{\mathsf{dom}(P)}) \leq 1$, and $\E_{\rho \sim \mathcal{D}_0} \left[ \gamma^*(\rho_{\mathsf{dom}(P)}) \right] \left( \frac{2}{3} \right)^{|P|} = (1 / 3)^{|P|}$, we have
\begin{equation}
    \E_{\rho \sim \mathcal{D}} \left| h(\rho, O) - \Tr(O^{\mathrm{(unk)}} \rho) \right|^2 \leq \frac{\epsilon}{2} \norm{O}^2 + 2^k \sum_{P: |P| \leq k} \E_{\rho \sim \mathcal{D}^0} \left[ \gamma^*(\rho_{\mathsf{dom}(P)}) \right] \left( \frac{2}{3} \right)^{|P|} \left|\hat{\alpha}_P(O) -\alpha_P(O) \right|^2.
\end{equation}
Combining the above with Eq.~\eqref{eq:error-bound-D0-prime}, we have
\begin{equation}
    \E_{\rho \sim \mathcal{D}} \left| h(\rho, O) - \Tr(O \mathcal{E}(\rho)) \right|^2 \leq \epsilon \norm{O}^2,
\end{equation}
which is the prediction error under distribution $\mathcal{D}$.

\subsubsection{Putting everything together (second set of hyperparameters)}

From Eq.~\eqref{eq:tilde-eps-def-prime}, we have set the parameter $\tilde{\epsilon}$ to be
\begin{equation}
    \tilde{\epsilon} = \frac{\epsilon}{9 \cdot 2^{k+1} \cdot n^{k}} \left( \frac{C(\kappa, d)}{3} \right)^{2}.
\end{equation}
Furthermore, given the classical shadow $S_N(\mathcal{E})$ of size
\begin{equation}
    N = \mathcal{O}\left( \frac{\log(n^{k + \kappa} / \delta)}{\tilde{\epsilon}^2} \right) = \log\left(\frac{n}{\delta}\right) \, 2^{\mathcal{O}\left(\log(\frac{1}{\epsilon}) \log(n) \right)},
\end{equation}
we can guarantee that with probability at least $1 - \delta$, the following holds.
For any observable $O$ that is a sum of $\kappa$-qubit observables, where $\kappa = \mathcal{O}(1)$ and each qubit is acted on by $d = \mathcal{O}(1)$ of the $\kappa$-qubit observables, and any $n$-qubit state distribution $\mathcal{D}$ invariant under single-qubit $H$ and $S$ gates, we have
\begin{equation}
    \E_{\rho \sim \mathcal{D}} \left| h(\rho, O) - \Tr(O \mathcal{E}(\rho)) \right|^2 \leq \epsilon \norm{O}^2.
\end{equation}
This completes the proof of Theorem~\ref{thm:main-learning-evo}.

\section{Numerical details}
\label{sec:num-details}

In the numerical experiments, we consider the two classes of Hamiltonians,
\begin{align}
    H &= \frac{1}{4} \sum_{i} (X_i X_{i+1} + Y_i Y_{i+1}) + \frac{1}{2} \sum_i h_i Z_i, &\mbox{(XY model)}\\
    H &= \frac{1}{2} \sum_{i} X_i X_{i+1}  + \frac{1}{2} \sum_i h_i Z_i, &\mbox{(Ising model)}
\end{align}
where $h_i = 0.5$ for the homogeneous Z field, and $h_i$ is sampled uniformly at random from $[-5, 5]$ for the disordered Z field.
We solve for the time-evolved properties using the Jordan-Wigner transform to map the spin chains to a free fermion model and the technique described in \cite{lieb1961two} to solve the free fermion model.

We consider the training set to be a collection of $N$ random product states $\ket{\psi_\ell}, \ell = 1, \ldots, N$ and their associated measured properties $y_\ell$ corresponding to measuring an observable $O$ after evolving under $U(t) = \exp(-i t H)$.
The measured properties are averaged over $500$ measurements. Hence $y_\ell$ is a noisy estimate of the true expectation value $\Tr(O U(t) \ketbra{\psi_\ell}{\psi_\ell} U(t)^\dagger)$.
We consider essentially the same ML algorithm as described in Section~\ref{sec:ML-algo}, but utilize a more sophisticated approach to enforce sparsity in $\hat{\alpha}_P$.
We also consider $\alpha_P$ for Pauli operator $P$ that is geometrically local.
For ease of analysis, we consider a simple strategy of setting small values to zero.
The standard approach that is often used in practice is LASSO \cite{tibshirani1996regression}.

In the numerical experiments, we perform a simple grid search for the two hyperparameters using two-fold cross-validation on the training set:
\begin{align}
    k &= 1, 2, 3, 4, \\
    a &= 2^{-15}, 2^{-14}, 2^{-13}, \ldots, 2^{-4}, 2^{-3},
\end{align}
where $k$ corresponds to the maximum number of qubits that the Pauli operators $P$ act on, and $a$ is a hyperparameter corresponding to the strength of the $\ell_1$ regularization term in LASSO.
In particular, the optimization problem of LASSO is given by
\begin{equation}
    \min_{\hat{\alpha}_P} \frac{1}{2 N} \sum_{\ell=1}^N \left| y_\ell - \sum_{P: |P| \leq k} \hat{\alpha}_P \Tr(P \ketbra{\psi_\ell}{\psi_\ell}) \right|^2 + a \sum_{P: |P| \leq k} |\hat{\alpha}_P|,
\end{equation}
where $|P|$ is the number of qubits that the Pauli observable $P$ acts nontrivially on.
We then use the values $\hat{\alpha}_P$ found by the above optimization to form a succinct approximate model
\begin{equation}
 \sum_{P: |P| \leq k} \hat{\alpha}_P P
\end{equation}
of the time-evolved observable $O(t) = U(t)^\dagger O U(t)$.
Given a new initial state $\rho$, we would predict the time-evolved property $\Tr(O(t) \rho) = \Tr(O U(t) \rho U(t)^\dagger)$ using
\begin{equation}
    \sum_{P: |P| \leq k} \hat{\alpha}_P \Tr(P \rho).
\end{equation}
In addition to the figures given in the main text, Fig.~\ref{fig:GHZ2} shows another example for predicting a highly entangled initial state.
Even though the ML model is trained with random product states, it still performs very well on a structured entangled initial state.

\begin{figure*}[t]
\centering
\includegraphics[width=1.0\textwidth]{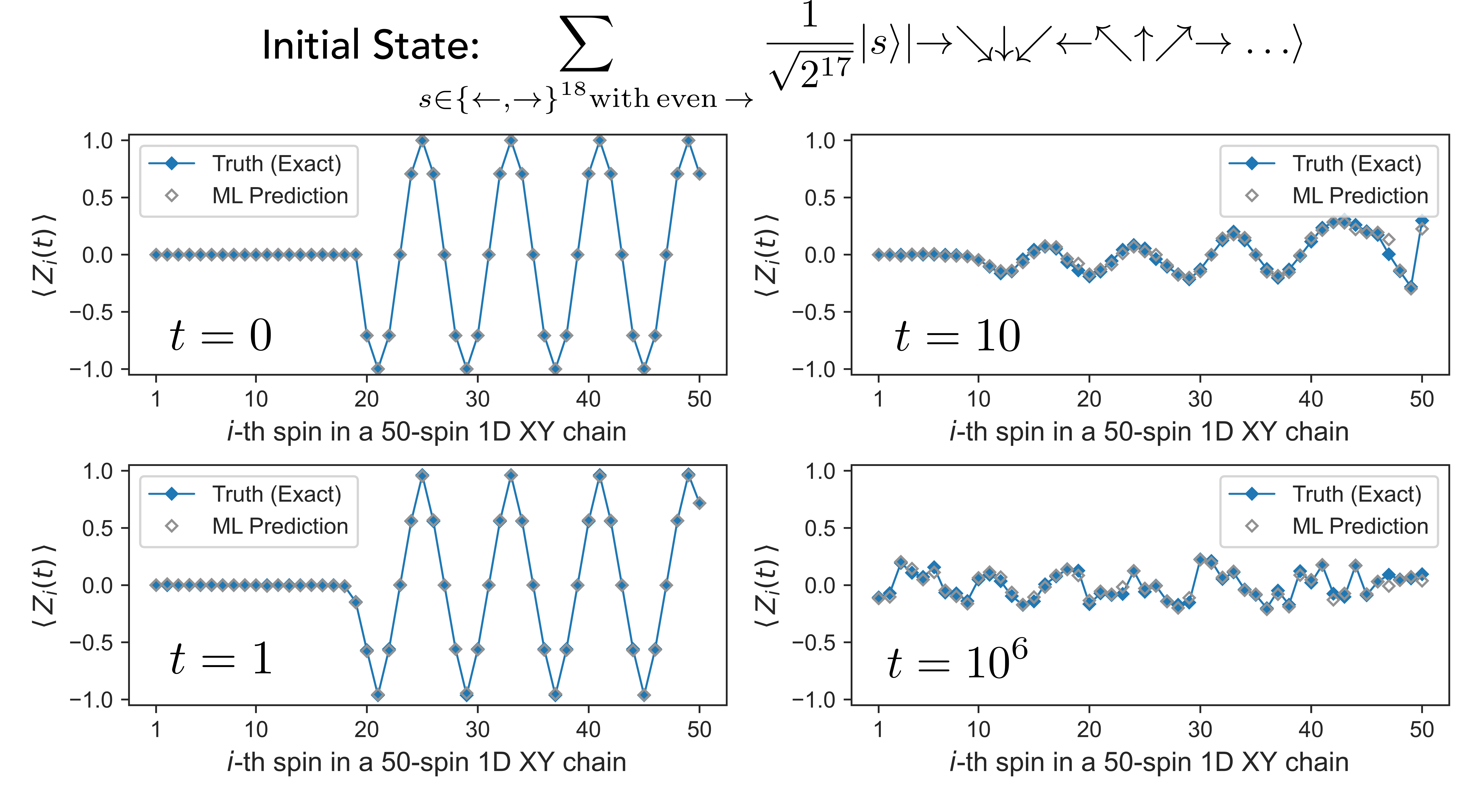}
    \caption{
    \emph{Visualization of ML model's prediction for a highly-entangled initial state $\rho = \ketbra{\psi}{\psi}$.}
    We consider the expected value of $Z_i(t) = e^{i t H} Z_i e^{-i t H}$, where $H$ corresponds to the 1D 50-spin XY chain with a homogeneous $Z$ field.
    The initial state $\ket{\psi}$ has a GHZ-like entanglement over the first $18$-spin chain and is a product state with spins rotating clockwise over the latter $32$-spin chain.
    To prepare $\ket{\psi}$ with 1D circuits, a depth of at least $\Omega(n)$ is required.
    Even though the ML model is trained only on random product states (a total of $N = 10000$), it still performs accurately in predicting the highly-entangled state over a wide range of evolution time $t$.
    \label{fig:GHZ2}}
\end{figure*}

\end{document}